\theoremstyle{plain}
\newcommand{\A}{\mathcal{A}}
\newcommand{\F}{\mathcal{F}}
\newcommand{\G}{\mathcal{G}}
\renewcommand{\L}{\mathcal{L}}
\newcommand{\C}{\mathcal{C}}
\newcommand{\D}{\mathcal{D}}
\renewcommand{\S}{\mathcal{S}}
\newcommand{\I}{\mathcal{I}}
\newcommand{\J}{\mathcal{J}}
\newcommand{\T}{\mathcal{T}}
\newcommand{\U}{\mathcal{U}}
\newcommand{\Z}{\mathcal{Z}}
\newcommand{\eps}{\epsilon}
\newcommand{\poly}{\mathrm{poly}}
\newtheorem{definition}{Definition}[section]
\newtheorem{theorem}{Theorem}[section]
\newtheorem{lemma}[theorem]{Lemma}
\newtheorem{corollary}[theorem]{Corollary}
\newtheorem{claim}[theorem]{Claim}
\begin{document}

\title{An Improved Approximation for $k$-median, and Positive Correlation in Budgeted Optimization}

\author{
Jarosław Byrka\thanks{Institute of Computer Science, University of Wrocław, Poland. Supported by NCN 2015/18/E/ST6/00456 grant.}\and
Thomas Pensyl\thanks{Department of Computer Science, University of Maryland, College Park, MD 20742. Research supported in part by NSF Awards CNS 1010789 and CCF 1422569.} \and
Bartosz Rybicki\thanks{Institute of Computer Science, University of Wrocław, Poland. Supported by NCN 2012/07/N/ST6/03068 grant.} \and
Aravind Srinivasan\thanks{Department of Computer Science and Instute for Advanced Computer Studies, University of Maryland, College Park, MD 20742.
Research supported in part by NSF Awards CNS 1010789 and CCF 1422569, and by a gift from Adobe, Inc. Email: 
\texttt{srin@cs.umd.edu}.} 
\and Khoa Trinh\footnotemark[2]}

\date{}

\maketitle

\begin{abstract}
Dependent rounding is a useful technique for optimization problems with hard budget constraints. This framework naturally leads to \emph{negative correlation} properties. However, what if an application naturally calls for dependent rounding on the one hand, and desires \emph{positive} correlation on the other? More generally, we develop
algorithms that guarantee the known properties of dependent rounding, but also have nearly best-possible behavior -- near-independence, which generalizes positive correlation -- on ``small" subsets of the variables. The recent breakthrough of Li \& Svensson for the classical $k$-median problem has to handle positive correlation in certain dependent-rounding settings, and does so implicitly. We improve upon Li-Svensson's approximation ratio for $k$-median from $2.732 + \epsilon$ to $2.675 + \epsilon$ by developing an algorithm that improves upon various aspects of their work.  Our dependent-rounding approach helps us improve the dependence of the runtime on the parameter $\epsilon$ from Li-Svensson's $N^{O(1/\epsilon^2)}$ to $N^{O((1/\epsilon) \log(1/\epsilon))}$. 
\end{abstract}

\section{Introduction and High-Level Details}
We consider two notions in combinatorial optimization: a concrete problem (the classical $k$-median problem) and a formulation of new types of distributions (generalizations of dependent-rounding techniques); the breakthrough of Li \& Svensson on the former \cite{li_svensson} uses special cases of the latter. We improve the approximation ratio of \cite{li_svensson} for the former, and develop efficient samplers for the latter -- which, in particular, show that such distributions exist; we then combine the two to improve the run-time of our approximation algorithm for $k$-median. The ideas developed here also lead to optimal approximations for certain budgeted satisfiability problems, of which the classical budgeted set-cover problem is a special case. We discuss these contributions in further detail in Sections \ref{subsec:k-med}, \ref{subsec:dep-round}, and \ref{subsec:budgeted-sat}. 

\subsection{The $k$-median problem}
\label{subsec:k-med}
Metric $k$-median is a fundamental location problem in combinatorial optimization. 
Herein, we are given a set $\F$ of facilities, a set $\J$ of clients, a budget $k$, and a symmetric distance-metric $d$ on $\F \cup \J$. The goal is to open a subset of at most $k$ facilities in $\F$ such that the total distance (or connection cost) from each client to its closest opened facility is minimized. Note that the only decision is which subset of the facilities to open. 
This problem is known to be $NP$-hard, so there has been much work done on designing approximations with provable performance guarantees; indeed, virtually every major technique in approximation algorithms has been used and/or developed for this problem and its variants. 

Convex combinations of \emph{two} integral solutions, with the corresponding convex combination of the number of open facilities being $k$, will be particularly useful for us:

\begin{definition}
\label{defn:bi-point}
\textbf{(Bi-point solution)}  Given a $k$-median instance $\I$, a bi-point solution is a pair $\F_1, \F_2 \subseteq \F$ such that $|\F_1| \leq k \leq |\F_2|$,  along with reals $a, b \geq 0$ with $a + b = 1$ such that $a|\F_1| + b|\F_2| = k$. (That is, the convex combination of the two ``solutions" is feasible for the natural LP relaxation of $\I$.) The cost of this bi-point solution is defined as $aD_1 + bD_2$, where $D_1$ and $D_2$ are the connection costs of $\F_1$ and $\F_2$ respectively.
\label{def:bipoint}
\end{definition}

Charikar, Guha, Tardos, and Shmoys used LP-rounding to achieve the first constant factor approximation ratio of $6\frac{2}{3}$ \cite{charikar}. Then, Jain and Vazirani \cite{jain_vazirani} applied Lagrangian Relaxation to remove the hard constraint of opening at most $k$ facilities, effectively reducing the problem to an easier version known as the Uncapacitated Facility Location (UFL) problem. Using this technique together with primal-dual methods for UFL, they first find a bi-point solution, losing a factor of 3. They then round this bi-point solution to an integral feasible solution losing another multiplicative factor of 2, yielding a $6$-approximation. Later, Jain, Mahdian, and 
Saberi (JMS) improved the approximation ratio of constructing the bi-point solution to $2$, resulting in a $4$-approximation \cite{jms}. Following this, a local-search-based $(3+\eps)$-approximation algorithm was developed by Arya et al.\ \cite{arya}.

Recently, Li and Svensson's breakthrough work gave a $(1+\sqrt{3}+\eps)$-approximation algorithm for $k$-median \cite{li_svensson}. To accomplish this, they defined an $\alpha$-\emph{pseudo-approximation} algorithm to be one that is an $\alpha$-approximation which, however, opens $k+O(1)$ facilities, and showed -- very surprisingly -- how to use such an algorithm as a blackbox to construct a true $(\alpha+\eps)$-approximation algorithm. They then took advantage of this by giving a bi-point rounding algorithm which opens $k+O(1)$ facilities, but loses a factor of $\frac{1+\sqrt{3}}{2} +\eps$ instead of the previous 2. Together with the factor of 2 lost during the JMS bi-point construction algorithm, this yields the final approximation ratio. Letting $N$ denote the input-size, the runtime of \cite{li_svensson} is $N^{O(1/\epsilon^2)}$. 

In this paper\footnote{In the conferrence version \cite{ByrkaPRST15} we claimed that we can also improve the cost of the bi-point solution, which was not correct. More details you can find in Appendix \ref{sec:jms_scaling}.}, we improve the bi-point rounding step to give an algorithm for $k$-median with improved approximation ratio and runtime. Section \ref{sec:bipoint-round} presents an improved approximation for rounding bi-point solutions; we obtain $1.3371 + \epsilon$ instead of $\frac{1+\sqrt{3}}{2} +\eps \sim 1.366 + \eps$. Section \ref{sec:dep-round} develops our dependent rounding technique, a specific application of which reduces the dependence of the run-time on $\epsilon$ from $N^{O(1/\epsilon^2)}$ as in \cite{li_svensson}, to $N^{O((1/\epsilon) \log(1/\epsilon))}$. 

How can we round a bi-point solution to a feasible one? 
We present a rounding algorithm that obtains a factor of $1.3371+\eps$. This yields a $2 \times 1.3371 + \epsilon \sim (2.675+\eps)$-approximation algorithm for $k$-median, an improvement over Li and Svensson's $(2.733+\eps)$. 
We analyze the worst-case instances of Li and Svensson's approach, the structure of which leads us to the new algorithm. 

Let the given bi-point solution be parametrized by $\F_1, \F_2, a, b, D_1,$ and $D_2$ as in Definition~\ref{defn:bi-point}. As an initial goal (which we will relax shortly), suppose we are interested in an algorithm which rounds this bi-point solution to an integer solution of cost at most $\alpha(aD_1+bD_2)$, for some $\alpha$. As already mentioned, such an algorithm can be used to get a $(2 \times \alpha)$-approximation to $k$-median.
Suppose a client $j$ is closest to $i_1$ in solution $\F_1$, and $i_2$ in pseudo-solution\footnote{One that may open more than $k$ facilities.} $\F_2$. \emph{Ideally}, one would like to round the bi-point solution in such a way that $i_1$ is open with probability $a$, and $i_2$ is open with the remaining probability $b$. Then the expected connection cost of $j$ would be exactly its contribution to the bi-point cost, and we would get a bi-point rounding factor of $1$. The problem is that we cannot directly correlate this pair of events for every single client, while still opening only $k$ facilities. 
Jain and Vazirani's approach is to pair each $i_1\in\F_1$ with its closest neighbor in $\F_2$, and ensure that one of the two is open \cite{jain_vazirani}. This approach loses at most a factor of $2$, which is equal to the integrality gap of the $k$-median LP, and so is the best one might expect. However, Li and Svensson beat this factor by allowing their algorithm to open $k+c$ facilities. They then give a (surprising) method to convert such an algorithm to one that satisfies the budget-$k$ constraint, adding $\eps$ to the approximation constant, and a factor of $n^{O(c/\eps)}$ to the runtime. This method actually runs the algorithm on a polynomial number of sub-instances of the original problem, and thus is not limited by the integrality gap of the original LP. Thus we obtain our relaxed goal:

\begin{definition}
\label{defn:relaxed-goal}
\textbf{(Relaxed goal})
Given a bi-point solution parametrized by $\F_1, \F_2, a, b, D_1,$ and $D_2$ as in Definition~\ref{defn:bi-point}, round it to an integer pseudo-solution using at most $k + f(\epsilon)$ facilities, and of cost at most $\alpha(aD_1+bD_2)$ for $\alpha \sim 1.3371$. Here, $\epsilon > 0$ is an arbitrary constant. 
\end{definition}

We will mainly discuss how to achieve such an improved $\alpha$ now, and defer discussion of the function $f$ to Section \ref{subsubsec:depround-kmedian}. 

Li and Svensson's approach, which we will generalize, is to create a cluster for every facility in $\F_1$, and put each facility in $\F_2$ into its nearest cluster. We refer to each cluster as a \emph{star}. Each star has a single \emph{center} in $\F_1$, and zero or more \emph{leaves} in $\F_2$. It is useful to consider algorithms with the property that each star has either its center or all of its leaves opened (we will do better by relaxing this property). Then our client $j$ may always connect to either $i_2$, or the center of the star containing $i_2$, which cannot be too far away. The work of \cite{li_svensson} presents two such algorithms. The first is based on a knapsack-type LP, and yields a rounding factor of $1.53$ by opening two extra facilities. The second opens $a$ and $b$ fractions of $\F_1$ and $\F_2$ at random and obtains $\frac{1+\sqrt3+\eps}2\approx 1.366+\eps$ by opening $O(1/(ab\eps))$ extra facilities. (The former and the trivial solution $\F_1$ are used to handle the cases where 
$a$ 
or $b$ is close to zero.)

We improve this by first considering the stars with only one leaf (or \emph{1-stars}) separately from the stars with two or more leaves (or \emph{2-stars}). This allows us more freedom in shifting probability mass. For 1-stars, we do not need to worry about opening $a$ of the centers and $b$ of the leaves. We can instead consider either opening the leaf, or opening the center, and then choose the better of the two. For 2-stars, we still have the option of opening the center and leaves in proportion $a$ and $b$ respectively. However we also may shift the mass toward the centers, and open them in proportion $1$ and $b/2$ instead. This gives 4 possible combinations; we may try them all and take the best solution.
 When combined with the aforementioned knapsack-based algorithm, this idea improves the approximation from $1.53$ to $1.4$, with the same $n^{O(1/\eps)}$ runtime. However, it does not immediately improve the $(1+\sqrt{3})/2$ factor. It does, however, impose some useful structure on the distribution of clients in a worst-case instance.
Also, we may consider shifting mass from 2-stars to 1-stars, the extreme being that we open the centers of all the 2-stars and open the remaining facilities in 1-stars. This must beat the (tight) bound of $D_1$, which would mean it does better than $(1+\sqrt{3})/2$. However, this effect is negligible if the number of 1-stars dominates the number of 2-stars. Thus, a worst-case instance must have a very large fraction of 1-stars.

But if there are so many 1-stars, perhaps we could close entirely (center and leaf) a tiny fraction of them and shift the mass to other stars. In fact, one may show a bi-point solution where 1.366 is the best we can do while still preserving the center-or-leaves property for all stars, but once we are allowed to \emph{close the center and leaf} of some of the 1-stars, we get a factor of 1. This motivates the following strategy: For each 1-star, classify it as ``long" or ``short" based on its relative ``size'' (distance from the center to the leaf) compared to its distance to the nearest leaf of a 2-star. If many such stars are long, we may close both the center and the leaf, and still get a reasonable cost bound for any client connected to that star. This gives us the ability to consider shifting mass from these long stars to other stars. On the other hand, if many such stars are short, this actually means we may obtain a better cost bound for clients connected to centers of short 1-stars and leaves of 2-stars, by 
connecting them to the leaf of the short star in the worst case. (The worst-case structure mentioned above enforces that there are such clients.) In either case we obtain a factor strictly smaller than $(1+\sqrt{3})/2$.

To find the approximation constant of this new algorithm, we construct a factor-revealing non-linear program, the solution to which describes the new worst-case instance. The program is not convex, so local-search methods have no guarantee of finding the global optimum. However, there are 4 variables that, when fixed, render the system linear. Inspired by Zwick's use of interval arithmetic \cite{zwick}, we consider a relaxation of the LP over a small interval of those 4 variables. The relaxation itself is linear and may be solved efficiently, but still gives a valid upper bound on the value of the original program in that interval. By splitting the search space into sufficiently small intervals, we are able to systematically prove an upper bound over the entire space, leading rigorously to the value of 1.3371.

Our approach shows there is potential to improve the approximation by improving the bi-point rounding algorithm. On the other hand, we give a family of explicit instances and bi-point solutions whose optimal rounding loses a factor of $\frac{1+\sqrt2}{2}\approx1.207$, even when we allow $k+o(k)$ facilities to be opened.

\subsection{Dependent rounding with almost-independence on small subsets}
\label{subsec:dep-round} 
Dependent rounding has emerged as a useful technique for rounding-based algorithms. We start by discussing an ``unweighted" special case of it, which captures much of its essence. Section~\ref{subsubsec:depround-kmedian} then discusses the general weighted case and its applications to $k$-median:  specifically, our improvement of the function $f(\epsilon)$ of Definition~\ref{defn:relaxed-goal} from the $\Theta(1/\epsilon)$  of \cite{li_svensson} to $\Theta(\log(1/\epsilon))$. This leads to our improved run-time. 

Let us discuss the basic ``unweighted" setting of dependent rounding. 
Starting with the key deterministic ``pipage rounding" algorithm of Ageev \& Sviridenko \cite{AS}, dependent-rounding schemes have been interpreted probabilistically, and have found several applications and generalizations in combinatorial optimization (see, e.g., \cite{DBLP:journals/algorithmica/BansalGLMNR12,DBLP:journals/siamcomp/CalinescuCPV11,DBLP:conf/soda/ChekuriVZ11,DBLP:journals/jacm/GandhiKPS06,harvey-olver:pipage,srin:level-sets} for a small sample). 
These naturally induce certain types of \emph{negative correlation} \cite{srin:level-sets} and sometimes even-more powerful negative-association properties 
\cite{dubhashi-jonasson-ranjan:neg-depend,Kramer:2011:NDS:1972785.1972788}, which are useful in proving Chernoff-like concentration bounds on various monotone functions of such random variables. We consider settings where some form of \emph{positive} correlation is desirable in dependent rounding, and construct efficiently-sampleable distributions that offer much more than what regular dependent-rounding and limited positive correlation ask for. 

We now define our basic problem. 
Consider any $P = (p_1, p_2, \ldots, p_n) \in [0,1]^n$ such that $\sum_i p_i$ is an {\em integer} $\ell$, and 
let $[s]$ denote the set $\{1, 2, \ldots, s\}$ as usual. Building on \cite{AS}, the work of 
\cite{srin:level-sets} shows how to efficiently sample a random vector $(X_1, X_2, \ldots, X_n) \in \{0,1\}^n$ such that:
\begin{description}
\item[{\bf (A1)}] the ``right" marginals: $\forall i$, $\Pr[X_i = 1] = p_i$; 
\item[{\bf (A2)}] the sum is preserved: $\Pr[\sum_i X_i = \ell] = 1$, and
\item[{\bf (A3)}] negative correlation: $\forall S \subseteq [n] ~\forall b \in \{0,1\}, 
~\Pr[\bigwedge_{i \in S} (X_i = b)] \leq  \prod_{i \in S} \Pr[X_i = b]$. 
\end{description}
As we see below, some algorithms for applications including $k$-median and budgeted MAX-SAT ask for the above properties, but also desire some form of \emph{positive} correlation among selected (often ``small") subsets of the variables. Generalizing all of these, our basic problem is:

\smallskip \noindent \textbf{The basic problem.} In the above setting of $P$ and $\ell$, suppose we also have some parameter $\alpha$ (which could potentially be $o(1)$) such that $\alpha \leq p_i \leq 1 - \alpha$ for all $i$. Can one (efficiently) sample $(X_1, X_2, \ldots, X_n) \in \{0,1\}^n$ such that (A1), (A2) and (A3) hold, \emph{along with} the property that ``suitably small" subsets of the $X_i$ are \emph{nearly independent}? Informally, for some ``not too small" $t$, we want, for all $i_1 < \cdots < i_k$ with $k \leq t$ and for all $b_1, \ldots, b_k \in \{0,1\}^k$ that 
$\Pr[\bigwedge_{j=1}^k (X_{i_j} = b_j)]$ is ``close" to what it would have been if the $X_r$'s were all independent. Concretely, letting
\begin{equation}
\label{eqn:desired-target}
\lambda = \lambda(i_1, \ldots, i_k, b_1, \ldots, b_k) \doteq [\prod_{u:~b_u = 0} (1 - p_{i_u})] \cdot [\prod_{v:~b_v = 1} p_{i_v}], 
\end{equation}
we want for some $t$ and suitably-small $\beta_1, \beta_2$ that 
\begin{equation}
\label{eqn:desired-goal}
\textbf{(A4)}~ \text{small subsets are nearly independent:}~~(1 - \beta_1) \cdot \lambda \leq \Pr[\bigwedge_{j=1}^k (X_{i_j} = b_j)] \leq (1 + \beta_2) \cdot \lambda,
\end{equation}
again requiring $k \leq t$ and otherwise letting the indices $i_j$ and bits $b_j$ be arbitrary. That is, we want \emph{dependent-rounding schemes with near-independence 
on ``small" subsets}. (Note that such ``almost $t$-wise independence" is also a very important tool in the different context of derandomization 
\cite{DBLP:journals/siamcomp/NaorN93,DBLP:journals/rsa/EvenGLNV98,DBLP:journals/jcss/ChariRS00,DBLP:journals/combinatorica/Lu02}.)
Given the broad applicability of dependent rounding, it is easy to imagine that such schemes will have a range of applications; two concrete applications arise in the Li-Svensson work \cite{li_svensson} and in the budgeted MAX-SAT problem -- see Sections \ref{subsubsec:depround-kmedian} and \ref{subsec:budgeted-sat}. Note that some upper-bound on $t$ is necessary, as a function of $n$ and $\alpha$: indeed, if $k = t > \alpha n$, then since $\ell = (1 - \alpha)n$ is possible, (A2) implies that $\Pr[\bigwedge_{j=1}^k (X_{i_j} = 0)]$ can be zero regardless of the rounding scheme, violating (\ref{eqn:desired-goal}). This motivates our restriction to ``small" $t$; in fact, even the cases $t = O(1)$ and $t = \text{polylog}(n)$ are of interest.

\smallskip \noindent \textbf{Our results for the basic problem.} We present a randomized linear time algorithm to sample $(X_1, X_2, \ldots, X_n) \in \{0,1\}^n$ such that (A1), (A2) and (A3) hold, and with (A4) true with the following parameters:
\begin{equation}
\label{eqn:beta12}
\beta_1 = t^2 / (n \alpha^2) ~\text{and}~ \beta_2 = \left(1 + \frac{t}{n \alpha^2}\right)^{t-1} - 1.
\end{equation}
In particular, for $t = o(\alpha \sqrt{n})$, we get $\beta_1, \beta_2 = o(1)$ -- i.e., we have near-independence to within $(1 \pm o(1))$ factors.
This result is presented in Theorem~\ref{thm:limited-dep-general}.
We add an extra element of randomness to the approach of \cite{srin:level-sets}, in order to obtain our results. (For applications where $\alpha$ or its substitutes $\hat{\alpha}$ and $\hat{q}$ are extremely close to zero, one can often round the $p_j$'s that are very close to $0$ or $1$ separately -- e.g., by techniques such as those of Appendix~\ref{sec:budgeted-MAX-SAT}.) 

\subsubsection{Weighted dependent rounding and rounding bi-point solutions for $k$-median}
\label{subsubsec:depround-kmedian}
The work of \cite{li_svensson} actually has to deal with a more general ``weighted" version of dependent rounding: given weights $a_i \geq 0$, we want to preserve $\sum_i a_i p_i$, rather than $\ell = \sum_i p_i$ as in (A2). However, the setting of \cite{li_svensson} has some additional useful properties, such as $\alpha = \Theta(1)$, $t = 2$, all the $p_i$'s are the same, and that $\sum_i a_iX_i$ can exceed $\sum_i a_i p_i$ by an additive constant factor. Using these properties, the work of \cite{li_svensson} solves this dependent-rounding problem in a certain manner; in particular, this leads to $f(\epsilon) = \Theta(1/\epsilon)$ in the context of Definition~\ref{defn:relaxed-goal}. The overall run-time of \cite{li_svensson} is $N^{O(f(\epsilon)/\epsilon)}$, and is hence $N^{O(1/\epsilon^2)}$. 

In Section~\ref{sec:dep-round}, we develop a general solution to the weighted dependent-rounding problem, which in particular solves the unweighted problem ((A1) -- (A4)); this is done without assuming that $t = O(1)$, or that all the $p_i$'s are the same, etc. Since (A4) -- for the weighted and unweighted cases -- goes far beyond negative correlation (property (A3)) alone, we believe that our method could have a range of consequences, given the number of applications of dependent rounding seen over the last $15$ years. In any case, when specialized to the $k$-median application, this enables us to set 
$f(\epsilon) = \Theta(\log(1/\epsilon))$ in the context of Definition~\ref{defn:relaxed-goal}, and hence our overall run-time for $k$-median becomes $N^{O((1/\epsilon) \log(1/\epsilon))}$.

\subsection{Budgeted MAX-SAT}
\label{subsec:budgeted-sat}
The budgeted version of set-cover is well-understood: given a set-cover instance in which we can choose at most $k$ sets, we aim to make this choice in order to maximize the (weighted) number of ground-set elements covered \cite{DBLP:journals/ipl/KhullerMN99}. The work of \cite{DBLP:journals/ipl/KhullerMN99} presents an $(1 - 1/e)$-approximation for this problem, and shows the hardness of obtaining an $(1 - 1/e + \epsilon)$-approximation. This problem can be naturally generalized as follows. 
Consider an arbitrary CNF-SAT formula $\phi$ over Boolean variables $x_1, x_2, \ldots, x_n$, and with weight $w_i \geq 0$ for clause $i$. We aim to assign a True/False value to each $x_j$, in order to maximize the total weight of the satisfied clauses. However, we also have a hard budget-constraint as follows. The cost model is that there exist $a, b \geq 0$ such that for each $x_j$, we pay $a$ if we set $x_j$ to True, and $b$ if we set $x_j$ to False. We have a hard budget of $B$ on our total cost. (Such a budget-constraint is perhaps more well-motivated when we view each variable as having two possible general choices, rather than True/False.) Thus, letting $X_j$ be the indicator variable for $x_j$ being true, our budget constraint for the case $a \geq b$ is that $\sum_j X_j \leq k'$ (where $k' = \lfloor (B - nb)/(a - b) \rfloor$); if $b > a$, then by complementing all variables, we get a constraint of the same type. Thus, we may assume 
w.l.o.g.\ that $\sum_j X_j \leq k$ is our budget constraint, for some given integer $k$. Note that budgeted set-cover is the special case of this problem when the formula $\phi$ has no variables negated. The constraint ``$\sum_j X_j \leq k$" naturally suggests dependent rounding; on the other hand, clauses in $\phi$ such as ``$x_i \vee \overline{x_j}$" would benefit from positive correlation. For this budgeted MAX-SAT problem, though, some of our rounding approaches for $k$-median show that the structure of the problem allows for a simple rounding solution, leading to an essentially-best-possible $(1 - 1/e - \epsilon)$-approximation for any constant $\epsilon > 0$. Please see Appendix~\ref{sec:budgeted-MAX-SAT}.

\subsection{Perspective}
\label{sec:perspective}
Our primary contributions are two-fold. 

Our first contribution is an in-depth analysis of the bi-point rounding algorithm of \cite{li_svensson}, and observe that their rounding can be improved by a multi-pronged approach. In essence, these different ``prongs" have different worst-case scenarios, and hence a suitable combination of them can do better with any adversarial strategy for choosing the input instance to the problem. This leads to an improved approximation ratio for the fundamental $k$-median problem.

Our second major contribution is to (weighted) dependent rounding. This general technique has seen numerous applications over the last $15$ years or so, primarily since it can handle hard constraints such as (A2), and can also guarantee negative correlation (A3), which is useful for concentration bounds and other applications. However, the existing body of work largely does not address positive correlation, let alone near-independence, even for relatively-small subsets of the underlying variables $X_i$. We are able to show that we can guarantee all the existing properties of dependent rounding, and guarantee near-independence for not-too-large subsets, in the sense of (A4). We believe that this will yield further applications. Our dependent-rounding approach also helps improve the run-time of the $k$-median application.

%\section{Improved primal-dual algorithm}
%\label{sec:jmsprime}
%\input{sec2_primal_dual.tex}

\section{Dependent rounding with near-independence on small subsets}
\label{sec:dep-round}
\newcommand{\simplify}{\textsc{Simplify}}
\newcommand{\E}{\text{\bf E}}
\newcommand{\X}{\mathbf{X}}
\newcommand{\qavg}{\hat{q}}
\newcommand{\alphavg}{\hat{\alpha}}

Recall the setup described in Section~\ref{subsec:dep-round}, and properties (A1)-(A4) in particular. For the $k$-median application, we will actually need a weighted generalization of this, as mentioned briefly in Section~\ref{subsubsec:depround-kmedian}. The basic change is that we now have positive weights $a_1, a_2, \ldots, a_n$, and want to preserve the \emph{weighted} sum $\sum_i a_i p_i$, instead of $\sum_i p_i$ as in (A2). Such preservation may not be possible (no matter what the rounding), so we will leave at most one variable un-rounded at the end, as specified by property (A0') next. Let us describe our main problem, and then discuss the results we obtain. Our \textbf{main problem},  
given $P=(p_1,\dots,p_n)\in[0,1]^n$ and positive weights $A=(a_1\dots,a_n)\in\mathbb{R}_{>0}^n$, is to efficiently sample a vector $(X_1,X_2,\dots,X_n)$ from a distribution on $[0,1]^n$ which satisfies the following properties:
\begin{description}
\item[(A0')] ``almost-integrality": all but at most one of the $X_i$ lies in $\{0,1\}$ (with the remaining at most one element lying in $[0,1]$); 
\item[(A1')] $\forall i, \E[X_i]=p_i$;
\item[(A2')] $\Pr[\sum_i a_iX_i = \sum_i a_ip_i]=1$;
\item[(A3')] $\forall S \subseteq [n]$, $\E[\prod_{i \in S} (1 - X_i)] \leq \prod_{i \in S} (1 - p_i)$, and 
$\E[\prod_{i \in S} X_i] \leq \prod_{i \in S} p_i$;
\item[(A4': informal)] if the weights $a_i$ are ``not too far apart", there is near-independence for subsets of $\{X_1, X_2, \ldots, X_n\}$ that are of cardinality at most $t$, 
analogously to (A4). 
\end{description}

In this section we will give an $O(n)$-time algorithm (called {\sc DepRound}) for sampling from such a distribution; see Section~\ref{sec:depround}.
In particular, this shows that such distributions exist. This algorithm is a generalization of the unweighted version given in \cite{srin:level-sets}, with a specific (random) ordering of operations, leading to the added property (A4') of near-independence. 
Our main theorem is Theorem~\ref{thm:limited-dep-general}. 
It basically says, in the notation of (\ref{eqn:desired-goal}), that we can achieve $\beta_1, \beta_2 = O(t^2/(n \alpha^3))$ when 
$t \leq O(\sqrt{n \alpha^3})$. Thus, we obtain near-independence up to fairly large sizes $t$. This bound is further improved in 
Section~\ref{sec:dep-round-special}. Let us also remark about the possible (sole) index $i$ that is left unrounded, as in (A0'). Three simple ways to round this are to round down, round up, or round randomly; these can be chosen in a problem-specific manner. None of these three fits the $k$-median application perfectly; a different probabilistic handling of this index $i$ is done in Section~\ref{sec:bipoint-main-case}. 

Note that {\sc DepRound} could be described more simply as a form of pipage rounding (with an added, crucial, element of processing the variables in random order), with flow adjusted proportionally to the weights. However, in order to facilitate the analysis of (A4'), we give the following, less compact description of the algorithm.
Also note that we will apply this method to $k$-median in Section~\ref{sec:bipoint-main-case}, but here it is described as a general-purpose rounding procedure, independent of $k$-median; this is because we believe that this method is of independent interest since it goes much beyond negative correlation. Thus, the reader is asked to note that the notation defined in this section is largely separate from that of the other sections.

\subsection{The {\sc Simplify} subroutine}
\label{sec:simplify}
As in \cite{srin:level-sets}, our main subroutine is a procedure called $\simplify(a_1,a_2,\beta_1,\beta_2)$. It takes as input two fractional values $\beta_1,\beta_2\in(0,1)$, and corresponding positive weights $a_1,a_2\in\mathbb{R}_{>0}$. The subroutine outputs a random pair of values $(\gamma_1,\gamma_2)\in[0,1]^2$, with the following properties:

\begin{description}
\item[(B0)] $\gamma_1,\gamma_2\in[0,1]$, and at least one of the two variables is integral (0 or 1);
\item[(B1)] $\E[\gamma_1]=\beta_1$ and $\E[\gamma_2]=\beta_2$;
\item[(B2)] $\Pr\left[a_1\gamma_1+a_2\gamma_2=a_1\beta_1+a_2\beta_2 \right]=1$; and
\item[(B3)]
$\E[\gamma_1\gamma_2]\le\beta_1\beta_2$, and $\E[(1-\gamma_1)(1-\gamma_2)\le(1-\beta_1)(1-\beta_2)$.
\end{description}

Now define $\text{\sc simplify}(a_1,a_2,\beta_1,\beta_2)$ as follows. 
There are four cases:
\begin{enumerate}[\bf\text{Case} I:]
\item $0\le a_1\beta_1+a_2\beta_2 \le\min\{a_1,a_2\}$. With probability $a_2\beta_2/(a_1\beta_1+a_2\beta_2 )$ set $\gamma_1= 0$. With remaining probability set $\gamma_2= 0$.
\item $a_1<a_1\beta_1+a_2\beta_2 <a_2$. With probability $\beta_1$ set $\gamma_1=1$. With remaining probability set $\gamma_1=0$.
\item $a_2<a_1\beta_1+a_2\beta_2 <a_1$. With probability $\beta_2$ set $\gamma_2=1$. With remaining probability set $\gamma_2=0$.
\item $\max\{a_1,a_2\}\le a_1\beta_1+a_2\beta_2 \le a_1+a_2$. With probability $a_2(1-\beta_2)/(a_1(1-\beta_1)+a_2(1-\beta_2))$ set $\gamma_1=1$. With remaining probability set $\gamma_2=1$.
\end{enumerate}

If we set $\gamma_1=0$, then set $\gamma_2=\beta_2+\beta_1\frac{a_1}{a_2}$.

If we set $\gamma_1=1$, then set $\gamma_2=\beta_2-(1-\beta_1)\frac{a_1}{a_2}$.

If we set $\gamma_2=0$, then set $\gamma_1=\beta_1+\beta_2\frac{a_2}{a_1}$.

If we set $\gamma_2=1$, then set $\gamma_1=\beta_1-(1-\beta_2)\frac{a_2}{a_1}$.

\begin{lemma}\label{lemma:b-props}
$\simplify(a_1,a_2,\beta_1,\beta_2)$ outputs $(\gamma_1,\gamma_2)$ with properties (B0), (B1), (B2), and (B3).
\end{lemma}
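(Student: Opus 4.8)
The plan is to verify the four properties (B0)--(B3) essentially by direct case analysis over Cases I--IV of the \simplify{} definition, checking along the way that in each case the rounding is well-defined (all stated probabilities lie in $[0,1]$ and the resulting $\gamma_i$ lie in $[0,1]$). First I would dispense with (B0): by construction \simplify{} always fixes one of $\gamma_1, \gamma_2$ to a value in $\{0,1\}$, so the only content is that the \emph{other} coordinate, given by one of the four affine formulas, actually lies in $[0,1]$. For instance in Case I, conditioned on $\gamma_1 = 0$ we set $\gamma_2 = \beta_2 + \beta_1 a_1/a_2$; nonnegativity is clear, and the upper bound $\gamma_2 \le 1$ follows from the case hypothesis $a_1\beta_1 + a_2\beta_2 \le a_2$ (rearranged). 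The symmetric sub-case and the other three cases are analogous, each using exactly the corresponding inequality defining the case; I would also note that the conditioning probabilities such as $a_2\beta_2/(a_1\beta_1 + a_2\beta_2)$ are in $[0,1]$ for the same reasons (and that the denominators are nonzero whenever the relevant case is nondegenerate, the degenerate sub-cases being handled trivially).

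Next, (B2) is immediate in every case: the two possible outcomes of \simplify{} are obtained by moving along the line $a_1\gamma_1 + a_2\gamma_2 = \text{const}$ through the point $(\beta_1,\beta_2)$ — each affine update formula was reverse-engineered precisely so that $a_1\gamma_1 + a_2\gamma_2 = a_1\beta_1 + a_2\beta_2$ holds with probability $1$. I would state this once and check one representative update to confirm the bookkeeping. Then (B1): in each case there are exactly two outcomes, and one checks that the probability weights are chosen so that $\E[\gamma_1] = \beta_1$ and $\E[\gamma_2] = \beta_2$. Because of (B2), it actually suffices to verify $\E[\gamma_1] = \beta_1$ — then $\E[\gamma_2] = \beta_2$ follows automatically by linearity from $\E[a_1\gamma_1 + a_2\gamma_2] = a_1\beta_1 + a_2\beta_2$. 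For example, in Case I, $\Pr[\gamma_1 = 0] = a_2\beta_2/(a_1\beta_1+a_2\beta_2)$ and on the complement $\gamma_1 = \beta_1 + \beta_2 a_2/a_1$, and a one-line computation gives $\E[\gamma_1] = \beta_1$; the other cases are the same.

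Finally, (B3) — the negative-correlation inequalities $\E[\gamma_1\gamma_2] \le \beta_1\beta_2$ and $\E[(1-\gamma_1)(1-\gamma_2)] \le (1-\beta_1)(1-\beta_2)$ — is the only part with any real content, and I expect it to be the main obstacle. The clean way to see it: in each of the two outcomes of \simplify{}, at least one of $\gamma_1, \gamma_2$ is $0$ or $1$, so $\gamma_1\gamma_2$ is a \emph{linear} (affine) function of the random outcome — it is $0$ whenever some coordinate is set to $0$, and equals the other coordinate's affine formula when a coordinate is set to $1$. Hence the function $(\gamma_1,\gamma_2) \mapsto \gamma_1\gamma_2$ restricted to the support is affine along the segment, so $\E[\gamma_1\gamma_2]$ equals the value of that affine interpolant at the mean point $(\beta_1,\beta_2)$. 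Comparing that to the product $\beta_1\beta_2$ reduces, in each case, to a convexity/one-variable inequality that I would check case by case (Cases I and IV, where both outcomes set some variable to $0$ resp.\ $1$, are the delicate ones; Cases II and III, where only $\gamma_1$ resp.\ $\gamma_2$ moves, are easier since then one of the two products is handled by an elementary bound). The argument for $\E[(1-\gamma_1)(1-\gamma_2)]$ is entirely symmetric under the substitution $\gamma_i \mapsto 1 - \gamma_i$, $\beta_i \mapsto 1 - \beta_i$, $a_i$ fixed, which swaps Case I with Case IV and Case II with Case III, so it does not need to be redone from scratch. Assembling these per-case verifications yields all of (B0)--(B3).
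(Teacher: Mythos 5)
Your proposal is correct and takes a genuinely different route than the paper: the paper's (partial) proof is a bare-hands verification of (B0)--(B3) for Case~I only, with the other three cases left to the reader, whereas you organize the argument structurally. Your observations that (B1) for $\gamma_2$ follows from (B1) for $\gamma_1$ together with (B2) (linearity along the line $a_1\gamma_1+a_2\gamma_2=\mathrm{const}$), and that the complementation $\gamma_i\mapsto 1-\gamma_i$, $\beta_i\mapsto 1-\beta_i$ is a symmetry of $\simplify$, both cut the workload roughly in half and are genuinely useful simplifications the paper does not exploit.

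Two small points. First, you can squeeze even more out of your (B3) idea: restricted to the line $a_1\gamma_1+a_2\gamma_2=a_1\beta_1+a_2\beta_2$, the function $\gamma_1\gamma_2$ is \emph{concave} (as a single-variable quadratic in $\gamma_1$ with negative leading coefficient $-a_1/a_2$), the output distribution is supported on that line with mean $(\beta_1,\beta_2)$ by (B1)--(B2), so $\E[\gamma_1\gamma_2]\le\beta_1\beta_2$ is just Jensen's inequality --- uniformly across Cases~I--IV, no per-case inequality to check. The same works verbatim for $(1-\gamma_1)(1-\gamma_2)$, which is also concave on that line, so you do not even need the complementation trick for (B3) (though it remains handy for (B0)/(B1)). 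Second, a minor slip: complementation swaps Case~I with Case~IV, but it \emph{fixes} Cases~II and~III rather than swapping them (substituting $1-\beta_i$ into $a_1<a_1\beta_1+a_2\beta_2<a_2$ gives back $a_1<a_1\beta_1+a_2\beta_2<a_2$). This does not damage your argument --- the symmetry of the output distribution still holds case by case and is all you actually use --- but the bookkeeping claim as written is incorrect.
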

This is straightforward to show; we provide a partial proof in Appendix \ref{apdx:depround}.

\subsection{Main algorithm: {\sc DepRound}} \label{sec:depround}
We now describe the full dependent rounding algorithm, which we denote {\sc DepRound}.

\begin{algorithm}[h]
\caption{$\textsc{DepRound}(P,A)$}
\begin{algorithmic}[1]
\STATE $X\gets P$
\STATE Let $\pi\in S_n$ be a random permutation.
\WHILE{$X$ contains at least two fractional elements} 
\STATE Let $X_i$ and $X_j$ be the two left-most fractional elements in $\pi(X)$.\label{step:pick_ij}
\STATE $(X_i, X_j)\gets \simplify(a_i,a_j,X_i,X_j)$
\ENDWHILE
\STATE \textbf{Return} $X^{(t)}$.
\end{algorithmic} 
\label{algo:A}
\end{algorithm}
Define $X^{(s)}=(X^{(s)}_1,X^{(s)}_2,\ldots,X^{(s)}_n)$ to be the value of $X$ after the step $s$ of the rounding process, with $X^{(0)}=P$, and $X^{(T)}=X$ if the algorithm halts after $T$ steps. 
We say $X_i$ is \emph{fixed} during step $s$ if $X_i^{(s-1)}$ is fractional but $X_i^{(s)}$ is integral, since this implies $X_i$ will not change in any future steps.

It is not hard to follow the proof of \cite{srin:level-sets} and validate properties (A0'), (A1'), (A2'), and (A3'); we present a proof sketch below. 
\begin{lemma}
{\sc DepRound} samples a vector in $O(n)$ time; this vector satisfies properties (A0'), (A1'), (A2'), and (A3').
\end{lemma}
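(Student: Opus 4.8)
The statement bundles together four claims about {\sc DepRound}: the $O(n)$ running-time bound, and the three properties (A0'), (A1'), (A2'), (A3'). I would prove each in turn, in that order, leaning on Lemma~\ref{lemma:b-props} (the properties (B0)--(B3) of {\sc Simplify}) throughout.

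For the \emph{running time}: the key observation is that each call to {\sc Simplify} in the while-loop \emph{fixes} at least one variable, by property (B0) (at least one of the two outputs is integral). A variable, once integral, is skipped forever in step~\ref{step:pick_ij}. Hence there are at most $n$ iterations of the while-loop. Since the random permutation $\pi$ can be sampled in $O(n)$ time and each iteration -- finding the two left-most fractional coordinates and calling {\sc Simplify} -- can be implemented in amortized $O(1)$ time (walk a pointer left-to-right through $\pi$, never backtracking, since coordinates only get fixed and never ``un-fixed''), the total is $O(n)$.

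For \textbf{(A0')} ``almost-integrality'': the while-loop terminates precisely when $X$ has fewer than two fractional coordinates, i.e.\ at most one; this is immediate from the loop guard, given termination (which the running-time argument gives). For \textbf{(A2')} ``weighted-sum preservation'': this is an invariant. Write $W(X) = \sum_i a_i X_i$. Property (B2) of {\sc Simplify} says that replacing $(X_i, X_j)$ by $(\gamma_1,\gamma_2)$ leaves $a_i X_i + a_j X_j$ unchanged with probability $1$; all other coordinates are untouched; so $W$ is preserved with probability $1$ at every step, hence $W(X^{(T)}) = W(X^{(0)}) = \sum_i a_i p_i$ surely. For \textbf{(A1')} ``right marginals'': I would show, by induction on the step $s$, that $\E[X_i^{(s)}] = p_i$ for every $i$. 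Condition on the state $X^{(s-1)}$ and on which pair $(i,j)$ is selected at step $s$ (the pair selected is determined by $X^{(s-1)}$ and $\pi$, both independent of the fresh randomness used inside {\sc Simplify} at this step). For coordinates other than the selected pair, $X^{(s)} = X^{(s-1)}$. For the two selected coordinates, property (B1) gives $\E[X_i^{(s)} \mid X^{(s-1)}] = X_i^{(s-1)}$; taking expectations and using the inductive hypothesis closes the induction. Since the algorithm halts in finitely many steps (indeed $T \le n$), $\E[X_i] = p_i$. For \textbf{(A3')} negative correlation: fix $S \subseteq [n]$ and consider $\E[\prod_{i\in S} X_i^{(s)}]$ (the $\prod(1-X_i)$ case is symmetric). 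Again condition on $X^{(s-1)}$ and the selected pair $(i,j)$. If at most one of $i,j$ lies in $S$, the product $\prod_{i\in S} X_i$ is a linear function of the single varying coordinate, and (B1) gives equality in expectation. If both $i,j \in S$, factor out the common factor $\prod_{i\in S\setminus\{i,j\}} X_i^{(s-1)} \ge 0$ and apply the first inequality in (B3), $\E[\gamma_1\gamma_2] \le \beta_1\beta_2$, to get $\E[\prod_{i\in S} X_i^{(s)} \mid X^{(s-1)}] \le \prod_{i\in S} X_i^{(s-1)}$. Thus $s \mapsto \E[\prod_{i\in S} X_i^{(s)}]$ is non-increasing, so at termination it is at most $\prod_{i\in S} p_i$.

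\textbf{Main obstacle.} The proof is essentially bookkeeping once Lemma~\ref{lemma:b-props} is in hand; the one genuinely delicate point is the conditioning/independence structure in the (A1') and (A3') inductions. One must be careful that when one ``conditions on which pair is selected,'' the selection rule in step~\ref{step:pick_ij} depends only on the current fractional pattern of $X^{(s-1)}$ and on $\pi$ -- not on the internal coin flips of the pending {\sc Simplify} call -- so that the martingale-type step ``$\E[\text{new}\mid\text{old}] = \text{old}$'' (resp.\ ``$\le$'') is legitimate. A secondary subtlety is justifying that the running-time / termination argument is airtight even though {\sc Simplify} can, in principle, turn a previously-integral-looking value fractional: it cannot, since {\sc Simplify} only ever touches the two selected coordinates and at least one output is integral, so the number of integral coordinates is non-decreasing and strictly increases each iteration. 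I would state these two points explicitly and otherwise keep the proof sketch-level, as the paper does ("we present a proof sketch below").
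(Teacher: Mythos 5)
Your proposal is correct and follows the paper's own proof sketch closely: both argue the $O(n)$ running time from the fact that {\sc Simplify} fixes at least one variable per call (property (B0)), establish (A0') directly from the loop guard, and prove (A1'), (A2'), (A3') by step-by-step induction using (B1), (B2), (B3) respectively. You supply a bit more detail than the paper does --- the amortized pointer implementation for finding the two left-most fractional coordinates, and the explicit justification that the pair selection at step $s$ depends only on $X^{(s-1)}$ and $\pi$, not on the fresh {\sc Simplify} randomness --- but these are elaborations of the same argument rather than a different route.
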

\begin{proof}
%The proof is basically as in \cite{srin:level-sets}, but we sketch it here.
By definition, {\sc DepRound} ends only when there is at most one fractional variable remaining, and by definition of $\simplify$, the remaining variables are either 0 or 1, showing (A0'). Since at least one variable is fixed in each constant-time step, the algorithm takes at most $n-1$ steps, and thus runs in linear time. (The random permutation $\pi$ may also be generated in linear time.)

(B1) implies that for each $i$ and $s$, $\E[X_i^{(s)}]=X_i^{(s-1)}$. By induction, this implies (A1'). (B2) implies that $\sum_i a_i X^{(T)}_i =\sum_i a_i X^{(T-1)}_i =\ldots=\sum_i a_i X^{(0)}=\sum_i a_i p_i$, thus implying (A2'). Finally, let $X_i$ and $X_j$ be the variables chosen in line \ref{step:pick_ij} during step $t$. If $i,j\in S$, then (B3) implies $\E[\prod_{i\in S}X^{(s)}_i]\le \prod_{i\in S}X^{(s-1)}_i$ and $\E[\prod_{i\in S}(1-X^{(s)}_i)]\le\prod_{i\in S}(1-X^{(s-1)}_i)$ (all other terms in the product are constant). If only one or neither of $i,j$ are in $S$, then the same hold with equality. By induction, these imply (A3').
\end{proof}
Note that the above properties hold under any ordering $\pi$. The additional element of using $\pi$ to process the indices in random order is needed only for the new property (A4').

\subsection{Limited dependence} 
In this section we will prove the limited dependence property (A4').  Consider a subset $I\subseteq[n]$ of $t$ indices, with corresponding ``target'' bits $(b_i)_{i\in I}\in\{0,1\}^t$.  For each $i\in I$, define $Y_i:=X_i$ if $b_i=1$, or $Y_i:=1-X_i$ if $b_i=0$. We are interested in the value of $\E\left[\prod_{i\in I}Y_i\right]$, which is essentially equal to the joint probability $\bigwedge_{i\in I}(X_i=b_i)$. Note these are not exactly equivalent, because of the one fractional variable. This variable must be handled in a domain-specific way to ensure this property holds, as we will do when applying it to $k$-median.

From property (A1'), we have $\E[X_i]=p_i$. Similarly, $\forall i\in I$, define $q_i$ such that $\E[Y_i]=q_i$. That is, let $q_i$ be either $p_i$ or $1-p_i$ if $b_i$ is 1 or 0, respectively. If we independently rounded each variable, $\E[\prod_{i\in I} Y_i]$ would be exactly $\prod_{i\in I}q_i$ (but (A2') would be violated). We will show that when set $I$ is not too large, the product is still very close to $\prod_{i\in I}q_i$ in expectation (i.e., that the variables $\{X_i\}_{i\in I}$ are near-independent). 

During a run of {\sc DepRound}, we say that two variables $X_i$ and $X_j$ are \emph{co-rounded} if they are both changed in the same step. We will first show that if two variables are far apart in $\pi(X)$, then they are unlikely to be co-rounded. Then we will show that a group of variables is near-independent if the probability of any two of them being co-rounded is small. Finally, we will show that for a small enough set $I$ variables are very likely to be far apart, and thus very likely to remain independent.

\subsubsection{Distant variables are seldom co-rounded}
Recall that {\sc DepRound} always calls $\simplify$ on the two left-most fractional variables in $\pi(X)$, fixing at least one of them (to 0 or 1). Thus, once a variable \emph{survives} (i.e., remains fractional after) one step it will continue to be included in all subsequent $\simplify$ steps until it gets fixed (or becomes the last remaining fractional variable). We want to upper bound the probability that a variable survives for too long. We first show that in any two consecutive steps involving $X_i$, there is a minimum probability that $X_i$ gets fixed (if the weights are not too different).
\begin{lemma}\label{lemma:pair-prob}
Let $a_{min}:=\min_i\{a_i\}$ and $a_{max}:=\max_i\{a_i\}$ be the minimum and maximum weights. Suppose $\frac{a_{max}}{a_{min}}\le 2$. 
Suppose $X_i$ is co-rounded with variable $X_j$ in step $s$, and if it survives it will be co-rounded with variable $X_k$ in step $s+1$. Let $\beta_j:=X_j^{(s-1)}$ and $\beta_k:=X_k^{(s)}$. 
Then $X_i$ will be fixed in one of these two steps with probability at least $p=\min\{\beta_j,1-\beta_j\}\cdot\min\{\beta_k,1-\beta_k\}$.
\end{lemma}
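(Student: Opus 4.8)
The plan is to distinguish, for the step-$s$ call, whether $X_i$ is by itself likely to get fixed, or whether we must lean on step $s+1$; in the second regime the hypothesis $a_{max}/a_{min}\le 2$ is exactly what prevents $X_i$ from being ``stuck'' in both steps. First I would record a one-step fact about $\simplify$, read off from its four cases just as in the proof of Lemma~\ref{lemma:b-props}: for a call $\simplify(a_1,a_2,\beta_1,\beta_2)$ with $\beta_1,\beta_2\in(0,1)$, the first coordinate $\gamma_1$ ends up integral with probability at least $\min\{\beta_2,1-\beta_2\}$ -- indeed with probability $\ge\beta_2$ in Case I, probability $1$ in Case II, and $\ge 1-\beta_2$ in Case IV -- \emph{unless} we are in Case III ($a_2<a_1\beta_1+a_2\beta_2<a_1$), where $\gamma_1$ is never integral while $\gamma_2$ is always integral with $\Pr[\gamma_2=1]=\beta_2$, and then (B2) forces $\gamma_1$ to be $(a_1\beta_1+a_2\beta_2)/a_1$ if $\gamma_2=0$ and $(a_1\beta_1+a_2\beta_2-a_2)/a_1$ if $\gamma_2=1$. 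The analogous statement holds for $\gamma_2$ with Cases II and III interchanged.

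Now I would apply this to step $s$. Let $\beta_i:=X_i^{(s-1)}$ and let $a_i,a_j$ be the two weights; $X_i$ sits in one of the two coordinates. If the one-step fact is in its ``main'' regime for that coordinate, then $X_i$ gets fixed in step $s$ with probability $\ge\min\{\beta_j,1-\beta_j\}$, and since $\min\{\beta_k,1-\beta_k\}\le 1$ the claimed bound $p$ is immediate. Otherwise we are in the exceptional regime -- Case III if $X_i$ is the first argument, Case II if the second -- and in both subcases $X_i$ is the heavier of the pair ($a_i>a_j$), the call fixes $X_j$ with $\Pr[X_j^{(s)}=1]=\beta_j$, and it leaves $X_i$ fractional with probability $1$. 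Writing $W:=a_i\beta_i+a_j\beta_j$, the exceptional regime gives $a_j<W<a_i$, and (B2) then pins $X_i^{(s)}$ to be $h:=W/a_i\in(a_j/a_i,1)$ when $X_j^{(s)}=0$ (probability $1-\beta_j$) and $\ell:=(W-a_j)/a_i=h-a_j/a_i\in(0,1-a_j/a_i)$ when $X_j^{(s)}=1$ (probability $\beta_j$).

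In step $s+1$ the survivor $X_i$ is again the first argument, now of $\simplify(a_i,a_k,X_i^{(s)},\beta_k)$, with $\beta_k=X_k^{(s)}\in(0,1)$ (note $X_k$ was untouched in step $s$). By the one-step fact, $X_i$ gets fixed during step $s+1$ with probability $\ge\min\{\beta_k,1-\beta_k\}$ \emph{unless} that call is in Case III, i.e.\ unless $a_k<a_i$ and $X_i^{(s)}$ lies in the interval $J:=\bigl(\tfrac{a_k}{a_i}(1-\beta_k),\,1-\tfrac{a_k}{a_i}\beta_k\bigr)$, of length $1-a_k/a_i$. The crux is that $h$ and $\ell$ cannot \emph{both} lie in $J$: since $\ell=h-a_j/a_i$, this would force $J$ to overlap its own translate $J+a_j/a_i$, hence $a_j/a_i<1-a_k/a_i$, i.e.\ $a_i>a_j+a_k$ -- impossible, because $a_j+a_k\ge 2a_{min}\ge a_{max}\ge a_i$ under $a_{max}/a_{min}\le 2$. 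So at least one of the two step-$s$ outcomes leaves $X_i^{(s)}\notin J$; that outcome has probability $\beta_j$ or $1-\beta_j$, hence $\ge\min\{\beta_j,1-\beta_j\}$, and conditioned on it, $X_i$ gets fixed during step $s+1$ with probability $\ge\min\{\beta_k,1-\beta_k\}$. Since in this regime $X_i$ survives step $s$ with probability $1$, multiplying yields $\Pr[X_i\text{ fixed in step }s\text{ or }s+1]\ge\min\{\beta_j,1-\beta_j\}\cdot\min\{\beta_k,1-\beta_k\}=p$.

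I expect the only genuine obstacle to be the crux of the last paragraph -- that $h$ and $\ell$ are not simultaneously in the ``bad'' interval $J$ -- and that is exactly where $a_{max}/a_{min}\le 2$ is used, via $a_i\le a_j+a_k$. Everything else is routine: verifying the one-step fact directly from the definition of $\simplify$, and carefully tracking which coordinate $X_i$ occupies in steps $s$ and $s+1$ (and the fact that its partner there is genuinely fixed, not merely ``touched'').
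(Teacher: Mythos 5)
Your proposal is correct and follows essentially the same route as the paper's proof: identify Case III (for the coordinate occupied by $X_i$) as the only way $X_i$ survives step $s$, note that the two possible surviving values differ by exactly $a_j/a_i$ while the step-$(s{+}1)$ Case III window has length $1-a_k/a_i$, and use the weight-ratio hypothesis to rule out both values landing in that window. The only cosmetic difference is that you phrase the final contradiction as $a_i\le a_j+a_k$ whereas the paper bounds $a_j/a_i\ge \tfrac12$ and $1-a_k/a_i\le\tfrac12$ separately; these are equivalent under $a_{max}/a_{min}\le 2$.
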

\begin{proof}
What is the probability that $X_i$ is fixed in the first step? It depends on which of the four cases occur. In case I, it is $\frac{a_j\beta_j}{a_i\beta_i+a_j\beta_j}\ge \frac{a_j\beta_j}{a_j}=\beta_j$. In case II it is 1. In case IV it is $\frac{a_j(1-\beta_j)}{a_i(1-\beta_i)+a_j(1-\beta_j)}
= \frac{a_j(1-\beta_j)}{a_i+a_j - (a_i\beta_i+a_j\beta_j)}
\ge  \frac{a_j(1-\beta_j)}{a_i+a_j-a_i}=1-\beta_j$. In these three cases, $X_i$ is fixed with probability at least $\min\{\beta_j,1-\beta_j\}\ge p$. However, in the remaining case III, $X_i$ will be fixed with probability 0. 

Given that case III occurs in the first step, what is the probability that $X_i$ gets fixed in the second step? It is at least the probability that one of cases I, II or IV occur in the second step, times $\min\{\beta_k,1-\beta_k\}$ (by the same reasoning as above). When case III occurs in the first step, $X_i^{(s)}$ gets set randomly to one of two values: $\beta_i+\beta_j\frac{a_j}{a_i}$ or $\beta_i-(1-\beta_j)\frac{a_j}{a_i}$, with probability $1-\beta_j$ or $\beta_j$, respectively.  These values differ by exactly $\frac{a_j}{a_i} \ge\frac{a_{min}}{a_{max}}\ge\frac12$. 
Now, in the second step, case III only occurs if $X_i^{(s)}$ lies in the open interval $(\frac{a_k}{a_i}-\frac{a_k}{a_i}\beta_k, 1-\frac{a_k}{a_i}\beta_k)$. But the distance between any two numbers in this interval is strictly less than $1-\frac{a_k}{a_i}\le1-\frac{a_{min}} {a_{max}}\le\frac12$. Therefore, the two possible values of $X_i^{(s)}$ cannot both lie in the interval required for case III, so with probability at least $\min\{\beta_j,1-\beta_j\}$, the second step will be a case other than III. 
\end{proof} 

\smallskip \noindent \textbf{Important remark on notation:} In the following paragraph, by overloading notation, we fix the random permutation $\pi$ to be some arbitrary but fixed $\pi$. Several pieces of notation such as $\sigma, J_i$, and, most importantly, 
$\delta_k:=\prod_{i=0}^{\lfloor |J_k|/2\rfloor}(1-\alpha_{j_{k,2i}}\alpha_{j_{k,2i+1}})$, are functions of this $\pi$. All statements and proofs \emph{until the end of the proof of Lemma~\ref{lemma:sandwich1}}, are \emph{conditional on the random permutation equaling $\pi$} (this is sometimes stated explicitly, sometimes not). 

\smallskip
Lemma~\ref{lemma:pair-prob} implies that the probability of a variable's survival decays exponentially with the number of steps survived. Now, given a permutation $\pi\in S_n$, let $\sigma:[t]\to I$ be the bijection such that $\pi^{-1}(\sigma(1))<\pi^{-1}(\sigma(2))<\cdots<\pi^{-1}(\sigma(t))$. 
Let 
\[ J:=[n]\setminus I \] 
be the set of the indices not in $I$. Now partition $J$ into sequences $(J_0,J_1,\ldots,J_t)$, using elements of $I$ as dividers. Formally, for $k=0,\ldots,t$, let $J_k$ be the maximal sequence $(j_{k,1},j_{k,2},\ldots)$ satisfying $\pi^{-1}(\sigma(k))<\pi^{-1}(j_{k,1})<\pi^{-1}(j_{k,2})<\cdots<\pi^{-1}(\sigma(k+1))$, letting $\pi^{-1}(\sigma(0)):=0$ and $\pi^{-1}(\sigma(i+1)):=n+1$. Note that if $\sigma(k)$ and $\sigma(k+1)$ are directly adjacent in $\pi$, then $J_k$ will be empty, as seen in the example below. 
\[ I=\{2,3,8\}
\qquad\pi([n])=(\underbrace{12,11,5}_{J_0},\mathbf{3},\underbrace{4,1,9,6}_{J_1},\underset{J_2=\emptyset}{\mathbf{8},\mathbf{2}},\underbrace{7,10}_{J_3})
 \qquad  (\sigma(1),\sigma(2),\sigma(3))=(3,8,2)
\]

For $k\in[t-1]$, let $Z_k$ be the ``bad'' event that {\sc DepRound} co-rounds $X_{\sigma(k)}$ and $X_{\sigma(k+1)}$. For $Z_k$ to occur, it is necessary that $X_{\sigma(k)}$ be co-rounded with all variables inbetween as well. For example, in the above sequence, $Z_1$ means that $X_{\sigma(1)}=X_3$ must be co-rounded with $X_4, X_1, X_9, X_6$, (surviving each round), and finally $X_8$. The next lemma bounds $\Pr[Z_k]$ in terms of the set $J_k$.

\begin{lemma}\label{lemma:zbound}
Let $\alpha_i:=\min\{p_i,1-p_i\}$. If $\frac{a_{max}}{a_{min}}\le 2$, then $\forall k\in[t-1]$, we have $\Pr[Z_k]\le \delta_k$, where $\delta_k:=\prod_{i=1}^{\lfloor |J_k|/2\rfloor}(1-\alpha_{j_{k,2i-1}}\alpha_{j_{k,2i}})$.
\end{lemma}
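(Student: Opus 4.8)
The plan is to bound $\Pr[Z_k]$ by iteratively applying Lemma~\ref{lemma:pair-prob} along the chain of co-roundings that $Z_k$ forces. Recall that for $Z_k$ to occur, $X_{\sigma(k)}$ must be co-rounded with $X_{j_{k,1}}$, then survive and be co-rounded with $X_{j_{k,2}}$, then survive and be co-rounded with $X_{j_{k,3}}$, and so on through all of $J_k=(j_{k,1},\dots,j_{k,m})$, and finally be co-rounded with $X_{\sigma(k+1)}$. Group these forced co-roundings into consecutive pairs: steps involving $(j_{k,1},j_{k,2})$, then $(j_{k,3},j_{k,4})$, etc. Lemma~\ref{lemma:pair-prob} says that in any such pair of consecutive steps in which $X_{\sigma(k)}$ participates, it gets fixed with probability at least $\min\{\beta,1-\beta\}\cdot\min\{\beta',1-\beta'\}$, where $\beta,\beta'$ are the values of the two partner variables at the moment they are co-rounded. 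So for each pair, the probability that $X_{\sigma(k)}$ \emph{survives} the pair is at most $1-\min\{\beta,1-\beta\}\cdot\min\{\beta',1-\beta'\}$.

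The key remaining point is to replace the running values $\beta=X_{j_{k,2i-1}}^{(\cdot)}$ and $\beta'=X_{j_{k,2i}}^{(\cdot)}$ by the fixed quantities $\alpha_{j_{k,2i-1}}$ and $\alpha_{j_{k,2i}}$. For this, observe that each partner variable $X_{j}$, at the moment it is co-rounded with $X_{\sigma(k)}$, has itself not yet been fixed, so its current value $\beta$ is a martingale-consistent descendant of $p_j$; by property (A1') and the fact that $\min\{x,1-x\}$ is concave, we have $\E[\min\{\beta,1-\beta\}]\ge \min\{p_j,1-p_j\}=\alpha_j$ — but more directly, we want an almost-sure statement. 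Here one uses that the pairing of partner variables in consecutive steps, together with independence of the \simplify\ coin flips across steps, lets us take conditional expectations one pair at a time: conditioning on the history up to the start of the $i$-th pair, the probability $X_{\sigma(k)}$ survives that pair is at most $1-\min\{\beta,1-\beta\}\min\{\beta',1-\beta'\}$ where $\beta,\beta'$ are now measurable with respect to that history, and taking expectation and using concavity of $x\mapsto \min\{x,1-x\}$ on each factor gives the bound $1-\alpha_{j_{k,2i-1}}\alpha_{j_{k,2i}}$ for that pair. Multiplying over the $\lfloor|J_k|/2\rfloor$ disjoint pairs (a leftover single element, if $|J_k|$ is odd, is simply dropped from the bound since its factor is $\le 1$) yields $\Pr[Z_k]\le\prod_{i=1}^{\lfloor|J_k|/2\rfloor}(1-\alpha_{j_{k,2i-1}}\alpha_{j_{k,2i}})=\delta_k$.

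I would carry this out in the following order: first, formally set up the chain of events that $Z_k$ entails and index the forced co-roundings; second, partition them into consecutive disjoint pairs and note that a leftover odd element only weakens the needed bound; third, for a single pair, invoke Lemma~\ref{lemma:pair-prob} (whose hypothesis $a_{max}/a_{min}\le 2$ is exactly the standing assumption) to get the per-pair survival bound in terms of the partners' current values; fourth, peel off the pairs one at a time via the tower property, at each stage taking expectations and applying concavity of $\min\{x,1-x\}$ to pass from the random current value of a partner to $\alpha_j$; finally, multiply the per-pair bounds to obtain $\delta_k$.

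The main obstacle I anticipate is the careful handling of the conditioning so that Lemma~\ref{lemma:pair-prob} can legitimately be applied pair-by-pair: one must verify that, conditioned on $X_{\sigma(k)}$ having survived all earlier forced steps, the next two partners $X_{j_{k,2i-1}},X_{j_{k,2i}}$ are indeed the ones it is co-rounded with in the next two steps it participates in (this is where the structure of $J_k$ as the block of $J$-indices strictly between $\sigma(k)$ and $\sigma(k+1)$ in $\pi$, combined with the "left-most fractional" rule of {\sc DepRound}, is essential), and that the current values $\beta,\beta'$ of those partners are determined by the history up to that point — so that the concavity/Jensen step applies to genuine conditional expectations. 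A secondary subtlety is the parity: when $|J_k|$ is odd, after pairing off $\lfloor|J_k|/2\rfloor$ couples there is a dangling partner (and the final co-rounding with $X_{\sigma(k+1)}$), whose contribution we simply bound by $1$; this is harmless but should be stated explicitly so the product has exactly $\lfloor|J_k|/2\rfloor$ factors as claimed.
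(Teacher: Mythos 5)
Your high-level plan (pair the forced co-roundings, invoke Lemma~\ref{lemma:pair-prob} per pair, peel off pairs by conditioning, multiply) is the paper's. But there is a genuine error in how you pass from the partner variables' values at the moment of co-rounding to the constants $\alpha_{j}$.

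You model the partner value $\beta$ as a ``martingale-consistent descendant of $p_j$'' and try to invoke concavity of $x\mapsto\min\{x,1-x\}$. But Jensen for a \emph{concave} function gives $\E[\min\{\beta,1-\beta\}]\le\min\{\E[\beta],1-\E[\beta]\}=\alpha_j$, which is the \emph{opposite} of the inequality you write and the opposite of what the per-pair bound requires (you need $\E[\min\{\beta,1-\beta\}\cdot\min\{\beta',1-\beta'\}]\ge\alpha_j\alpha_{j'}$, i.e.\ a lower bound). As written, this step would fail. The correct and much simpler observation — the one the paper uses — is that $\beta$ is not random at all: when $X_{\sigma(k)}$ is co-rounded with $X_{j_{k,\ell}}$, that is the \emph{first} step in which $X_{j_{k,\ell}}$ participates. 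This is because $j_{k,\ell}$ lies strictly to the right of $\sigma(k)$ and to the right of $j_{k,1},\ldots,j_{k,\ell-1}$ in $\pi$-order, and {\sc DepRound} always works on the two left-most fractional entries; in the forced chain all of $j_{k,1},\ldots,j_{k,\ell-1}$ have been fixed while $X_{\sigma(k)}$ survives, so the left-most two fractional entries at that step are precisely $X_{\sigma(k)}$ and $X_{j_{k,\ell}}$, and $X_{j_{k,\ell}}$ has never been touched. Thus its value entering {\sc Simplify} equals $p_{j_{k,\ell}}$ exactly, so $\min\{\beta,1-\beta\}=\alpha_{j_{k,\ell}}$ deterministically, and Lemma~\ref{lemma:pair-prob} directly yields the per-pair survival bound $1-\alpha_{j_{k,2i-1}}\alpha_{j_{k,2i}}$ with no averaging argument needed. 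In short: the pairing and telescoping structure is right, but the concavity step is incorrect in direction and should be replaced by the deterministic ``partners enter at their initial values'' observation.
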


\begin{proof}
Assume $|J_k|\ge2$ (else the lemma is trivially true).
Let $E_{0}$ be the event that $X_{\sigma(k)}$ is co-rounded with $X_{j_{k,1}}$ (i.e. is not fixed earlier in the algorithm).
For $\ell\in[|J_k|]$, let $E_\ell$ be the event that $X_{\sigma(k)}$ is corounded with $X_{j_{k,\ell}}$ \emph{and} survives. To apply Lemma 2.3, we first express the probability of $Z_k$ in terms of consecutive pairs of events.
\begin{align}
\Pr[Z_k]=\Pr[E_0\land E_1\land\ldots\land E_{|J_k|}]
&\le \Pr\left[E_0\land \left(\bigwedge_{i=1}^{\lfloor |J_k|/2\rfloor}(E_{2i-1}\land E_{2i})\right)\right] 
\nonumber\\&= Pr[E_0]\prod_{i=1}^{\lfloor |J_k|/2\rfloor}\Pr\left[ E_{2i-1}\land E_{2i}\,\Big\vert \bigwedge_{j=0}^{2i-2} E_j\right]
\nonumber\\&\le \prod_{i=1}^{\lfloor |J_k|/2\rfloor}\Pr\left[ E_{2i-1}\land E_{2i}\,\Big\vert \bigwedge_{j=0}^{2i-2} E_j\right].
\label{eq:pair_prod}
\end{align}

Note that $E_{\ell}$ implies $E_{\ell-1}\land E_{\ell-2}\land\ldots\land E_0$. So $\Pr\left[E_{2i-1}\land E_{2i}\mid \bigwedge_{i=1}^{2i-2} E_i\right]=\Pr\left[E_{2i-1}\land E_{2i}\mid E_{2i-2}\right]$.
Observe that event $E_{2i-2}$ is equivalent to the event that $X_{\sigma(k)}$ is co-rounded with $X_{j_{k,2i-1}}$. Also, observe that if $E_\ell$ occurs during step $s$ (for $\ell\in[|J_k|]$), then $s$ is the first step involving $X_{j_{k,\ell}}$, so the input value to $\simplify$ is $X_{j_{k,\ell}}^{(s-1)}=p_{j_{k,\ell}}$. These observations together with Lemma 2.3 imply that $\Pr\left[E_{2i-1}\land E_{2i}\mid E_{2i-2}\right]\le 1-\alpha_{j_{k,2i-1}}\alpha_{j_{k,2i}}$. Then (\ref{eq:pair_prod}) is bounded by $\delta_k$ as defined in the lemma.
\end{proof}

\subsubsection{Seldom co-rounded variables are near-independent}

The following lemmas show that if the probability of variables in $\{X_i\}_{i\in I}$ being co-rounded is low, then $\E[\prod_{i\in I} Y_i]\approx \prod_{i\in I} q_i$. For notational convenience, define $\delta_t:=0$.

\begin{lemma}\label{lemma:sandwich1} Let $I_k:=\{\sigma(k),\sigma(k+1),\ldots,\sigma(t)\}$.
Let $\mathcal{E}_k$ denote a set of events which consists of exactly one of $Z_i$ or $(\bar Z_i\land Y_{\sigma(i)}=y_i)$ for each $i=1\ldots k$, where $y_i$ is some attainable value of $Y_{\sigma(i)}$. Then, conditioned on a fixed permutation $\pi$, the following holds for all $k\in[t]$: 
\begin{equation} 
\prod_{i\in I_k} \max\left\{q_i - \delta_{\sigma^{-1}(i)},0\right\}
\le \E\bigg[\prod_{i\in I_k} Y_i \,\Big\rvert\, \mathcal{E}_{k-1}
\bigg]
\le \prod_{i\in I_k} \left(q_i + \delta_{\sigma^{-1}(i)}\right).
\label{eq:sandwich-ih}\end{equation}
\end{lemma}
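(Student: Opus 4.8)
The plan is to prove (\ref{eq:sandwich-ih}) by downward induction on $k$, from $k=t$ to $k=1$, all conditioned on the fixed permutation $\pi$. The base case $k=t$ reduces to a single variable: $I_t = \{\sigma(t)\}$, $\delta_{\sigma^{-1}(\sigma(t))} = \delta_t = 0$, and $\E[Y_{\sigma(t)} \mid \mathcal{E}_{t-1}]$ must be shown to equal exactly $q_{\sigma(t)}$. For this I would argue that the events in $\mathcal{E}_{t-1}$ — which record, for each $i \le t-1$, either that $X_{\sigma(i)}$ and $X_{\sigma(i+1)}$ got co-rounded, or that they did not and $Y_{\sigma(i)}$ attained a particular value — depend only on the randomness of $\simplify$ calls that occur strictly before $X_{\sigma(t)}$ is ever touched (since $\sigma(t)$ is the last $I$-index in $\pi$-order, and the algorithm processes left-to-right), so that conditioning on $\mathcal{E}_{t-1}$ just fixes the prefix of the execution and leaves $\E[X_{\sigma(t)}] = p_{\sigma(t)}$ unchanged by the martingale property (B1)/(A1'). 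Hence $\E[Y_{\sigma(t)}\mid \mathcal{E}_{t-1}] = q_{\sigma(t)}$ exactly, giving the base case.

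For the inductive step, assume (\ref{eq:sandwich-ih}) holds for $k+1$ and prove it for $k$. Write $\prod_{i \in I_k} Y_i = Y_{\sigma(k)} \cdot \prod_{i \in I_{k+1}} Y_i$ and condition on $\mathcal{E}_{k-1}$. The idea is to split on the bad event $Z_k$: if $Z_k$ occurs, bound $Y_{\sigma(k)} \in [0,1]$ trivially and then condition further on the value $y_k$ of $Y_{\sigma(k)}$; if $\bar Z_k$ occurs, then — crucially — $X_{\sigma(k)}$ is fixed (to $0$ or $1$) before $X_{\sigma(k+1)}$ is first processed, so everything that happens from $X_{\sigma(k+1)}$ onward is independent of the value of $Y_{\sigma(k)}$ given $\mathcal{E}_{k-1} \wedge \bar Z_k$; moreover on $\bar Z_k$ we have $\E[Y_{\sigma(k)} \mid \mathcal{E}_{k-1}, \bar Z_k] = q_{\sigma(k)} + (\text{correction})$ where the correction comes from subtracting off the $Z_k$ part of the unconditional expectation $q_{\sigma(k)}$ (using (A1') and Lemma~\ref{lemma:zbound}'s bound $\Pr[Z_k] \le \delta_k$). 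Concretely, $\E[Y_{\sigma(k)} \mid \mathcal{E}_{k-1}] = q_{\sigma(k)}$ by the same martingale argument as in the base case, so $\Pr[\bar Z_k \mid \mathcal{E}_{k-1}]\, \E[Y_{\sigma(k)}\mid \mathcal{E}_{k-1}, \bar Z_k]$ lies within $\delta_k$ of $q_{\sigma(k)}$, which after pairing with the inductive bounds for the inner product (conditioned on the enlarged event $\mathcal{E}_k$, which now legitimately includes one of $Z_k$ or $(\bar Z_k \wedge Y_{\sigma(k)} = y_k)$) yields the upper bound $\prod_{i \in I_k}(q_i + \delta_{\sigma^{-1}(i)})$ and the lower bound $\prod_{i \in I_k}\max\{q_i - \delta_{\sigma^{-1}(i)}, 0\}$. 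The $\max\{\cdot, 0\}$ is what lets the trivial bounds $0 \le Y_{\sigma(k)} \le 1$ on the event $Z_k$ be absorbed cleanly.

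The main obstacle I expect is making the conditional-independence claim fully rigorous: I need that conditioning on $\mathcal{E}_{k-1}$ (events about indices $\sigma(1),\dots,\sigma(k)$ and the $J$-blocks between them) does not bias the future execution governing $Y_{\sigma(k)}, Y_{\sigma(k+1)}, \dots$ beyond what is already accounted for, and in particular that on $\bar Z_k$ the factor $Y_{\sigma(k)}$ decouples from $\prod_{i \in I_{k+1}} Y_i$. This requires carefully tracking which $\simplify$ calls determine which events — essentially a filtration argument on the sequence of steps of {\sc DepRound}, using that once $X_{\sigma(k)}$ is fixed it never changes and plays no role in subsequent calls, and that the left-to-right processing order means the "history up to the first step touching $X_{\sigma(k+1)}$" is measurable with respect to a clean prefix. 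A secondary subtlety is bookkeeping the definition of $\mathcal{E}_k$ versus $\mathcal{E}_{k-1}$ so that the induction hypothesis at level $k+1$ applies with the correct conditioning event; one must check that $\mathcal{E}_k$ as built in the step is exactly of the form required (one of $Z_i$ or $(\bar Z_i \wedge Y_{\sigma(i)} = y_i)$ for each $i \le k$) and that $y_k$ ranges only over attainable values. Once the filtration structure is set up, the arithmetic of combining the $\delta_k$-error from the outer factor with the product error from the inductive hypothesis is routine interval bookkeeping.
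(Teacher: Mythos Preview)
Your proposal is correct and matches the paper's argument: downward induction on $k$, splitting on $Z_k$ versus $\bar Z_k$, applying the induction hypothesis with the enlarged conditioning $\mathcal{E}_k$, and using the measurability of $\mathcal{E}_{k-1}$ with respect to the history before $X_{\sigma(k)}$ is first touched to obtain $\E[Y_{\sigma(k)}\mid\mathcal{E}_{k-1}]=q_{\sigma(k)}$ and $\Pr[Z_k\mid\mathcal{E}_{k-1}]\le\delta_k$. One clarification to your exposition: on the event $Z_k$ you must \emph{not} condition further on $Y_{\sigma(k)}=y_k$ (your later description of $\mathcal{E}_k$ gets this right)---the asymmetric form of $\mathcal{E}_k$ is exactly what makes the recursion go through, since it guarantees that the leftover fractional variable $X_{i_0}$ entering $D_{k+1}$ is never one whose final value has been pinned by the conditioning.
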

\begin{proof}
Recall $\pi\in S_n$ is a fixed permutation; all probabilities and expectations in this proof are conditioned on $\pi$. This proof formalizes the idea that if $Z_k$ doesn't occur, then $X_{\sigma(k)}$ and $X_{\sigma(k+1)}$ remain independent. If it does occur, the effect on the expected value is limited by $\delta_k$.

After step $s$ of {\sc DepRound}, we may consider the remainder of the algorithm as simply a recursive call on vector $X^{(s)}$ (using the same permutation $\pi$). Let $D_k$ be the first such call where $X_{\sigma(k)}$ is one of the two left-most fractional variables to be co-rounded. Then there is at most one fractional variable to the left of $X_{\sigma(k)}$, say $X_{i_0}$, and all variables to the right still have their initial values from $P$.

The key observation is that while events in $\mathcal{E}_{k-1}$ do affect the identity and initial value of $X_{i_0}$, they do not further influence the outcome of $D_k$. The only way $D_k$ could be further influenced is if $\mathcal{E}_{k-1}$ contains the event $Y_{\sigma(j)}=y_j$, where $X_{i_0}$ is with some probability the variable $X_{\sigma(j)}$. However, for each $j=1\ldots k-1$, $\mathcal{E}_{k-1}$ either contains $\bar Z_j$ (which means $X_{\sigma(j)}$ was fixed earlier and cannot be $X_{i_0}$), or it lacks $Y_{\sigma(j)}=y_j$.

This means that all the properties of {\sc DepRound} shown so far (which hold for a fixed $\pi$) still hold for $D_k$ when conditioned on $\mathcal{E}_{k-1}$. Namely, we have $E[X_i\mid \mathcal{E}_{k-1}]=p_i$ (for all $X_i$ to the right of and including $X_{\sigma(k)}$), $\Pr[Z_k|\mathcal{E}_{k-1}]\le \delta_k$, and -- as we will claim by induction --  (\ref{eq:sandwich-ih}) for $I_k$.
The bounds derived below handle the problematic compound event $Z_k\land Y_{\sigma(k)}=y_k$ explicitly by assuming that when $Z_k$ occurs, $Y_{\sigma(k)}$ always attains its worst-case value (1 for the upper bound, or 0 for the lower bound).

As a base case, consider the singleton set $I_t=\{\sigma(t)\}$. As just described, we have $\E[X_{\sigma(t)}\mid\mathcal{E}_{k-1}]=p_{\sigma(t)}$ , so $\E[Y_{\sigma(t)}\mid\mathcal{E}_{k-1}]=q_{\sigma(t)}$, and (\ref{eq:sandwich-ih}) follows from $\delta_t=0$.
We now proceed by induction on $k$, counting backward from $t$. Let $W_k:=\prod_{i\in{I_k}} Y_i=\prod_{j=k}^t Y_{\sigma(j)}$. Let $\bar Z_k$ be the complement of event $Z_k$.  For some $k<t$, assume that (\ref{eq:sandwich-ih}) holds for $D_{k+1}$ with set $I_{k+1}$. Then, using the independence properties just mentioned, we have that $\E[W_k\mid\mathcal{E}_{k-1}]$ is
\begin{align*}
=&\E[Y_{\sigma(k)} W_{k+1}\mid\mathcal{E}_{k-1}] 
\\=& \E[Y_{\sigma(k)} W_{k+1}\mid\bar Z_k\land\mathcal{E}_{k-1}]\Pr[\bar Z_k\mid\mathcal{E}_{k-1}]
	+\E[Y_{\sigma(k)} W_{k+1}\mid Z_k\land\mathcal{E}_{k-1}]\Pr[Z_k\mid\mathcal{E}_{k-1}] 
\\\le& \E[Y_{\sigma(k)} W_{k+1}\mid\bar Z_k\land\mathcal{E}_{k-1}]\Pr[\bar Z_k\mid\mathcal{E}_{k-1}]
	+\E[W_{k+1}\mid Z_k\land\mathcal{E}_{k-1}]\delta_k 
\\ =&\textstyle\sum_{y_k}\left(y_k\E[W_{k+1}\mid Y_{\sigma(k)}=y_k\land \bar Z_k\land 
\mathcal{E}_{k-1}]
	\Pr[Y_{\sigma(k)}=y_k\mid\bar Z_k\land \mathcal{E}_{k-1}]
	\Pr[\bar Z_k\mid\mathcal{E}_{k-1}]\right)
	+ \E[W_{k+1}\mid\mathcal{E}_k']\delta_k 
\\ =&\textstyle\sum_{y_k}\left(y_k\E[W_{k+1}\mid \mathcal{E}_k'']
	\Pr[Y_{\sigma(k)}=y_k\mid\bar Z_k\land\mathcal{E}_{k-1}]
	\Pr[\bar Z_k\mid\mathcal{E}_{k-1}]\right)
	+ \E[W_{k+1}\mid\mathcal{E}_k']\delta_k 
\\ \le& \textstyle\prod_{i\in I_{k+1}}(q_i+\delta_{\sigma^{-1}(i)})
	\big(\sum_{y_k}(y_k\Pr[Y_{\sigma(k)}=y_k \land\bar Z_k\mid \mathcal{E}_{k-1}]
	) +\delta_k \big) 
\\ \le& \textstyle\prod_{i\in I_{k+1}}(q_i+\delta_{\sigma^{-1}(i)})
	\big(\sum_{y_k}(y_k\Pr[Y_{\sigma(k)}=y_k \mid \mathcal{E}_{k-1}]
	) +\delta_k \big) 
\\ =& \textstyle\prod_{i\in I_{k+1}}(q_i+\delta_{\sigma^{-1}(i)})
	\big(\E[Y_{\sigma(k)}\mid \mathcal{E}_{k-1}] +\delta_k\big) 
\\ =& \textstyle\prod_{i\in I_{k+1}}(q_i+\delta_{\sigma^{-1}(i)})
	\big(q_{\sigma(k)} + \delta_{\sigma^{-1}(\sigma(k))}\big) 
= \textstyle\prod_{i\in I_{k}}(q_i+\delta_{\sigma^{-1}(i)}).
\end{align*}
Similarly, we have that $\E[W_k\mid\mathcal{E}_{k-1}]$ is
\begin{align*}
=&\E[Y_{\sigma(k)} W_{k+1}\mid\mathcal{E}_{k-1}] 
\\ \ge& \E\left[Y_{\sigma(k)} W_{k+1}\mid \bar Z_k\land\mathcal{E}_{k-1}\right] 
	\Pr[\bar Z_k\mid\mathcal{E}_{k-1}] 
\\ =& \textstyle\sum_{y_k} y_k \E\left[W_{k+1}\mid Y_{\sigma(k)}=y_k\land\bar Z_k\land\mathcal{E}_{k-1}\right]
	\Pr[Y_{\sigma(k)}=y_k\mid \bar Z_k\land \mathcal{E}_{k-1}] 
	\Pr[\bar Z_k\mid\mathcal{E}_{k-1}] 
\\ =& \textstyle\sum_{y_k} y_k \E\left[W_{k+1}\mid \mathcal{E}_{k}'\right] 
	\Pr[Y_{\sigma(k)}=y_k\mid \bar Z_k\land\mathcal{E}_{k-1}]
	\Pr[\bar Z_k\mid\mathcal{E}_{k-1}] 
\\ \ge& \textstyle\prod_{i\in I_{k+1}}(q_i-\delta_{\sigma^{-1}(i)})
	\E[Y_{\sigma(k)} \mid \bar Z_k\land\mathcal{E}_{k-1}]\Pr[\bar Z_k\mid\mathcal{E}_{k-1}] 
\\ =& \textstyle\prod_{i\in I_{k+1}}(q_i-\delta_{\sigma^{-1}(i)})
	\big(\E[Y_{\sigma(k)}\mid\mathcal{E}_{k-1}]-\E[Y_{\sigma(k)}\mid  Z_k\land\mathcal{E}_{k-1}]\Pr[ Z_k\mid\mathcal{E}_{k-1}] \,\big)
\\ \ge& \textstyle\prod_{i\in I_{k+1}}(q_i-\delta_{\sigma^{-1}(i)})
	\big(q_{\sigma(k)}-\delta_k\,\big) 
=\textstyle\prod_{i\in I_{k}}(q_i-\delta_{\sigma^{-1}(i)}).
\end{align*}
But also $\E[W_{k}|\mathcal{E}_{k-1}]\ge 0\cdot\prod_{i\in I_{k+1}}(q_i-\delta_{\sigma^{-1}(i)})$ so we use the better of the two lower bounds.
\end{proof} 

\smallskip \noindent \textbf{Remark.} From now on, we will no longer take $\pi$ as fixed, and hence the $\delta_k$ (which are functions of the random permutation) will be viewed as random variables.

\smallskip
\begin{lemma} \label{lemma:sandwich}
If $\frac{a_{max}}{a_{min}}\le2$, we have 
\[ \prod_{i\in I} q_i \cdot\E\left[\prod_{k=1}^{t-1} \max\left\{1 - \frac{\delta_k}{q_{\sigma(k)}},0\right\}\right]
\le \E\left[\prod_{i\in I} Y_i \right]
\le \prod_{i\in I} q_i \cdot\E\left[\prod_{k=1}^{t-1} \left(1 + \frac{\delta_k}{q_{\sigma(k)}}\right)\right]
.\] 
\end{lemma}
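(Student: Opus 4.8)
The plan is to remove the conditioning on the permutation $\pi$ from Lemma~\ref{lemma:sandwich1} and then take expectations over $\pi$ at the end. Concretely, I would apply Lemma~\ref{lemma:sandwich1} with $k = 1$ and $I_1 = I$, noting that $\mathcal{E}_0$ is the empty set of events (no conditioning), so that, conditional on $\pi$,
\begin{equation*}
\prod_{i\in I}\max\{q_i-\delta_{\sigma^{-1}(i)},\,0\}\ \le\ \E\Big[\prod_{i\in I}Y_i\ \Big\vert\ \pi\Big]\ \le\ \prod_{i\in I}(q_i+\delta_{\sigma^{-1}(i)}).
\end{equation*}

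Next I would rewrite both products by re-indexing $i = \sigma(k)$ as $k$ ranges over $[t]$, and factor out $\prod_{i\in I}q_i = \prod_{k=1}^t q_{\sigma(k)}$. Since $\delta_t = 0$ by the convention stated just before Lemma~\ref{lemma:sandwich1}, the $k=t$ factor is trivial, so the upper bound becomes $\prod_{k=1}^{t}q_{\sigma(k)}(1+\delta_k/q_{\sigma(k)}) = \prod_{i\in I}q_i \cdot \prod_{k=1}^{t-1}(1+\delta_k/q_{\sigma(k)})$ and, similarly, the lower bound is $\prod_{i\in I}q_i\cdot\prod_{k=1}^{t-1}\max\{1-\delta_k/q_{\sigma(k)},0\}$; here the per-factor maximum aggregates correctly into a product maximum because every $q_{\sigma(k)}>0$ and all factors are nonnegative once we take the max. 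Then I take $\E_\pi[\cdot]$ of the three-term chain — expectation is monotone, so the inequalities are preserved — and pull the deterministic constant $\prod_{i\in I}q_i$ outside the expectation, yielding exactly the claimed display.

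**The main obstacle** is a subtle bookkeeping point rather than a deep one: one must check that the hypothesis $a_{\max}/a_{\min}\le 2$ is exactly what licenses the invocation of Lemma~\ref{lemma:sandwich1} (which in turn relied on Lemma~\ref{lemma:pair-prob} and Lemma~\ref{lemma:zbound}, both of which assumed this ratio bound), and that the $\delta_k$ are now genuinely random variables — functions of $\pi$ — as flagged in the Remark preceding the statement, so the final $\E[\cdot]$ is over the randomness of $\pi$ only, with the internal rounding randomness already integrated out inside Lemma~\ref{lemma:sandwich1}. The one place to be slightly careful is the lower bound: Lemma~\ref{lemma:sandwich1} already supplies the two-sided ``better of the two lower bounds'' (the per-index $\max\{\cdot,0\}$), so no extra argument is needed — one simply transcribes it, divides through by $q_{\sigma(k)}>0$ inside each factor, and takes expectations. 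I would present this as a short three-line proof.
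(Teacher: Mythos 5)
Your proposal is correct and matches the paper's proof essentially verbatim: both apply Lemma~\ref{lemma:sandwich1} with $k=1$ and $\mathcal{E}_0=\emptyset$, use $\delta_t=0$ to drop the $k=t$ factor, re-index via $\sigma$, divide each factor by $q_{\sigma(k)}>0$, and then take $\E_\pi$ of the resulting chain while pulling the constant $\prod_{i\in I}q_i$ outside. The paper's version is a one-sentence statement of this same plan; your elaboration of the re-indexing and max-factoring bookkeeping is accurate.
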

\begin{proof}
Apply Lemma \ref{lemma:sandwich1} with $k=1$ and $\mathcal{E}=\emptyset$. Recall $\delta_t:=0$. Take the expectation over all permutations $\pi$, and then factor out the constant $\prod_{i\in I} q_i$.
\end{proof}
\subsubsection{Small subsets are spread out}

The following lemma gives a very useful combinatorial characterization of the distribution of $\{J_k\}$.
\begin{lemma}\label{lemma:bijection}
Let $g=(g_1,\dots,g_{t+1})$ be a sequence of nonnegative integers which sum to $n-t$, picked uniformly at random from all such possible sequences. Then the distribution of $g$ is equal to the distribution of $(|J_0|,\dots,|J_t|)$. Both distributions are symmetric.
\end{lemma}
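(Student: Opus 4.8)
The plan is to exhibit an explicit measure-preserving correspondence between the two random objects. Fix the index set $I$ (with $|I|=t$). A uniformly random permutation $\pi\in S_n$ induces, after identifying $I$ via the order statistics $\sigma(1),\dots,\sigma(t)$, the random composition $(|J_0|,\dots,|J_t|)$ recording how many of the $n-t$ non-$I$ elements fall in each of the $t+1$ gaps determined by the occurrences of $I$-elements in $\pi$. First I would argue that to determine $(|J_0|,\dots,|J_t|)$ it suffices to know the \emph{relative order} of the $n$ elements restricted to the partition into $I$ and $J=[n]\setminus I$ — i.e. only the pattern of ``is this position an $I$-slot or a $J$-slot'' matters, not which particular element sits there. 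Since $\pi$ is uniform on $S_n$, this induced pattern is a uniformly random arrangement of $t$ indistinguishable $I$-markers and $n-t$ indistinguishable $J$-markers in a row of length $n$, i.e. uniform over the $\binom{n}{t}$ such binary strings.

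Next I would set up the bijection with compositions. A binary string with $t$ $I$-markers and $n-t$ $J$-markers is determined exactly by the gap sizes $(g_1,\dots,g_{t+1})$, where $g_k$ is the number of $J$-markers strictly between the $(k-1)$-st and $k$-th $I$-marker (with the obvious conventions at the ends); these are nonnegative integers summing to $n-t$, and this map from binary strings to such sequences is a bijection. The number of such sequences is $\binom{(n-t)+(t+1)-1}{(t+1)-1}=\binom{n}{t}$, matching the count of binary strings. Hence pushing the uniform distribution on binary strings through this bijection yields the uniform distribution on sequences $g=(g_1,\dots,g_{t+1})$ of nonnegative integers summing to $n-t$, and by construction this pushed-forward random sequence is exactly $(|J_0|,\dots,|J_t|)$. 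This proves the first claim.

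For symmetry: reversing a uniformly random binary string gives a uniformly random binary string, and reversal corresponds to reversing the gap sequence $(g_1,\dots,g_{t+1})\mapsto(g_{t+1},\dots,g_1)$; more generally, since the uniform distribution on binary strings is invariant under any permutation of the $I$-marker/$J$-marker \emph{positions}' labels, one gets that $(g_1,\dots,g_{t+1})$ is exchangeable — any permutation of its coordinates leaves the distribution unchanged. (Concretely: the distribution of $g$ depends only on $n$ and $t$ and is the uniform one, which is manifestly symmetric in the $g_i$.) Since $(|J_0|,\dots,|J_t|)$ has the same law, it too is symmetric.

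The only genuinely substantive point — and the step I'd be most careful about — is the very first reduction: verifying that $(|J_0|,\dots,|J_t|)$ is a function of the $I$-vs-$J$ pattern alone and that this pattern is uniform. Everything after that is a standard stars-and-bars count. The subtlety is purely notational: one must check that the definition of $J_k$ in the paper (the maximal run of $\pi$-positions between consecutive $\sigma$-values, with the boundary conventions $\pi^{-1}(\sigma(0))=0$ and $\pi^{-1}(\sigma(t+1))=n+1$) indeed depends only on which positions of $\pi$ hold $I$-elements, which is immediate from that definition. Once this is granted, the equality of distributions and the symmetry both follow from the symmetry of the uniform distribution on $\binom{n}{t}$-element sets of marker placements.
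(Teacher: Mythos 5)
Your proposal is correct and matches the paper's proof essentially step for step: reduce $(|J_0|,\dots,|J_t|)$ to a function of the $I$-versus-$J$ pattern, observe that pattern is uniform over $\binom{n}{t}$ binary strings, apply the standard stars-and-bars bijection to compositions, and read off both uniformity and symmetry. The paper just names these two maps explicitly ($\Phi_I$ and $\Theta$) and verifies the $t!(n-t)!$ fiber count, which is the same reasoning you give more informally.
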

\begin{proof}Consider the mapping $\Phi_I:S_n\to\{0,1\}^n$ from permutations on $[n]$ to binary strings of length $n$, in which we replace each index in $I$ with a 1, and the rest with a 0. Also define $\Theta$ to be the following standard combinatorial bijection from binary strings to arrangements of balls in boxes: given a binary string $s$, first add a 1 to the beginning and end of the string; then, viewing the space between each nearest pair of 1's as a `box', and the zeros between each pair as `balls' in that box, let $\Theta(s)$ be the sequence which counts the number of balls in each box, from left to right. For an arbitrary permutation $\pi$, and the corresponding sets $\{J_k\}$, we see that $|J_k|$ corresponds to the number of zeros between the $k$'th and the $(k+1)$'th 1 in $\Phi_I(\pi)$, and thus to the number of balls in the corresponding box in $(\Theta\circ\Phi_I)(\pi)$. Therefore we have that $(\Theta\circ\Phi_I)(\pi)=(|J_0|,\ldots,|J_t|)$.
\[I=\{2,3,8\}\quad\pi([n])=(\underbrace{12,11,5}_{J_0},\mathbf{3},\underbrace{4,1,9,6}_{J_1},\underset{J_2=\emptyset}{\mathbf{8},\mathbf{2}},\underbrace{7,10}_{J_3})
\;\implies
\begin{array}{rl}
	\Phi_I(\pi)=&000100001100 \\
	(\Theta\circ\Phi_I)(\pi)=&(3,4,0,2)
\end{array}\] 

Now recall that {\sc DepRound} chooses a uniformly random permutation $\pi\in S_n$. Notice that $\Phi_I(\pi)$ only maps to binary strings of length $n$ with exactly $|I|=t$ ones. Furthermore, for each such binary string, there are exactly $t!(n-t)!$ permutations which map to it. Thus, $\Phi_I(\pi)$ is uniformly distributed over all $\binom{n}{t}$ such binary strings. $\Theta$ provides an exact bijection between binary strings of length $n$ with $t$ ones, and sequences $g$ as defined in the lemma. This implies that $(\Theta\circ\Phi_I)(\pi)$ -- and thus $(|J_0|,\ldots,|J_t|)$ -- is uniformly distributed over all such possible sequences $g$. 

Furthermore, by definition of $g$, all permutations of a sequence $g$ would be equally likely. Therefore the distribution of $g$, and thus $(|J_0|,\ldots,|J_t|)$, is symmetric.
\end{proof}

\smallskip \noindent \textbf{Remark.} Note that the above distribution over balls-in-boxes is such that each possible arrangement is equally likely. This is not to be confused with distributions obtained by randomly and independently throwing the balls into the boxes. 

\begin{lemma}\label{lemma:exjc}
 Consider a subset $C\subseteq[t]$ of size $c$, and let $J_C:=\bigcup_{k\in C}J_k$. Then for any constant $0<x<1$,
\[\E\left[x^{|J_C|}\right]\le\left(\frac{t}{n(1-x)}\right)^c .\]
\end{lemma}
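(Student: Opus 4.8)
The plan is to compute the expectation $\E[x^{|J_C|}]$ exactly using the combinatorial characterization from Lemma~\ref{lemma:bijection}, and then bound the resulting sum. By Lemma~\ref{lemma:bijection}, the vector $(|J_0|,\ldots,|J_t|)$ is distributed as a uniformly random composition $g=(g_1,\ldots,g_{t+1})$ of $n-t$ into $t+1$ nonnegative parts. Since this distribution is symmetric (exchangeable) in the coordinates, the distribution of $|J_C|=\sum_{k\in C}|J_k|$ depends only on $c=|C|$, not on which $c$ coordinates are chosen; so it suffices to take $C=\{1,\ldots,c\}$ and analyze $g_1+\cdots+g_c$ where $g$ is a uniform composition. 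The number of compositions of $n-t$ into $t+1$ parts is $\binom{n}{t}$ (stars and bars), and the number with $g_1+\cdots+g_c=m$ is $\binom{m+c-1}{c-1}\binom{(n-t-m)+(t+1-c)-1}{(t+1-c)-1} = \binom{m+c-1}{c-1}\binom{n-m-c}{t-c}$. Hence
\[
\E[x^{|J_C|}] = \frac{1}{\binom{n}{t}}\sum_{m=0}^{n-t} x^m \binom{m+c-1}{c-1}\binom{n-m-c}{t-c}.
\]

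The next step is to bound this sum. First I would simply drop the $\binom{n-m-c}{t-c}$ factor using $\binom{n-m-c}{t-c}\le\binom{n-c}{t-c}$ (it is maximized at $m=0$ among the relevant range), pulling it out of the sum. Then I need a lower bound on $\binom{n}{t}/\binom{n-c}{t-c}$; this ratio equals $\frac{n!/(t!(n-t)!)}{(n-c)!/((t-c)!(n-t)!)} = \frac{n!\,(t-c)!}{t!\,(n-c)!} = \frac{n(n-1)\cdots(n-c+1)}{t(t-1)\cdots(t-c+1)}\ge (n/t)^c$, since $(n-i)/(t-i)\ge n/t$ for $0\le i<c\le t\le n$. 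For the remaining sum $\sum_{m\ge0} x^m\binom{m+c-1}{c-1}$, I would use the generating-function identity $\sum_{m\ge0}\binom{m+c-1}{c-1}x^m = (1-x)^{-c}$ (valid for $|x|<1$), which gives an upper bound when we extend the finite sum to an infinite one. Combining these three ingredients yields
\[
\E[x^{|J_C|}] \le \left(\frac{t}{n}\right)^c\cdot\frac{1}{(1-x)^c} = \left(\frac{t}{n(1-x)}\right)^c,
\]
as claimed.

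The main obstacle — really a point requiring care rather than a deep difficulty — is getting the counting of compositions with a prescribed partial-sum value exactly right (the two-binomial-coefficient formula) and then choosing bounds that are simultaneously clean and tight enough to land exactly on $(t/(n(1-x)))^c$; a sloppy bound on $\binom{n-m-c}{t-c}$ or on the ratio $\binom{n}{t}/\binom{n-c}{t-c}$ would introduce spurious constants. One should also double-check the edge cases $c=t$ (where $t+1-c=1$, so the second binomial degenerates appropriately) and $n-t$ small, and confirm that extending the sum over $m$ to infinity only loses a nonnegative amount. An alternative, perhaps slicker, route that avoids explicit composition-counting is to represent a uniform composition of $n-t$ into $t+1$ parts via the "gaps of a uniform random $t$-subset of $[n]$" model directly and use a union-bound / negative-association style argument on the events controlling where the $J_k$ elements fall; but the direct generating-function computation above is elementary and self-contained, so I would present that.
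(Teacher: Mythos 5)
Your proof is correct and follows essentially the same route as the paper: invoke the symmetry from Lemma~\ref{lemma:bijection} to reduce to $C=\{1,\ldots,c\}$, count compositions with prescribed partial sum to get $\Pr[|J_C|=m]=\binom{m+c-1}{c-1}\binom{n-m-c}{t-c}/\binom{n}{t}$, bound $\binom{n-m-c}{t-c}\le\binom{n-c}{t-c}$ and $\binom{n-c}{t-c}/\binom{n}{t}=t^{\underline c}/n^{\underline c}\le(t/n)^c$, and finish with the negative-binomial generating function $\sum_{m\ge0}\binom{m+c-1}{c-1}x^m=(1-x)^{-c}$. The only cosmetic difference is that the paper counts via binary strings with $t$ ones rather than compositions directly, but these are the same bijection.
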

\begin{proof}
From Lemma \ref{lemma:bijection} the distribution of $(|J_0|,|J_1|,\ldots |J_t|)$ is symmetric. Thus, when considering the distribution of a function of the sizes $\{|J_k|\}_{k\in C}$, we may w.l.o.g.\ assume that $J_C=J_0\cup J_1\cup\ldots\cup J_{c-1}$. Notice since $\{J_k\}$ are all disjoint, we have $|J_C|=\sum_{k\in C}|J_k|$; also, $|J_C|\le|J|=n-t$.
\begin{equation}\label{eq:ejc}
\E\left[x^{|J_C|}\right]
=\sum_{m=0}^{n-t} \Pr[|J_C|=m] x^m
=\sum_{m=0}^{n-t} \Pr\left[\sum_{k=0}^{c-1}|J_k|=m\right] x^m.
\end{equation}
Now for a quick exercise in counting. From the previous proof, $\Phi_I(\pi)$ maps permutations uniformly to $n$-digit binary strings with $t$ 1's. For a given permutation $\pi$, we observe that $\sum_{k=0}^{c-1} |J_k|=m$ iff the binary string $\Phi_I(\pi)$ has exactly $m$ zeros before the $c$'th 1 (i.e., there are $m$ total balls in the first $c$ boxes). How many of the $\binom{n}{t}$ possible strings have this property? It is the number of ways to put $(c-1)$ 1's in the first $(m+c-1)$ digits, a $1$ in the $(m+c)$'th digit, and $(t-c)$ 1's in the remaining $(n-m-c)$ digits. Thus,
\begin{equation}\label{eq:prjkm}
\Pr\left[\sum_{k=1}^{c} |J_k|=m\right] 
= \frac{\binom{m+c-1}{c-1}\binom{n-m-c}{t-c}}{\binom{n}{t}}
\le \frac{\binom{m+c-1}{c-1}\binom{n-c}{t-c}}{\binom{n}{t}}
= \binom{m+c-1}{c-1}\cdot\frac{t^{\underline c}}{n^{\underline c}}
\le \binom{m+c-1}{c-1} \left(\frac t n\right)^c
\!\!\!,\end{equation}
where $n^{\underline c}:=n\cdot(n-1)\cdots(n-c+1)$ denotes the falling factorial. Now we combine (\ref{eq:ejc}) and (\ref{eq:prjkm}), and relax the bound by allowing $m$ to go up to infinity. The resulting series converges when $0<x<1$.
\begin{align*}
\E\left[x^{|J_C|}\right]
\le \left(\frac{t}{n}\right)^c\sum_{m=0}^{\infty}\binom{m+c-1}{c-1}x^m
=  \left(\frac{t}{n}\right)^c\frac{1}{(1-x)^c}
= \left(\frac{t}{n(1-x)}\right)^c.
\end{align*}
A quick proof of the series' convergence is to start with $\sum_{m=0}^{\infty}x^m=1/(1-x)$, and take the $(c-1)$'th derivative of both sides, with respect to $x$.
\end{proof}

\begin{lemma}\label{lemma:exprod}
Let $\delta_k:=\prod_{i=0}^{\lfloor |J_k|/2\rfloor}(1-\alpha_{j_{k,2i}}\alpha_{j_{k,2i+1}})$, and $\alpha:=\min_j\{\alpha_j\}$. Let $C\subseteq[t]$ of size $c$. If $\frac{a_{max}}{a_{min}}\le2$, 
\[ \E\left[ \prod_{k\in C} \frac{\delta_k}{q_{\sigma(k)}} \right] 
\le \left(\frac{16}{7}\cdot\frac{t}{n\alpha^3}\right)^c . \]
\end{lemma}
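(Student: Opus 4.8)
The plan is to bound the product $\prod_{k\in C}\delta_k/q_{\sigma(k)}$ pointwise — i.e.\ for each fixed permutation $\pi$ — by a quantity of the form $(\mathrm{const})^{c}\,x^{|J_C|}$ for a judiciously chosen $x\in(0,1)$, and then to take the expectation over the random $\pi$ and invoke Lemma~\ref{lemma:exjc}.

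First I would estimate a single $\delta_k$. Every factor occurring in $\delta_k$ has the form $1-\alpha_j\alpha_{j'}$ with $\alpha_j,\alpha_{j'}\ge\alpha$, hence lies in $[0,\,1-\alpha^2]$, and $\delta_k$ is a product of at least $\lfloor|J_k|/2\rfloor$ such factors; thus $\delta_k\le(1-\alpha^2)^{\lfloor|J_k|/2\rfloor}$. Putting $x:=\sqrt{1-\alpha^2}\in(0,1)$ and using $\lfloor m/2\rfloor\ge(m-1)/2$ for every integer $m\ge0$, this yields $\delta_k\le x^{-1}x^{|J_k|}$. Moreover $q_{\sigma(k)}\in\{p_{\sigma(k)},1-p_{\sigma(k)}\}$, so $q_{\sigma(k)}\ge\alpha_{\sigma(k)}\ge\alpha$ and $1/q_{\sigma(k)}\le1/\alpha$. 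Since the $J_k$ are pairwise disjoint we have $\sum_{k\in C}|J_k|=|J_C|$, and all these quantities are functions of $\pi$ alone; hence, conditionally on $\pi$,
\[ \prod_{k\in C}\frac{\delta_k}{q_{\sigma(k)}}\ \le\ \Big(\frac{1}{\alpha x}\Big)^{c}x^{|J_C|}. \]
Taking the expectation over $\pi$ and applying Lemma~\ref{lemma:exjc} (legitimate because $0<x<1$) gives
\[ \E\Big[\prod_{k\in C}\frac{\delta_k}{q_{\sigma(k)}}\Big]\ \le\ \Big(\frac{1}{\alpha x}\Big)^{c}\Big(\frac{t}{n(1-x)}\Big)^{c}\ =\ \Big(\frac{t}{n\,\alpha\,x(1-x)}\Big)^{c}. \]

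Finally I would verify the elementary inequality $\alpha\,x(1-x)\ge\tfrac{7}{16}\alpha^{3}$, which, via the identity $1-x=\alpha^{2}/(1+x)$, is equivalent to $x/(1+x)\ge7/16$, i.e.\ $x\ge7/9$; and this holds since $x=\sqrt{1-\alpha^{2}}\ge\sqrt{3}/2>7/9$, because $\alpha=\min_j\min\{p_j,1-p_j\}\le\tfrac12$. Substituting, $1/(\alpha x(1-x))\le16/(7\alpha^{3})$, which is the claimed bound.

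The argument is short; the only point that takes any thought — hence the closest thing to an obstacle — is the calibration of $x=\sqrt{1-\alpha^{2}}$: it must be near enough to $1$ that the rounding loss $(1-\alpha^2)^{\lfloor|J_k|/2\rfloor}\le x^{-1}x^{|J_k|}$ is harmless, while still keeping the factor $1/(1-x)$ delivered by Lemma~\ref{lemma:exjc} small enough that, combined with $1/(\alpha x)$, it produces only $\tfrac{16}{7}\alpha^{-3}$; verifying that this balance survives over the whole admissible range is precisely where $\alpha\le\tfrac12$ enters. (The hypothesis $a_{\max}/a_{\min}\le2$, inherited from the earlier lemmas, is not actually needed for this estimate, since here $\delta_k$ is the explicit combinatorial product and involves no weights.)
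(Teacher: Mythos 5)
Your proof is correct and follows essentially the same route as the paper's: bound $1/q_{\sigma(k)}$ by $1/\alpha$, bound $\delta_k$ by a power of $(1-\alpha^2)$, reduce to $\E\bigl[x^{|J_C|}\bigr]$ for a suitable $x\in(0,1)$, and invoke Lemma~\ref{lemma:exjc}. The only (cosmetic) difference is the calibration of $x$: the paper first uses $\sqrt{1-\alpha^2}\le 1-\alpha^2/2$ and runs Lemma~\ref{lemma:exjc} with $x=1-\alpha^2/2$, whereas you take $x=\sqrt{1-\alpha^2}$ directly and close the constants via $1-x=\alpha^2/(1+x)$ together with $\alpha\le 1/2$; both land on $16/7$, and your version is marginally cleaner (in particular, it stays pointwise valid even when some $|J_k|=0$, a corner the paper's intermediate inequality technically sidesteps). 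Your remark that $a_{\max}/a_{\min}\le 2$ is not actually used within this lemma — it is only needed upstream to justify the formula for $\delta_k$ — is also accurate.
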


\begin{proof} First, recall by definition that $q_i$ is either $p_i$ or $1-p_i$, so $q_i\ge \min\{p_i,1-p_i\} =\alpha_i\ge \alpha$. Then
\begin{align} 
\E\hspace{-1pt}\left[ \prod_{k\in C} \frac{\delta_k}{q_{\sigma(k)}} \right] 
&\le \frac{1}{\alpha^c}\E\left[ \prod_{k\in C} \delta_k \right] 
=\frac{1}{\alpha^c} \E\left[ \prod_{k\in C} \prod_{i=1}^{\lfloor |J_k|/2\rfloor}(1-\alpha_{j_{k,2i}}\alpha_{j_{k,2i+1}})\right] 
\le \frac{1}{\alpha^c}\E\left[ \prod_{k\in C} (1-\alpha^2)^{(|J_k|-1)/2}\right] \label{eq:am-gm}
\\&= \frac{1}{\alpha^c}\E\left[(1-\alpha^2)^{(|J_C|-c)/2}\right]
\le \frac{1}{\alpha^c}\E\left[\left(1-{\textstyle\frac{\alpha^2}{2}}\right)^{|J_C|-c}\right] \label{step:sqrt-bound}
\\&\le \frac{1}{\alpha^c}\cdot \left(1-\frac{\alpha^2}{2}\right)^{-c}\left(\frac{t}{n(\alpha^2/2)}\right)^c
\le \left(1-\frac{(1/2)^2}{2}\right)^{-c} \left(\frac{2t}{n\alpha^3}\right)^c
=\left(\frac{16}{7}\cdot\frac{t}{n\alpha^3}\right)^c . \label{step:lemma-exjc}
\end{align}
In the first line we used $\lfloor x/2\rfloor\ge (x-1)/2$. In (\ref{step:sqrt-bound}) we used $\sqrt{1-x^2}\le\sqrt{1-x^2+x^4/4}=1-x^2/2$. In (\ref{step:lemma-exjc}) we applied Lemma \ref{lemma:exjc} and then used $\alpha\le1/2$.
\end{proof}
Now we can complete the bound given in Lemma \ref{lemma:sandwich}. The upper bound follows by expanding the binomial, bounding the expected value of each term, and then refactoring. The lower bound follows by the Weierstrass product inequality.
\begin{align}
\E\left[\prod_{k=1}^{t-1} \left(1 + \frac{\delta_k}{q_{\sigma(k)}}\right)\right] 
 &=1+\sum_{1\le i<t}\E\left[\frac{\delta_i}{q_{\sigma(i)}}\right]
	+\sum_{1\le i<j<t}\E\left[\frac{\delta_i\delta_j}{q_{\sigma(i)}q_{\sigma(j)}}\right]
	+\cdots+\E\left[\frac{\delta_1\cdots \delta_{t-1}}{q_{\sigma(1)}\cdots q_{\sigma(t-1)}}\right]
\nonumber\\ &\le1+\!\sum_{1\le i<t}\left(\frac{16t}{7 n\alpha^3}\right)
	+\!\!\sum_{1\le i<j<t} \left(\frac{16t}{7 n\alpha^3}\right)^2 
	\!\!\!+\cdots+\left(\frac{16t}{7 n\alpha^3}\right)^{\!t-1}
\!\!\!\!= \left(1+\frac{16t}{7 n\alpha^3}\right)^{\!t-1}.
\label{eq:exp-up}\end{align}
\begin{align}
\E\left[\prod_{k=1}^{t-1} \max\left\{1 - \frac{\delta_k}{q_{\sigma(k)}},0\right\} \right]
&= \E\left[\prod_{k=1}^{t-1} \left(1-\min\left\{\frac{\delta_k}{q_{\sigma(k)}},1\right\}\right)\right]
\ge \E\left[1-\sum_{k=1}^{t-1} \min\left\{\frac{\delta_k}{q_{\sigma(k)}},1\right\}\right]
\nonumber\\ &\ge 1-\sum_{k=1}^{t-1} \E\left[\frac{\delta_k}{q_{\sigma(k)}}\right]
\ge 1-\sum_{k=1}^{t-1} \frac{16t}{7n\alpha^3}
= 1-\frac{16t(t-1)}{7n\alpha^3}.
\label{eq:exp-down}\end{align}

Thus we are led to our main theorem on dependent rounding (which in turn is improved upon, with further work, in Section~\ref{sec:dep-round-special}): 

\begin{theorem}\label{thm:limited-dep-general}
Let $(X_1,\ldots,X_n)$ be the vector returned by running {\sc DepRound} with probabilities $(p_1,\ldots,\allowbreak p_n)$ and positive weights $(a_1,\ldots,a_n)$. Let $I^+$ and $I^-$ be disjoint subsets of $[n]$. Define $\alpha:=\min_i\{p_i,1-p_i\}$, $I:=I^+\cup I^-$, $t = |I|$, and $\lambda:=\displaystyle\prod_{i\in I^+}p_i\prod_{i\in I^-}(1-p_i)$. Then if $\displaystyle \max_{i,j}\left\{\frac{a_i}{a_j}\right\}\le 2$, we have
\[ \left(1-\frac{16t(t-1)}{7n\alpha^3}\right) \lambda 
\le \E\left[\prod_{i\in I^+} X_i \prod_{i \in I^-}(1-X_i) \right] 
\le\left(1+\frac{16t}{7n\alpha^3}\right)^{t-1} \lambda.\]
\end{theorem}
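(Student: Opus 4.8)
The plan is to assemble the theorem directly from Lemma~\ref{lemma:sandwich} together with the moment estimate of Lemma~\ref{lemma:exprod}; essentially all of the real work has already been done, and what remains is bookkeeping plus two elementary product inequalities (already carried out in displays~(\ref{eq:exp-up}) and~(\ref{eq:exp-down})).

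First I would translate the quantity in the theorem into the quantity $\E[\prod_{i\in I}Y_i]$ studied in Section~2.3.2. Taking $b_i = 1$ for $i\in I^+$ and $b_i = 0$ for $i\in I^-$, the random variables of that section become $Y_i = X_i$ on $I^+$ and $Y_i = 1-X_i$ on $I^-$, so that $\prod_{i\in I^+}X_i\prod_{i\in I^-}(1-X_i) = \prod_{i\in I}Y_i$ as random variables; note that no domain-specific handling of the lone possibly-fractional coordinate is needed here, since the theorem bounds an honest expectation of a product rather than a joint probability. Correspondingly each $q_i$ equals $p_i$ on $I^+$ and $1-p_i$ on $I^-$, whence $\prod_{i\in I}q_i = \lambda$. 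The hypothesis $\max_{i,j}a_i/a_j\le 2$ is exactly the condition $a_{max}/a_{min}\le 2$ required by Lemmas~\ref{lemma:sandwich} and~\ref{lemma:exprod}, and $\alpha$ as defined here coincides with $\min_j\alpha_j$ of Lemma~\ref{lemma:exprod} (and is automatically $\le 1/2$, since it is a $\min$ of quantities $\min\{p,1-p\}$).

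Next, invoke Lemma~\ref{lemma:sandwich} to sandwich $\E[\prod_{i\in I}Y_i]$ between $\lambda\cdot\E[\prod_{k=1}^{t-1}\max\{1-\delta_k/q_{\sigma(k)},0\}]$ and $\lambda\cdot\E[\prod_{k=1}^{t-1}(1+\delta_k/q_{\sigma(k)})]$. It then suffices to bound the two remaining expectations. For the upper one, expand the product over $k$ into its $2^{t-1}$ monomials; the monomial indexed by a set $C\subseteq[t-1]$ of size $c$ is $\prod_{k\in C}\delta_k/q_{\sigma(k)}$, whose expectation is at most $(\tfrac{16}{7}\cdot\tfrac{t}{n\alpha^3})^c$ by Lemma~\ref{lemma:exprod}; summing over $C$ grouped by cardinality and refactoring via the binomial theorem yields $(1+\tfrac{16t}{7n\alpha^3})^{t-1}$, which is display~(\ref{eq:exp-up}). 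For the lower one, replace $\max\{1-\delta_k/q_{\sigma(k)},0\}$ by $1-\min\{\delta_k/q_{\sigma(k)},1\}$, apply the Weierstrass product inequality $\prod_k(1-x_k)\ge 1-\sum_k x_k$, push the expectation inside the sum, and apply Lemma~\ref{lemma:exprod} with $c=1$ to each of the $t-1$ terms, obtaining $1-\tfrac{16t(t-1)}{7n\alpha^3}$, which is display~(\ref{eq:exp-down}). Multiplying both bounds by $\lambda = \prod_{i\in I}q_i$ gives the claimed inequalities.

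There is no genuine obstacle remaining at this stage: the delicate arguments were Lemma~\ref{lemma:pair-prob} (geometric decay of co-rounding survival), Lemma~\ref{lemma:sandwich1} (the backward induction absorbing the bad events $Z_k$ into the penalties $\delta_k$), Lemma~\ref{lemma:bijection} (the balls-in-boxes symmetry of the gap sizes $|J_k|$), and the falling-factorial estimate of Lemma~\ref{lemma:exjc}. The only points one must watch in the wrap-up are that the monomials in the expansion of $\prod(1+\delta_k/q_{\sigma(k)})$ are grouped by their index-sets so that Lemma~\ref{lemma:exprod} is invoked with the matching value of $c$, and that Lemma~\ref{lemma:exprod}'s hypotheses (the weight-ratio bound and $\alpha\le 1/2$) are in force — both of which we have already verified.
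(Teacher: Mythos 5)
Your proof is correct and follows essentially the same route as the paper: set $b_i=1$ on $I^+$ and $b_i=0$ on $I^-$, observe $\prod_{i\in I}Y_i=\prod_{i\in I^+}X_i\prod_{i\in I^-}(1-X_i)$ and $\prod_{i\in I}q_i=\lambda$, then apply Lemma~\ref{lemma:sandwich} and the estimates in displays~(\ref{eq:exp-up}) and~(\ref{eq:exp-down}). The only difference is that you spell out the derivation of those two displays from Lemma~\ref{lemma:exprod} (binomial expansion for the upper bound, Weierstrass for the lower) where the paper simply cites them, which is a welcome elaboration rather than a divergence.
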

\begin{proof}
For all $i\in I^+$, set $b_i=1$; for all $i\in I^-$, set $b_i=0$. Then apply (\ref{eq:exp-up}) and (\ref{eq:exp-down}) to Lemma \ref{lemma:sandwich}. The theorem follows by recognizing that
\begin{equation*}
 \prod_{i\in I^+} X_i \prod_{i \in I^-}(1-X_i) = \prod_{i\in I} Y_i \qquad\text{and}\qquad
	\prod_{i\in I^+}p_i\prod_{i\in I^-}(1-p_i) = \prod_{i\in I}q_i
.\end{equation*}
\end{proof}
Note that $\left(1+\frac{16t}{7n\alpha^3}\right)^{t-1}\le\exp\left(\frac{16t^2}{7n\alpha^3}\right)$. Thus we see that Theorem \ref{thm:limited-dep-general} allows us to bound the dependence among groups of variables as large as $O(\sqrt{n})$ when $\alpha = \Theta(1)$.
%============ Section 2.4
\subsection{Improvements and Special Cases}
\label{sec:dep-round-special}
In this section we present several refinements of Theorem \ref{thm:limited-dep-general}. The proofs all follow the same outline as that of the main result; we describe only the places where they differ.  We reuse the same definitions unless stated otherwise.

In our $k$-median application we will have that all $p_i$ are uniform. In this case, if the maximum ratio of weights is sufficiently small, we can tighten the bound to show a weaker dependency on $\alpha$.
\begin{theorem}\label{cor:dep-uniform-p}
Suppose $p_1=p_2=\cdots=p_n=p$ and $\alpha = \min\{p, 1 - p\}$. Then if $\frac{a_{max}}{a_{min}}\le 1+\alpha$, we have  
\[ \left(1-\frac{8t(t-1)}{3n\alpha^2}\right) \lambda
\le \E\left[\prod_{i\in I^+} X_i \prod_{i \in I^-}(1-X_i) \right]
\le\left(1+\frac{8t}{3n\alpha^2}\right)^{t-1} \lambda  .\]
\end{theorem}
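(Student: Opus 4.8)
The plan is to rerun the proof of Theorem~\ref{thm:limited-dep-general} almost verbatim, changing only the one estimate where uniformity of the $p_i$ together with the sharper weight hypothesis $\frac{a_{max}}{a_{min}}\le 1+\alpha$ saves a factor of $\alpha$: the co-rounding bound $\Pr[Z_k]\le\delta_k$ of Lemma~\ref{lemma:zbound}. Everything downstream of that bound -- the sandwich Lemmas~\ref{lemma:sandwich1} and~\ref{lemma:sandwich}, and the block-combinatorics Lemmas~\ref{lemma:bijection} and~\ref{lemma:exjc} -- uses only $\frac{a_{max}}{a_{min}}\le 2$, which holds here since $\alpha\le\tfrac12$, so those carry over unchanged.

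The one new ingredient is a uniform-case sharpening of Lemma~\ref{lemma:pair-prob}: when the tracked variable $X_i=X_{\sigma(k)}$ -- which always sits in the ``left'' slot of \simplify, so each of its co-round partners $X_{j_{k,\ell}}$ is fresh and enters \simplify with value exactly $p$ -- is co-rounded in two consecutive steps, it is fixed in at least one of them with probability at least $\alpha=\min\{p,1-p\}$ (rather than $\alpha^2$). In any single step with a fresh partner, $X_i$ is fixed with probability at least $\min\{p,1-p\}=\alpha$ unless that step is Case~III of \simplify; and a Case~III step with a partner of weight $a_j$ forces $a_i>a_j$ and leaves $X_i$ at $v_+=\beta+p\tfrac{a_j}{a_i}$ (probability $1-p$) or $v_-=\beta-(1-p)\tfrac{a_j}{a_i}$ (probability $p$), where the pre-step value obeys $\tfrac{a_j(1-p)}{a_i}<\beta<1-\tfrac{a_jp}{a_i}$. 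I claim Case~III cannot then recur with the next fresh partner $X_k$: that step is Case~III only for $X_i\in\bigl(\tfrac{(1-p)a_k}{a_i},\,1-\tfrac{p a_k}{a_i}\bigr)$, whereas $v_+>\tfrac{a_j}{a_i}\ge 1-\tfrac{p a_k}{a_i}$ (since $a_j+p a_k\ge(1+p)a_{min}\ge(1+\alpha)a_{min}\ge a_{max}\ge a_i$, using $\alpha\le p$) and $v_-<1-\tfrac{a_j}{a_i}\le\tfrac{(1-p)a_k}{a_i}$ (since $a_j+(1-p)a_k\ge(2-p)a_{min}\ge(1+\alpha)a_{min}\ge a_i$, using $\alpha\le 1-p$). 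Hence in two consecutive such steps $X_i$ is fixed in one of them with probability $\ge\alpha$: if the first step is not Case~III this is immediate, and otherwise the second step is Case~I, II, or IV, which fixes $X_i$ with probability $\ge\alpha$. Substituting ``$\le 1-\alpha$'' for ``$\le 1-\alpha^2$'' in the chain~(\ref{eq:pair_prod}) of the proof of Lemma~\ref{lemma:zbound} then gives $\Pr[Z_k]\le\delta_k:=(1-\alpha)^{\lfloor|J_k|/2\rfloor}$.

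With this $\delta_k$ in hand I would redo the computation of Lemma~\ref{lemma:exprod}: using $\lfloor x/2\rfloor\ge\tfrac{x-1}{2}$, then $\sqrt{1-\alpha}\le 1-\tfrac\alpha2$, then Lemma~\ref{lemma:exjc} with $x=1-\tfrac\alpha2$, then $\alpha\le\tfrac12$, one obtains
\[
\E\!\left[\prod_{k\in C}\frac{\delta_k}{q_{\sigma(k)}}\right]
\ \le\ \frac1{\alpha^{c}}\Bigl(1-\tfrac\alpha2\Bigr)^{-c}\Bigl(\tfrac{2t}{n\alpha}\Bigr)^{c}
\ \le\ \Bigl(\tfrac{8t}{3n\alpha^{2}}\Bigr)^{c}
\]
for every $C\subseteq[t]$ with $|C|=c$. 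Feeding this into the analogues of~(\ref{eq:exp-up}) (binomial expansion for the upper bound) and~(\ref{eq:exp-down}) (Weierstrass product inequality for the lower bound), and then into Lemma~\ref{lemma:sandwich} with $b_i=1$ on $I^+$ and $b_i=0$ on $I^-$ as in the proof of Theorem~\ref{thm:limited-dep-general}, yields exactly the two claimed bounds $\bigl(1-\tfrac{8t(t-1)}{3n\alpha^2}\bigr)\lambda$ and $\bigl(1+\tfrac{8t}{3n\alpha^2}\bigr)^{t-1}\lambda$. I expect the only delicate point to be the ``Case~III cannot recur'' claim: one must check that $X_i$ genuinely remains in the left slot (so its partners are fresh, hence have value $p$), that $\beta$ is pinned to the displayed interval by Case~III, and that $\alpha\le p$ and $\alpha\le 1-p$ are each used in the right place; the remainder is the bookkeeping of Theorem~\ref{thm:limited-dep-general} with $\alpha^2$ replacing $\alpha^3$.
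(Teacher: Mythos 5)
Your proposal is correct and follows the paper's own proof step for step: the identical sharpening of Lemma~\ref{lemma:pair-prob} (Case~III cannot occur in two consecutive steps once the partner is fresh with value $p$ and $a_{\max}/a_{\min}\le 1+\alpha$, so at least one of the two steps fixes the tracked variable with probability $\ge\alpha$), the same resulting $\delta_k=(1-\alpha)^{\lfloor|J_k|/2\rfloor}$, and the same recomputation of Lemma~\ref{lemma:exprod} fed into the analogues of~(\ref{eq:exp-up}) and~(\ref{eq:exp-down}). Your write-up of the non-recurrence of Case~III is in fact cleaner than the paper's, which has a sign typo in the ``$X_j$ fixed to $1$'' branch (writing $1-\alpha$ where $1+\alpha$ is intended), but the underlying inequalities you invoke, $a_j+p\,a_k\ge a_i$ and $a_j+(1-p)\,a_k\ge a_i$, are exactly the ones the paper uses.
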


\begin{proof}
The improvement comes from strengthening the result of Lemma \ref{lemma:pair-prob}: Suppose $X_i$ is co-rounded with variable $X_j$ in step $s$ and then (if it survives) variable $X_k$ in step $s+1$, where $X_j^{(s-1)}=X_k^{(s)}=p$. Then we can show $X_i$ will be fixed in one of these two steps with probability at least $\alpha$.

First assume that during both steps Case III occurs. 
By requirements for Case III, we have $\frac{a_j}{a_i}(1-p)<X_i^{(s-1)}<1-\frac{a_j}{a_i}p$ and $\frac{a_k}{a_i}(1-p)<X_i^{(s)}<1-\frac{a_k}{a_i}p$. 
Suppose $X_j$ is fixed to 0 in step $s$. Then 
$$X_i^{(s)}=X_i^{(s-1)}+p\frac{a_j}{a_i}>\frac{a_j}{a_i}(1-p)+p\frac{a_j}{a_i}=\frac{a_j}{a_i}\ge\frac{1}{1+\alpha}=1-\frac{1}{1+\alpha}\cdot\alpha\ge 1-\frac{a_k}{a_i} p>X_i^{(s)}.$$ 
Else suppose $X_j$ is fixed to 1. Then
$$X_i^{(s)}=X_i^{(s-1)}-(1-p)\frac{a_j}{a_i}<1-\frac{a_j}{a_i}p-(1-p)\frac{a_j}{a_i}=1-\frac{a_j}{a_i}<1-\frac{1}{1-\alpha}
=\frac{1}{1-\alpha}\cdot\alpha<\frac{a_k}{a_i}(1-p)<X_i^{(s)}.$$

In either outcome, we have a contradiction. Therefore in at least one of the two steps, a case other than III must occur. As shown in the proof of Lemma \ref{lemma:pair-prob}, in the other 3 cases $X_i$ will be fixed with probability at least $\min\{p,1-p\}=\alpha$.

This stronger bound carries through the remaining lemmas in a straightforward way. Following the proof for Lemma \ref{lemma:zbound}, starting from (\ref{eq:pair_prod}), we get
$$\Pr[Z_k]
%\le\prod_{i=1}^{\lfloor |J_k|/2\rfloor}\Pr\left[ E_{2i-1}\land E_{2i}\,\Big\vert \bigwedge_{j=0}^{2i-2} E_j\right]
\le\prod_{i=1}^{\lfloor |J_k|/2\rfloor}\Pr\left[ E_{2i-1}\land E_{2i}\,\Big\vert E_{2i-2}\right]
\le \prod_{i=1}^{\lfloor |J_k|/2\rfloor} (1-\alpha)=(1-\alpha)^{\lfloor |J_k|/2\rfloor}.
$$
So Lemmas \ref{lemma:zbound}, \ref{lemma:sandwich1} and \ref{lemma:sandwich} will now hold with the new definition $\delta_k:=(1-\alpha)^{\lfloor |J_k|/2\rfloor}$. Then in Lemma \ref{lemma:exprod}, we get
\begin{align*} 
\E\hspace{-1pt}\left[ \prod_{k\in C} \frac{\delta_k}{q_{\sigma(k)}} \right] 
&\le\frac{1}{\alpha^c} \E\left[ \prod_{k\in C} (1-\alpha)^{\lfloor |J_k|/2\rfloor}\right] 
\le \frac{1}{\alpha^c}\E\left[ \prod_{k\in C} (1-\alpha)^{(|J_k|-1)/2}\right] 
\\&= \frac{1}{\alpha^c}\E\left[(1-\alpha)^{(|J_C|-c)/2}\right]
\le \frac{1}{\alpha^c}\E\left[\left(1-{\textstyle\frac{\alpha}{2}}\right)^{|J_C|-c}\right]
\\&\le \frac{1}{\alpha^c}\cdot \left(1-\frac{\alpha}{2}\right)^{-c}\left(\frac{t}{n(\alpha/2)}\right)^c
\le \left(1-\frac{(1/2)}{2}\right)^{-c} \left(\frac{2t}{n\alpha^2}\right)^c
=\left(\frac{8}{3}\cdot\frac{t}{n\alpha^2}\right)^c . \label{step:lemma-exjc}
\end{align*}
The theorem follows as before.
\end{proof}

In the unweighted case (where all $a_i=1$), we can similarly tighten the bound. We can also refine the bound to be in terms of a sort of average of the probabilities instead of just the most extreme. 
\begin{theorem}
\label{thm:limited-dep-unweighted}
Let $\X:=(X_1,\ldots,X_n)$ be the vector returned by running {\sc DepRound} with probabilities $(p_1,\ldots,p_n)$ and unit weights $(1,\ldots,1)$. Let $\alpha_i=\min\{p_i,1-p_i\}$. Let $I^+$ and $I^-$ be disjoint subsets of $[n]$.  Let $q_i=p_i$ for $i\in I^+$, and let $q_i=1-p_i$ for $i\in I^-$; let
$I = I^+ \cup I^-$. Define
\[
J = ([n] \setminus I), ~~~\lambda:=\displaystyle\prod_{i\in I^+}p_i\prod_{i\in I^-}(1-p_i), ~~~
\alphavg:=\frac{1}{|J|}\sum_{j\in J}\alpha_j, ~~~
\text{and}~~ \frac1{\qavg}:=\frac{1}{|I|}\sum_{i\in I} \frac{1}{q_i}
.\] 
Then,
\[ \left(1-\frac{t(t-1)}{n\qavg\alphavg}\right) \lambda
\le \E\left[\prod_{i\in I^+} X_i \prod_{i \in I^-}(1-X_i) \right]
\le\left(1+\frac{t}{n\qavg\alphavg}\right)^{t-1} \lambda  .\]
Furthermore, if $\sum_i p_i$ is an integer, then $\X$ has no fractional elements.
\end{theorem}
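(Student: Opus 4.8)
The plan is to follow the proof of Theorem~\ref{thm:limited-dep-general} essentially verbatim, exploiting two features of the unit-weight setting to sharpen the key estimates. First I would observe that when all $a_i=1$, the ratio $a_{max}/a_{min}$ equals $1\le 2$, so every lemma of Section~\ref{sec:dep-round} applies; the only thing to redo is the probability that a variable gets fixed. Crucially, with $a_i=a_j$ the ``middle'' Cases~II and~III of \simplify\ are vacuous, so whenever \simplify\ is called on a surviving variable $X_i$ and a fresh partner $X_j$ (value $p_j$), exactly Case~I or Case~IV occurs; the one-line computations in the proof of Lemma~\ref{lemma:pair-prob}, now specialized to $a_i=a_j$, show that in either case $X_i$ is fixed with probability at least $\min\{p_j,1-p_j\}=\alpha_j$ \emph{in that single step} -- no two-step argument is needed. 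Re-running the proof of Lemma~\ref{lemma:zbound} with this one-step bound (i.e., dropping the pairing of indices inside $J_k$) gives $\Pr[Z_k]\le\prod_{j\in J_k}(1-\alpha_j)$, so I redefine $\delta_k:=\prod_{j\in J_k}(1-\alpha_j)$; Lemmas~\ref{lemma:sandwich1} and~\ref{lemma:sandwich} go through unchanged with this new $\delta_k$, since their proofs use only $\Pr[Z_k\mid\mathcal{E}_{k-1}]\le\delta_k$ and $a_{max}/a_{min}\le 2$.

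Next I would bound $\E[\prod_{k\in C}\delta_k/q_{\sigma(k)}]$ for a $c$-subset $C\subseteq[t-1]$, which is the heart of the refinement. Encode $\pi$ by the triple (the set $P_I$ of positions used by $I$-elements; a bijection $I\to P_I$; a bijection $J\to[n]\setminus P_I$). The order $\sigma$ of the $I$-elements is a function of the first two coordinates, while the block contents $\{J_k\}$ are a function of the first and third; since the conditional law of $\sigma$ given $P_I$ is uniform on bijections $[t]\to I$ and hence does not depend on $P_I$, one gets that $\sigma$ is \emph{unconditionally} independent of $(J_0,\dots,J_t)$. Therefore $\E[\prod_{k\in C}\delta_k/q_{\sigma(k)}]=\E[\prod_{k\in C}\delta_k]\cdot\E[\prod_{k\in C}1/q_{\sigma(k)}]$. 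For the second factor, $\sigma$ restricted to $C$ is a uniformly random injection into $I$, so the expectation equals the $c$-th elementary symmetric mean of $\{1/q_i\}_{i\in I}$, which is at most $(\frac1t\sum_{i\in I}1/q_i)^c=1/\qavg^{\,c}$ by Maclaurin's inequality. For the first factor, $\prod_{k\in C}\delta_k=\prod_{j\in J_C}(1-\alpha_j)$ with $J_C:=\bigcup_{k\in C}J_k$; conditioned on $|J_C|=m$, the set $J_C$ is a uniformly random $m$-subset of $J$ (again read off from the bijection picture), so $\E[\prod_{j\in J_C}(1-\alpha_j)\mid |J_C|=m]$ is the $m$-th elementary symmetric mean of $\{1-\alpha_j\}_{j\in J}$, at most $(1-\alphavg)^m$ by Maclaurin; hence $\E[\prod_{k\in C}\delta_k]\le\E[(1-\alphavg)^{|J_C|}]\le(t/(n\alphavg))^c$ by Lemma~\ref{lemma:exjc} with $x=1-\alphavg$. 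Multiplying gives $\E[\prod_{k\in C}\delta_k/q_{\sigma(k)}]\le(t/(n\qavg\alphavg))^c$.

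With this estimate in hand, the final bound is assembled exactly as in~(\ref{eq:exp-up}) and~(\ref{eq:exp-down}): expanding $\prod_{k=1}^{t-1}(1+\delta_k/q_{\sigma(k)})$, bounding each elementary-symmetric term, and refactoring yields $\E[\prod_{k=1}^{t-1}(1+\delta_k/q_{\sigma(k)})]\le(1+t/(n\qavg\alphavg))^{t-1}$, while the Weierstrass/union-bound step yields $\E[\prod_{k=1}^{t-1}\max\{1-\delta_k/q_{\sigma(k)},0\}]\ge 1-t(t-1)/(n\qavg\alphavg)$; plugging both into Lemma~\ref{lemma:sandwich} and recognizing $\prod_{i\in I}q_i=\lambda$ and $\prod_{i\in I}Y_i=\prod_{i\in I^+}X_i\prod_{i\in I^-}(1-X_i)$ gives the two displayed inequalities. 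The last sentence is then immediate: property~(A0') leaves at most one fractional coordinate, but with $a_i\equiv 1$ property~(A2') forces $\sum_i X_i=\sum_i p_i\in\mathbb{Z}$ with probability $1$, and a single fractional coordinate would make this sum non-integral -- so there is no fractional coordinate.

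I expect the main obstacle to be the probabilistic bookkeeping of the second paragraph: cleanly justifying that $\sigma$ is independent of the block structure and that $J_C$ is a uniformly random subset conditioned on $|J_C|$, which is exactly what lets the $\qavg$-average and the $\alphavg$-average decouple into a product. The accompanying Maclaurin-inequality steps, the verification that Cases~II/III vanish for equal weights, and the re-derivation of Lemmas~\ref{lemma:zbound}--\ref{lemma:sandwich} with the new $\delta_k$ are routine.
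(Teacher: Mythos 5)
Your proposal is correct and follows essentially the same route as the paper: both specialize Lemma~\ref{lemma:pair-prob} to a one-step bound (Cases II/III vacuous for unit weights), redefine $\delta_k=\prod_{j\in J_k}(1-\alpha_j)$, decompose $\pi$ into the three independent components (positions of $I$, ordering of $I$, ordering of $J$) so that $q_{\sigma(k)}$ and $\delta_k$ factor apart, and then apply Maclaurin's inequality (Lemma~\ref{lemma:maclaurin}) together with Lemma~\ref{lemma:exjc} to obtain $(t/(n\qavg\alphavg))^c$. Your one-line argument for the final sentence (a lone fractional coordinate would contradict $\sum_i X_i=\sum_i p_i\in\mathbb{Z}$ under (A2') with unit weights) is also the intended one.
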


\begin{proof}
Uniform weights allow us to strengthen Lemma \ref{lemma:pair-prob} even further; in particular, we no longer need to consider pairs of steps. Suppose $X_i$ is co-rounded with $X_j$ during step $s$. Since all $a_i=1$, Cases II and III cannot occur. Thus, $X_i$ will be fixed with probability at least $\min\{X_j^{(s-1)},1-X_j^{(s-1)}\}=\alpha_j$ (as in proof of Lemma \ref{lemma:pair-prob}).

If we follow the proof of Lemma \ref{lemma:zbound}, but without splitting events into pairs, we can show
$$\Pr[Z_k]
\le\prod_{i=1}^{|J_k|}\Pr\left[ E_i\,\Big\vert E_{i-1}\right]
\le \prod_{i=1}^{|J_k|} (1-\alpha_{j_{k,i}})=\prod_{j\in J_k}(1-\alpha_j).
$$
So Lemmas \ref{lemma:zbound}, \ref{lemma:sandwich1} and \ref{lemma:sandwich} will now hold with the new definition $\delta_k:=\prod_{j\in J_k}(1-\alpha_j)$. Now (as in Lemma \ref{lemma:exprod}), we wish to upper bound $\E[\prod_{k\in C} \frac{\delta_k}{q_{\sigma(k)}}]$. Recall the expectation here is conditioned on the random permutation $\pi$. We may decompose $\pi$ into 3 independent components. First, recall $\Phi_I(\pi)=:\phi$ is the binary string corresponding to $\pi$ with $t$ 1's representing the locations of indices in $I$. Second, let $\pi_I\in S_t$ be the permutation representing the ordering of $I$ over the 1's in $\phi$. Third, let $\pi_J\in S_{n-t}$ be the permutation representing the ordering of $J$ over the 0's in $\phi$. Then $\pi$ is uniquely defined by the tuple ($\phi$, $\pi_I$, $\pi_J$) and vice versa. So we can think of $\pi$ as being generated by choosing each element of the tuple uniformly at random. 
Thus, the value of $q_{\sigma(k)}$ depends only on $\pi_I$; the sizes $\{|J_k|\}$ depend only on $\phi$; and the elements of $\{J_k\}$ (conditioned on a particular set of sizes) depend only on $\pi_J$. This shows that some of the variables are independent, so we may separate the terms. Here we are explicit over which random variable we take each expectation:
\begin{align} 
\E_\pi\hspace{-2pt}\bigg[ \prod_{k\in C} \frac{\delta_k}{q_{\sigma(k)}} \bigg] 
&= \E_{\pi_I}\hspace{-2pt}\bigg[\prod_{k\in C}\frac{1}{q_{\sigma(k)}}\bigg] 
\E_{\pi_J,\phi}\hspace{-2pt}\bigg[ \prod_{k\in C} \prod_{j\in J_k}(1-\alpha_j)\bigg] \nonumber
\\&= \E_{\pi_I}\hspace{-2pt}\bigg[\prod_{k\in C}\frac{1}{q_{\sigma(k)}}\bigg] 
\sum_{\phi}\Pr[\phi]\cdot\E_{\pi_J}\hspace{-2pt}\bigg[ \prod_{j\in J_C}(1-\alpha_j)\Big\vert \phi\bigg].\label{eq:exp_decompose} 
\end{align}
The following lemma is basically a restatement of Maclaurin's inequality for symmetric polynomials:
\begin{lemma}\label{lemma:maclaurin}
Given a vector of positive reals $\mathbf{x}=x_1,x_2,\ldots,x_n$, with average value $\bar x$, let $S\subseteq[n]$ be a subset chosen uniformly at random from all such subsets of size $k$. Then $\E_S[\prod_{i\in S} x_i]\le {\bar x}^k$.
\end{lemma}
The first expectation in (\ref{eq:exp_decompose}) is over a product of $c$ random terms from $\{1/q_j\}_{j\in I}$. The second expectation is a product over $|J_C|$ (which as a function of $\phi$ is fixed for each term) random terms from $\{1-\alpha_j\}_{j\in J}$. So both expectations may be bounded by Lemma \ref{lemma:maclaurin}:
\begin{align*} 
\E_\pi\hspace{-2pt}\bigg[ \prod_{k\in C} \frac{\delta_k}{q_{\sigma(k)}} \bigg] 
&\le \frac{1}{\hat q^c}
\sum_{\phi}\Pr[\phi]\cdot (1-\hat\alpha)^{|J_C(\phi)|}
=\frac{1}{\hat q^c}\E_\phi\big[(1-\hat\alpha)^{|J_C|}\big]
\le\frac{1}{\hat q^c}\left(\frac{t}{n\hat\alpha}\right)^c
=\left(\frac{t}{n\hat q\hat\alpha}\right)^c.
\end{align*}
The theorem follows as before.
\end{proof}

\subsubsection{An alternative lower bound}
All the lower bounds given thus far become negative for $t$ larger than $O(\sqrt{n})$. We now derive an alternative lower bound which remains positive even for larger values of $t$. We will do this for the uniform weight case for simplicity, but it may be adapted in a straightforward manner to the weighted case. 

\begin{theorem}\label{thm:alt-lower-bound}
Suppose $a_1=a_2=\cdots=a_n=1$. Let $d$ be an integer which satisfies $(1-\alpha)^d\le \alpha$ and $d\le (n-t)/t$. Then
\[ \left(1-\frac{td}{n-t}\right)^t \left(1 - \frac{ (1-\alpha)^d}{\alpha}\right)^{t-1} \lambda \le \E\left[\prod_{i\in I^+} X_i \prod_{i \in I^-}(1-X_i) \right]. \]
\end{theorem}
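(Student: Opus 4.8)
The plan is to reuse the lower-bound half of Lemma~\ref{lemma:sandwich}, but to avoid expanding the product (which is exactly what produces the negative correction $-t(t-1)/(n\qavg\alphavg)$ in the other bounds): instead of bounding each summand's expectation, I would keep the product $\prod_{k=1}^{t-1}\max\{1-\delta_k/q_{\sigma(k)},0\}$ intact and bound it pointwise on a high-probability event. First recall that in the unit-weight setting (so $a_{max}/a_{min}=1\le 2$), by the proof of Theorem~\ref{thm:limited-dep-unweighted} the sandwich bound of Lemma~\ref{lemma:sandwich} holds with $\delta_k:=\prod_{j\in J_k}(1-\alpha_j)$; since $\alpha_j\ge\alpha$ and $q_{\sigma(k)}\ge\alpha_{\sigma(k)}\ge\alpha$, this gives
\[
\E\left[\prod_{i\in I^+}X_i\prod_{i\in I^-}(1-X_i)\right]=\E\left[\prod_{i\in I}Y_i\right]\ge \lambda\cdot\E\left[\prod_{k=1}^{t-1}\max\left\{1-\frac{\delta_k}{q_{\sigma(k)}},\,0\right\}\right],
\]
together with the estimate $\delta_k/q_{\sigma(k)}\le(1-\alpha)^{|J_k|}/\alpha$.

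Next I would introduce the event $\mathcal{G}:=\{\,|J_k|\ge d\text{ for all }k\in\{1,\dots,t-1\}\,\}$. On $\mathcal{G}$, each factor has $\delta_k/q_{\sigma(k)}\le(1-\alpha)^d/\alpha\le 1$ by the hypothesis $(1-\alpha)^d\le\alpha$, so the maximum is attained by its first argument and the product is at least $(1-(1-\alpha)^d/\alpha)^{t-1}$; off $\mathcal{G}$ it is at least $0$. Taking expectations yields $\E[\prod_{k=1}^{t-1}\max\{1-\delta_k/q_{\sigma(k)},0\}]\ge(1-(1-\alpha)^d/\alpha)^{t-1}\Pr[\mathcal{G}]$, so it only remains to lower-bound $\Pr[\mathcal{G}]$.

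For this last step I would invoke the combinatorial characterization of Lemma~\ref{lemma:bijection}: $(|J_0|,\dots,|J_t|)$ is uniform over the $\binom{n}{t}$ tuples of nonnegative integers summing to $n-t$. Since $\{1,\dots,t-1\}\subseteq\{1,\dots,t\}$, we have $\Pr[\mathcal{G}]\ge\Pr[\,|J_k|\ge d\ \forall k\in\{1,\dots,t\}\,]$, and a stars-and-bars substitution ($h_k:=|J_k|-d$ for $k\ge1$, legitimate precisely because $d\le(n-t)/t$ forces $n-t-dt\ge0$) evaluates this to $\binom{n-dt}{t}/\binom{n}{t}=\prod_{i=0}^{t-1}\frac{n-dt-i}{n-i}\ge(1-\frac{td}{n-t})^t$, the final inequality holding because for $i\le t-1$ each factor $1-\frac{td}{n-i}$ lies in $[\,1-\frac{td}{n-t},\,1\,]$ and the lower endpoint is nonnegative under $d\le(n-t)/t$. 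Assembling the three estimates and recalling $\lambda=\prod_{i\in I}q_i$ gives exactly the stated inequality. The main obstacle is not depth but bookkeeping with the two hypotheses: $(1-\alpha)^d\le\alpha$ is what lets us discard the truncating $\max$ on $\mathcal{G}$, and $d\le(n-t)/t$ is exactly what makes both $\binom{n-dt}{t}$ count a nonempty set and $1-\frac{td}{n-t}\ge0$; one must also be careful that $\delta_k$ here is the unit-weight definition $\prod_{j\in J_k}(1-\alpha_j)$ inherited from Theorem~\ref{thm:limited-dep-unweighted}, not the general-weight one.
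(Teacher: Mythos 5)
Your proof is correct and follows essentially the same route as the paper's: lower-bound the sandwich expression from Lemma~\ref{lemma:sandwich} (with the unit-weight $\delta_k=\prod_{j\in J_k}(1-\alpha_j)$) by restricting to the event that the relevant $J_k$ are all of size at least $d$, then count that event via Lemma~\ref{lemma:bijection} with stars and bars. The only cosmetic difference is that you require all of $J_1,\dots,J_t$ (rather than just $J_1,\dots,J_{t-1}$) to be large, giving $\binom{n-dt}{t}$ in place of the paper's $\binom{n-(t-1)d}{t}$; both counts reduce to the same final bound $(1-td/(n-t))^t$.
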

\begin{proof}
Start with the lower bound given by Lemma \ref{lemma:sandwich}. As shown in the proof of Theorem \ref{thm:limited-dep-unweighted}, if all $a_i=1$, we may use $\delta_k:= \prod_{i\in J_k}(1-\alpha_i)\le (1-\alpha)^{|J_k|}$. To lower bound the expression, we will focus on the event that sets $J_1,\ldots,J_{t-1}$ all have at least $d$ elements, and use the trivial bound of 0 if this event does not occur. This event is useful because it implies that $\{X_i\}_{i\in I}$ are all far away in $\pi(X)$.
\begin{align*}
\E\left[\prod_{k=1}^{t-1} \max\left\{1 - \frac{\delta_k}{q_{\sigma(k)}},0\right\} \right]
\ge \E\left[\max\left\{1 - \frac{ (1-\alpha)^{|J_k|}}{\alpha},0\right\}^{t-1} \right]
\ge \Pr\left[\bigwedge_{k=1}^{t-1} |J_k|\ge d\right] 
\left(1 - \frac{ (1-\alpha)^d}{\alpha}\right)^{t-1}.
\end{align*}
To calculate this probability, recall that in Lemma \ref{lemma:bijection}, we showed that the distribution of $(|J_i|,\ldots,|J_t|)$ is equivalent to the uniform distribution over unique arrangements of $n-t$ identical balls into $t+1$ boxes. Note there are $\binom{n}{t}$ such arrangements. How many of these arrangements have at least $d$ balls in the middle $t-1$ boxes? To count these arrangements, we suppose that there are already exactly $d$ balls in each of the middle $t-1$ boxes and then count how many ways there are to add the remaining $n-t-(t-1)d$ balls to $t+1$ boxes, which is $\binom{n-(t-1)d}{t}$. So,
\begin{align*}
\Pr\left[\bigwedge_{k=1}^{t-1} |J_k|\ge d\right] 
= \frac{\binom{n-(t-1)d}{t}}{\binom{n}{t}}
> \left(\frac{n-(t-1)d-t}{n-t}\right)^t
> \left(1-\frac{td}{n-t}\right)^t.
\end{align*}
\end{proof}

We show that if $\alpha=\Theta(1)$, and $n$ is sufficiently large, then Theorem \ref{thm:alt-lower-bound} gives a nontrivial bound for $t=O(n/\ln n)$ and a tight bound (close to $\lambda$) for some $t=O(\sqrt{n/\ln n})$.

First suppose $t\le \kappa \frac{n}{\ln n}$ and set $d=\lceil\frac{\ln n}{2\kappa}\rceil$, for some $\kappa>0$. Assume $\ln n> \frac{2\kappa}{\alpha}(\alpha+\ln(1/\alpha))>2\kappa$. Then we have $(1-\alpha)^d\le e^{-\alpha d}\le e^{-\alpha(\frac{\ln n}{2\kappa}-1)}< e^{-(\alpha+\ln(1/\alpha))+\alpha}=\alpha$ and $d\le\frac{\ln n}{2\kappa}=\frac{\ln n}{\kappa}-\frac{\ln n}{2\kappa}< \frac{\ln n}{\kappa}-1\le\frac{n}{t}-1=\frac{n-t}{t}$, so $d$ is valid. These two inequalities also imply that the bound is positive.

%%%%%%%%%%%%%%%%%%5

Now suppose for some $\epsilon\in(0,1]$ that $t\le \sqrt{\frac{\alpha\epsilon n}{4\ln n}}$ and set $d=\lceil \frac{\ln n}{\alpha}\rceil$. 
Assume $n>\max\{\frac{2\epsilon}{\alpha}\ln\frac{2\epsilon}{\alpha},\frac{e^{2\alpha}}{\alpha\epsilon}\}$, which implies $\frac{n}{\ln n}>\frac{(2\epsilon/\alpha)\ln(2\epsilon/\alpha)}{\ln(2\epsilon/\alpha)+\ln\ln(2\epsilon/\alpha)}>\frac{(2\epsilon/\alpha)\ln(2\epsilon/\alpha)}{2\ln(2\epsilon/\alpha)}=\frac{\epsilon}{\alpha}$. 
For simplicity of the argument, observe that
$n-t\ge n-\sqrt{\frac{\alpha\epsilon n}{4\ln n}}\ge n-\sqrt{\frac{n\cdot n}{4\cdot 1}}=\frac{n}{2}$.
Then we have 
$(1-\alpha)^d\le e^{-\alpha d}
\le e^{-\alpha(\frac{\ln n}{\alpha}-1)}
= \frac{e^\alpha}{n}
<\frac{\alpha\epsilon}{e^\alpha}\le\alpha$ 
and 
$d\le\frac{\ln n}{\alpha}
=\sqrt{\frac{\ln n}{n}\cdot\frac{n\ln n}{\alpha^2}}
\le\sqrt{\frac{n\ln n}{\alpha\epsilon}}\le\frac{n}{2t}\le\frac{(n-t)}{t}$, so $d$ is valid. Then
\begin{align*}
\left(1-\frac{td}{n-t}\right)^t 
&\ge \left(1-\frac{t \frac{\ln n}{\alpha}}{n/2}\right)^t 
%=\left(1-\frac{2t \ln n}{\alpha n}\right)^t 
\ge 1-\frac{2t^2 \ln n}{\alpha n} 
\ge 1-\frac{\alpha\epsilon n}{4\ln n}\cdot\frac{2 \ln n}{\alpha n}
= 1-\frac{\epsilon}{2},
\\
\left(1 - \frac{ (1-\alpha)^d}{\alpha}\right)^{t-1}
&\ge \left(1 - \frac{1}{\alpha}\cdot\frac{e^\alpha}{n}\right)^t
\ge 1 - t\frac{e^\alpha}{\alpha n}
\ge 1 - \sqrt{\frac{\alpha\epsilon n}{4\ln n}}\cdot\frac{e^\alpha}{\alpha n}
> 1-\sqrt{\frac{e^{2\alpha} \epsilon}{4\alpha n}}
> 1- \frac{\epsilon}{2}.
\end{align*}
This implies the bound is at least $(1-\epsilon)\lambda$.

\section{Improved bi-point rounding algorithm} 
\label{sec:bipoint-round}

\subsection{A lower bound on bi-point rounding factors}
For a given $k$-median instance $\I$, we can apply the JMS algorithm from \cite{jain2003greedy} to obtain a bi-point solution whose cost is at most $2 \cdot OPT_\I$. In Section 3, we address the step of rounding a bi-point solution to an integral solution. As a warmup, we begin with a concrete example, which will also demonstrate a lower bound on the approximation factor of this step.

We define a family of bi-point solutions and show that the optimal rounding factor, even when opening $k+o(k)$ facilities, approaches $\frac{1+\sqrt2}2\approx 1.207$ for large instances. When counting facilities we will use fractional values proportional to $k$, and assume that $k$ is sufficiently large so that the effect of rounding these to integer values is negligible. Then define the instance as follows.

Let $\F_1$ and $\F_2$ be facility sets of size $f_1k$ and $f_2k$, respectively, for some constants $f_1<1$ and $f_2>1$. Then it follows that $(a,b)=(\frac{f_2-1}{f_2-f_1},\frac{1-f_1}{f_2-f_1})$. Define the client set $\J$ as follows: for every pair of facilities $i_1\in\F_1$ and $i_2\in\F_2$, place a single client $j$ with $d(j,i_1)=\alpha$ and $d(j,i_2)=1-\alpha$, for some constant $\frac12<\alpha\le1$. Let all other distances be the maximal such values permitted by the triangle inequality. This means for every $i\in\F_1\setminus\{ i_1\}$ we have $d(j,i)=2-\alpha$, and for every $i\in\F_2\setminus\{ i_2\}$ we have $d(j,i)=1+\alpha$.

Because of the symmetry of the instance, any integer solution may be uniquely defined by the proportion of facilities opened in $\F_1$ and $\F_2$. Opening less than $k$ facilities can only hurt the solution, so assume we open exactly $k$. Let $S(x)$ be a solution that opens $xf_1k$ facilities in $\F_1$ and $k-xf_1k=(1-xf_1)k$ facilities in $\F_2$. Also, we always open at least 1 facility in $\F_1$, even if we have to borrow 1 from $\F_2$. For sufficiently large $k$, this does not affect the proportions. 

Since it doesn't matter which facilities we open within either set, suppose (for ease of analysis) we randomly open $xf_1k$ facilities in $\F_1$ and $(1-xf_1)k$ facilities in $\F_2$. What is the expected cost of a client $j$? The closest facility is $i_2(j)$ of distance $1-\alpha$, followed by $i_1(j)$ of distance $\alpha$. The third closest facility is any other facility in $\F_1$; these are all $2-\alpha$ away, and at least one will always be open. Thus we may calculate the expected distance as follows. Note that we open in proportion $x$ of $\F_1$ and $\frac{1-xf_1}{f_2}$ of $\F_2$, independently of one another.
\begin{align*}
E[COST(j)]&=\frac{1-xf_1}{f_2}(1-\alpha)+\left(1-\frac{1-xf_1}{f_2}\right)\left(x\alpha+(1-x)(2-\alpha)\right).
\end{align*}
The expression is quadratic in $x$ with a negative coefficient on $x^2$. Thus, it will be minimized at one of the two edge cases $x=0$ or $x=1$, and one of these two must yield the optimal solution. Summing over all clients, and observing that the total cost is actually deterministic, we get
\[ OPT=|\J|\min\left\{2-\alpha-\frac{1}{f_2},\alpha+\frac{(2\alpha-1)(f_1-1)}{f_2}  \right\}. \]
On the other hand, the cost of the bi-point solution itself is 
\[ a|\J|\alpha+b|\J|(1-\alpha)=|\J|\frac{(1-f_2)\alpha+(f_1-1)(1-\alpha)}{f_1-f_2}. \]
Now fixing $\alpha=\frac{1}{\sqrt{2}}$, $f_1=\frac17(4-\sqrt2)$ and $f_2=\frac27(3+\sqrt2)$, we get that the ratio of cost between the optimal integer solution and the bi-point solution is $\frac{1+\sqrt2}{2}$.

Finally, suppose we take any $S(x)$ and open $o(k)$ additional facilities in either or both sets. Then the respective proportions we open of $\F_1$ and $\F_2$ are $\frac{xf_1k+o(k)}{f_1k}=x+o(1)$ and $\frac{(1-xf_1)k+o(k)}{f_2k}=\frac{1-xf_1}{f_2}+o(1)$. Thus, for sufficiently large $k$, the increase to the proportions is negligible and we obtain the same cost ratio.

In this instance, the algorithm by Li and Svensson opens $(\F_1,\F_2)$ in proportions either $(a,b)$ or $(1,0)$, and does strictly worse than the optimal factor. The new algorithm considers a solution that opens no facilities in $\F_1$, which is crucial to obtaining an improved factor.

\subsection{Preliminaries}

 We refer to $\F_1, \F_2, a, b, D_1,$ and $D_2$ as defined in Definition \ref{def:bipoint}.

\begin{definition}[Stars] For a given bi-point solution $a\F_1 + b\F_2$, we associate each facility $i_2 \in \F_2$ to its closest facility $i_1 \in \F_1$ (breaking ties arbitrarily). For each $i \in \F_1$, the set of $i$ and its associated facilities in $\F_2$ is called a star. We refer to $i$ as the center of the star and other facilities in the star as leaves.  Also let $\S_i$ denote the set of leaves of the star with center $i$.
\end{definition}

Now we further partition the stars by their number of leaves. Let $\T_0$ be the set of stars with no leaves, $\T_1$ be the set of stars with one leaf, and $\T_2$ be the set of stars with at least 2 leaves. We call the stars in $\T_0, \T_1, \T_2$ as 0-stars, 1-stars, and 2-stars, respectively. Let $\C_0, \C_1, \C_2$ be the sets of centers of stars in $\T_0, \T_1, \T_2$, respectively. Let $\L_1, \L_2$ be the sets of leaves of stars in $\T_1, \T_2$, respectively. For a client $j$, let $i_1(j)$ and $i_2(j)$ denote the closest facilities to $j$ in $\F_1$ and $\F_2$ respectively.

We also use the following notations: $\Delta_F := |\F_2| - |\F_1|, r_D := D_2 / D_1, r_0 := |\C_0| / \Delta_F, r_1 := |\C_1| / \Delta_F, $ $r_2 := |\C_2| / \Delta_F$, and $s_0 := 1/(1+r_0)$.
Note that if $\Delta_F=0$, then $|\F_1|=|\F_2|=k$, and we may simply choose $\F_2$ as our solution, which has cost at most that of the bipoint solution. Thus we assume that $\Delta_F>0$.

In this section, we describe a set of randomized algorithms to round a bi-point solution into a pseudo-solution which opens at most $k+O(1)$ facilities. In order to keep the number of extra facilities bounded, we consider several different cases depending on certain properties of the bi-point solution. In the main case, we get a $1.3371 +\epsilon$ approximation, utilizing {\sc DepRound} to open only $O(\log(1/\eps))$ extra facilities. In the edge cases, we are able to use weaker, but simpler techniques to obtain the same bound.

\subsection{Main case: $s_0 \geq 5/6,  b \in [0.508, 3/4], r_D \in [19/40, 2/3],$ and $r_1 > 1$}
\label{sec:bipoint-main-case}

For each 1-star with center $i$ and leaf $i'$, we define the following ratio. (Note that $\Delta_F> 0$ implies $\L_2$ is nonempty.)
$$ g_i = \frac{d(i, i')}{\min_{j \in \L_2} d(i, j)}.$$

We partition the set $\T_1$ into sets $\T_{1A}$ of \emph{long} stars and $\T_{1B}$ of \emph{short} stars as follows.  We sort all the stars in $\T_1$ in decreasing order of $g_i$. Let $\T_{1A}$ be the set of the first $\lceil a \Delta_F \rceil$ stars of $\T_1$ and $\T_{1B} := \T_1 \setminus \T_{1A}$. Also let $\C_{1A}$ and $\C_{1B}$ be the sets of centers of stars in $\T_{1A}$ and $\T_{1B}$, respectively. Similarly,  let $\L_{1A}$ and $\L_{1B}$ be the corresponding sets of leaves. Note that $\T_{1A}$ is well-defined since $|\T_1| / \Delta_F = r_1 > 1$ implies $|\T_1| > \Delta_F$.

Next, we describe a rounding scheme called $\A(p_0,p_{1A},q_{1A},p_{1B},q_{1B},p_2,q_2)$ which is the main procedure of our algorithm. 
The purpose of $\A$ is to (for $X\in\{0,1_A,1_B,2\}$) randomly open roughly $p_X$ fraction of facilities in $\C_X$, and $q_X$ fraction of facilities in $\L_X$, while maintaining the important property that if any leaves of a star are closed, its center will be opened -- \emph{except} in some cases where we completely close all stars in $\T_{1A}$.

When $p_2 \neq 0$, we further partition $\T_2$ into ``large'' and ``small'' stars (as in \cite{li_svensson}). For a given parameter $\eta > 0$, we say that a star centered at $i \in \C_2$ is large if $|\S_i| \geq 1 / ( p_2 \eta )$ and small otherwise. Let $\beta = \min\{q_2,1-q_2\}$ and $c = \lceil\frac{16}{3\beta^2}\rceil$. Then, we group the small stars according to their sizes: For each $s = 1, \ldots, \lceil \log_{1+\beta} ( 1 / (p_2 \eta)) \rceil - 1 $, let
$ \G_s := \{i \in \C_2 : (1+\beta)^s \leq |\S_i| < (1+\beta)^{s+1} \}$. 

\subsubsection{Main algorithm}

Below we define Algorithm $\A$ and its subroutine {\sc Round2Stars}. 
 The main algorithm will simply run $\A$ with 9 different sets of parameters and return the solution with minimum connection cost. We refer to these calls of $\A$ as algorithms $\A_1, \cdots, \A_9$. See Table \ref{tab:algos} for a complete set of parameters. It is easy to see that all numbers in the table belong to $[0,1]$ as $b \geq a$, $0 \leq s_0 \leq 1$, and $r_2 \geq 0$.

\begin{algorithm}[h]
\caption{$\A(p_0,p_{1A},q_{1A},p_{1B},q_{1B},p_2,q_2)$}
\begin{algorithmic}[1]
\STATE Randomly open a subset of size $\lceil p_0 |\C_0| \rceil$ of $\C_0$.
\STATE Take a random permutation of $\T_{1A}$. Open the centers of the first $\lceil p_{1A} |\T_{1A}| \rceil$ stars and the leaves of the last $\lceil q_{1A} |\T_{1A}| \rceil$.
\STATE Take a random permutation of $\T_{1B}$. Open the centers of the first $\lceil p_{1B} |\T_{1B}| \rceil$ stars and the leaves of the last $\lceil q_{1B} |\T_{1B}| \rceil$.
\IF{$p_2 = 1$ \textbf{or} $p_2 = 0$}
\STATE	Open all or none of $\C_2$, respectively. Also open a random subset of size $\lceil q_2 |\L_2| \rceil$ of $\L_2$.
\ELSE
\STATE \sc Round2Stars $(p_2,q_2)$.
\ENDIF
\STATE \textbf{Return} the set of all opened facilities.
\end{algorithmic} 
\label{algo:A}
\end{algorithm}

\begin{algorithm}[h]
\caption{\sc Round2Stars$(p_2,q_2)$}
\begin{algorithmic}[1]
\STATE Open the centers of all large stars. Let $\C_2'$ be the set of these centers, and let $\L_2'$ be the set of their leaves. Randomly open a subset of size $\lceil q_2(|\L_2'| - |\C_2'|) \rceil$ of $\L_2'$.
\FOR{$s = 1, \ldots, \lceil \log_{1+\beta} ( 1 / (p_2 \eta)) \rceil - 1 $}
	\STATE Let $A, P$ be vectors with $A_i = |\S_i|-1$ and $P_i = q_2$ for $i \in \G_s$. 
	\STATE Let $X$ be the vector returned by {\sc DepRound} on $A$ and $P$.  	
	\STATE For all integer elements $X_i$, if $X_i = 1$, open all facilities in $\S_i$. Else, open the center $i$.
	\STATE Let $X_{i^*}$ be the fractional element (if any). Open the center of $\S_{i^*}$ and a random set of size $\lceil X_{i^*} |\S_{i^*}| \rceil$ of $\S_{i^*}$.
	\STATE Pick $\min\{c, |\G_s| \}$ centers of stars in $\G_s$ uniformly at random and open them if not already opened. \label{line:addc}
\ENDFOR

\end{algorithmic} 
\label{algo:round2stars}
\end{algorithm}

\begin{table}[H]
\begin{center}
\begin{tabular}{ c | c | c | c | c | c | c | c }
Algorithms & $p_0$ & $p_{1A}$& $q_{1A}$& $p_{1B}$& $q_{1B}$& $p_2$& $q_2$ \\
  \hline                        
$\A_1$ & $0$ & $0$& $1$& $0$& $1$& $as_0$& $1-as_0$ \\
$\A_2$ & $1$ & $0$& $1$& $0$& $1$& $1-bs_0$& $bs_0$ \\
$\A_3$ & $1$ & $0$& $1$& $1$& $0$& $1-bs_0$& $bs_0$ \\
$\A_4$ & $1$ & $1$& $0$& $0$& $1$& $1-bs_0$& $bs_0$ \\
$\A_5$ & $1$ & $1$& $0$& $1$& $0$& $1-bs_0$& $bs_0$ \\
$\A_6$ & $1$ & $1$& $1$& $1$& $0$& $1 - (b-a)s_0$& $(b-a)s_0$ \\
$\A_7$ & $1$ & $1$& $0$& $1$& $0$& $1$& $\frac{1}{2}bs_0$ \\
  $\A_8$ & $0$ & $0$& $0$& $0$& $1$& $0$& $1$ \\
  $\A_9$ & $a$ & $a$& $b$& $a$& $b$& $a$& $b$  \\
  \hline                        
\end{tabular}
\caption{The main algorithm makes 9 calls to $\A$, with the above parameters.} \label{tab:algos}
\end{center}
\end{table}

The algorithm itself runs in linear time. However, when we use Li and Svensson's algorithm to convert our pseudo-solution to a feasible one, it will take time $O(n^{O(C/\eps)})$ in total, where $C$ is the number of extra facilities we open. So it is important that $C$ is a (preferably small) constant. 
A few of these extra facilities come from handling basic rounding (e.g. $\lceil q_2|\L_2|\rceil$), however, the majority come from handling the positive correlation within the groups $\G_s$. Li and Svensson considered $O(1/\eta)$ groups of stars, each with uniform size, bounded the positive correlation by adding a few extra facilities per group, and showed that the total cost is only blown up by a factor of $(1+\eta)$. Property (A2) of {\sc DepRound} allows us to run it on a group with stars of varying sizes. This allows us to use a geometric grouping of stars, and thus open only $O(\log(1/\eta))$ extra facilities.  Property (A3) gives a bound on the positive correlation, so that we may compensate for it by adding $O(1/\beta^2)$ extra facilities per group. Thus, $\beta$ must be bounded away from zero, which strongly motivates our restriction of the domain of the main algorithm. 

Note that if we run $\A$ with parameters $p_0=p_{1A}=p_{1B}=p_2=a$, and $q_{1A}=q_{1B}=q_2=b$, the resulting algorithm is essentially the same as that given in \cite{li_svensson}. (The set of algorithms we use subsumes the need for this one.) The main difference in this case is that we need to open only $O(\log(1/\eps))$ extra facilities instead of $O(1/\eps)$.

\subsubsection{Bounding the number of opened facilities}
Since our main algorithm will return one of the solutions by $\A_1, \cdots, \A_9$, we need to show that none of these will open too many facilities. Algorithm \ref{algo:round2stars} essentially partitions all stars into a constant number of groups.  Consider the \emph{budget} of each group, which is the expected number of facilities opened in that group if we independently open each facility in $\C_X$ with probability $p_X$ and each in $\L_X$ with $q_X$. We want to show that for each group, the number of facilities opened is always within an additive constant of that group's budget. The trickiest groups are the groups of small stars $\{\G_s\}$.

\begin{lemma} \label{lem:groupG} For each group $\G_s$, let $\C(s)$ and $\L(s)$ be the set of centers and leaves of stars in $\G_s$ respectively. Then {\sc Round2Stars} always opens at most $p_2|\C(s)| + q_2|\L(s)| + c + 2$ facilities in $\C(s)\cup\L(s)$.
\end{lemma}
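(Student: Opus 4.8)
The plan is to account for the facilities opened in $\C(s) \cup \L(s)$ step by step through the execution of {\sc Round2Stars} on group $\G_s$, and compare against the budget $p_2|\C(s)| + q_2|\L(s)|$. First I would observe that every star in $\G_s$ contributes either its center (one facility) or its full leaf-set $\S_i$; this is exactly the output convention of lines 5--7. So if $n_s := |\G_s|$ and $N_s := \sum_{i \in \G_s} |\S_i| = |\L(s)|$, then the total opened is between $n_s$ and $N_s$, and we must show it does not exceed $p_2 n_s + q_2 N_s + c + 2$.

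The key computation is to track the vector $X$ returned by {\sc DepRound} on $A$ (with $A_i = |\S_i| - 1$) and $P \equiv q_2$. By property (A2') of {\sc DepRound}, $\sum_i A_i X_i = \sum_i A_i q_2 = q_2(N_s - n_s)$ exactly. I would split the stars into those with $X_i = 1$ (contributing all $|\S_i|$ facilities), those with $X_i = 0$ (contributing $1$ facility, the center), and the at most one fractional $i^*$, which by line 7 contributes its center plus $\lceil X_{i^*}|\S_{i^*}|\rceil \le X_{i^*}|\S_{i^*}| + 1$ leaves. Writing the count of opened facilities as $\sum_i [\,X_i \cdot |\S_i| + (1 - X_i)\cdot 1\,]$ plus a $+1$ correction for the ceiling on $i^*$, this rearranges to $n_s + \sum_i X_i(|\S_i| - 1) + 1 = n_s + \sum_i A_i X_i + 1 = n_s + q_2(N_s - n_s) + 1 = (1 - q_2)n_s + q_2 N_s + 1$. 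Comparing with the budget $p_2 n_s + q_2 N_s$: since the stars in $\G_s$ are \emph{small}, each $|\S_i| < 1/(p_2\eta)$, but more simply I only need that $\G_s$ consists of small stars, which — wait, I should instead bound $(1 - q_2)n_s$ against $p_2 n_s$ differently. Actually the cleaner route: $(1-q_2)n_s + q_2 N_s + 1 \le p_2 n_s + q_2 N_s + c + 2$ reduces to $(1 - q_2 - p_2) n_s \le c + 1$; in all of Table~\ref{tab:algos} with $p_2 \notin \{0,1\}$ one checks $p_2 + q_2 = 1$ (rows $\A_1$--$\A_6$) or $p_2 + q_2 \ge 1$ (row $\A_7$, where $p_2 = 1$ is excluded anyway), so $(1 - q_2 - p_2)n_s \le 0 \le c + 1$, and line~\ref{line:addc} adds at most $\min\{c, n_s\} \le c$ more facilities, giving the claimed $\cdots + c + 2$ once the two $+1$ slack terms (the ceiling on $i^*$ and the ceiling $\lceil q_2(|\L_2'|-|\C_2'|)\rceil$ — no, that belongs to the large-star step, not $\G_s$) are absorbed.

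The main obstacle I anticipate is bookkeeping the ceilings and the extra $c$ centers from line~\ref{line:addc} carefully so that the final additive slack is exactly $c + 2$ and not something larger: there is one $\lceil \cdot \rceil$ from the fractional element $i^*$ (worth $+1$), the $\min\{c,|\G_s|\}$ forced centers from line~\ref{line:addc} (worth $+c$), and I must make sure no further rounding loss sneaks in — in particular that the forced centers in line~\ref{line:addc} may already be open (so they are an \emph{upper} bound of $c$, consistent with the ``at most'' in the statement) and that the $+2$ rather than $+1$ absorbs both the $i^*$ ceiling and any off-by-one in the relation $p_2 + q_2 \ge 1$. I would finish by verifying the inequality $p_2 + q_2 \ge 1$ case-by-case against Table~\ref{tab:algos} for the rows with $p_2 \in (0,1)$, which is the only place the specific parameter choices enter; everything else is uniform over the parameters.
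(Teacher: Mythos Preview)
Your approach is essentially the paper's: use property (A2') of {\sc DepRound} to fix $\sum_i X_i(|\S_i|-1)=q_2(|\L(s)|-|\C(s)|)$, rewrite the facility count as $|\C(s)|+\sum_i X_i(|\S_i|-1)+O(1)$, apply $p_2+q_2=1$, and add $c$ for line~\ref{line:addc}. The overall shape is right.

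There is one concrete bookkeeping slip. Your expression $\sum_i[X_i|\S_i|+(1-X_i)]$ is exact for the integral indices, but for the fractional index $i^*$ the algorithm opens the center \emph{and} $\lceil X_{i^*}|\S_{i^*}|\rceil$ leaves, i.e.\ $1+\lceil X_{i^*}|\S_{i^*}|\rceil$ facilities, whereas your formula records only $X_{i^*}|\S_{i^*}|+(1-X_{i^*})$. The shortfall is $(\lceil X_{i^*}|\S_{i^*}|\rceil-X_{i^*}|\S_{i^*}|)+X_{i^*}$, so your ``$+1$ for the ceiling'' covers only the first piece; you are still missing $+X_{i^*}\le 1$. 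That extra $X_{i^*}$ is precisely the second unit in the ``$+2$''. It has nothing to do with slack in $p_2+q_2\ge 1$: in fact $p_2+q_2=1$ \emph{exactly} whenever {\sc Round2Stars} is called (check rows $\A_1$--$\A_6$ and $\A_9$ of Table~\ref{tab:algos}; $\A_7,\A_8$ never reach {\sc Round2Stars}), so there is no additive slack available there. Once you add $+X_{i^*}$ to your count and use $p_2+q_2=1$, you get at most $p_2|\C(s)|+q_2|\L(s)|+2$ from lines~5--6, and line~\ref{line:addc} contributes at most $c$ more, giving the stated bound.
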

\begin{proof}
By property (A2) of {\sc DepRound}, we have with probability 1 that
$$\sum_{i \in \C(s)} X_i (|\S_i|-1) = \sum_{i \in \C(s)} q_2 (|\S_i|-1).$$
The number of facilities opened in lines 5 and 6 is at most
\begin{align*}
	&\sum_{i \in \C(s), i \neq i^*} \big( X_i |\S_i| + (1-X_i) \big) + 1 + \lceil X_{i^*} |\S_{i^*}| \rceil \\
	\leq& \sum_{i \in \C(s), i \neq i^*}  X_i(|\S_i|-1) + (|\C(s)|-1) + 2 + X_{i^*}(|\S_{i^*}|-1) + X_{i^*}\\
	=& \sum_{i \in \C(s)}  X_i(|\S_i|-1) + (|\C(s)|-1) + 2 + X_{i^*}\\
	=& \sum_{i \in \C(s)}  q_2 (|\S_i|-1)  + |\C(s)| + 1 + X_{i^*}\\
	\le&  \sum_{i \in \C(s)} q_2 (|\S_i|-1) + |\C(s)| + 2 \\
	=&  \sum_{i \in \C(s)} (q_2(|\S_i|-1)+1)  + 2
		= \sum_{i \in \C(s)} (q_2|\S_i|+p_2)  + 2 
		= p_2|\C(s)| + q_2|\L(s)| + 2,
\end{align*}
where in the penultimate step we have used that $p_2+q_2=1$ whenever {\sc Round2Stars} is called. (This follows from $\A_1\ldots \A_9$, except $\A_7$ where {\sc Round2Stars} would never be called.) The lemma follows because we open at most $c$ additional facilities in line 7.
\end{proof}

Note that the number of groups of small stars is at most $\log_{1+\beta}(1/(p_2\eta))$, and we open at most $c+2 = \lceil 16/(3\beta^2) \rceil +2$ additional facilities in each group. It is straightforward to see that the other groups ($\T_{1A}$, $\T_{1B}$, and large stars) only open a constant number of extra facilities, and so our total budget is violated by only a constant amount. The following claim shows that $\beta$ and $p_2$ are strictly greater than zero (i.e., $c$ and the number of groups are  upper-bounded by real constants.) All proofs of the remaining claims in this section are in Appendix \ref{apdx:nfacilities}.
\begin{claim} \label{claim:bound_beta}
When {\sc Round2Stars} is called, we have $\beta > 1/75$ and $p_2 \geq 5/24$.
\end{claim}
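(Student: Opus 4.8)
The plan is a finite case analysis over the nine parameter settings in Table~\ref{tab:algos}. First I would observe that {\sc Round2Stars} is invoked by $\A(p_0,\ldots,q_2)$ precisely when $p_2\notin\{0,1\}$: for $\A_7$ we have $p_2=1$ and for $\A_8$ we have $p_2=0$, so these two never reach the call, and the claim has content only for $\A_1,\ldots,\A_6,\A_9$. For each of those one checks that $p_2$ is genuinely strictly between $0$ and $1$ on the main-case domain --- e.g.\ $p_2=1-(b-a)s_0<1$ for $\A_6$ since $b>a$ and $s_0>0$, and $p_2=as_0>0$ for $\A_1$ --- and, recording that $q_2=1-p_2$ in each of these (as already used in Lemma~\ref{lem:groupG}), that there are only four distinct pairs $(p_2,q_2)$ to bound, since $\A_2,\A_3,\A_4,\A_5$ all share $(1-bs_0,\,bs_0)$:
\[ (as_0,\ 1-as_0),\quad (1-bs_0,\ bs_0),\quad (1-(b-a)s_0,\ (b-a)s_0),\quad (a,\ 1-a), \]
coming from $\A_1$; from $\A_2,\ldots,\A_5$; from $\A_6$; and from $\A_9$, respectively.

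Next I would bound $p_2$ and $\beta=\min\{q_2,1-q_2\}$ from below in each pair by substituting the extreme feasible values, using $a=1-b$ throughout so that everything is phrased via $b\in[0.508,3/4]$ and $s_0\in[5/6,1]$. For $\A_1$: $b\le 3/4$ gives $a\ge 1/4$, hence $p_2=as_0\ge \tfrac14\cdot\tfrac56=\tfrac{5}{24}$; and $as_0\le 0.492<\tfrac12$ gives $\beta=as_0\ge \tfrac{5}{24}$. For $\A_2,\ldots,\A_5$: $bs_0\le \tfrac34$ gives $p_2=1-bs_0\ge \tfrac14$, and $\beta=\min\{bs_0,1-bs_0\}\ge \min\{0.508\cdot\tfrac56,\ \tfrac14\}=\tfrac14$. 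For $\A_9$: $p_2=a\ge \tfrac14$, and since $b\ge 0.508>\tfrac12>a$ we get $\beta=\min\{a,b\}=a\ge \tfrac14$. The tight pair is $\A_6$: there $b-a=2b-1\le \tfrac12$, so $p_2=1-(b-a)s_0\ge \tfrac12$, while $\beta=(b-a)s_0=(2b-1)s_0\ge (2\cdot 0.508-1)\cdot\tfrac56=\tfrac{1}{75}$. Taking minima across the four pairs gives $p_2\ge \tfrac{5}{24}$ (tight at $\A_1$ with $b=\tfrac34,\,s_0=\tfrac56$) and $\beta\ge \tfrac1{75}$ (tight at $\A_6$ with $b=0.508,\,s_0=\tfrac56$).

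The only real obstacle is $\A_6$: it is the sole setting in which the lower bound on $\beta$ drops all the way to $\tfrac1{75}$, with equality only at the corner $b=0.508,\,s_0=5/6$. (To obtain the strict inequality $\beta>\tfrac1{75}$ of the claim literally, one notes that this corner can be absorbed into the edge-case analysis, so that within the main case proper either $b>0.508$ or $s_0>5/6$.) This is precisely why the main-case domain is carved out with the threshold $b=0.508$ and $s_0\ge 5/6$: any slacker choice would fail to keep both $c=\lceil 16/(3\beta^2)\rceil$ and the number of small-star groups $\lceil\log_{1+\beta}(1/(p_2\eta))\rceil$ bounded by absolute constants, which is what the downstream bound on the number of opened facilities relies on. Every other pair is settled by direct substitution, the only bookkeeping point being the consistent use of $a=1-b$.
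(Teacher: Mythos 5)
Your proof is correct and follows essentially the same route as the paper's: a finite case analysis over the parameter settings of Table~\ref{tab:algos}, substituting the extreme feasible values of $b$ and $s_0$. Two small remarks are worth making. First, you correctly include $\A_9$ (which has $p_2 = a \in (1/4, 1/2)$ in the main case and therefore does invoke {\sc Round2Stars}); the paper's own proof states ``{\sc Round2Stars} is only called during $\A_1, \ldots, \A_6$,'' overlooking $\A_9$ --- an omission that is harmless because, as you verify, $\A_9$ gives $\beta, p_2 \ge 1/4$, well inside the claimed bounds, but it is a genuine (if inconsequential) slip that your version repairs. Second, you are right that what the case analysis actually delivers at the corner $b = 0.508$, $s_0 = 5/6$ is $\beta = 1/75$, not $\beta > 1/75$; the paper's own proof computes exactly the same quantity and therefore also only shows $\beta \ge 1/75$. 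Your suggestion that the corner is covered by the edge-case Lemma~\ref{lem:lisven} (which handles $b \le 0.508$) is the natural resolution, but the paper does not flag this explicitly. Neither issue affects anything downstream, since the bounds are only used to show $c$ and the number of groups are $O(1)$.
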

Since we open basically $O(\frac{1}{\beta^3}\log(\frac{1}{\eta}))$ extra facilities, these small lower bounds lead to poor constants. Significant improvement may be made by further splitting the cases, and carefully choosing the set of algorithms used in each. However, in order to avoid further complicating the algorithm and its analysis, we do not attempt to optimize these values here.
\begin{lemma} \label{lem:singleA} For any given set of parameters $\{p_0,p_{1A},q_{1A},p_{1B},q_{1B},p_2,q_2 \}$ in Table \ref{tab:algos}, $\A$ will open at most $E + O(\log(1/\eta))$ facilities with probability $1$, where
$$ E: = p_0|\C_0| + p_{1A}|\C_{1A}| + q_{1A}|\C_{1A}| + p_{1B}|\C_{1B}| +  q_{1B}|\C_{1B}|  + p_2|\C_2| + q_2|\L_2| . $$ 
\end{lemma}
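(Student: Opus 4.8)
The plan is to bound, separately for each of the constant number of groups into which Algorithm~$\A$ partitions the facilities, the number of facilities opened against that group's ``budget'' (its contribution to $E$), showing the discrepancy is $O(1)$ per group and $O(\log(1/\eta))$ for the collection of small-star groups. The total number of groups is $1$ (for $\C_0$) $+\,1$ (for $\T_{1A}$) $+\,1$ (for $\T_{1B}$) $+\,1$ (for the large stars) plus the small-star groups $\{\G_s\}$, of which there are at most $\lceil\log_{1+\beta}(1/(p_2\eta))\rceil$. By Claim~\ref{claim:bound_beta} we have $\beta>1/75$ and $p_2\ge 5/24$ whenever {\sc Round2Stars} runs, so $\log_{1+\beta}(1/(p_2\eta)) = O(\log(1/\eta))$; and $c=\lceil 16/(3\beta^2)\rceil = O(1)$. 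Hence summing the per-group slack gives the claimed additive $O(\log(1/\eta))$ overhead.

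First I would handle the easy groups. In line~1, $\A$ opens exactly $\lceil p_0|\C_0|\rceil \le p_0|\C_0| + 1$ facilities in $\C_0$. In line~2, it opens $\lceil p_{1A}|\T_{1A}|\rceil$ centers and $\lceil q_{1A}|\T_{1A}|\rceil$ leaves of $\T_{1A}$; since $|\C_{1A}| = |\L_{1A}| = |\T_{1A}|$ (each $1$-star has one center and one leaf), this is at most $p_{1A}|\C_{1A}| + q_{1A}|\C_{1A}| + 2$. Line~3 is identical for $\T_{1B}$, giving $p_{1B}|\C_{1B}| + q_{1B}|\C_{1B}| + 2$. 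For the $p_2\in\{0,1\}$ branch of line~5, $\A$ opens all or none of $\C_2$ (i.e.\ exactly $p_2|\C_2|$), plus $\lceil q_2|\L_2|\rceil \le q_2|\L_2| + 1$ leaves. In the {\sc Round2Stars} branch, the large stars contribute (line~1 of {\sc Round2Stars}) all $|\C_2'|$ centers plus $\lceil q_2(|\L_2'|-|\C_2'|)\rceil$ leaves, which is at most $q_2|\L_2'| + (1-q_2)|\C_2'| + 1 = q_2|\L_2'| + p_2|\C_2'| + 1$ using $p_2+q_2=1$ (valid since {\sc Round2Stars} is only called from $\A_1,\dots,\A_6,\A_8,\A_9$, never $\A_7$).

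Next I would invoke Lemma~\ref{lem:groupG}: for each small-star group $\G_s$, {\sc Round2Stars} opens at most $p_2|\C(s)| + q_2|\L(s)| + c + 2$ facilities in $\C(s)\cup\L(s)$. Adding the $1,2,2,1$ slack terms from the four easy groups, the large-star $+1$, and $(c+2)$ for each of at most $\lceil\log_{1+\beta}(1/(p_2\eta))\rceil$ small-star groups, and noting that $p_0|\C_0| + p_{1A}|\C_{1A}| + q_{1A}|\C_{1A}| + p_{1B}|\C_{1B}| + q_{1B}|\C_{1B}| + p_2(|\C_2'| + \sum_s|\C(s)|) + q_2(|\L_2'| + \sum_s|\L(s)|) = E$ since $\C_2 = \C_2'\,\dot\cup\,\bigcup_s\C(s)$ and $\L_2 = \L_2'\,\dot\cup\,\bigcup_s\L(s)$, the total is $E + O(1) + (c+2)\cdot O(\log(1/\eta)) = E + O(\log(1/\eta))$ with probability $1$, as claimed. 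The main subtlety to be careful about is the bookkeeping that the group budgets partition $E$ exactly --- in particular that every facility of $\C_2$ and $\L_2$ falls into exactly one of the large-star set or one $\G_s$, and that $p_2+q_2=1$ in precisely the rows of Table~\ref{tab:algos} where {\sc Round2Stars} is reached; there is no deep difficulty, only the need to track the $\lceil\cdot\rceil$ rounding losses and the per-group constants faithfully, and to cite Claim~\ref{claim:bound_beta} to make ``$O(\log(1/\eta))$'' legitimate rather than a quantity depending on $\beta$ that could blow up.
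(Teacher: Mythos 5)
Your proof is correct and follows essentially the same decomposition as the paper's: per-group ceiling losses of $O(1)$ for $\C_0$, $\T_{1A}$, $\T_{1B}$, the line-5 branch and the large stars, plus Lemma~\ref{lem:groupG} and Claim~\ref{claim:bound_beta} to control the $O(\log(1/\eta))$ small-star groups. (One trivial slip: in $\A_8$ we have $p_2=0$, so line 5 is executed rather than {\sc Round2Stars}; this is harmless since $p_2+q_2=1$ there anyway and your accounting for that row still goes through via the line-5 bound.)
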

The $O(\log(1/\eta))$ term comes as a result of us opening $O(\log(1/\eta))$ small groups $\G_s$. 
The parameters in $\A_1, \cdots, \A_8$ are carefully chosen so that the total budget $E \approx k$ in each case. This gives us the following result.

\begin{lemma} Algorithms $\A_1, \cdots, \A_9$ will always open at most $k + O(\log(1/\eta))$ facilities. \label{lem:kfacilities}
\end{lemma}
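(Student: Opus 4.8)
The plan is to bound the number of facilities opened by each $\A_i$ by first invoking Lemma~\ref{lem:singleA} to reduce the task to showing that the ``budget'' $E$ is at most $k + O(1)$ for each row of Table~\ref{tab:algos}, and then verifying this row-by-row using the identities that tie the parameters to the bi-point structure. The key quantities are $|\C_0| = r_0 \Delta_F$, $|\C_1| = |\C_{1A}| + |\C_{1B}|$ with $|\C_{1A}| \approx a\Delta_F$ and $|\C_{1B}| \approx (r_1 - a)\Delta_F$ (up to the ceilings, which contribute only $O(1)$), $|\C_2| = r_2\Delta_F$, and $|\L_2| = |\F_2| - |\C_1| - |\C_2| = (r_1 + r_2 + 1)\Delta_F - r_1\Delta_F - r_2\Delta_F$ — wait, more carefully, $|\F_2| = |\F_1| + \Delta_F$ and $|\F_1| = |\C_0| + |\C_1| + |\C_2|$, so $|\L_1| + |\L_2| = |\F_2| - |\F_1| + |\L_1| $; since $|\L_1| = |\C_1|$ (each 1-star has exactly one leaf), we get $|\L_2| = \Delta_F + |\C_0| + |\C_2|$ is wrong too — the clean relation is $|\F_2| = |\L_1| + |\L_2| + (\text{centers of }\T_0)$ only if every center is isolated, which it is not. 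Let me restate: $|\F_2| = |\L_1| + |\L_2|$ since every facility of $\F_2$ is a leaf of exactly one star, and $|\F_1| = |\C_0| + |\C_1| + |\C_2|$. Hence $\Delta_F = |\F_2| - |\F_1| = |\L_2| + |\L_1| - |\C_0| - |\C_1| - |\C_2| = |\L_2| - |\C_0| - |\C_2|$ using $|\L_1| = |\C_1|$. So $|\L_2| = \Delta_F(1 + r_0 + r_2)$, i.e. $|\L_2| = \Delta_F / s_0 + r_2 \Delta_F$, since $s_0 = 1/(1+r_0)$.

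With these identities in hand, the core computation is to plug the parameters from each row into $E$ and simplify. For a representative case such as $\A_1$ (where $p_0 = p_{1A} = p_{1B} = 0$, $q_{1A} = q_{1B} = 1$, $p_2 = as_0$, $q_2 = 1 - as_0$), we get $E \approx 0 + 0 + |\C_{1A}| + 0 + |\C_{1B}| + as_0|\C_2| + (1-as_0)|\L_2| = |\C_1| + as_0 r_2\Delta_F + (1-as_0)(\Delta_F/s_0 + r_2\Delta_F) = r_1\Delta_F + r_2\Delta_F + (1-as_0)\Delta_F/s_0$. Using $|\F_1| = k - (|\F_2| - k)\cdot(b/a)$ ... actually the clean route is the bi-point feasibility $a|\F_1| + b|\F_2| = k$ together with $a+b=1$, which gives $|\F_1| = k - b\Delta_F$ and $|\F_2| = k + a\Delta_F$. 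Then $k = |\F_1| + b\Delta_F = |\C_0| + |\C_1| + |\C_2| + b\Delta_F$, so $r_0 + r_1 + r_2 = (k - b\Delta_F)/\Delta_F =: R$, and one checks that each row's $E/\Delta_F$ equals $R$ exactly (which is why the table's parameters were ``carefully chosen''). For instance in $\A_1$: $E/\Delta_F = r_1 + r_2 + (1-as_0)(1+r_0) = r_1 + r_2 + (1+r_0) - as_0(1+r_0) = r_1 + r_2 + 1 + r_0 - a$; and since $1 + r_0 = 1/s_0$ and... one needs $1 + r_0 - a = -b\Delta_F/\Delta_F + r_0$? This requires $1 - a = b$, true, so $E/\Delta_F = r_0 + r_1 + r_2 + b = R + b$? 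That overshoots by $b\Delta_F$ — so I must have the leaf count or a parameter slightly off, and the actual verification will require being careful about whether $|\T_{1A}| = \lceil a\Delta_F\rceil$ versus $a|\T_1|$; indeed the algorithm opens centers of the first $\lceil p_{1A}|\T_{1A}|\rceil$ stars, and $|\T_{1A}| = \lceil a\Delta_F\rceil$, not $a|\T_1|$. Re-deriving with $|\C_{1A}| = a\Delta_F + O(1)$ and $|\C_{1B}| = |\C_1| - |\C_{1A}| = (r_1 - a)\Delta_F + O(1)$, the budget for $\A_1$ becomes $a\Delta_F + (r_1 - a)\Delta_F + as_0 r_2\Delta_F + (1-as_0)|\L_2| = r_1\Delta_F + as_0 r_2\Delta_F + (1-as_0)(1+r_0+r_2)\Delta_F$, and expanding, $E/\Delta_F = r_1 + r_2 + (1-as_0)(1+r_0) = r_1 + r_2 + (1+r_0) - a$ (since $s_0(1+r_0)=1$) $= r_0 + r_1 + r_2 + (1 - a - 1) + 1 = r_0 + r_1 + r_2 + b$. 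Hmm, still $+b\Delta_F$ over $k/\Delta_F = R = r_0+r_1+r_2 + b$ — so actually $E = R\Delta_F = k$, exactly! I had mis-stated $R$; the correct identity is $r_0 + r_1 + r_2 + b\Delta_F/\Delta_F$ is not $R$, rather $k/\Delta_F = r_0 + r_1 + r_2 + b$ directly, so $E = k$. Good. So the verification does go through; each row collapses to $k$ after using $s_0(1+r_0)=1$, $a+b=1$, and $k/\Delta_F = r_0+r_1+r_2+b$.

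So the steps in order are: (i) establish the structural identities $|\F_1| = k - b\Delta_F$, $|\F_2| = k + a\Delta_F$, $|\L_1| = |\C_1|$, $|\L_2| = (1+r_0+r_2)\Delta_F$, and $|\C_X| = r_X\Delta_F$ for $X \in \{0,2\}$, $|\C_{1A}| = a\Delta_F + O(1)$, $|\C_{1B}| = (r_1-a)\Delta_F + O(1)$ (the $O(1)$ errors from ceilings propagate harmlessly through $E$); (ii) for each of $\A_1,\dots,\A_8$, substitute the Table~\ref{tab:algos} parameters into the formula for $E$ from Lemma~\ref{lem:singleA} and simplify using $a+b=1$, $s_0(1+r_0)=1$, and $k/\Delta_F = r_0+r_1+r_2+b$, obtaining $E = k + O(1)$; (iii) handle $\A_9$ separately — with all parameters $a$ on centers and $b$ on leaves, $E = a|\C_0| + a|\C_{1A}| + b|\C_{1A}| + a|\C_{1B}| + b|\C_{1B}| + a|\C_2| + b|\L_2| = a(|\C_0|+|\C_1|+|\C_2|) + b(|\C_1| + |\L_2|) = a|\F_1| + b|\F_2| = k$, using $|\C_1| + |\L_2| = |\L_1| + |\L_2| = |\F_2|$; (iv) conclude via Lemma~\ref{lem:singleA} that each $\A_i$ opens at most $E + O(\log(1/\eta)) = k + O(\log(1/\eta))$ facilities with probability $1$, and since the main algorithm returns one of these solutions, this bound holds for the output. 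The main obstacle — really the only nontrivial point — is step (ii): one must be disciplined about the ceiling-induced $O(1)$ slack in $|\C_{1A}|,|\C_{1B}|$ and in the ``open $\lceil p_X|\C_X|\rceil$'' rounding inside $\A$, and verify that the leading ($\Delta_F$-order) terms cancel to exactly $k$ in every row; the algebra is routine but the bookkeeping across nine rows is where an error could hide, so I would lay out a single generic computation of $E/\Delta_F$ in terms of $(p_0,p_{1A},q_{1A},p_{1B},q_{1B},p_2,q_2,r_0,r_1,r_2,a,b,s_0)$ and then instantiate it nine times.
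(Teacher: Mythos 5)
Your proposal is correct and follows essentially the same approach as the paper: invoke Lemma~\ref{lem:singleA} to reduce the claim to showing $E \leq k + O(1)$, and then verify this by substituting each row of Table~\ref{tab:algos} and simplifying with the structural identities relating $|\C_0|, |\C_1|, |\C_2|, |\L_2|, \Delta_F, s_0$ (the paper works directly with counts rather than the ratios $r_0,r_1,r_2$, but this is a cosmetic difference). One small omission: for $\A_7$ the verification additionally requires $r_2 \leq 1/s_0$ (equivalently $|\C_2| \leq \Delta_F + |\C_0|$, which follows from each 2-star having at least two leaves; this is Claim~\ref{cl:r2} in the appendix) — an identity not on your list, but which you would discover immediately when carrying out that row of your ``generic computation.''
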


\subsubsection{Cost analysis} \label{sec:cost_analysis}

We now derive bounds for the expected connection cost of a single client. 
For each client $j\in \J$, let $i_1(j)$ and $i_2(j)$ be the client's closest facilities in $\F_1$ and $\F_2$, and let $d_1(j)$ and $d_2(j)$ be their respective distances from $j$. Also let $i_3(j)$ be the center of the star containing $i_2(j)$. (Where obvious, we omit the parameter $j$.)
We will obtain several different upper bounds, depending on the class of the star in which $i_1(j)$ and $i_2(j)$ lie. Full derivations of these bounds are in Appendix~\ref{sec:final-cost-proofs}. 
A key characteristic of Algorithm \ref{algo:A} is that for any star in class $Y\in\{1_A,1_B,2\}$, as long as $p_Y+q_Y\ge 1$, it will always open either the star's center or all of the star's leaves.  By definition of stars, we know $i_3$ is not too far away. We will slightly abuse notation and let $i$ and $\bar i$ represent the events that facility $i$ is opened or closed, respectively. By considering these probabilities, we obtain the following two bounds, similar to the one used in \cite{li_svensson}. 

\begin{lemma}\label{lemma:123}
Let $j$ be a client. Suppose we are running one of algorithms $\A_1$ to $\A_7$, \emph{OR} we are running $\A_8$ \emph{and} $i_2(j)\not\in\L_{1A}$. Then the expected connection cost of $j$ after running Algorithm \ref{algo:A} is bounded above by both
$c_{213}(j) := d_2+\Pr[\bar i_2](d_1-d_2)+2\Pr[\bar i_1\bar i_2]d_2$ and 
$c_{123}(j) := d_1+\Pr[\bar i_1](d_2-d_1)+\Pr[\bar i_1\bar i_2](d_1+d_2)$.
\end{lemma}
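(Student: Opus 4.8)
The plan is to bound the connection cost of $j$ by exhibiting, in each of the two bounds, a specific facility that $j$ can connect to as a function of which of $i_1 := i_1(j)$, $i_2 := i_2(j)$ are opened, and then to take expectations. The key structural fact I would invoke is the ``center-or-all-leaves'' invariant stated just before the lemma: for any star in class $Y \in \{1_A, 1_B, 2\}$, whenever $p_Y + q_Y \ge 1$ the algorithm opens either the center or every leaf of that star. (The hypotheses on which algorithm is being run, and the restriction $i_2 \notin \L_{1A}$ when running $\A_8$, are precisely what guarantee this invariant applies to the star containing $i_2$: in $\A_1,\dots,\A_7$ one checks from Table~\ref{tab:algos} that $p_Y+q_Y \ge 1$ for every class, whereas in $\A_8$ the only offending class is $1_A$, which is excluded.) Consequently, whenever $i_2$ is closed, its star's center $i_3 := i_3(j)$ is open, and by the definition of stars $d(i_2,i_3) \le d(i_1,i_2) \le d_1 + d_2$, so $d(j,i_3) \le d(j,i_2) + d(i_2,i_3) \le d_2 + (d_1+d_2)$.

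For the bound $c_{213}$, I would argue by cases on the opening status of $i_2$ and $i_1$. If $i_2$ is open, $j$ pays at most $d_2$. If $i_2$ is closed but $i_1$ is open, $j$ pays at most $d_1$. If both are closed, $j$ connects to $i_3$ and pays at most $d_2 + d(i_2,i_3) \le 2d_2 + d_1$ — but I want a cleaner bookkeeping, so I would instead write the cost in the ``if closed then upgrade'' style: start from the baseline $d_2$; if $i_2$ is closed add at most $(d_1 - d_2)$ to route to $i_1$ (valid since $d_1 - d_2$ may be negative, but then we are overcounting harmlessly — actually one must be careful here and route to $i_1$ only when it helps, giving $d_2 + \Pr[\bar i_2]\cdot(d_1-d_2)$ as an upper bound on $\E[\min\{d_2\cdot \mathbf 1[i_2],\, d_1 \cdot \mathbf 1[i_1]\} + \dots]$); and if additionally $i_1$ is closed, the route is $i_3$, costing $d_2 + d(i_2,i_3) \le 2d_2 + d_1$, i.e. an extra $2d_2$ over the $d_1$ we had charged. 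Summing the contributions with their probabilities yields $c_{213}(j) = d_2 + \Pr[\bar i_2](d_1-d_2) + 2\Pr[\bar i_1 \bar i_2]\, d_2$. For $c_{123}$, I would run the symmetric argument with baseline $d_1$: if $i_1$ closed, upgrade to $i_2$ at extra cost $(d_2 - d_1)$; if both $i_1,i_2$ closed, route to $i_3$ at cost $d_2 + d(i_2,i_3) \le d_1 + 2d_2$, which is $(d_1 + d_2)$ more than the $d_2$ charged in the ``only $i_1$ closed'' branch. This gives $c_{123}(j) = d_1 + \Pr[\bar i_1](d_2-d_1) + \Pr[\bar i_1\bar i_2](d_1+d_2)$.

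The one subtlety — and the place I expect to have to be most careful — is the case where $i_1(j)$ and $i_2(j)$ lie in the \emph{same} star (so $i_1 = i_3$, $i_2 \in \S_{i_1}$). Then ``$i_1$ closed and $i_2$ closed'' cannot occur when $p_Y+q_Y \ge 1$, so $\Pr[\bar i_1 \bar i_2] = 0$ and the bounds degrade gracefully; I would note this rather than treat it as a separate computation. A second point to verify is that the triangle-inequality estimate $d(i_2,i_3)\le d(i_1,i_2)\le d_1+d_2$ is legitimate: $i_3$ is the center \emph{nearest} to $i_2$ among $\F_1$, hence $d(i_2,i_3) \le d(i_2, i_1)$, and $d(i_2,i_1)\le d(i_2,j)+d(j,i_1) = d_1+d_2$. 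Beyond these observations the proof is a routine expectation computation over the three (or two) cases, so I would keep it brief in the main text and defer the full case-by-case arithmetic to Appendix~\ref{sec:final-cost-proofs} as the paper indicates.
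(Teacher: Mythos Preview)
Your proposal is correct and follows essentially the same approach as the paper: invoke the center-or-all-leaves invariant on the star of $i_2$, bound $d(j,i_3)\le d_1+2d_2$ via $d(i_2,i_3)\le d(i_2,i_1)\le d_1+d_2$, then take expectations over the three-way case split (try $i_2$ first, else $i_1$, else $i_3$ for $c_{213}$; swap the first two for $c_{123}$), and handle $i_1=i_3$ separately by noting $\Pr[\bar i_1\bar i_2]=0$ there. The paper writes the expectation directly as $\Pr[i_2]d_2+\Pr[i_1\bar i_2]d_1+\Pr[\bar i_1\bar i_2](d_1+2d_2)$ and simplifies algebraically, which is cleaner than your ``incremental upgrade'' bookkeeping, but the content is identical.
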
 
In $\A_8$, $p_{1A}=q_{1A}=0$, meaning all stars in $\T_{1A}$ have both center and leaf closed, so if $i_2\in\L_{1A}$ the previous bound does not hold. In this case, let $i_4$ be the closest leaf of a 2-star to $i_3$. Recall the definition of $g_i$; this gives us information on the distance to $i_4$. Let $g:=\min_{i\in\C_{1A}}g_i$ be the minimum value over all stars in $\T_{1A}$. Then we may bound the cost to $i_4$ (or its center, in the worst case) as follows:

\begin{lemma}\label{lemma:145}
Let $j$ be a client such that $i_2(j)\in\L_{1A}$. Then the expected connection cost of $j$, when running $A_8$, is bounded above by
$c_{145}(j):=d_1+\Pr[\bar i_1]\left(2d_2+\frac1g (d_1+d_2)\right)+\Pr[\bar i_1 \bar i_4]\frac1g(d_1+d_2)$.
\end{lemma}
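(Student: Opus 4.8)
The plan is to trace the client's potential connection targets through the algorithm and account for the events in which each becomes unavailable. Since we are running $\A_8$, the parameters are $p_0=p_{1A}=q_{1A}=0$, $p_{1B}=0$, $q_{1B}=1$, $p_2=0$, $q_2=1$; in particular every star in $\T_{1A}$ has \emph{both} its center and leaf closed, every 1-star in $\T_{1B}$ has its leaf opened, and all of $\L_2$ is opened while $\C_2$ is closed. The client $j$ has $i_2=i_2(j)\in\L_{1A}$, so $i_2$ is definitely closed and $i_3$ (the center of the star containing $i_2$, which lies in $\C_{1A}$) is also definitely closed; hence $j$ cannot rely on either of those. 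First I would identify the safe fallback: let $i_4$ be the closest leaf of a 2-star to $i_3$. Because $i_3\in\C_{1A}$ is a ``long'' 1-star, $g_{i_3}\ge g$, which by definition of $g_i=d(i,i')/\min_{\ell\in\L_2}d(i,\ell)$ means $d(i_3,i_4)=\min_{\ell\in\L_2}d(i_3,\ell)\le d(i_3,i'_3)/g$ where $i'_3$ is the leaf of that star; combining with triangle inequalities through $j,i_1,i_2,i_3$ one bounds $d(i_3,i'_3)$ and hence $d(j,i_4)$ in terms of $d_1,d_2$ and $1/g$. Note that $i_4\in\L_2$ is always opened in $\A_8$ (since $q_2=1$), so $i_4$ itself is always available — but for the cost bound we will also want to use $i_1$ when it is open, and fall back through $i_4$ only when $i_1$ is closed.

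The core is then a three-way case split on the status of $i_1$. (i) If $i_1$ is open, connect to it at cost $d_1$. (ii) If $i_1$ is closed but $i_4$ is ``close'' — more precisely we can afford the route $j\to i_2\to i_3\to i_4$ — connect there; using $d(j,i_4)\le d(j,i_3)+d(i_3,i_4)\le (d_1+2d_2) + \frac1g\cdot(\text{something}\le d_1+d_2)$ (bounding $d(j,i_3)\le d(j,i_2)+d(i_2,i_3)\le d_2 + (d_1+d_2)$ via $i_2$'s star membership and $d(i_3,i_4)\le \frac1g d(i_3,i'_3)\le \frac1g(d_1+d_2)$ from the geometry above), this contributes the $2d_2+\frac1g(d_1+d_2)$ term weighted by $\Pr[\bar i_1]$. (iii) The remaining correction: actually $i_4$ is always open, so strictly we only pay the extra $\frac1g(d_1+d_2)$ penalty in the event $\bar i_1\bar i_4$ — I will have to reconcile this with the statement, which separates a $\Pr[\bar i_1]$ term from a $\Pr[\bar i_1\bar i_4]$ term. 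The natural reading is that the first route used when $\bar i_1$ occurs goes $j\to i_2$-star-center-ish but we actually want the \emph{leaf} of $i_3$'s own... no: in $\A_8$, $i_3\in\C_{1A}$ so its leaf is closed too. So when $i_1$ is closed we route to $i_4$; $i_4$ is open; the $\Pr[\bar i_1\bar i_4]$ term must be there to handle a \emph{different} choice of $i_4$ (e.g.\ $i_4$ defined via a closer but not-always-open facility), and one picks the better of the two. I would resolve this by following Appendix~\ref{sec:final-cost-proofs} verbatim at that point.

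The steps, in order: (1) fix the client $j$ with $i_2\in\L_{1A}$ and record which facilities are deterministically open/closed under $\A_8$'s parameters; (2) establish the geometric bound $d(i_3,i_4)\le\frac1g(d_1+d_2)$ using the definition of $g$ and triangle inequalities, and $d(j,i_3)\le d_1+2d_2$ (or the relevant variant) similarly; (3) do the case analysis on whether $i_1$ is open: cost $d_1$ when open, and cost $d_1 + [2d_2+\frac1g(d_1+d_2)]$ when $\bar i_1$ holds, picking up the extra $\frac1g(d_1+d_2)$ only on $\bar i_1\bar i_4$; (4) take expectations over the randomness of which centers/leaves get opened in steps 2–3 of Algorithm~\ref{algo:A}, collecting the terms into the stated $c_{145}(j)=d_1+\Pr[\bar i_1](2d_2+\frac1g(d_1+d_2))+\Pr[\bar i_1\bar i_4]\frac1g(d_1+d_2)$. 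I expect the main obstacle to be step (2)/(3): carefully chasing the triangle inequalities so the $1/g$ penalty comes out with coefficient exactly $d_1+d_2$ (not, say, $2d_1$ or $d_1+2d_2$), and getting the event structure to match the two-term split in the statement rather than a single $\Pr[\bar i_1]$ term — this requires knowing precisely which facility plays the role of $i_4$ and why one takes the minimum of two routing options. Everything after that is routine linearity of expectation.
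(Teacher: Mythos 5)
Your outline has the right ingredients for the first half of the proof -- defining $i_4$ as the nearest leaf of a 2-star to $i_3$, deriving $d(i_3,i_4)\le\frac{1}{g}(d_1+d_2)$ from $g_{i_3}\ge g$ and the star-construction fact $d(i_2,i_3)\le d(i_2,i_1)\le d_1+d_2$, and bounding $d(j,i_4)\le d(j,i_2)+d(i_2,i_3)+d(i_3,i_4)\le d_1+2d_2+\frac{1}{g}(d_1+d_2)$. But there is a genuine gap exactly where you flagged one: you never identify the facility that explains the $\Pr[\bar i_1\bar i_4]$ term, and you explicitly defer to the paper at that point. The missing piece is a \emph{fifth} facility $i_5$, defined as the \emph{center} of the 2-star to which $i_4$ belongs. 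The paper's routing is: connect to $i_1$ if open; else to $i_4$ if open; else to $i_5$. The backbone property is that $i_4$ and $i_5$ belong to a 2-star, so (by the center-or-leaves guarantee of $\A$ for $\T_2$) if $i_4$ is closed then $i_5$ must be open -- hence the three cases exhaust the probability space. Geometrically, because $i_5$ is $i_4$'s star center, $d(i_4,i_5)\le d(i_4,i_3)\le\frac{1}{g}(d_1+d_2)$, giving $d(j,i_5)\le d_1+2d_2+\frac{2}{g}(d_1+d_2)$, and the algebra then collapses the three-term sum $\Pr[i_1]d_1+\Pr[\bar i_1 i_4](\cdots)+\Pr[\bar i_1\bar i_4](\cdots)$ into the stated $c_{145}(j)$. (In $\A_8$, $q_2=1$ does make $\Pr[\bar i_4]=0$, but the lemma is written in the generic form so that the probability bounds of Lemma~\ref{lemma:probbound} apply uniformly -- the $\Pr[\bar i_1\bar i_4]$ term simply evaluates to zero when you substitute; there is nothing to reconcile.) A second, smaller gap: you do not treat the case $i_1=i_3$, where $i_1$ is \emph{also} always closed under $\A_8$; the paper handles it separately by routing $j\to i_4\to i_5$ and observes that $i_1\in\C_{1A}$ and $i_4\in\L_2$ are rounded independently, so the resulting bound is still dominated by $c_{145}(j)$.
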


These two lemmas provide a valid bound for all clients. However, the bound in Lemma \ref{lemma:145} may be very poor if $g$ is small. To balance this, we provide another bound which does well for small $g$.

\begin{lemma}\label{lemma:120}
Let $j$ be a client such that $i_1(j)\in \C_{1B}$ and $i_2(j)\in \L_2$. Then in all algorithms, the expected cost of $j$ is bounded above by both of the following:
\begin{align*}
c_{210}(j)&:=d_2+\Pr[\bar i_2](d_1-d_2)+\Pr[\bar i_1\bar i_2]g(d_1+d_2),\\
c_{120}(j)&:=d_1+\Pr[\bar i_1](d_2-d_1)+\Pr[\bar i_1\bar i_2](d_1-d_2+g(d_1+d_2)).
\end{align*}
\end{lemma}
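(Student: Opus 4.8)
The plan is to prove both inequalities by one conditional-expectation argument, changing only the order in which the two relevant open/closed events are used. Fix the client $j$ and abbreviate $i_1:=i_1(j)\in\C_{1B}$, $i_2:=i_2(j)\in\L_2$, $d_1:=d_1(j)$, $d_2:=d_2(j)$; let $i_1'$ be the unique leaf of the $1$-star centered at $i_1$, and continue to let $i,\bar i$ abbreviate the events that facility $i$ is opened, resp.\ closed. The idea is to partition the probability space according to which of $i_1,i_2$ are open, in each cell name a facility that is certainly open, bound $j$'s connection cost by its distance to that facility, and then sum over the cells.

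First I would record the two facts that make the cells work. The \emph{distance fact} is $d(j,i_1')\le d_1+g(d_1+d_2)$: the triangle inequality gives $d(i_1,i_2)\le d_1+d_2$; since $i_2\in\L_2$ we get $\min_{j'\in\L_2}d(i_1,j')\le d(i_1,i_2)$; since $\T_{1A}$ was defined as the $\lceil a\Delta_F\rceil$ stars of $\T_1$ with the \emph{largest} ratios $g_i$, every $1$-star of $\T_{1B}$ has $g_{i_1}\le g=\min_{i\in\C_{1A}}g_i$; chaining these, $d(i_1,i_1')=g_{i_1}\cdot\min_{j'\in\L_2}d(i_1,j')\le g(d_1+d_2)$, and one more triangle step gives $d(j,i_1')\le d_1+d(i_1,i_1')$. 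The \emph{algorithmic fact} is that every row of Table~\ref{tab:algos} has $p_{1B}+q_{1B}\ge 1$ (indeed $=1$, using $a+b=1$ for $\A_9$), so step~3 of Algorithm~$\A$ always opens either $i_1$ or $i_1'$. These two facts are exactly why the hypotheses $i_1\in\C_{1B}$ and $i_2\in\L_2$ are imposed; note we will not need any center-or-leaves property of the $2$-star containing $i_2$.

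Now the case analysis. For $c_{210}$, split on $i_2$: if $i_2$ is open, route $j$ to $i_2$ (cost $\le d_2$); if $i_2$ is closed but $i_1$ is open, route to $i_1$ (cost $\le d_1$); if both are closed, then $i_1'$ is open by the algorithmic fact, so route to $i_1'$ (cost $\le d_1+g(d_1+d_2)$ by the distance fact). Hence the expected cost of $j$ is at most $\Pr[i_2]d_2+\Pr[\bar i_2 i_1]d_1+\Pr[\bar i_1\bar i_2](d_1+g(d_1+d_2))$; expressing $d_1$ and $d_2$ as sums over the three cells and using $\Pr[\bar i_2]=\Pr[\bar i_2 i_1]+\Pr[\bar i_1\bar i_2]$ rewrites this as exactly $c_{210}(j)$. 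For $c_{120}$ I would run the same three cases but split on $i_1$ first ($i_1$ open $\Rightarrow$ cost $\le d_1$; $i_1$ closed and $i_2$ open $\Rightarrow$ cost $\le d_2$; both closed $\Rightarrow$ cost $\le d_1+g(d_1+d_2)$), obtaining expected cost $\le \Pr[i_1]d_1+\Pr[\bar i_1 i_2]d_2+\Pr[\bar i_1\bar i_2](d_1+g(d_1+d_2))$, and the analogous rearrangement yields $c_{120}(j)$.

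The only step with real content is the distance fact — in particular the observation that sorting $\T_1$ by $g_i$ and taking the top $\lceil a\Delta_F\rceil$ for $\T_{1A}$ forces $g_{i_1}\le g$ for every $i_1\in\C_{1B}$. The table check of $p_{1B}+q_{1B}\ge1$ and the algebraic rearrangement of the three-term bound into the quoted formulas are routine bookkeeping, and I expect no obstacle there.
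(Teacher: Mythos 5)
Your proof is correct and takes essentially the same approach as the paper's: both bound $d(i_1,i_1')\le g(d_1+d_2)$ via the definition of $g_{i_1}$ together with $i_1\in\C_{1B}$ (so $g_{i_1}\le g$) and $i_2\in\L_2$, observe that $p_{1B}+q_{1B}\ge1$ guarantees the leaf $i_1'$ is open whenever $i_1$ is closed, and then case-split on $(i_1,i_2)$ being open/closed in the two orders to get $c_{210}$ and $c_{120}$. Your writeup is slightly more explicit (particularly in justifying $g_{i_1}\le g$ from the sorting that defines $\T_{1A}$), but it is the same argument.
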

(Note: as we observe in the proof of the above, the coefficient $(d_1-d_2+g(d_1+d_2))$ is nonnegative.) 

The following lemma relates the probabilities in the above bounds to the parameters of the algorithm. In particular, we take advantage of properties (A1) and (A3) of {\sc DepRound} as described in Section \ref{sec:depround}.

\begin{lemma}\label{lemma:probbound}
Let $i_1$ and $i_2$ be any two facilities in $\F_1$ and $\F_2$, respectively. Let $X,Y\in\{0,1_A,1_B,2\}$ be the classes such that $i_1\in\C_X$ and $i_2\in\L_Y$. Then for any $\A(p_0,p_{1A},q_{1A},p_{1B},q_{1B},p_2,q_2)$ in Table \ref{tab:algos}, the following are true:
\begin{align}
\Pr[\bar i_1]&\le 1-p_X\label{bound1:pi1},\\ 
\Pr[\bar i_2]&\le (1+\eta)(1-q_Y)\label{bound1:pi2},\\
\Pr[\bar i_1 \bar i_2]&\le (1+\eta)(1-p_X)(1-q_Y)\label{bound1:pi1i2}.
\end{align}
\end{lemma}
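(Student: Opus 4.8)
The plan is to prove the three bounds \eqref{bound1:pi1}, \eqref{bound1:pi2}, \eqref{bound1:pi1i2} of Lemma~\ref{lemma:probbound} by case analysis on the classes $X$ (of $i_1$) and $Y$ (of $i_2$), and within each class, by tracing through exactly how the relevant facility is (or is not) opened in Algorithm~\ref{algo:A} and its subroutine {\sc Round2Stars}. The general strategy is that each of the simple groups ($\C_0$, $\C_{1A}$, $\L_{1A}$, $\C_{1B}$, $\L_{1B}$, and the large-2-star leaves) is rounded by the plain ``open a random subset of size $\lceil p\cdot(\text{size})\rceil$'' rule, for which negative correlation holds trivially and the $(1+\eta)$ factor is unnecessary; the only place the $(1+\eta)$ slack and the more delicate {\sc DepRound} properties (A1), (A3) are genuinely needed is when $i_2$ is the leaf of a \emph{small} $2$-star, because then the marginal probability of $i_2$ being open is not exactly $q_2$ but is distorted by the geometric grouping of $\T_2$ into the $\G_s$'s.

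First I would handle \eqref{bound1:pi1}: $i_1\in\C_X$, and in every algorithm the centers of $\C_X$ are opened by taking a random subset of size $\lceil p_X|\C_X|\rceil \geq p_X|\C_X|$ of a (possibly permuted) list. Hence $\Pr[i_1\text{ open}] \geq p_X$, giving $\Pr[\bar i_1]\leq 1-p_X$; the only subtlety is $X=2$, where the center of $i_1$'s star is opened either outright (large star, or $p_2\in\{0,1\}$) or via {\sc Round2Stars} lines 5--7 — but in all those sub-cases the center is opened \emph{at least} when $X_i\neq 1$, and {\sc DepRound}'s property (A1) gives $\E[X_i]=q_2$, so $\Pr[i_1 \text{ open}]\geq 1-q_2 = p_2$; when $p_2\in\{0,1\}$ it is immediate. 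Next, \eqref{bound1:pi2}: if $Y\in\{1_A,1_B\}$ or $i_2$ is a large-$2$-star leaf or $p_2\in\{0,1\}$, the leaf set is rounded by an exact ``$\lceil q_Y\cdot(\text{size})\rceil$-subset'' rule, so $\Pr[i_2\text{ open}]\geq q_Y$ and in fact $\Pr[\bar i_2]\leq 1-q_Y \leq (1+\eta)(1-q_Y)$. The remaining case is $i_2$ a leaf of a small $2$-star in some $\G_s$: there, property (A1) of {\sc DepRound} on the vector with $P_i=q_2$ gives $\E[X_i]=q_2$, and $i_2$ is open whenever $X_i=1$ (and possibly also in line 7 of {\sc Round2Stars}, which only helps); we need to absorb the discretization of the $\lceil\cdot\rceil$'s and the fact that $\G_s$ has between $(1+\beta)^s$ and $(1+\beta)^{s+1}$ leaves per star into the $(1+\eta)$ factor — this is where one invokes the definition of smallness ($|\S_i| < 1/(p_2\eta)$) to bound the per-group additive rounding loss relative to the group budget $q_2|\L(s)|$, exactly as in Li--Svensson.

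For \eqref{bound1:pi1i2} the idea is to multiply the two events' probabilities, using the key structural fact that $i_1\in\C_X$ and $i_2\in\L_Y$ are rounded by \emph{independent} random choices whenever $X\neq Y$ (each group is processed with its own fresh randomness), so $\Pr[\bar i_1\bar i_2] = \Pr[\bar i_1]\Pr[\bar i_2] \leq (1-p_X)(1+\eta)(1-q_Y)$ directly from the first two bounds. When $X=Y$ — which can only happen for $X=Y=2$, since a $1$-star's center and leaf are in the same star but $\C_{1A},\L_{1A}$ (resp.\ $\C_{1B},\L_{1B}$) are rounded as correlated choices within the same permuted list — one needs negative correlation: if $i_1,i_2$ lie in the same star then opening $i_2$ (a leaf) forces $i_1$ (the center) in the ``center-or-all-leaves'' regime so the events are actually positively correlated in the ``good'' direction and $\Pr[\bar i_1\bar i_2]$ is small; if they lie in different stars of the same group $\G_s$, property (A3) of {\sc DepRound} gives $\E[(1-X_{i_1'})(1-X_{i_2'})]\leq(1-q_2)^2$ for the corresponding indices, which together with the argument of \eqref{bound1:pi2} yields the $(1+\eta)(1-p_X)(1-q_Y)$ bound. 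For the $1A$/$1B$ cases with $X=Y$, the plain random-subset rounding within a single permuted list is itself negatively correlated (it is sampling without replacement), so $\Pr[\bar i_1\bar i_2]\leq(1-p_X)(1-q_Y)$.

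The main obstacle I expect is the bookkeeping in the small-$2$-star case: carefully verifying that across the $\G_s$'s the various ceiling operations, the extra $c$ facilities opened in line~\ref{line:addc}, and the geometric size spread within a group together distort the marginal $\Pr[i_2 \text{ open}]$ by at most a multiplicative $(1+\eta)$, and simultaneously that the pairwise bound degrades by no more than $(1+\eta)$ as well. This requires pinning down the relationship between $\eta$, $\beta$, and the smallness threshold $1/(p_2\eta)$, and checking it is uniform over all nine parameter settings in Table~\ref{tab:algos} (noting $\A_7$ never calls {\sc Round2Stars} and $\A_1,\A_8,\A_9$ have $p_2\in\{0,1,a\}$ handled separately). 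Everything else is a routine ``$\lceil x\rceil \geq x$ and independence of fresh randomness'' argument.
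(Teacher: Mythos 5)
Your proposal has the overall case structure roughly right (simple groups vs.\ 2-stars, independence across groups, center-or-leaves within a star), but it contains two genuine errors that go to the heart of why this lemma needs the near-independence machinery of Section~2 at all.

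First, you have the location of the $(1+\eta)$ slack in \eqref{bound1:pi2} exactly backwards. You assert that large-$2$-star leaves are rounded by an exact $\lceil q_Y\cdot(\text{size})\rceil$-subset rule so that $\Pr[\bar i_2]\le 1-q_Y$, and that the slack is needed for small $2$-stars because of the geometric grouping. In fact {\sc Round2Stars} line~1 opens a random subset of size $\lceil q_2(|\L_2'|-|\C_2'|)\rceil$ of $\L_2'$ --- the budget is reduced by $q_2|\C_2'|$ to pay for opening every large-star center --- so a large-star leaf is opened with probability only about $q_2 - |\C_2'|/|\L_2'| \ge q_2 - (1-q_2)\eta = 1 - (1+\eta)(1-q_2)$, and this is where the $(1+\eta)$ is consumed (using the definition of ``large'' as $|\S_i|\ge 1/(p_2\eta)$). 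For a \emph{small}-star leaf, by contrast, (A1) gives $\E[X_{i_3}]=q_2$ and conditioning shows $\Pr[i_2]\ge\E[X_{i_3}]=q_2$ exactly, with no $(1+\eta)$ loss at all.

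Second, and more seriously, your argument for \eqref{bound1:pi1i2} in the small-$2$-star case relies on (A3) to bound $\E[(1-X_{i_1})(1-X_{i_3})]$, but that is not the quantity one needs, and (A3) would not help with the quantity one does need. With $i_1$ a center and $i_2$ a leaf whose own center is $i_3$, and under the convention that $X_i=1$ means ``open all leaves, close the center'', one must upper-bound $\Pr[X_{i_1}=1 \wedge \bar i_2] \le \E[X_{i_1}(1-X_{i_3})]$: a \emph{mixed} product. Negative correlation (A3) gives upper bounds on $\E[\prod X_i]$ and $\E[\prod(1-X_i)]$ but pushes the mixed term the \emph{wrong way} --- with $\sum a_i X_i$ fixed, $X_{i_1}$ large makes $X_{i_3}$ small, so $X_{i_1}(1-X_{i_3})$ is positively correlated with itself and can exceed $q_2(1-q_2)$. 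This is precisely the ``positive correlation under dependent rounding'' problem the whole paper is about, and the actual proof invokes Theorem~\ref{cor:dep-uniform-p} (near-independence, with $t=2$, $n=|\G_s|$, $\alpha=\beta$) to get $\E[X_{i_1}(1-X_{i_3})] \le (1+\tfrac{16}{3|\G_s|\beta^2})(1-p_2)(1-q_2)$, and then crucially conditions on the event $\mathcal{E}$ that $i_1$ is among the $c=\lceil 16/(3\beta^2)\rceil$ random centers force-opened in line~\ref{line:addc} to multiply by $(1-c/|\G_s|)$, so that the product of the two correction factors is at most $1+\eta$ for every group size $|\G_s|$. Your proposal omits line~\ref{line:addc} and the event $\mathcal{E}$ entirely, and without them there is no way to absorb the excess over $(1-p_2)(1-q_2)$ into the $(1+\eta)$ factor. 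So the argument as written does not close.
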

\begin{proof} Consider $i_1$. Suppose $i_1\in\C_X$. If $X\in\{0,1_A,1_B\}$, we have $\Pr[i_1]\ge p_X$ (by lines 1, 2, and 3 of Algorithm \ref{algo:A}). Otherwise $X=2$. 
If $p_2 = 0$ or $p_2 = 1$, line 5 of Algorithm \ref{algo:A} is executed and $\Pr[i_1] = p_2$ exactly.
Else, we run {\sc Round2Stars}. If $i_1$ is part of a large star, then it is always opened so $\Pr[\bar i_1]=0$.  Else, $i_1$ is in a small star, and we have $\Pr[\bar{i_1}] \le \Pr[X_{i_1}=1]\le\E[X_{i_1}] = q_2 = 1 - p_2$. This holds because $\bar i_1$ only occurs when $X_{i_1}=1$.
In all cases (\ref{bound1:pi1}) holds.

Consider $i_2$. Suppose $i_2\in\L_Y$. If $Y\in\{1_A,1_B\}$, we have $\Pr[i_2]\ge q_Y$. Otherwise $Y=2$. Again, if line 5 of Algorithm \ref{algo:A} is executed, $\Pr[i_2] \geq q_2$. 
Else, we run {\sc Round2Stars}. Consider the case that $i_2\in\L'_2$ is part of a large star. Recall that large stars have at least $1/(p_2\eta) = 1/((1-q_2)\eta)$ leaves. Then
\begin{align*}
\Pr[i_2]&\ge \frac{q_2(|\L'_2|-|\C'_2|)}{|\L'_2|}
\ge q_2-\frac{|\C'_2|}{|\L'_2|}
\ge q_2-(1-q_2)\eta=1-(1-q_2)(1+\eta).
\end{align*}
Otherwise, $i_2$ is part of some small star, with center $i_3$. If $X_{i_3}$ is 1 or 0, by line 5, $\Pr[i_2]=X_{i_3}$. If $0<X_{i_3}<1$, then by line 6, $\Pr[i_2]\ge X_{i_3}$.
So in any case, we have $\Pr[i_2|X_{i_3}=x]\ge x$. Note that each indicator returned by {\sc DepRound} can only take finitely many values in $[0,1]$. Letting $\U$ be the set of these values, we have
\begin{align*}\Pr[i_2]
	&=\sum_{x\in\U} \Pr[i_2|X_{i_3}=x]\Pr[X_{i_3}=x]\ge\sum_{x\in\U} x\Pr[X_{i_3}=x]
	= \E[X_{i_3}]=q_2.\end{align*}
In all cases (\ref{bound1:pi2}) holds.

Now consider both $i_1$ and $i_2$. There are many cases to consider, but most of them are easy. If $i_1$ and $i_2$ belong to stars of different classes, then they are opened independently, so $\Pr[\bar i_1 \bar i_2]=\Pr[\bar i_1]\Pr[\bar i_2] \leq (1+\eta)(1-p_X)(1-q_Y)$. For the remaining cases, $i_1\in\C_X$ and $i_2\in \L_X$ for the same class $X\in\{1_A,1_B,2\}$. There is a special case where $X=1_A$, and we are running $\A_8$. In this case, $p_{1A}=q_{1A}=0$ so $\Pr[\bar i_1 \bar i_2]=1=(1-p_{1A})(1-q_{1A})$. Otherwise, if $X\in\{1_A,1_B\}$, then at least one of $q_X$ and $p_X$ is 1, so all centers or leaves are opened, so $\Pr[\bar i_1 \bar i_2]=0$.

The remaining case is when $X=2$. Notice that line 5 of Algorithm \ref{algo:A} is called when either $p_2 = 0$ or $p_2 = 1$. The only time $p_2 = 0$ is $\A_8$, in which $q_2=1$, so all the leaves are opened and $\Pr[\bar i_1 \bar i_2] = 0$. If $p_2 = 1$, then $i_1$ is always opened and $\Pr[\bar i_1\bar i_2] = 0$. 
Otherwise $\T_2$ is divided into one group of large stars, and many groups $\G_s$ of small stars. Again, if $i_1$ and $i_2$ are in different groups, they are rounded independently. If they are both in a large star, then $i_1$ will always be opened and $\Pr[\bar i_1 \bar i_2]=0$. If they are both in the same small star, then the center-or-leaves property of our algorithm implies they will never both be closed, so $\Pr[\bar i_1 \bar i_2]=0$. 

In the only remaining case, we have that {\sc Round2Stars} is run (and $p_2+q_2=1$), and $i_1$ and $i_2$ lie in separate stars within the same group $\G_s$. Let $\mathcal{E}$ be the event that ``$i_1$ is among the $c$ random facilities chosen to be opened in line \ref{line:addc} of {\sc Round2Stars}''. We first show
\begin{align*}
	\Pr[X_{i_1} = 1 \wedge \bar{i_2} ] &\leq \sum_{x_{i_1} \in \U} x_{i_1} \Pr[X_{i_1} = x_{i_1} \wedge \bar{i_2}] \\
		&= \sum_{x_{i_1} \in \U} x_{i_1} \sum_{x_{i_3} \in \U} \Pr[X_{i_1} = x_{i_1} \wedge \bar{i_2} \wedge X_{i_3} = x_{i_3}]   \\
		&= \sum_{x_{i_1} \in \U} x_{i_1} \sum_{x_{i_3} \in \U} \Pr[\bar{i_2} | X_{i_1} = x_{i_1} \wedge X_{i_3} = x_{i_3}] \Pr[X_{i_1} = x_{i_1} \wedge X_{i_3} = x_{i_3}]  \\
		&\leq \sum_{x_{i_1} \in \U} x_{i_1} \sum_{x_{i_3} \in \U} (1 - x_{i_3}) \Pr[X_{i_1} = x_{i_1} \wedge X_{i_3} = x_{i_3}]  \\
		&= \E[X_{i_1}(1-X_{i_3})].
\end{align*}
If $|\G_s|\le c$, then all facilities in $\G_s$ will be opened and $\Pr[\bar{i_1} \bar{i_2}]=0$. Otherwise, we can bound $\Pr[\bar{i_1} \bar{i_2}]$ as follows. Conditioned on $\bar{\mathcal{E}}$, $i_1$ is closed iff $X_{i_1} = 1$. Thus,
\begin{align*}
	\Pr[\bar{i_1} \bar{i_2}] &= \Pr[\mathcal{E}] \Pr[\bar{i_1} \bar{i_2}|\mathcal{E}]  + (1-\Pr[\mathcal{E}])\Pr[\bar{i_1} \bar{i_2}|\bar{\mathcal{E}}]  \\
		&= \frac{c}{|\G_s|}\cdot 0  + \left(1-\frac{c}{|\G_s|} \right) \Pr[\bar{i_1} \bar{i_2}|\bar{\mathcal{E}}] \\
		&=\left(1-\frac{c}{|\G_s|} \right) \Pr[X_{i_1} = 1 \wedge \bar{i_2} | \bar{\mathcal{E}} ] \\
		&=\left(1-\frac{c}{|\G_s|} \right) \Pr[X_{i_1} = 1 \wedge \bar{i_2} ] \\
		&\leq \left(1-\frac{c}{|\G_s|} \right)\E[X_{i_1}(1-X_{i_3})] \\
		&\leq \left(1-\frac{c}{|\G_s|} \right) \left(1+\frac{16}{3|\G_s|\beta^2}\right) (1-p_2)(1-q_2),
\end{align*}
where we have applied Theorem \ref{cor:dep-uniform-p} from Section \ref{sec:depround}. There are $t=2$ variables of interest, $n=|\G_s|$ total variables, and $\alpha = \min\{q_2, 1 - q_2\} = \beta$.

We want to choose $c$ such that $\left(1-\frac{c}{|\G_s|} \right) \left(1+\frac{16}{3|\G_s|\beta^2}\right) \leq 1 + \eta$, or equivalently,  
\begin{align*}
	c \geq \frac{16/(3\beta^2) - \eta |\G_s|}{1 + 16/(3|\G_s|\beta^2)}.
\end{align*}
Therefore, our choice of $c = \lceil 16/(3\beta^2) \rceil$ implies that (\ref{bound1:pi1i2}) holds true in all cases.

\end{proof}

\subsubsection{ The nonlinear factor-revealing program  }
\newcommand{\sd}{D^Z}
Now we will construct a nonlinear program which bounds the ratio between the total connection cost and the cost of the bi-point solution. We first introduce some necessary notation. Partition the clients into classes according to the types of stars in which $i_1(j)$ and $i_2(j)$ lie:
\begin{align*}
\J^{(X,Y)}&:= \{j\in\J\mid i_1(j)\in \C_X \land\, i_2(j)\in \L_Y \} 
&\forall X\in\{0,1_A,1_B,2\}, Y\in\{1_A,1_B,2\}.
\end{align*}
Furthermore, since we have multiple cost bounds available, we want to use the one which will be smallest for each client. Simply put, we want to try connecting the client to the closest facility first. To this end, we define subclasses for clients who are closer to either $i_1(j)$ or $i_2(j)$, respectively:
\begin{align}
\J^{P(X,Y)}& := \{j\in\J^{(X,Y)}\mid d_2(j)\le d_1(j)\} \label{def:class-first} \\
\J^{N(X,Y)}& := \{j\in\J^{(X,Y)}\mid d_1(j) < d_2(j)\}.
\end{align}
For $(X,Y)=(1_B,2)$, we define the subclasses slightly differently. This takes into account whether each client is closer to $i_0(j)$ or $i_3(j)$:
\begin{align}
\J^{P(1_B,2)}&:= \{j\in\J^{(1_B,2)}\mid d_2\le d_1\land d_1+2d_2\le d_1+g(d_1+d_2)\}\\
\J^{P'(1_B,2)}&:= \{j\in\J^{(1_B,2)}\mid d_2\le d_1\land d_1+g(d_1+d_2)< d_1+2d_2\}\\
\J^{N(1_B,2)}&:= \{j\in\J^{(1_B,2)}\mid d_1< d_2\land d_1+2d_2\le d_1+g(d_1+d_2)\}\\
\J^{N'(1_B,2)}&:= \{j\in\J^{(1_B,2)}\mid d_1< d_2\land d_1+g(d_1+d_2)< d_1+2d_2\}. \label{def:class-last}
\end{align}
Define the following set of classes, observing $\{\J^Z\}_{Z\in\Z}$ fully partitions the set of clients.
\[ \Z=
\{P'(1_B,2),N'(1_B,2)\}\cup
\bigcup_{\substack{W\in\{P,N\}\\X\in\{0,1_A,1_B,2\}\\Y\in\{1_A,1_B,2\}}} 
\{W(X,Y)\}.\]
For each client class $Z\in\Z$, let $\sd_1:=\sum_{j\in\J^Z}d_1(j)$ and $\sd_2:=\sum_{j\in\J^Z}d_2(j)$, be the total cost contribution to $D_1$ or $D_2$, respectively, from clients in class $J^Z$. Then define the following:
\begin{align*}
C_{213}^Z&:=\sd_2+(1-q_Y)(\sd_1-\sd_2)+2(1-p_X)(1-q_Y)\sd_2\\
C_{123}^Z&:=\sd_1+(1-p_X)(\sd_2-\sd_1)+(1-p_X)(1-q_Y)(\sd_1+\sd_2)\\
C_{210}^Z&:=\sd_2+(1-q_Y)(\sd_1-\sd_2)+(1-p_X)(1-q_Y)g(\sd_1+\sd_2)\\
C_{120}^Z&:=\sd_1+(1-p_X)(\sd_2-\sd_1)+(1-p_X)(1-q_Y)g(\sd_1-\sd_2+g(\sd_1+\sd_2))\\
C_{145}^Z&:=\sd_1+(1-p_X)\left(2\sd_2+\frac1g(\sd_1+\sd_2)\right)
		+(1-p_X)(1-q_2)\frac1g(\sd_1+\sd_2).
\end{align*}
Finally, given an algorithm $\A_i=A(p_0,p_{1A},q_{1A},p_{1B},q_{1B},p_2,q_2)$, define
\begin{align}
COST_1(\A_i):=&C_{210}^{P'(1_B,2)}+C_{120}^{N'(1_B,2)}
+\sum_{\substack{X\in\{0,1_A,1_B,2\}\\Y\in\{1_A,1_B,2\}}} \left(C_{213}^{P(X,Y)}+C_{123}^{N(X,Y)}\right),\\
COST_2(\A_i):=&C_{210}^{P'(1_B,2)}+C_{120}^{N'(1_B,2)}
+\sum_{\substack{X\in\{0,1_A,1_B,2\}\\Y\in\{1_B,2\}}} \left(C_{213}^{P(X,Y)}+C_{123}^{N(X,Y)}\right)
+\sum_{X\in\{0,1_A,1_B,2\}}C_{145}^{(X,1_A)}
.\end{align}

\begin{lemma}\label{lemma:algcost}
For algorithms $\A_1,\ldots,\A_7$ and $\A_9$, the total expected cost is bounded above by $(1+\eta)COST_1(\A_i)$. The  expected cost of $\A_8$ is bounded above by $(1+\eta)COST_2(\A_8)$.
\end{lemma}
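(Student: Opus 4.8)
The plan is to sum the per-client cost bounds from Lemmas~\ref{lemma:123}, \ref{lemma:145}, and~\ref{lemma:120} over all clients, organized by the partition $\{\J^Z\}_{Z\in\Z}$, and then replace the rounding probabilities by the upper bounds supplied by Lemma~\ref{lemma:probbound}. First I would fix an algorithm $\A_i$ with parameters $(p_0,p_{1A},q_{1A},p_{1B},q_{1B},p_2,q_2)$ from Table~\ref{tab:algos} and a client class $\J^Z$ with $Z = W(X,Y)$. For each $j\in\J^Z$, I apply the appropriate per-client lemma: for $Z$ of the form $P(X,Y)$ I use the bound $c_{213}(j)$, for $N(X,Y)$ I use $c_{123}(j)$, for $P'(1_B,2)$ I use $c_{210}(j)$, and for $N'(1_B,2)$ I use $c_{120}(j)$ (in the $\A_8$ case, clients with $i_2(j)\in\L_{1A}$ are instead handled by $c_{145}(j)$ from Lemma~\ref{lemma:145}). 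The key observation is that each of these per-client bounds is an \emph{affine} function of the probabilities $\Pr[\bar i_1]$, $\Pr[\bar i_2]$, $\Pr[\bar i_1\bar i_2]$ with \emph{nonnegative} coefficients in $d_1(j), d_2(j)$ — for $c_{123}$, $c_{145}$, and (as noted after Lemma~\ref{lemma:120}) $c_{120}$ this is immediate, and for $c_{213}$ and $c_{210}$ the coefficient of $\Pr[\bar i_2]$ is $d_1-d_2\ge 0$ precisely on the subclasses $P(X,Y)$ and $P'(1_B,2)$ where $d_2\le d_1$, which is exactly why the subclasses were defined that way.

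Because the coefficients are nonnegative, I can substitute the inequalities \eqref{bound1:pi1}, \eqref{bound1:pi2}, \eqref{bound1:pi1i2} of Lemma~\ref{lemma:probbound}, namely $\Pr[\bar i_1]\le 1-p_X$, $\Pr[\bar i_2]\le(1+\eta)(1-q_Y)$, and $\Pr[\bar i_1\bar i_2]\le(1+\eta)(1-p_X)(1-q_Y)$, into each per-client bound without reversing the inequality. Summing $d_1(j)$ and $d_2(j)$ over $j\in\J^Z$ produces exactly $\sd_1$ and $\sd_2$, so the sum of the per-client bounds over $\J^Z$ becomes, up to the factors of $(1+\eta)$, precisely the quantities $C_{213}^Z$, $C_{123}^Z$, $C_{210}^Z$, $C_{120}^Z$, $C_{145}^Z$ as defined. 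Pulling the $(1+\eta)$ out front (it multiplies only the $\Pr[\bar i_2]$ and $\Pr[\bar i_1\bar i_2]$ terms, so $(1+\eta)$ times the whole expression is a valid upper bound), and summing over all classes $Z\in\Z$, gives $\sum_Z (\text{expected cost on }\J^Z) \le (1+\eta)\,COST_1(\A_i)$ for $\A_1,\dots,\A_7,\A_9$, and $\le(1+\eta)\,COST_2(\A_8)$ for $\A_8$, where the $COST_2$ expression differs only in routing the clients of class $(X,1_A)$ through $C_{145}$ rather than through $C_{213}/C_{123}$ — matching the case split between Lemmas~\ref{lemma:123} and~\ref{lemma:145}.

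The main subtlety — and the step I would be most careful about — is verifying that the hypotheses of Lemmas~\ref{lemma:123}, \ref{lemma:145}, \ref{lemma:120} are actually met for every algorithm and every relevant client class, in particular the "center-or-leaves" property $p_Y + q_Y \ge 1$ for $Y\in\{1_A,1_B,2\}$. One checks from Table~\ref{tab:algos} that for $\A_1,\dots,\A_7,\A_9$ every star class indeed satisfies this (so Lemma~\ref{lemma:123} applies uniformly, and Lemma~\ref{lemma:120} applies since its hypotheses hold "in all algorithms"), while for $\A_8$ one has $p_{1A}=q_{1A}=0$, so the center-or-leaves property fails only for $\T_{1A}$; this is exactly the case carved out by the hypothesis "$i_2(j)\notin\L_{1A}$" in Lemma~\ref{lemma:123} and handled by Lemma~\ref{lemma:145} otherwise. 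A secondary point is that $C_{145}^Z$ uses $q_2$ rather than $q_Y$ in its last term; since in $\A_8$ the client in question has $i_2(j)\in\L_{1A}$ but the relevant leaf for the bound is $i_4\in\L_2$, the probability bound \eqref{bound1:pi1i2} is applied with $Y=2$, so $q_2$ is indeed correct. Finally one notes that $\J^{(X,1_A)}$ for $\A_8$ is summed without the $P/N$ split because $c_{145}$ has nonnegative coefficients regardless of whether $d_1\le d_2$, so no subclassing is needed there.
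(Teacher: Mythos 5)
Your proposal is correct and follows essentially the same approach as the paper's proof: sum the per-client bounds from Lemmas~\ref{lemma:123}, \ref{lemma:145}, and \ref{lemma:120} over the class partition, substitute the probability bounds from Lemma~\ref{lemma:probbound} (which is legitimate precisely because the class definitions make the relevant coefficients nonnegative), and absorb the extra $(1+\eta)$ factor. The paper's own proof is terser; yours adds explicit verification of the center-or-leaves hypothesis for each $\A_i$, the observation that $c_{145}$ needs no $P/N$ subclassing, and the note that $q_2$ (rather than $q_{1_A}$) appears in $C_{145}$ because $i_4\in\L_2$ — all accurate fleshings-out of the same argument rather than a different route.
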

\begin{proof}
Sum the bounds from Lemmas \ref{lemma:123}, \ref{lemma:145}, and \ref{lemma:120} over each corresponding client class, and apply the bounds from Lemma \ref{lemma:probbound}. To apply those upper bounds, we need that the coefficients of $\Pr[\bar i_1]$, $\Pr[\bar i_1 \bar i_2]$ (or similar terms) are nonnegative. This follows by definition of the class being summed over. (For example, for class $P(X,Y)$, we have $d_2\le d_1$, so $d_1-d_2\ge 0$.) By linearity of expectation, we get the total expected cost of the algorithm.
\end{proof}

\begin{align}
\text{\textbf{Our NLP:} max  }\quad & X \\
\text{s.t  } \quad
& X \le COST_1(\A_i) &&\forall i\in\{1,2,3,4,5,6,7,9\} \label{constr:cost17}\\
& X \le COST_2(\A_8)\label{constr:cost8}\\
& D_2^Z\le D_1^Z && \forall Z=P(X,Y)\text{ or }Z=P'(1_B,2)\label{constr:P}\\
& D_2^Z\ge D_1^Z && \forall Z=N(X,Y)\text{ or }Z=N'(1_B,2)\label{constr:N}\\
& (2-g)D_2^{W(1_B,2)} \le gD_1^{W(1_B,2)}&&\forall W\in\{P,N\} \label{constr:1bN}\\
& (2-g)D_2^{W(1_B,2)} \ge gD_1^{W(1_B,2)}&&\forall W\in\{P',N'\} \label{constr:1bP}\\
& \sum_{Z\in\Z} D_1^Z = \frac{1}{1-b+br_D}\label{constr:sumd1}\\
& \sum_{Z\in\Z} D_2^Z = \frac{r_D}{1-b+br_D}\label{constr:sumd2}\\
& 0.508 \le b \le 3/4\\
& 19/40\le r_D \le 2/3\\
&5/6\le s_0\le 1\nonumber\\
& g\ge 0\nonumber\\
& D_1^Z,D_2^Z\ge 0 && \forall Z\in\Z \nonumber
\end{align}

\begin{lemma}
Given a bi-point solution with cost $aD_1+bD_2$ as input, with $s_0\ge5/6,b\in[0.508,3/4],r_D\in[19/40,2/3],$ and $r_1>1$, the best solution returned by $\A_1,\ldots, \A_9$ has expected cost $E[COST]\le X^*\cdot(1+\eta)(aD_1+bD_2)$, where $X^*$ is the solution to the above nonlinear program. Furthermore, $X^* \in  [1.3370, ~1.3371]$.
\end{lemma}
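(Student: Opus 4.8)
The plan is to separate the claim into two parts: (i) the cost bound $\E[COST] \le X^*(1+\eta)(aD_1+bD_2)$, and (ii) the numerical bracket $X^* \in [1.3370,1.3371]$.

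For part (i), I would first normalize the instance by scaling all distances so that $aD_1+bD_2 = 1$; since $a = 1-b$ and $r_D = D_2/D_1$, this forces $\sum_Z D_1^Z = D_1 = 1/(1-b+br_D)$ and $\sum_Z D_2^Z = D_2 = r_D/(1-b+br_D)$, which are exactly constraints \eqref{constr:sumd1}--\eqref{constr:sumd2}. The remaining constraints are read off from the client-class definitions \eqref{def:class-first}--\eqref{def:class-last}: summing $d_2(j)\le d_1(j)$ over $j\in\J^{P(X,Y)}$ gives \eqref{constr:P}, summing $d_1(j)<d_2(j)$ over $j\in\J^{N(X,Y)}$ gives \eqref{constr:N}, and for the $(1_B,2)$ subclasses the defining inequality (e.g.\ $2d_2\le g(d_1+d_2)$ for $\J^{P(1_B,2)}$) rearranges to $(2-g)d_2\le gd_1$ and sums to \eqref{constr:1bN} (resp.\ \eqref{constr:1bP}); the box constraints on $b,r_D,s_0$ are the hypotheses, and $g\ge 0$ is trivial. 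Thus the point built from the actual (normalized) instance, with $X$ set to $\min\{\min_{i\ne 8}COST_1(\A_i),\,COST_2(\A_8)\}$, is NLP-feasible. By Lemma~\ref{lemma:algcost}, $\E[COST(\A_i)]\le(1+\eta)COST_1(\A_i)$ for $i\ne 8$ and $\E[COST(\A_8)]\le(1+\eta)COST_2(\A_8)$, so the minimum-cost output of $\A_1,\dots,\A_9$ has expected cost at most $(1+\eta)X\le(1+\eta)X^*$; undoing the normalization gives the stated bound. One subtlety worth recording: Lemma~\ref{lemma:algcost} already folds in Lemma~\ref{lemma:probbound}, whose upper bounds on $\Pr[\bar i_1]$ and $\Pr[\bar i_1\bar i_2]$ are only applicable when the coefficients of these terms in Lemmas~\ref{lemma:123}, \ref{lemma:145}, \ref{lemma:120} are nonnegative --- which is precisely why the clients were split into the $P/N$ (and $P'/N'$) subclasses, so the sign condition holds class by class.

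For part (ii), the lower bound $X^*\ge 1.3370$ is easy: exhibit the explicit near-extremal feasible point produced by the search (a choice of $b,r_D,s_0,g$ and of $\{D_1^Z,D_2^Z\}$) and evaluate its objective. The upper bound $X^*\le 1.3371$ is the crux and is a finite but large computer-assisted verification, in the spirit of the interval-arithmetic discussion in Section~\ref{subsec:k-med}. The NLP is non-convex only through a short list of terms: products of low-degree polynomials in the four parameters $b,r_D,s_0,g$ with the linear forms $D_1^Z$, $D_2^Z$, and $D_1^Z\pm D_2^Z$ (the coefficients $1-p_X,1-q_Y$ being polynomial in $b,s_0$), together with $g^2$ in $C_{120}^Z$ and the reciprocals $1/g$ in $C_{145}^Z$ and $1/(1-b+br_D)$ in \eqref{constr:sumd1}--\eqref{constr:sumd2}; once $b,r_D,s_0,g$ are pinned to fixed values the whole program is linear in $\{D_1^Z,D_2^Z\}$, and each $D$-variable is then bounded ($0\le D_i^Z\le 1/(1-b+br_D)$, a constant). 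The plan is therefore: partition the compact box $[0.508,3/4]\times[19/40,2/3]\times[5/6,1]$ for $(b,r_D,s_0)$ together with a bounded range $g\in[0,g_0]$ into many small sub-boxes, handling $g\ge g_0$ separately (there the $1/g$ terms only decrease and crude monotonicity gives the bound). On each sub-box, replace every nonlinear term by a provably valid linear over-estimate obtained from the interval bounds on the parameters and the bounds on the $D$-variables (McCormick/Zwick-style envelopes; e.g.\ $u\cdot L$ with $u$ confined to $[u_-,u_+]$ and $L$ a nonnegative linear form is bounded above by $u_+L$, with two-sided envelopes where $L$ may change sign and where equality constraints must remain valid). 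The optimum of the resulting linear program upper-bounds $X^*$ on that sub-box; taking the maximum over all sub-boxes and checking it lies below $1.3371$ finishes the proof.

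The main obstacle is the upper bound in part (ii): one must (a) enumerate all nonlinear terms and equip each with a relaxation that is provably an over-estimate rather than a mere heuristic surrogate --- the equality constraints \eqref{constr:sumd1}--\eqref{constr:sumd2} and the reciprocals $1/g$, $1/(1-b+br_D)$ needing particular care; (b) argue the subdivision is fine enough everywhere that the cumulative relaxation gap stays under $1.3371-1.3370$, and dispose rigorously of the unbounded $g$-direction; and (c) actually carry out the large family of LP solves, preferably in exact or interval arithmetic so the certificate is trustworthy. Part (i) and the lower bound of part (ii) are routine bookkeeping by comparison.
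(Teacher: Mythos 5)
Your proposal is correct and follows essentially the same route as the paper: normalize by $aD_1+bD_2$ so that the instance yields a feasible NLP point with objective equal to the best algorithm's bound (via Lemma~\ref{lemma:algcost}), then certify $X^*\in[1.3370,1.3371]$ by exhibiting a feasible near-extremal point for the lower bound and, for the upper bound, by interval-arithmetic relaxation to LPs over a fine subdivision of the $(b,r_D,s_0,g)$ box. The one implementation-level divergence is the unbounded-$g$ direction: the paper simply drops from the relaxed LP any algorithm (constraint) containing a $1/g$ term on intervals where $g$ is unbounded, which is automatically a valid weakening, rather than invoking a separate monotonicity argument as you sketch.
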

\begin{proof}
Given a bi-point instance $a\F_1+b\F_2$, first normalize all the distances by dividing by $aD_1+bD_2$. This does not change the solution or the ratio of approximation obtained. Let $X$ be the cost of the solution given by Algorithm \ref{algo:A}. Because of the normalization, $X$ is also the bi-point rounding factor. Constraints (\ref{constr:cost17}) and (\ref{constr:cost8}) must hold because we take the best cost of all algorithms. Lemma \ref{lemma:algcost} shows that $X$ may be a factor $(1+\eta)$ larger. Constraints (\ref{constr:P}), (\ref{constr:N}), (\ref{constr:1bN}), and (\ref{constr:1bP}) must hold by definition of each client class (see (\ref{def:class-first}) through (\ref{def:class-last})). Constraints (\ref{constr:sumd1}) and (\ref{constr:sumd2}) enforce that the corresponding distance contributions from each client class sum to $D_1$ and $D_2$ (normalized). 

We observe that for a fixed set of values of $b$, $r_D$, $s_0$, and $g$, the program becomes linear. As described in Appendix \ref{apdx:interval}, we exploit this with computer-assisted methods (rigorous interval-arithmetic) and prove that $1.3370 \le X^* \le 1.3371$.
\end{proof}

\subsection{Algorithms for edge cases}
We have several border cases which we handle in a different, generally simpler, manner.  The algorithms and proofs are given in the appendix.
\begin{lemma} There is a $(1+\eta) \cdot 1.3371  $-approximation algorithm for rounding the bi-point solution and opens at most $k+O(\log(1/\eta))$ facilities when either $b \leq 0.508, \ b \geq 3/4, \ r_D \leq 19/40,$ or $r_D \geq 2/3$. \label{lem:lisven}
\end{lemma}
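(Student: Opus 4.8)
The statement (Lemma~\ref{lem:lisven}) covers the parameter regimes \emph{complementary} to the main case of Section~\ref{sec:bipoint-main-case}, namely $b \le 0.508$, $b \ge 3/4$, $r_D \le 19/40$, or $r_D \ge 2/3$. The idea is that in each of these regimes we do \emph{not} need the full strength of the multi-pronged algorithm $\A_1,\dots,\A_9$; a smaller, simpler collection of candidate roundings already beats $1.3371$. The plan is to run a handful of cheap roundings in each regime and argue, via the already-established machinery (Lemmas~\ref{lemma:123}, \ref{lemma:probbound}, and {\sc DepRound}), that at least one of them has expected cost at most $(1+\eta)\cdot 1.3371\,(aD_1 + bD_2)$, while opening only $k + O(\log(1/\eta))$ facilities. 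The latter facility bound is immediate from Lemma~\ref{lem:kfacilities}-type reasoning, since whenever {\sc Round2Stars} is invoked we will be able to check that $\beta = \min\{q_2, 1-q_2\}$ and $p_2$ are bounded away from $0$ (as in Claim~\ref{claim:bound_beta}); the only real work is the cost bound.

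\textbf{Key steps, in order.} First I would split into the four (overlapping) sub-regimes and in each case identify the relevant ``trivial'' competitors: (i) the pure solution $\F_1$, whose cost is exactly $D_1$; (ii) the pure solution $\F_2$ (feasible up to $\Delta_F$ extra facilities, which Li--Svensson's blackbox absorbs), of cost $D_2$; and (iii) the knapsack-/proportional-rounding solution that opens $\approx a$ of $\F_1$ and $\approx b$ of $\F_2$ respecting the center-or-leaves property, whose cost is bounded via Lemma~\ref{lemma:123} and Lemma~\ref{lemma:probbound} by something like $aD_1 + bD_2 + O(\eta)\cdot(\text{stuff})$ plus the usual ``loss'' term $a b \cdot$ (something). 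Second, I would do the four elementary case analyses:
\emph{(a)} when $b \ge 3/4$ (equivalently $a \le 1/4$), the proportional solution loses only a factor close to $1 + ab \le 1 + 3/16$, which is below $1.3371$ once one accounts correctly for the center-or-leaves slack, so this solution alone suffices;
\emph{(b)} when $b \le 0.508$ (so $a \ge 0.492$, $D_1 \le$ bi-point cost scaled appropriately), take the better of $\F_1$ and the proportional solution — since $a$ is bounded below, a convexity/interpolation argument shows the minimum of the two is at most $1.3371$ times the bi-point cost;
\emph{(c)} when $r_D = D_2/D_1 \le 19/40$, the solution $\F_2$ has cost $D_2 \le \tfrac{19}{40} D_1$, and since the bi-point cost is $aD_1 + bD_2 \ge$ a constant fraction of $D_1$, the ratio $D_2/(aD_1+bD_2)$ is bounded by a constant $< 1.3371$ directly;
\emph{(d)} when $r_D \ge 2/3$, symmetrically $D_1$ is comparable to the bi-point cost and $\F_1$ (cost $D_1$) gives ratio $D_1/(aD_1+bD_2) \le 1/(a + b r_D) \le 1/(a + 2b/3)$, which one checks is $< 1.3371$ over the feasible $(a,b)$ range (using $a+b=1$). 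Third, in the one or two sub-regimes where a single trivial solution does not quite suffice, I would invoke the small factor-revealing LP obtained by restricting the NLP of Section~\ref{sec:bipoint-main-case} to the relevant (now simpler) candidate set, and verify numerically (again by interval arithmetic, as in Appendix~\ref{apdx:interval}) that its optimum is $\le 1.3371$; the point is that on the boundary parameter ranges the LP value is strictly smaller than in the main case, so it is dominated.

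\textbf{Main obstacle.} The genuinely delicate part is case \emph{(b)}, $b$ slightly below $0.508$, and the ``matching up'' at the boundary $r_D \in \{19/40, 2/3\}$: here neither a single trivial solution nor a one-line convexity argument is obviously enough, because the worst-case bi-point instance is still reasonably ``balanced'', so one has to be careful that the simpler algorithm set actually attains $1.3371$ and not something marginally larger. I expect to handle this exactly as in the main case — set up the reduced factor-revealing program with the restricted algorithm set, observe it becomes linear when $b, r_D, s_0, g$ are fixed, and run the interval-arithmetic certification over the (now small) boxes $b \in [0.5, 0.508]$, $r_D \in [19/40, 2/3]$, etc. The bookkeeping for the facility count in these cases is routine: each candidate either opens exactly a prescribed fraction of a ground set (an additive $O(1)$ from ceilings) or invokes {\sc Round2Stars} under conditions where $\beta, p_2 = \Omega(1)$, giving the $O(\log(1/\eta))$ bound via Lemma~\ref{lem:groupG}. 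I would relegate the full case-by-case computations and the reduced LP to the appendix, stating here only that combining the above candidates yields the claimed $(1+\eta)\cdot 1.3371$ approximation opening $k + O(\log(1/\eta))$ facilities.
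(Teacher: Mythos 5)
Your high-level strategy---run a small set of cheap competitors and argue by case analysis that the minimum is below $1.3371$---is the same as the paper's, and your cases (b) and (d) are sound in spirit. However there are three concrete gaps.

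First, $\F_2$ is not a legal competitor. Opening all of $\F_2$ opens $|\F_2| - k = a\,\Delta_F$ extra facilities, which is $\Theta(\Delta_F)$, not $O(1)$ (and not $O(\log(1/\eta))$); the Li--Svensson blackbox only absorbs a constant number of extras. The paper only ever uses $\F_2$ when $\Delta_F = 0$. So your case (c), $r_D \le 19/40$, cannot be settled by ``open $\F_2$''. The paper instead handles this region with $\min\{\F_1,\ \A'(a,a,b,a,b,a,b)\}$, where $\A'$ has cost $\le (1+\eta)(aD_1 + b(1+2a)D_2)$, and shows by direct calculus that $\min\!\left\{\tfrac{1}{a+br_D},\ \tfrac{a+b(1+2a)r_D}{a+br_D}\right\} \le 1.333$ for $b\in[0.508,3/4],\ r_D\le 19/40$ (using that the second expression is increasing in $r_D$).

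Second, your case (a) is miscalculated. The proportional rounding does \emph{not} lose a factor $\approx 1 + ab$; its cost is $aD_1 + b(1+2a)D_2$, i.e.\ a multiplicative loss of $1 + \tfrac{2abr_D}{a+br_D}$, which tends to $1+2a$ as $r_D\to\infty$. At $b=3/4$ this is $1.5 > 1.3371$. The proportional solution alone is therefore not enough for $b \ge 3/4$; you must again take the minimum with $\F_1$ (whose ratio $\tfrac{1}{a+br_D}$ improves precisely when $r_D$ is large). The paper does exactly this: for $b\in[3/4,5/6]$ it shows the worst case of the $\min$ is at $r_D = \tfrac{1}{3-2b}$ and evaluates to $\le 1.33681$.

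Third, your justification for the facility count breaks down near $b=1$. For $\A'(a,a,b,a,b,a,b)$ you have $q_2 = b$, hence $\beta = \min\{a,b\} = a$, which is not bounded away from $0$ when $b\to 1$. Since {\sc Round2Stars} opens $O\!\left(\tfrac{1}{\beta^2}\log_{1+\beta}\tfrac{1}{a\eta}\right)$ extra facilities, this blows up. The paper avoids this by switching to the Li--Svensson knapsack algorithm for $b\ge 5/6$ (it opens only $k+2$ facilities and is a $(1+2a)\le 4/3$ approximation); and for $b\le 1/4$ it uses $\F_1$ alone (ratio $\le \tfrac{1}{1-b}\le 4/3$). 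Your proposal does not mention the knapsack fallback and needs it.

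Finally, a small stylistic point: the paper settles all the edge regimes with explicit elementary calculus, not interval arithmetic; that machinery is only needed for the main case. Your interval-arithmetic fallback would work but is heavier than required once the right candidate set ($\F_1$, $\A'$, knapsack) is in place.
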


\begin{lemma} There is a $(1+\eta) \cdot 1.3371  $-approximation algorithm for rounding the bi-point solution which opens at most $k+O(\log(1/\eta))$ facilities when $s_0 \leq 5/6,  b \in [0.508,3/4],$ and $ r_D \in [19/40, 2/3]$.
 \label{lem:s0}
\end{lemma}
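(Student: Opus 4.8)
The plan is to reduce this to the same machinery as the main case of Section~\ref{sec:bipoint-main-case}, but run with a restricted list of algorithms and a correspondingly smaller factor-revealing program. A separate treatment is needed here precisely because, when $s_0$ is small, the $2$-star leaf fractions $q_2\in\{as_0,\,bs_0,\,(b-a)s_0,\,\tfrac12 bs_0\}$ of $\A_1,\dots,\A_6$ tend to $0$, so $\beta=\min\{q_2,1-q_2\}\to0$ and the repair count $c=\lceil 16/(3\beta^2)\rceil$ inside each group $\G_s$ of {\sc Round2Stars} blows up; moreover $\A_8$ (and, as written, the $\T_{1A}/\T_{1B}$ split used by $\A_6,\A_7$) presupposes $r_1>1$, which we do not assume. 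I would therefore keep only roundings that remain implementable within $k+O(\log(1/\eta))$ facilities for \emph{every} $s_0\le 5/6$: the trivial pseudo-solution $\F_1$ (opening $|\F_1|\le k$ facilities, connection cost exactly $D_1$); the Li--Svensson-style rounding $\A_9$ (whose $q_2=b\in[0.508,3/4]$ keeps $\beta\ge 1/4$, so Claim~\ref{claim:bound_beta} and Lemma~\ref{lem:kfacilities} still apply); any variant with $p_2\in\{0,1\}$ (so {\sc Round2Stars}, and hence the $\beta$-constraint, is never invoked), e.g.\ an $\A_7$-type rounding; and, exploiting that $s_0\le 5/6$ means $r_0\ge 1/5$ (i.e.\ there are comparatively many $0$-stars, so opening all of $\C_0$ is affordable), one or two ``mixed'' variants that open all of $\C_0$ and compensate by a \emph{bounded} decrease of the leaf fractions, keeping $q_2$ inside a fixed subinterval of $(0,1)$. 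Since $r_1>1$ is unavailable, I would replace the refined $1$-star partition by the coarse one ($\T_{1A}=\T_1$, $\T_{1B}=\emptyset$), which removes the auxiliary quantity $g$ and the classes $P'(1_B,2),N'(1_B,2)$ from the analysis.

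Next I would re-run the cost analysis of Section~\ref{sec:cost_analysis} for exactly this short list. The per-client bounds $c_{213},c_{123}$ of Lemma~\ref{lemma:123} (and the trivial bound $d_1(j)$ for the $\F_1$ solution) carry over verbatim; every retained algorithm has $\beta$ bounded away from $0$, so Theorem~\ref{cor:dep-uniform-p} still supplies the $(1+\eta)$ factor in the estimates of Lemma~\ref{lemma:probbound} (with $t=2$, $n=|\G_s|$, $\alpha=\beta$). Summing over the client classes $\J^{(X,Y)}$ as in Lemma~\ref{lemma:algcost} then yields, for each retained $\A_i$, a cost expression $COST(\A_i)$ in the variables $\{D_1^Z,D_2^Z,p_X,q_Y\}$; this is the analogue of $COST_1$, but with no $g$-dependent classes and no $C_{145}$ or $C_{210}/C_{120}$ terms, hence strictly simpler.

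With these in hand I would write the reduced non-linear program: maximize the normalized bi-point rounding factor $X$ subject to $X\le D_1$, $X\le COST(\A_i)$ for each retained $\A_i$, the client-class sign constraints (``$D_2^Z\le D_1^Z$'' for the $P$-classes and ``$D_2^Z\ge D_1^Z$'' for the $N$-classes), the normalizations $\sum_Z D_1^Z=1/(1-b+br_D)$ and $\sum_Z D_2^Z=r_D/(1-b+br_D)$, the box constraints $b\in[0.508,3/4]$, $r_D\in[19/40,2/3]$, $s_0\le 5/6$, and nonnegativity. As in the main case, fixing the few nonlinearizing variables (here $b$, $r_D$, and $s_0$, since $g$ has disappeared) makes the program linear, so I would certify $X^\ast\le 1.3371$ over the whole box by the rigorous interval-arithmetic relaxation of Appendix~\ref{apdx:interval}, subdividing finely enough. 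Combined with Lemma~\ref{lem:kfacilities} (extended to the retained variants), this gives the claimed $(1+\eta)\cdot 1.3371$-approximation opening $k+O(\log(1/\eta))$ facilities.

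I expect the main obstacle to be the first step: choosing a collection that is simultaneously (i) implementable with $O(\log(1/\eta))$ extra facilities for all $s_0\le 5/6$ (so each retained $q_2$ must be bounded away from $0$ and $1$, unless $p_2\in\{0,1\}$), (ii) definable without the hypothesis $r_1>1$, and (iii) still rich enough that the reduced NLP never exceeds $1.3371$ on the box — in particular near the corner $r_D\approx 19/40$, where $\F_1$ alone is slightly above the target and $\A_9$ alone only guarantees the weaker $\tfrac{1+\sqrt3}{2}$, so at least one genuinely mixed rounding is needed there; the role of the hypothesis $s_0\le 5/6$ should be exactly that opening all of $\C_0$ (and thus enabling those mixed roundings within budget) is affordable. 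Once the right list is fixed, re-deriving the cost expressions and running the interval computation are routine, following the templates already set up in Section~\ref{sec:bipoint-main-case} and its appendices.
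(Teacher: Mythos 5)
Your plan—restrict to a subset of $\A$-type randomized roundings whose $q_2$ stays in a fixed subinterval of $(0,1)$ (or whose $p_2\in\{0,1\}$), and certify the reduced factor-revealing NLP by interval arithmetic—is not what the paper does, and the obstacle you flag yourself is exactly where the argument is missing. Concretely: the collection $\{\F_1,\A_9\}$ alone is \emph{not} sufficient on the box, because the worst case of $\min\{1/(a+br_D),\,(a+b(1+2a)r_D)/(a+br_D)\}$ on $b\in[0.508,3/4]$, $r_D\in[19/40,2/3]$ is attained in the \emph{interior}, at $b=(3-\sqrt3)/2\approx0.634$, $r_D=1/\sqrt3\approx0.577$, where it equals $\tfrac{1+\sqrt3}{2}\approx1.366>1.3371$. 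So you really do need additional roundings. But the ``mixed'' variants you propose (open all of $\C_0$ and compensate by a bounded decrease of leaf fractions) do not deliver a $q_2$ bounded away from $0$ for all $s_0\le5/6$: with $p_0=p_{1A}=p_{1B}=p_2=1$ and the $1$-star masses summing to $1$, the budget forces $q_2\le b\Delta_F/|\L_2|\le b s_0\to0$ as $s_0\to0$. Also, your heuristic that ``$s_0\le5/6$ means $r_0\ge1/5$, so opening all of $\C_0$ is affordable'' is not the right reading of the hypothesis—$\A_2,\dots,\A_7$ already open all of $\C_0$ in the main case; what small $s_0$ buys is that $q:=1-as_0$ is large, i.e.\ nearly all $2$-stars can be ``upgraded.'' So the gap is genuine: you have neither exhibited a concrete list of retained $\A$-type roundings nor verified that any such list achieves $1.3371$ on the box, and the obvious candidates either reproduce $\tfrac{1+\sqrt3}{2}$ or become infeasible as $s_0\to0$.

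The paper avoids the $\beta$-blowup altogether by introducing two \emph{deterministic, non-$\A$-type} algorithms for this case (Appendix~E). Algorithm~1 opens $\L_1\cup\C_2$, then solves a knapsack LP over the $2$-stars to decide which to upgrade (close the center, open all leaves), filling the budget $k-|\L_1|-|\C_2|$; a basic solution has one fractional coordinate, so this opens at most $k+2$ facilities regardless of any $\beta$. Setting all knapsack variables to $q=1-as_0$ is feasible, giving the cost bound $D_2 + as_0\sum_{j:i_2(j)\in\L_2}(d_1(j)+d_2(j))$—which improves precisely as $s_0\to0$. Algorithm~2 opens all of $\F_1$ plus the $\lceil\tfrac{bs_0}{2}|\L_2|\rceil$ leaves of $2$-stars with the largest ``savings'' $\sum_{j\in\delta(\S_i)}(d_1(j)-d_2(j))_+$, again deterministically; its cost is at most $D_1-\tfrac{bs_0}{2}\bigl(\sum_{j:i_2(j)\in\L_2}(d_1(j)-d_2(j))\bigr)_+$. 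These two, together with $\A'=\A(a,a,b,a,b,a,b)$ (whose $\beta=\min\{a,b\}\ge1/4$ on the given $b$-box, so {\sc Round2Stars} costs only $O(\log(1/\eta))$ extra facilities), are fed to an interval search over $b,r_D,s_0$—there is no $g$—to certify $1.3371$. The conceptual step your proposal misses is the move from fixed-parameter random rounding to adaptive knapsack/greedy choices: the deterministic algorithms pay a constant facility overhead independent of $s_0$, so decreasing $s_0$ strictly improves the cost bound, whereas any $\A$-type rounding whose cost bound improves with $s_0$ necessarily has $q_2\to0$ or $q_2\to1$, and thus a facility overhead that blows up.
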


\begin{lemma} There is a $(1+\eta) \cdot 1.3371  $-approximation algorithm for rounding the bi-point solution which opens at most $k+O(\log(1/\eta))$ facilities when $s_0 \geq 5/6,  b \in [0.508,3/4], r_D \in [19/40, 2/3],$ and $ r_1 \leq 1$. \label{lem:r1}
\end{lemma}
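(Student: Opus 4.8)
The plan is to reuse, as far as possible, the apparatus of Section~\ref{sec:bipoint-main-case} rather than design a fresh algorithm. The first step is to notice that the only role of the hypothesis $r_1>1$ in that section is to guarantee $|\T_1|\ge\lceil a\Delta_F\rceil$, which is exactly what makes $\T_{1A}$ (the $\lceil a\Delta_F\rceil$ longest $1$-stars) well defined; moreover the facility-counting behind Lemmas~\ref{lem:groupG}, \ref{lem:singleA}, \ref{lem:kfacilities} and Claim~\ref{claim:bound_beta} uses only the ranges of $b$ and $s_0$, never the value of $r_1$, and ``Our NLP'' does not mention $r_1$ at all. Consequently, whenever $r_1\ge a$ --- and since $a=1-b\le 0.492$ this already covers most of $r_1\le 1$ --- the whole construction $\A_1,\dots,\A_9$, the cost bounds of Lemmas~\ref{lemma:123}--\ref{lemma:probbound}, and the nonlinear program go through word for word and re-certify the $1.3371$ bound. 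So the remaining work is confined to $r_1<a$, i.e.\ the case of very few $1$-stars.

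For $r_1<a$ I would keep the long/short bookkeeping but cap it, taking $\T_{1A}:=\T_1$ and $\T_{1B}:=\emptyset$ (equivalently, treat all $1$-stars as one class), so that $g=\min_{i\in\C_{1A}}g_i$ is still defined (now a minimum over all $1$-star centers) and Lemmas~\ref{lemma:123}, \ref{lemma:145}, \ref{lemma:120} remain usable. The catch is that several of the calls in Table~\ref{tab:algos} were tuned to be exactly on budget using a specific way of splitting $\T_1$, so I would retune their $\T_2$-parameters: for each call re-solve the single linear equation ``opened facilities $=k$'' for $(p_2,q_2)$ with $p_2+q_2=1$, which (using $s_0(1+r_0)=1$ and $r_1<a\le b$) always has a solution in $[0,1]$ --- for instance the mass-to-centers call (analogue of $\A_6$) becomes ``open $\C_0\cup\C_1$ and all of $\L_1$, and a $q_2=s_0(b-r_1)$ fraction of $\L_2$ via {\sc Round2Stars}''. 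One then re-checks the analogue of Claim~\ref{claim:bound_beta}, i.e.\ that the resulting $\beta=\min\{q_2,1-q_2\}$ is bounded below by an absolute constant for every retuned call --- this holds because $b-r_1>b-a=2b-1\ge 0.016$ and $s_0\in[5/6,1]$, giving $\beta>1/75$ in the $\A_6$-type case --- possibly after dropping from the family, on a thin sub-interval near $r_1=a$, any call whose $\beta$ would degenerate there (the $1$-star-closing call is the candidate, and it is not needed when $1$-stars are that scarce). With $\beta$ bounded away from $0$, the number of groups $\G_s$ and of extra facilities stays $O(\log(1/\eta))$, so the pseudo-solution opens $k+O(\log(1/\eta))$ facilities, exactly as in Lemma~\ref{lem:kfacilities}.

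For the cost I would repeat Lemma~\ref{lemma:algcost} verbatim: sum the per-client bounds of Lemmas~\ref{lemma:123}, \ref{lemma:145}, \ref{lemma:120} over the client classes $\J^{(X,Y)}$ (now $Y\in\{1_A,2\}$ only, since $\T_{1B}=\emptyset$), substitute Lemma~\ref{lemma:probbound} (legitimate because, as in Lemma~\ref{lemma:algcost}, the coefficients being bounded are nonnegative by the definition of each class), and use Theorem~\ref{cor:dep-uniform-p} for the within-group positive-correlation term. This produces a factor-revealing program of exactly the shape of ``Our NLP'' --- constraints (\ref{constr:cost17}), the $C_{145}^Z$-analogue of (\ref{constr:cost8}), (\ref{constr:P})--(\ref{constr:1bP}) and (\ref{constr:sumd1})--(\ref{constr:sumd2}) --- over the reduced variable set, together with $s_0\in[5/6,1]$, $b\in[0.508,3/4]$, $r_D\in[19/40,2/3]$ and $0\le r_1<a$. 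As in the main case, fixing $b,r_D,s_0,g$ (and $r_1$) makes the program linear, so its maximum over the whole box can be bounded rigorously by the interval-arithmetic scheme of Appendix~\ref{apdx:interval}.

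The main obstacle is this last numerical step: certifying that the reduced program's optimum is at most $1.3371$. Morally it should be the easy side of the dichotomy that shaped the main case --- with $r_1\le 1$ no worst-case instance can be dominated by $1$-stars, and the retuned mass-to-centers call (which opens all of $\C_0\cup\C_1$) is cheap precisely because so few clients have their nearest $\F_2$-facility inside a $1$-star and must pay the detour to that star's center --- but turning this intuition into a uniform bound over the parameter box, and in particular handling the thin strip near $r_1=a$ where the family is slightly pruned, is exactly what the interval-arithmetic certificate has to establish; everything else is bookkeeping that mirrors Section~\ref{sec:bipoint-main-case}.
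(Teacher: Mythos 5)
The paper's proof of this lemma does something quite different from what you propose: it abandons the $\T_{1A}/\T_{1B}$ split and the parameter $g$ \emph{entirely} for this edge case. Table~\ref{tab:r1} in the appendix lists ten calls $\A_1',\dots,\A_{10}'$, all of which set identical parameters on $\T_{1A}$ and $\T_{1B}$ (so the $1$-stars are treated uniformly), and several of which --- notably $\A_9'$ and $\A_{10}'$, with $p_2=1,q_2=0$ and $(p_{1A},q_{1A})=(p_{1B},q_{1B})\in\{(1,b),(b,1)\}$ --- open all $1$-star centers and leaves plus all of $\C_2$. The facility count for those two is $|\F_1|+b|\C_1|$, and here is precisely where the hypothesis $r_1\le 1$ is used: $b|\C_1|\le b\Delta_F$ gives exactly $k$. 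The resulting factor-revealing NLP has no $g$ and no $\T_{1A}/\T_{1B}$ classes, so it is strictly simpler than ``Our NLP,'' and a fresh interval search over $(b,r_D,s_0)$ certifies the $1.3371$ bound. The governing intuition is the one you identify at the end --- when $1$-stars are scarce, you can afford to commit fully to them --- but the paper implements it by shipping in a brand-new, smaller family of algorithms rather than by retuning the main family.

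Your route is genuinely different, and it has a couple of real gaps. First, your case split at $r_1=a$ creates a thin strip in which the retuned $\A_8$-analogue has $p_2=s_0(a-r_1)\to 0$, so $\beta$ degenerates; you propose to drop that call there, but then the factor-revealing program on that strip is missing the one algorithm that ever closes entire $1$-stars, and whether the remaining calls still certify $1.3371$ is exactly the content of the lemma, not a formality you can wave off as ``morally the easy side.'' Second, even on $r_1\ge a$, reusing the main-case NLP verbatim is not quite free: $|\C_{1A}|=\lceil a\Delta_F\rceil$ needs $|\T_1|\ge\lceil a\Delta_F\rceil$, so the true boundary is $r_1\ge a+O(1/\Delta_F)$ and you would need to say what happens in the gap (or absorb it into the $r_1<a$ analysis). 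Third, you only exhibit the retuned $(p_2,q_2)$ for the $\A_6$-analogue (which is correct, $q_2=s_0(b-r_1)$), and assert the rest ``always has a solution in $[0,1]$''; that needs to be checked per call, and for at least one call ($\A_8$) the answer is $p_2=s_0(a-r_1)$, which is the very quantity that degenerates. None of these are obviously fatal, but together they mean your argument is a program to be carried out, not a proof, whereas the paper sidesteps all of them by discarding $g$ and the $1$-star split for this regime and designing $\A_1',\dots,\A_{10}'$ from scratch so that $r_1\le 1$ enters only through a single clean inequality in the facility count.
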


The result is summarized in the following theorem.
\begin{theorem}
\label{thm:final-thm}
There is a $(1+\eta) \cdot 1.3371$-approximation algorithm for rounding the bi-point solution which opens at most $k+O(\log(1/\eta))$ facilties. 
\end{theorem}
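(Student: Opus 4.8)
The plan is to assemble Theorem~\ref{thm:final-thm} from the main-case analysis of Section~\ref{sec:bipoint-main-case} together with the three edge-case lemmas, after verifying that the stated parameter ranges exhaust the space of all possible bi-point solutions. Recall that, once we dispose of the trivial case $\Delta_F = 0$ (where outputting $\F_2$ already costs at most the bi-point cost), every bi-point solution is characterized for our purposes by the four quantities $b$, $r_D$, $s_0$, and $r_1$: all of the sub-algorithms depend only on these, on $\eta$, and on the combinatorial structure of the stars. So the algorithm of the theorem is simply: compute $b, r_D, s_0, r_1$ from the input bi-point solution, determine which region it lies in, and run the corresponding sub-algorithm.

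First I would do the bookkeeping to check the cases are exhaustive. The main case requires $s_0 \ge 5/6$, $b \in [0.508, 3/4]$, $r_D \in [19/40, 2/3]$, and $r_1 > 1$. Lemma~\ref{lem:lisven} covers the complement of the $b$- and $r_D$-boxes, i.e.\ $b \le 0.508$, $b \ge 3/4$, $r_D \le 19/40$, or $r_D \ge 2/3$. Lemma~\ref{lem:s0} covers $s_0 \le 5/6$ with $b$ and $r_D$ inside their boxes, and Lemma~\ref{lem:r1} covers $r_1 \le 1$ with $s_0 \ge 5/6$ and $b,r_D$ inside their boxes. Every bi-point solution satisfies the hypothesis of at least one of these four statements, and the boundary overlaps (such as $b = 0.508$, $r_D = 2/3$, $s_0 = 5/6$) are harmless because the relevant edge-case lemmas are stated with non-strict inequalities there. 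Next, in each region I would invoke the corresponding cost and facility-count guarantee. In the main case, the lemma of Section~\ref{sec:bipoint-main-case} shows the best of $\A_1,\dots,\A_9$ has expected cost at most $(1+\eta)\,X^*(aD_1 + bD_2)$ with $X^* \le 1.3371$ (the interval-arithmetic certification over the $4$-dimensional box in $(b, r_D, s_0, g)$, deferred to Appendix~\ref{apdx:interval}, supplies this upper bound), and Lemma~\ref{lem:kfacilities} shows each $\A_i$ opens at most $k + O(\log(1/\eta))$ facilities; in the three edge cases, Lemmas~\ref{lem:lisven}, \ref{lem:s0}, and~\ref{lem:r1} directly give a $(1+\eta)\cdot 1.3371$-approximation opening at most $k + O(\log(1/\eta))$ facilities. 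Taking the common bound $(1+\eta)\cdot 1.3371$ on the rounding factor and the common budget $k + O(\log(1/\eta))$ on the number of opened facilities over all four regions yields the theorem.

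One point that needs care \emph{inside} the cited lemmas — rather than in this final assembly — is the single possibly-fractional coordinate produced by {\sc DepRound} inside {\sc Round2Stars}: line~6 of that subroutine opens the corresponding star's center together with $\lceil X_{i^*}|\S_{i^*}|\rceil$ of its leaves, which is precisely the domain-specific handling of the unrounded index promised in Section~\ref{sec:dep-round}. Lemma~\ref{lem:groupG} already absorbs the resulting $O(1)$ slack per group into the facility count, and Lemma~\ref{lemma:probbound} already absorbs its effect into the opening probabilities $\Pr[\bar i_1], \Pr[\bar i_2], \Pr[\bar i_1 \bar i_2]$; so for the present theorem I would simply note that these have been accounted for in the two guarantees I am combining.

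The main obstacle is not the assembly itself but the confidence that the numerical thresholds line up: one must know that $1.3371$ is \emph{simultaneously} an upper bound for the main-case NLP optimum $X^*$ (the computer-assisted interval-arithmetic claim of Appendix~\ref{apdx:interval}) and an upper bound for the factor proved in each of Lemmas~\ref{lem:lisven}, \ref{lem:s0}, and~\ref{lem:r1} — that is, that the edge-case thresholds $0.508$, $3/4$, $19/40$, $2/3$, $5/6$ were chosen precisely so that the simpler edge-case algorithms already beat $1.3371$ just outside the main box. Establishing the three edge-case bounds is where the genuine (though largely routine) work lies; granting those and the interval-arithmetic bound on $X^*$, the theorem is immediate by the case split above.
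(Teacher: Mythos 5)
Your proposal is correct and follows essentially the same route as the paper: the paper states Theorem~\ref{thm:final-thm} as a direct summary of the main-case NLP lemma together with Lemmas~\ref{lem:lisven}, \ref{lem:s0}, and~\ref{lem:r1}, and your explicit check that the four parameter regions (in $b$, $r_D$, $s_0$, $r_1$) exhaust all bi-point solutions after discarding $\Delta_F = 0$ is precisely the implicit case split the paper relies on. The observations you flag — that the residual fractional coordinate from {\sc DepRound} is handled inside {\sc Round2Stars} and absorbed by Lemmas~\ref{lem:groupG} and~\ref{lemma:probbound}, and that $1.3371$ must simultaneously bound the NLP optimum and all three edge cases — correctly locate where the real work sits, namely in the cited lemmas rather than in this final assembly.
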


\subsection{Dichotomy result}
In the last subsections, we introduced a $(2.675 + \epsilon)$-approximation algorithm for the $k$-median problem which runs in $O\left(n^{O((1/\epsilon)\log(1/\epsilon))}\right)$ time. Now we show that by using a simple scaling technique and careful analysis, we can either improve the runtime by getting rid of the $\log(1/\epsilon)$ factor in the power of $n$, or we can improve the approximation ratio. Our result is summarized in the following theorem. Recall from the last subsection that, when {\sc Round2Stars($p_2,q_2$)} is called, $\beta := \min\{q_2,1-q_2\}$ is strictly bounded away from zero.

\begin{theorem}
For any parameter $\epsilon > 0$ small enough, there exist algorithms $A_\epsilon$ and $B_\epsilon$ such that, for any instance $\I$ of the $k$-median problem, either $A_\epsilon$ is fast or $B_\epsilon$ is more accurate:
\begin{itemize}
	\item $A_\epsilon$ is a randomized $(2.675 + \epsilon)$-approximation algorithm which produces a solution to $\I$ with constant probability and runs in $O(n^{O(1/\epsilon)})$ time, or
	\item $B_\epsilon$ is a $(2+\epsilon)$-approximation algorithm for $\I$ which runs in $O(n^{O(\poly(1/\epsilon))})$ time.
\end{itemize}

\end{theorem}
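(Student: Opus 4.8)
The plan exploits the slack between the target ratio $2.675$ and what Theorem~\ref{thm:final-thm} actually delivers, namely $2X^{*}\le 2\cdot 1.3371 = 2.6742 < 2.675$. This lets us run the bi-point rounding of Section~\ref{sec:bipoint-round} with the loss parameter $\eta$ frozen at a small \emph{absolute} constant $\eta_{0}$ chosen so that $2(1+\eta_{0})X^{*}<2.675$, instead of at $\Theta(\epsilon)$. With $\eta$ constant, Lemma~\ref{lem:singleA} and Claim~\ref{claim:bound_beta} show that Algorithm~\ref{algo:A} opens $C=O\!\big(\tfrac1{\beta^{3}}\log\tfrac1{p_{2}\eta_{0}}\big)$ extra facilities, so the Li--Svensson pseudo-approximation-to-approximation reduction runs in time $N^{O(C/\epsilon)}$. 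Since $\beta$ is always bounded away from $0$ when {\sc Round2Stars} is invoked, the only way $C$ can fail to be $O(1)$ is if the invoked call of $\A$ uses a tiny $p_{2}$; inspecting Table~\ref{tab:algos} and the edge-case algorithms of Lemmas~\ref{lem:lisven}--\ref{lem:r1}, every $p_{2}$ is bounded below by a universal constant \emph{except} $p_{2}=as_{0}$ and $p_{2}=a$, which are small only when the bi-point weight $a$ is small. So the dichotomy is governed by $a$, and when $a$ is tiny we must avoid the rounding algorithm altogether.

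Concretely, I would fix $\eta_{0}$ as above and a threshold $\theta=\theta(\epsilon)=\poly(1/\epsilon)$, then run JMS via the usual Lagrangian scaling: binary-search the uniform facility cost $z$ to the transition value $z^{*}$ where the size of the returned integral UFL solution drops from $k_{2}:=|S_{z^{*-}}|\ge k$ to $k_{1}:=|S_{z^{*+}}|\le k$. If $k_{1}=k$ or $k_{2}=k$ we already have a $k$-facility solution of connection cost $\le 2\,\mathrm{OPT}$ by the Lagrangian-multiplier-preserving guarantee of JMS, and are done. Otherwise we obtain the bi-point $a\F_{1}+b\F_{2}$ with $\F_{1}=S_{z^{*+}}$, $\F_{2}=S_{z^{*-}}$, $a=(k_{2}-k)/\Delta_{F}$, $b=(k-k_{1})/\Delta_{F}$, $aD_{1}+bD_{2}\le 2\,\mathrm{OPT}$, and $\Delta_{F}=k_{2}-k_{1}$; note $|\F_{2}|-k=a\Delta_{F}$.

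Case~1: $|\F_{2}|-k\le\theta$. Then $\F_{2}$ is a pseudo-solution opening only $k+\poly(1/\epsilon)$ facilities, and by the LMP inequality $\mathrm{cost}(\F_{2})\le 2\,\mathrm{OPT}+2z^{*}(k-k_{2})\le 2\,\mathrm{OPT}$; feeding $\F_{2}$ into the Li--Svensson reduction with accuracy $\epsilon$ yields a true $k$-facility solution of cost $\le(2+\epsilon)\,\mathrm{OPT}$ in time $N^{O(\theta/\epsilon)}=N^{O(\poly(1/\epsilon))}$ --- this is $B_{\epsilon}$. Case~2: $|\F_{2}|-k=a\Delta_{F}>\theta$ \emph{and} $a$ is bounded below by a universal constant (in particular whenever the parameters $(b,r_D,s_0,r_1)$ land in the main case of Section~\ref{sec:bipoint-round}, where $b\le 3/4$ and $s_{0}\ge 5/6$ force every relevant $p_{2}\ge 5/24$): then Algorithm~\ref{algo:A} run with $\eta=\eta_{0}$ opens $C=O(1)$ extra facilities, and the reduction with accuracy $\epsilon$ gives a $\big(2(1+\eta_{0})X^{*}+\epsilon\big)\le(2.675+\epsilon)$-approximation in time $N^{O(1/\epsilon)}$ --- this is $A_{\epsilon}$.

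The crux, and where I expect essentially all the work to lie, is the residual sub-case: $a$ subconstant but $a\Delta_{F}>\theta$, which forces $b$ within $1/\poly(\epsilon)$ of $1$, $\Delta_{F}$ large, $\F_{1}$ well below the budget and $\F_{2}$ well above it, so neither $\F_{2}$ (too many extra facilities for the reduction) nor Algorithm~\ref{algo:A} on the $b\approx 1$ edge case (whose extra-facility count $O(\log(1/a))$ may grow with the instance) is directly good enough. The intended fix is the scaling of Appendix~\ref{sec:jms_scaling} applied more carefully: show that among the Lagrangian solutions one can always choose a bi-point for which \emph{either} the excess of $\F_{2}$ over $k$ is $\poly(1/\epsilon)$ (reducing to Case~1), \emph{or} the parameters already lie in the main-case window of Section~\ref{sec:bipoint-round}, where $p_{2}=\Omega(1)$ so that $A_{\epsilon}$ is fast; a perturbation of the distances to control the jump of $|S_{z}|$ at $z^{*}$ may be needed to make this choice possible. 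Checking that the LMP cost bound transfers to whichever Lagrangian solution is fed into the reduction, and re-accounting the exponent in the Li--Svensson reduction, is then routine bookkeeping over Table~\ref{tab:algos}.
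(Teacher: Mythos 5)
There is a genuine gap, and the route is also not the paper's. Your dichotomy is keyed to the bi-point weight $a$ (equivalently the excess $|\F_2|-k$), with the loss parameter $\eta$ frozen at an absolute constant to exploit the numerical slack between $2\cdot 1.3371=2.6742$ and $2.675$. The paper instead keeps $\eta=\Theta(\eps)$ and runs the dichotomy on the internal structure of the $2$-stars: writing $\C_2''$ for the centers of small $2$-stars and $\L_2',\L_2''$ for the leaves of large and small $2$-stars, it distinguishes (i) $|\C_2''|>f(1/\eta)$, where the small stars are rounded \emph{independently} with probabilities scaled by $(1-\eta)$ and a Chernoff bound shows the budget is respected up to $4$ extra facilities with constant probability --- this is precisely why $A_\eps$ is randomized and succeeds only with constant probability in the theorem statement; (ii) $|\L_2'|+|\L_2''|\le g(1/\eta)$, where opening all of $\F_2$ costs only $\poly(1/\eta)$ extra facilities and has connection cost $D_2\le aD_1+bD_2\le 2\,\mathrm{OPT}$, giving $B_\eps$; and (iii) the remaining case, where mass is shifted from the (numerous) large-star leaves onto the (few) small-star centers so that no extra facilities are needed for small stars at all. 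In every case the $O(\log(1/\eta))$ groups $\G_s$ --- which are the true source of the extra facilities, and whose number is $\Theta(\log(1/\eta))$ even when $\beta$ and $p_2$ are bounded below by Claim~\ref{claim:bound_beta} --- are eliminated, so your diagnosis that the obstruction is ``a tiny $p_2$'' misidentifies the bottleneck; small $a$ is already disposed of by the knapsack and $\F_1$ branches of Lemma~\ref{lem:lisven} with only $k+2$ facilities.

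The concrete gap is your residual sub-case ($a$ subconstant but $a\Delta_F>\theta$), which you yourself flag as ``where essentially all the work lies'' and propose to close via ``the scaling of Appendix~\ref{sec:jms_scaling}.'' That appendix is not a tool: it is the authors' retraction of an incorrect claim from the conference version (the scaled JMS' algorithm is shown there \emph{not} to beat JMS), so nothing in it can be invoked to finish your argument. As written, the proof is therefore incomplete at its acknowledged hardest point. A further structural warning sign: if your plan worked as stated, $A_\eps$ would apply to every instance and the $B_\eps$ branch (and the ``constant probability'' qualifier) would be vacuous, whereas in the paper's proof case~(ii) genuinely yields only $B_\eps$; this mismatch indicates that your argument is leaning on numerical slack in the constant $2.675$ rather than on the structural mechanism the theorem is actually about.
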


We say that a star $\S_i$ with $i \in \C_2$ is small if $2 \leq |\S_i| \leq \frac{c_0}{\eta} $ for some constant $c_0 > 0$. Otherwise, $|\S_i| > \frac{c_0}{\eta}$ and we call it a large star. Again, let $\C_2'$ and $\L_2'$ denote sets of centers and leaves of large stars. Also let $\C_2''$ and $\L_2''$ be sets of centers and leaves of small stars. 

First, observe that for large stars, we can reuse the following trick: move a little mass from the leaves to open the center. In other words, we will open $C_2'$ and a subset of size $\lceil q_2(|\L_2'| - |\C_2'|) \rceil$ of $\L_2'$. For $i_2 \in \L_2'$, it is not hard to show that $\Pr[i_2] \geq q - p\eta$ (i.e. the loss is negligible). We open 1 extra facility in this class. Recall that $\A$ opens at most 4 extra facilities in $\T_0 \cup \T_1 \cup \C_2' \cup \L_2'$.  The question is can we also reduce the number of extra opened facilities which are part of small stars (previously, this number was $O(\log(1/\eta))$)? We consider the following cases.
 
\begin{itemize}	
	\item Case 1: $|\C_2''| > f(1/\eta)$ for some  function $f = O(\poly(1/\eta))$ to be determined. In this case, we have a lot of small stars. We scale down the probability of opening the leaves by $(1-\eta)$ and open/close the centers/leaves independently. That is, for each center $i \in \C_2''$, we randomly open $\S_i$ and close center $i$ with probability $(1-\eta)q_2$. (With the remaining probability, we close $\S_i$ and open center $i$.) We show that, with constant probability, the algorithm returns a feasible solution whose cost is only blown up by a small factor of $(1+\eta)$. 
	
	\item Case 2: $|\L_2'| + |\L_2''| \leq g(1/\eta)$ for some function $g = O(\poly(1/\eta))$ to be determined. In this case, the number of leaves should be small enough so that we can simply open all the leaves in $\L_2$. The number of extra opened facilities is $O(g(1/\eta))$. However, we achieve a pseudo solution with no loss in connection cost compared to the bipoint solution.
	
	\item Case 3: Neither Case 1 nor Case 2 holds (i.e. $|\C_2''| \leq f(1/\eta) $ and $|\L_2'| + |\L_2''| \geq g(1/\eta)$). Note that, by definition of small stars, 
	$$|\L_2''| \leq \frac{c_0}{\eta}|\C_2''| \leq \frac{c_0 f(1/\eta)}{\eta}.$$
This implies that
$$ |\L_2'| \geq g(1/\eta) - |\L_2''| \geq g(1/\eta) - \frac{c_0 f(1/\eta)}{\eta}.$$
Intuitively, the number of centers and leaves of small stars are upper-bounded by some constant. On the other hand, we have a lower-bound on the number of leaves of large stars. If we have enough leaves in $\L_2'$, we can scale down the probability to open each facility in $\L_2'$ so that all centers in $\C_2''$ can be opened without violation.
\end{itemize}

See Appendix \ref{sec:dichotomy} for details of these cases.

\section{Discussion}
We conclude with a specific discussion followed by more general speculation.

In Section~\ref{sec:bipoint-round}, we considered a selection of
counterbalancing algorithms which were chosen (with numerical aid) to be a
minimal such set which obtains the bi-point rounding factor 1.3371.
However, this can be improved, if only slightly, at the cost of adding
more nonlinear variables to the factor-revealing program. We currently
split the 1-stars into two groups based on their size-to-distance ratio
$g_i$, and a threshold $g$. We assume this ratio may be unbounded on either
side of the threshold, yet the analysis is only tight when all $g_i$ are
exactly $g$. We could exploit this by splitting the 1-stars into 3 or more
classes, with multiple thresholds, and adding more sets of parameters to
take advantage of the division. Testing this with three classes, we get a
new factor in the interval $[1.332,1.3371)$. So we know there is a little
more gain to be had, but it adds more complexity to the algorithm and
analysis.

Also, consider that in our algorithm we have fixed the parameter $r_{1A}$
so that it is exactly large enough to close and open all the big leaves ($\A_8$). This is
a strategic choice, and makes the algorithms simple. However, it is
possible that there is a better choice, as a function of some other
variables in the instance. It is also possible to fix $g$ instead and let
$r_{1A}$ be a variable in the program, but this creates more cases. A
purely analytical analysis would be greatly helpful toward choosing the
appropriate parameter.

A rough lower bound on the potential improvement from these ideas is
$\frac{3+\sqrt{5}}{4}\approx 1.309$, as this is the ratio we get on an instance
with no 1-stars (or 0-stars) at all, by opening roots and leaves with
proportions $(a,b)$ and $(1,b/2)$.

Recent years have seen significant progress on hard-capacitated problems, e.g., for vertex-cover and its variants \cite{DBLP:journals/siamcomp/ChuzhoyN06,DBLP:journals/jcss/GandhiHKKS06,DBLP:conf/soda/CheungGW14}. However, progress on the different variants of capacitated problems has been slower: see, e.g., \cite{an-etal:cap-k-center,jain_vazirani,DBLP:journals/mp/LeviSS12} and the references therein. We suggest speculatively that the (probabilistic) analog of \cite{srin:level-sets} for bipartite graphs -- the work of \cite{DBLP:journals/jacm/GandhiKPS06} -- may help with ensuring that the capacity constraints are met with probability one, while ensuring other desired negative-correlation and near-independence properties. 

\section{Acknowledgements}
We thank Joachim Spoerhase for helping us to discover an inconsistency in an earlier version \cite{ByrkaPRST15} of this paper. We also thank Marcin Mucha for showing us the work of Mahdian \& Yan \cite{mahdian}.

\bibliographystyle{abbrv}
\bibliography{./kmed_268}

\appendix

\medskip
\begin{center}
\textbf{\Large Appendix}
\end{center}

%\section{Proofs: JMS'($\gamma$)}
%\label{apdx_jms}
%\input{apdx_jms_prime.tex}

\section{Proofs for Section 3: {\sc DepRound}}
\label{apdx:depround}
\begin{proof}(Lemma \ref{lemma:b-props}) As example, we prove the properties hold in case I:
\begin{itemize}
\item[\textbf{(B0)}]We either set $\gamma_1=0$, and $\gamma_2=\beta_2+\beta_1\frac{a_1}{a_2}=\frac{1}{a_2}(a_2\beta_2+a_1\beta_1)\le\frac{\min\{a_1,a_2\}}{a_2}\le 1$, or we set $\gamma_2=0$ and $\gamma_1=\beta_1+\beta_2\frac{a_2}{a_1}=\frac{1}{a_1}(a_1\beta_1+a_2\beta_2)\le\frac{\min\{a_1,a_2\}}{a_1}\le1$.
\item[\textbf{(B1)}] $\E[\gamma_1]=\!\frac{a_2\beta_2}{a_1\beta_1+a_2\beta_2}\cdot0+\frac{a_1 \beta_1}{a_1\beta_1+a_2\beta_2} (\beta_1+\beta_2\frac{a_2}{a_1})=\beta_1,$
and $\E[\gamma_2]=\!\frac{a_2\beta_2}{a_1\beta_1+a_2\beta_2}(\beta_2+\beta_1\frac{a_1}{a_2})+\frac{a_1\beta_1}{a_1\beta_1+a_2\beta_2}\cdot0=\beta_2$.
\item[\textbf{(B2)}] If we set $\gamma_1=0$, then  $a_1\gamma_1+a_2\gamma_2=0+a_2(\beta_2+\beta_1\frac{a_1}{a_2})=a_1\beta_1+a_2\beta_2$. If we set $\gamma_2=0$, then $a_1\gamma_1+a_2\gamma_2=a_1(\beta_1+\beta_2\frac{a_2}{a_1})+0=a_1\beta_1+a_2\beta_2$.
\item[\textbf{(B3)}] The first part is trivial as $\E[\gamma_1\gamma_2]=0\le\beta_1\beta_2$. For the second part:
\begin{align*}\E[(1-\gamma_1)(1-\gamma_2)]&=\frac{a_2\beta_2}{a_1\beta_1+a_2\beta_2}\cdot1\cdot(1-\beta_2-\beta_1\frac{a_1}{a_2})
+\frac{a_1\beta_1}{a_1\beta_1+a_2\beta_2}\cdot(1-\beta_1-\beta_2\frac{a_2}{a_1})\cdot1
\\&=\frac{a_2\beta_2(1-\beta_2)-a_1\beta_1\beta_2+a_1\beta_1(1-\beta_1)-a_2\beta_1\beta_2}{a1\beta_1+a_2\beta_2}
\\&=(1-\beta_1-\beta_2)\le(1-\beta_1)(1-\beta_2).
\end{align*}
\end{itemize}
\end{proof}

\section{Proofs: Bounding the number of opened facilities }
\label{apdx:nfacilities}

\begin{proof} (Claim \ref{claim:bound_beta})
{\sc Round2Stars} is only called during $\A_1, \ldots, \A_6$. (In $A_7$ and $A_8$, line 5 is called instead.) Consider possible values of $p_2$ and $q_2$ in Table \ref{tab:algos}. Recall that $s_0 \in [5/6,1]$, $b \in [0.508,3/4]$ and $a+b = 1$. The minimum of $\beta = \min\{q_2, 1-q_2\}$ is attained in $\A_6$ when $b = 0.508$ and $s_0 = 5/6$; here $q_2 = (b-a)s_0=(0.508-0.492)\cdot 5/6=1/75$. Also, the minimum of $p_2$ is attained at $p_2 = as_0$, $a = 1/4$, and $s_0 = 5/6$.
\end{proof}

\begin{proof} (Lemma \ref{lem:singleA})
We consider Algorithm \ref{algo:A}. It is easy to see that
\begin{itemize}
	\item In line 1, we open at most $p_0|\C_0|+1$ facilities,
	\item In line 2, we open at most $p_{1A}|\C_{1A}| + q_{1A}|\C_{1A}|+2$ facilities,
	\item In line 3, we open at most $p_{1B}|\C_{1B}| +  q_{1B}|\C_{1B}|+2$ facilities,
	\item If line 5 is executed then we open at most $p_2|\C_2| + q_2|\L_2| + 1$ facilities,
	\item Otherwise, {\sc Round2Stars} is called:
	\begin{itemize}
		\item[$\circ$] In line 1, the number of opened facilities is
\begin{align*}
	|\C_2'| + \lceil q_2(|\L_2'|-|\C_2'|) \rceil  &\leq 1 + |\C_2'| + q_2(|\L_2'|-|\C_2'|)
  				= p_2|\C_2'| + q_2|\L_2'| + 1,
\end{align*}
where the equality follows due to the fact that $1 - q_2 = p_2$ whenever {\sc Round2Stars} is called.
		\item[$\circ$] By Lemma \ref{lem:groupG} and Claim \ref{claim:bound_beta}, the number of facilities opened by the for loop (lines $3\ldots 7$) is at most
	\begin{align*}
		\sum_{s=1}^{\lceil \log_{1+\beta} ( 1 / (p_2 \eta)) \rceil - 1}\!\!\!\!\!\!\!\! (p_2|\C(s)| + q_2|\L(s)| + c + 2) 
	= \sum_{s=1}^{\lceil \log_{1+\beta} ( 1 / (p_2 \eta)) \rceil - 1}\!\!\!\!\!\!\!\! (p_2|\C(s)| + q_2|\L(s)|) + O(\log (1/\eta)) .
	\end{align*}

	\end{itemize}
\end{itemize}
The lemma follows by taking the sum of opened facilities in each case.

\end{proof}

\begin{proof} (Lemma \ref{lem:kfacilities}) Since $b \in [1/2,3/4]$ and $s_0 \leq 1$, $p_2$ is bounded away from $0$ in $\A_1, \ldots, \A_9$. Note that {\sc Round2Stars} is not called in $\A_7$ and $\A_8$; at most $E+1$ facilities can be opened in these two algorithms. By Lemma \ref{lem:singleA}, it suffices to show that $E \leq k + 1$. The proof is straightforward. We substitute the parameters in Table \ref{tab:algos} and $s_0 = \frac{1}{1+|\C_0|/\Delta_F} = 1 + \frac{|\C_0| }{ |\C_2| - |\L_2| }$ to compute $E$ in each case. We use simple facts such as $|\C_{1A}| = |\L_{1A}|$, $|\C_{1B}| = |\L_{1B}|$, and $|\C_{1A}|+|\C_{1B}| = |\C_1|$ to further simplify the expression. Also recall that $|\C_{1A}| = \lceil a\Delta_F \rceil$, and thus $a\Delta_F \leq |\C_{1A}| \leq a\Delta_F+1$.
By definition, we have $\Delta_F=|\L_2|-|\C_0|-|\C_2|$ and $2|\C_2| \leq |\L_2|$.
\begin{itemize}
\item For $\A_1$, we have
\begin{align*}
	E &= |\C_{1A}| + |\C_{1B}| + as_0|\C_2| + (1-as_0) |\L_2 |   \\
		&= 	|\C_1| + a|\C_0| + a|\C_2| + b|\L_2|    \\
		&= a(|\C_0| + |\C_1| + |\C_2|) + b(|\C_1| + |\L_2|) = a|\F_1| + b|\F_2| = k.
\end{align*} 
\item For $\A_2, \A_3, \A_4$, and $\A_5$,  substituting the parameters gives the same $E$:
\begin{align*} 
	E &= |\C_0| +  |\C_1| +  (1-bs_0)|\C_2| +bs_0|\L_2| \\ 
	&=|\C_0| - b |\C_0| + |\C_1| + |\C_2| - b |\C_2| + b |\L_2|\\ &= a|\F_1| + b|\F_2| = k.
\end{align*}

\item For $\A_6$, we have
\begin{align*} 
	E &= |\C_0| + |\C_{1A}| + |\C_{1A}| + |\C_{1B}| + (1-(b-a)s_0)|\C_2| + (b-a)s_0|\L_2| \\
	&= |\C_0| + |\C_{1A}| + |\C_1| + (1-(b-a)s_0)|\C_2| + (b-a)s_0|\L_2| \\
	&\leq 1 + |\C_0| + a\Delta_F + |\C_1| + (1-(b-a)s_0)|\C_2| + (b-a)s_0|\L_2| \\
	&=1 + |\C_0| - b |\C_0| + |\C_1| + |\C_2| - b |\C_2| + b |\L_2| \\
	&= 1 + a|\F_1| + b|\F_2| = k + 1.
\end{align*}

\item For $\A_7$, we have
\begin{align*} 
	E &= |\C_0| + |\C_1| + |\C_2| + \frac{b}{2(1/s_0)}|\L_2| \\
	  &\leq |\C_0| + |\C_1| + |\C_2| + \frac{b}{1/s_0 + r_2}|\L_2| \\
	  &=|\C_0| - b |\C_0| + |\C_1| + |\C_2| - b |\C_2| + b |\L_2|\\ &= a|\F_1| + b|\F_2| = k.
\end{align*}

\item For $\A_8$, we have
\begin{align*} 
	 E &= |\C_{1B}| + |\L_2| \\
	  &=  |\C_1| - |\C_{1A}| + |\L_2| \\
	  &\leq  |\C_1| - a\Delta_F + |\L_2| \\
	  &\leq  |\C_1| - \Delta_F + b\Delta_F + |\L_2| \\
	  &=  |\F_1| + b\Delta_F = k.
\end{align*}

\item For $\A_9$ (this is exactly Li-Svensson algorithm), we have
\begin{align*} 
	 E &= a|\C_0| + a|\C_1| + b|\L_1| + a|\C_2| + b|\L_2| \\
	   &= a|\F_1| + b|\F_2| = k.
\end{align*}

\end{itemize}

\end{proof}

\section{Proofs: Bounding client connection cost}
\label{sec:final-cost-proofs}
\begin{proof}(Lemma \ref{lemma:123})
For all these clients, we know that at least one of $i_2$ or $i_3$ will always be open. First consider the case that $i_1\ne i_3$. 
 Then the facilities are as shown below. 
Observe by the construction of stars, $i_2$ cannot be closer to $i_1$ than $i_3$ (otherwise $i_1$ would be its center). Thus $d(i_2,i_3)\le d(i_2,i_1)\le d_1+d_2$. It follows by the triangle inequality that $d(j,i_3)\le d(j,i_2)+d(i_2,i_3)\le d_1+2d_2$.

\begin{center}
\includegraphics{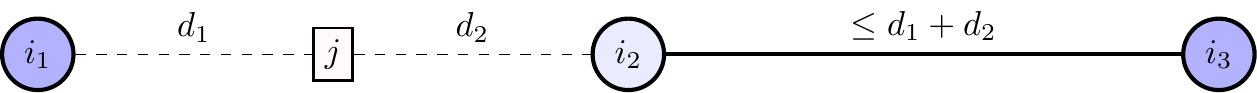}
\end{center}
Now let us connect $j$ to $i_2$ if open. Else, connect to $i_1$ if open. Else, connect to $i_3$. The actual facility which $j$ connects to can only be closer than any of these. Thus, this yields the following upper bound for the expected connection cost of $j$. 
\begin{align}
c_{213}(j)&:= Pr[i_2]d_2+Pr[i_1\bar i_2]d_1+Pr[\bar i_1\bar i_2](d_1+2d_2)\nonumber\\
&= Pr[i_2]d_2+Pr[\bar i_2]d_1+2Pr[\bar i_1\bar i_2]d_2\nonumber\\
&= d_2+Pr[\bar i_2](d_1-d_2)+2Pr[\bar i_1\bar i_2]d_2,\label{cost:213}
\end{align}
where the subscript corresponds to the order in which we try connecting to facilities. Alternatively, we may connect $j$ first to $i_1$ if open, else $i_2$ if open, else $i_3$. This gives the equally valid bound
\begin{align}
c_{123}(j)&:= Pr[i_1]d_1+Pr[\bar i_1 i_2]d_2+Pr[\bar i_1\bar i_2](d_1+2d_2)\nonumber\\
&= Pr[i_1]d_1+Pr[\bar i_1]d_2+Pr[\bar i_1\bar i_2](d_1+d_2)\nonumber\\
&= d_1+Pr[\bar i_1](d_2-d_1)+Pr[\bar i_1\bar i_2](d_1+d_2).\label{cost:123}
\end{align}

Now consider the remaining case that $i_1=i_3$. In this case, at least one of $i_1$ or $i_2$ will always be open. Again, depending on which facility we attempt to connect to first, we can obtain either of two bounds:
\begin{align*}
c_{21}(j)&:= Pr[i_2]d_2+Pr[\bar i_2]d_1\le c_{213}(j)\\
c_{12}(j)&:= Pr[i_1]d_1+Pr[\bar i_1]d_2\le c_{123}(j).
\end{align*}
Thus (\ref{cost:213}) and (\ref{cost:123}) are valid bounds in both cases.
\end{proof}

\begin{proof}(Lemma \ref{lemma:145})
For these clients it is possible that $i_1$, $i_2$ and $i_3$ are all closed. Let $i_4$ be the closest facility in $\L_2$ to $i_3$, and let $i_5$ be the center of $i_4$. Then by definition, we have
\[g_{i_3}=\frac{d(i_3,i_2)}{d(i_3,i_4)}.\]
This yields the following bound on $d(i_3,i_4)$ (and thus $d(i_4,i_5)$):
\[ d(i_4,i_5)\le d(i_3,i_4)=\frac{1}{g_{i_3}} d(i_2,i_3) 
\le \frac{1}{g}(d_1+d_2).\]

\begin{center}\includegraphics{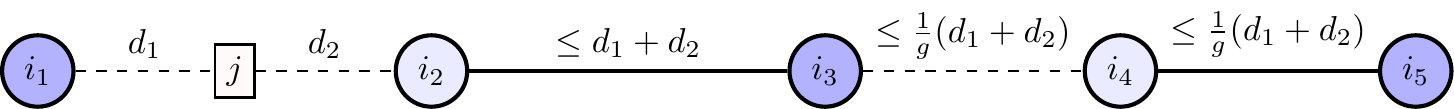}\end{center}

Now we know that if $i_4$ is closed, then $i_5$ must be open. We also know that $i_2$ and $i_3$ will always be closed. In the case that $i_1\ne i_3$ (which is shown above), we will try connecting, in order, to $i_1$, $i_4$, and $i_5$, connecting to the first one which is open. This yields the following bound:
\begin{align}
c_{145}(j):=& Pr[i_1]d_1+Pr[\bar i_1 i_4]\left(d_1+2d_2+\frac1g (d_1+d_2)\right)
+Pr[\bar i_1 \bar i_4]\left(d_1+2d_2+\frac2g(d_1+d_2)\right)\nonumber\\
=&d_1+Pr[\bar i_1]\left(2d_2+\frac1g (d_1+d_2)\right)+Pr[\bar i_1 \bar i_4]\frac1g(d_1+d_2).\label{cost:145}
\end{align}
For the case that $i_1=i_3$, we have the below situation:
\begin{center}
\includegraphics{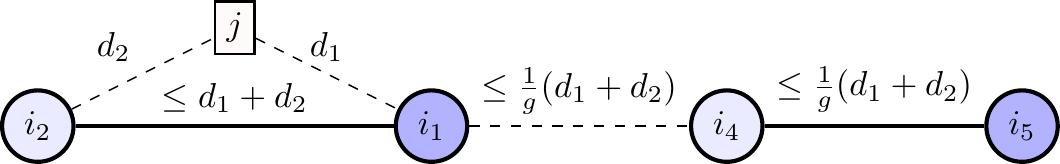}
\end{center}
Here $i_1$ and $i_2$ are always closed, so we try connecting first to $i_4$, then to $i_5$, giving the following bound:
\begin{align*}
c_{45}(j):=&d_1+\frac1g(d_1+d_2)+Pr[\bar i_4]\frac1g(d_1+d_2).
\end{align*}
In this case $Pr[\bar i_1]=1$. Also since $i_1\in\C_{1A}$ and $i_4\in\L_2$ are in different types of stars, they are rounded independently, so $Pr[\bar i_1\bar i_4]=Pr[\bar i_1]Pr[\bar i_4]=Pr[\bar i_4]$. Thus, $c_{45}(j)\le c_{145}(j)$, and the claim still holds.
\end{proof}

\begin{proof}(Lemma \ref{lemma:120})
In this case $i_1\in\C_{1B}$. Let $i_0$ be the leaf attached to $i_1$. Again, we know that if $i_0$ is closed, $i_1$ will be open. Recall that by definition $g_i = \frac{d(i, i')}{\min_{j \in \L_2} d(i, j)}$, where $i$ and $i'$ are the center and leaf, respectively of a 1-star. Applying this to $i_1$ and $i_0$, we have
\[ d(i_1,i_0) = g_{i_1}\min_{i\in\L_2}d(i_1,i)\le g
\cdot d(i_1,i_2) \le g(d_1+d_2). \]
\begin{center}
\includegraphics{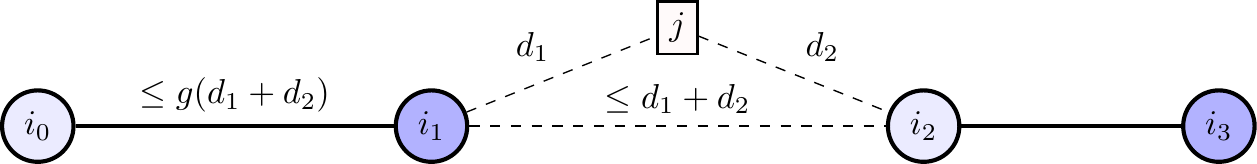}
\end{center}
Now we may try connecting in order $i_2$, $i_1$, $i_0$, or alternatively, in order $i_1$, $i_2$, $i_0$, yielding the following bounds:
\begin{align}
c_{210}(j)&:= Pr[i_2]d_2+Pr[i_1\bar i_2]d_1+Pr[\bar i_1\bar i_2](d_1+g(d_1+d_2))\nonumber\\
&=d_2+Pr[\bar i_2](d_1-d_2)+Pr[\bar i_1\bar i_2]g(d_1+d_2)\label{cost:210}\\
c_{120}(j)&:= Pr[i_1]d_1+Pr[\bar i_1 i_2]d_2+Pr[\bar i_1\bar i_2](d_1+g(d_1+d_2))\nonumber\\
&=d_1+Pr[\bar i_1](d_2-d_1)+Pr[\bar i_1\bar i_2](d_1-d_2+g(d_1+d_2)).\label{cost:120}
\end{align}
Note that by definition of $i_2(j)$, we can say $d_2=d(j,i_2)\le d(j,i_0)\le d_1+g(d_1+d_2)$, which implies $(d_1-d_2+g(d_1+d_2))\ge0$, a fact that will be used later.
\end{proof}

\section{Interval relaxation: Bounding the NLP}
\label{apdx:interval}

The non-linear program does not appear to admit a simple method of solving. We may find a local maximum, but as the system is not concave, we have no guarantee of global optimality. However, we observe that by fixing a small number of variables, the remaining system becomes linear and may be solved exactly. Using this fact together with an interval arithmetic approach \cite{zwick}, we systematically prove an upper bound of $1.3371$ on the system.

Let $V=\{X,1\}\bigcup_{Z\in\Z}\{D_1^Z,D_2^Z\}$. Then we may express each constraint $C_j$ in the following form, where $f_{x,j}$ is a function of several variables:
\[C_j:=\quad\sum_{x\in V} f_{x,j}(b,r_D,g,s_0)x \ge 0.\]

For each $(x,j)$, define the constant $c_{x,j}:=\max_{(b,r_D,g,s_0)\in I}f_{x,j}(b,r_D,g,s_0)$ to be the maximum value of each function over the interval $I$. And let 
\[C'_j:=\quad \sum_{x\in V} c_{x,j} x \ge 0.\]
Since all variables in $V$ are nonnegative, we may relax the program by replacing each constraint $C_j$ in the original program with constraint $C'_j$. The new, relaxed program is linear and may thus be solved efficiently. The solution is an upper bound on the value of the original program over interval $I$. The relaxed bound may be rather loose, since each term is maximized independent of the others. However, for sufficiently small intervals, the relaxation can approximate the original program to any desired precision. 
(For intervals in which $g$ may be 0 or infinitely large, we may get terms of the form $\frac10$ in the relaxed equations. For these cases, we remove any algorithm which has these terms from the program.)

Our implementation starts with several large intervals and calculates an upper bound with the above method. Any bound which is larger than the specified goal is divided into 16 subintervals (dividing in half for each variable), and the program is run recursively on the new intervals.

To speed up the search, we made several modifications to the program as stated. First, we only used the $P/N$ class division for a few of the client classes, using the bounds $c_{213}$ and $c_{210}$ for the remaining classes. This is a valid relaxation of the system which significantly reduces the number of variables, but does not appear to make the solution any worse. Second, we added a constraint that the cost must be less than that the simple formula given by Li and Svensson (relaxed over the interval); this is equivalent to the bound we describe in the LP for the cost of $A_9$, but gives a tighter relaxation in this simpler form. Third, when relaxing (\ref{constr:cost17}) and (\ref{constr:cost8}), we grouped terms of the form $D_1^{P(X,Y)}- D_2^{P(X,Y)}$ before relaxing the coefficients. (We know this value is positive by definition of the class.) This gives a tighter relaxation in many cases.

Using this approach we obtained an upper bound of 1.3371. The calculation was implemented in Mathematica. It examined around 8 million intervals, and took around 7 hours on an Intel Core i7 2.9GHz machine.

\subsection{Tight example}
The following is a solution that obtains $1.3370$. (There are many possible solutions, as there is some degree of freedom among some of the D-type variables.) This shows that our above bound on the nonlinear program is tight.
\begin{align*}
& &D_1^{P(1_B,1_B)}&=0.287\\
X&=1.3370&D_1^{P(1_A,2)}&=0.221\\
b&=0.645&D_1^{P(1_B,2)}&=0.847\\
r_D&=0.497&D_1^{P(2,1_A)}&=0.125\\
g&=0.646&D_2^{P(1_A,2)}&=0.221\\
s_0&=1&D_2^{P(1_B,2)}&=0.404\\
&&D_2^{P(2,1_A)}&=0.111
\end{align*}

\section{Proofs: Algorithms for edge cases}
We will need these two facts:
\begin{claim} $r_2 \leq 1/s_0 .$
\label{cl:r2}
\end{claim}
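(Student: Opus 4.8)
\textbf{Proof proposal for Claim~\ref{cl:r2} ($r_2 \leq 1/s_0$).}
The plan is to reduce the inequality to the elementary counting fact that every $2$-star has at least two leaves. First I would unfold the definition $s_0 = 1/(1+r_0)$, so that $1/s_0 = 1 + r_0$, and recall that $r_0 = |\C_0|/\Delta_F$ and $r_2 = |\C_2|/\Delta_F$ with $\Delta_F > 0$ (we have assumed $\Delta_F > 0$, since otherwise $\F_2$ itself is a valid solution). Dividing through by $\Delta_F$, the claim $r_2 \le 1/s_0$ is therefore equivalent to the purely combinatorial statement
\[ |\C_2| \;\le\; \Delta_F + |\C_0|. \]

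Next I would rewrite $\Delta_F + |\C_0|$ using the star decomposition. Since $\F_1 = \C_0 \cup \C_1 \cup \C_2$ (disjointly) and $\F_2 = \L_1 \cup \L_2$ (disjointly), with $|\L_1| = |\C_1|$ because each $1$-star has exactly one leaf, we get $\Delta_F = |\F_2| - |\F_1| = |\L_2| - |\C_0| - |\C_2|$ (this identity is already noted in the excerpt). Hence $\Delta_F + |\C_0| = |\L_2| - |\C_2|$, and the desired inequality becomes $|\C_2| \le |\L_2| - |\C_2|$, i.e.\ $2|\C_2| \le |\L_2|$.

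Finally, $2|\C_2| \le |\L_2|$ holds because, by definition, every star in $\T_2$ has at least two leaves, so $|\L_2| = \sum_{i \in \C_2} |\S_i| \ge 2|\C_2|$. Chaining the equivalences back up gives $r_2 = |\C_2|/\Delta_F \le (\Delta_F + |\C_0|)/\Delta_F = 1 + r_0 = 1/s_0$, as claimed. I do not expect any real obstacle here; the only points requiring a moment's care are that $\Delta_F > 0$ (so the division is legitimate) and that the cardinalities of $\F_1$ and $\F_2$ are correctly bookkept through the $0$-/$1$-/$2$-star partition.
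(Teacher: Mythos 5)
Your proof is correct and follows essentially the same route as the paper's: both reduce the claim to $2|\C_2|\le|\L_2|$ (each $2$-star has at least two leaves) and use the identity $\Delta_F=|\L_2|-|\C_0|-|\C_2|$ to conclude $|\C_2|\le\Delta_F+|\C_0|$, hence $r_2\le 1+r_0=1/s_0$. Your write-up is just a more explicit unfolding of the same argument.
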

\begin{proof}
By definition of 2-stars, we have $|\C_2| \leq |\L_2| / 2$ which implies $|\C_2| \leq (|\C_1|+|\L_2|) - (|\C_0|+|\C_1|+|\C_2|) + |\C_0| = \Delta_F + |\C_0|$. Thus, $r_2 \leq 1+r_0 = 1/s_0$.
\end{proof}

\begin{claim} $\frac{|\L_2|}{\Delta_F} \leq \frac{2}{s_0} .$
\label{cl:r22}
\end{claim}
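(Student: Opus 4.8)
The plan is to reduce the inequality, via the definition $s_0 = 1/(1+r_0)$ with $r_0 = |\C_0|/\Delta_F$, to the elementary fact that every $2$-star has at least two leaves. First I would record the counting identity already used in the proof of Claim~\ref{cl:r2}: since $\F_2 = \L_1 \cup \L_2$ with $|\L_1| = |\C_1|$ (each $1$-star has exactly one leaf) and $\F_1 = \C_0 \cup \C_1 \cup \C_2$, we get
\[
\Delta_F = |\F_2| - |\F_1| = |\L_1| + |\L_2| - |\C_0| - |\C_1| - |\C_2| = |\L_2| - |\C_0| - |\C_2|.
\]
Next I would rewrite the target. From $1/s_0 = 1 + r_0 = 1 + |\C_0|/\Delta_F$ we have $\frac{2}{s_0}\,\Delta_F = 2\Delta_F + 2|\C_0|$, so the claimed bound $|\L_2| \le \frac{2}{s_0}\Delta_F$ is equivalent (note $\Delta_F > 0$ by assumption) to $|\L_2| \le 2\Delta_F + 2|\C_0|$.

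Substituting the identity $\Delta_F = |\L_2| - |\C_0| - |\C_2|$ into the right-hand side gives $2\Delta_F + 2|\C_0| = 2|\L_2| - 2|\C_2|$, so the inequality to prove becomes simply $|\L_2| \le 2|\L_2| - 2|\C_2|$, i.e., $2|\C_2| \le |\L_2|$. This last inequality holds by definition of $\T_2$: each star centered in $\C_2$ has at least two leaves, and the leaf sets of distinct stars are disjoint, so $|\L_2| = \sum_{i \in \C_2} |\S_i| \ge 2|\C_2|$. Reversing the chain of equivalences completes the proof.

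There is essentially no obstacle here; the only thing to be careful about is using the correct expression for $\Delta_F$ in terms of the star-partition cardinalities (which is exactly the identity appearing in the proof of Claim~\ref{cl:r2}, namely $\Delta_F = |\L_2| - |\C_0| - |\C_2|$) and invoking $\Delta_F > 0$ to divide through. I would present it as a two-line computation rather than spelling out all the equivalences.
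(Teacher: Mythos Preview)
Your proof is correct and is essentially the same as the paper's: both use the identity $\Delta_F = |\L_2| - |\C_0| - |\C_2|$ together with $2|\C_2| \le |\L_2|$ (from each $2$-star having at least two leaves) and the definition of $s_0$. The only cosmetic difference is that the paper argues forward (from $|\L_2|/2 \le |\L_2| - |\C_2| = \Delta_F + |\C_0|$ divide by $\Delta_F$), whereas you unwind the target inequality backward to the same elementary fact.
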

\begin{proof}
Since $|\L_2|/2 \leq  |\L_2| - |\C_2|  = \Delta_F + |\C_0|  $, we have $ \frac{|\L_2|}{2\Delta_F}  \leq 1 + r_0 = 1/s_0$.

\end{proof}

\begin{proof} (Lemma \ref{lem:lisven}.)
Basically, we just return the better solutions between $\F_1$ and the one produced by $\A' = \A(a,a,b,a,b,a,b)$. As mentioned before, $\A'$ and Li-Svensson's rounding algorithm are essentially the same, and output a solution with the same upper-bound on the expected cost:
$$ (1+\eta)(ad_1 + b(1+2a)d_2),$$
The main difference is that $\A'$ only opens at most $O(\log(1/\eta))$ (instead of $O(1/\eta)$) extra facilities.

As in \cite{li_svensson}, we need to be careful when $a$ or $b$ is close to $0$ because the number of extra  facilities opened by $\A'$ is roughly $\frac{16}{3 \beta^2}\log_{1+\beta}( 1/( a \cdot \eta) )$, where $\beta = \min\{a,b\}$.  We consider two corner cases:
\begin{itemize}
	\item If $0 \leq b \leq 1/4$, we can just return $\F_1$ as our solution. Note that $|\F_1|\leq k$ and the approximation ratio is $\frac{d_1}{ad_1+bd_2} \leq \frac{1}{a} = \frac{1}{1-b} \leq 4/3$. 
	
	\item If $b \geq 5/6$, we can use the knapsack algorithm described in \cite{li_svensson}, which only opens at most $k+2$ facilities, to get a $4/3$-approximation algorithm. Note that the approximation ratio of this algorithm is bounded by $1 + 2a \leq 4/3$ as $a \leq 1/6$. 
	
	\item We claim that the above algorithm gives an approximation ratio of $1.337$ for all remaining cases. Note that the cost of this algorithm is at most $(1+\eta)\min\{d_1, ad_1+b(1+2a)d_2 \}$. Thus, it suffices to bound the ratio
	\begin{align*}
	f := \frac{\min\{d_1, ad_1+b(1+2a)d_2 \}}{ad_1+bd_2} &= \min  \left\{ \frac{1}{1-b+br_D}, \frac{1-b + b(1+2(1-b))r_D}{1-b+br_D}  \right\} .
\end{align*}
	Note that the right-hand-side is a function of $b$ and $r_D$. When $b \in [1/4, 0.508]$, $b \in [3/4, 5/6]$, $r_D \leq 19/40$, or $r_D \geq 2/3$, that function is at most $1.337$ by elementary calculus. 
	
Observe that
\begin{align*}
	\frac{\partial}{\partial r_D} \left(\frac{1}{1-b+br_D} \right) = -\frac{b}{(1-b+br_D)^2 } \leq 0,
\end{align*}
and
\begin{align*}
	\frac{\partial}{\partial r_D} \left(\frac{1 - 2 b^2 r_D + b (-1 + 3 r_D)}{1-b+br_D} \right) = \frac{2(-1 + b)^2 b}{(1 + b (-1 + r_D))^2} \geq 0.
\end{align*}
It means that, for a fixed value of $b$, the former is a decreasing function of $r_D$ and the latter is an increasing function of $r_D$. Therefore,
\begin{itemize}
	\item Case $b \in [1/4, 0.508]$ or $b \in [3/4, 5/6]$: The maximum of $f$ will be achieved at some point such that
\begin{align*}
	\frac{1}{1-b+br_D} &= \frac{1 - 2 b^2 r_D + b (-1 + 3 r_D)}{1-b+br_D},
\end{align*}
or equivalently,
$$ r_D = \frac{1}{3-2b}.$$
Then, in this case,
	\begin{align*}
		 f \leq \max_{b \in [1/4, 0.508] \cup [3/4, 5/6], r_D = \frac{1}{3-2b}} \frac{1}{1-b+b r_D} = 1.33681.
	\end{align*}

	\item Case $b \in [0.508, 3/4]$ and $r_D \leq 19/40$: Note that $r_D \leq \frac{1}{3-2b}$ which implies that $\frac{1}{1-b+br_D} \geq \frac{1 - 2 b^2 r_D + b (-1 + 3 r_D)}{1-b+br_D}$. Since the RHS is increasing in $r_D$, we have
	$$ f \leq \max_{b \in [0.508, 3/4], r_D = 19/40} \frac{1 - 2 b^2 r_D + b (-1 + 3 r_D)}{1-b+br_D} = 1.33294.$$
	
	\item Case $b \in [0.508, 3/4]$ and $r_D \geq 2/3$: Note that $r_D \geq \frac{1}{3-2b}$ which implies that $\frac{1}{1-b+br_D} \leq \frac{1 - 2 b^2 r_D + b (-1 + 3 r_D)}{1-b+br_D}$. Since the LHS is decreasing in $r_D$, we have
		$$ f \leq \max_{b \in [0.508, 3/4], r_D = 2/3} \frac{1}{1-b+br_D} = \frac{4}{3} \leq 1.33334. $$

\end{itemize}

\end{itemize}
\end{proof}

\begin{proof} (Lemma \ref{lem:s0}.)
We show that the better solution returned from a set of 3 algorithms will be within a factor 1.3371 of the optimal solution. The purpose of this case is to bound $s_0$ away from 0, so that $p_2$ and $q_2$ in our main case are bounded away from zero. The first algorithm is a knapsack algorithm in which we open all facilities in $\L_1$ and $\C_2$. After that we almost greedily choose some of the 2-stars, close their centers, and open all their leaves. This algorithm does very well if $s_0$ is small. In the second algorithm, we open $\F_1$ and some additional facilities in $\L_2$ which maximize the saving. In particular, we use the following algorithms:
\begin{itemize}
	\item Algorithm 1: Open all facilities in $\L_1$ and $\C_2$. For each client $j$, if $i_2(j) \in \L_1$, connect $j$ to $i_2(j)$. Otherwise $i_2(j)$ is a leaf of a 2-star, let $i_3$ be the center of this star and connect $j$ to $i_3$. Thus, the total connection cost of the current solution is upper-bounded by
	$$ \sum_{j:i_2(j) \in \L_1} d_2(j) + \sum_{j:i_2(j) \in \L_2} (d_1(j) + 2d_2(j)) = D_2 + \sum_{j:i_2(j) \in \L_2} (d_1(j) + d_2(j)).$$
	Now, for each $i \in \C_2$, if we close facility $i$ and open all of its leaves, the total cost will be reduced by $\sum_{j \in \delta(\S_i)}(d_1(j) + d_2(j))$, where $\delta(\S_i)$ is the set of clients $j$ having $i_2(j) \in \S_i$, and we also open additional $|\S_i|-1$ facilities. This motivates us to solve the following knapsack LP, just as in \cite{li_svensson}:
	
  \begin{alignat*}{2}
    \text{maximize }   &  \sum_{i \in \C_2} x_i \left(\sum_{j \in \delta(\S_i)} (d_1(j) + d_2(j))  \right) \\
    \text{subject to } &  \sum_{i \in \C_2} x_i(|\S_i|-1) \leq k - |\L_1| - |\C_2|      \\
                             & 0 \leq x_i \leq 1 \ \forall i \in \C_2
  \end{alignat*}
Note that a basic solution of the above LP only has at most 1 fractional value. Thus, we can easily obtain it by a greedy method. Let us call this fractional value $x_{i^*}$.  Now, for all $i \in \C_2$, if $x_i = 0$, we keep $i$ opened. If $x_i = 1$, we close $i$ and open all of its leaves. We also open $i^*$ and a subset of size $\lceil x_{i^*} |\S_{i^*}| \rceil$ of $\S_{i^*}$ uniformly at random. It is easy to see that the expected saved cost is at least the optimal value of the LP by doing so.

	\item Algorithm 2: Open all facilities in $\F_1$. Define the saving of a facility $i \in \L_2$ be $\sum_{j\in \delta(\S_i)} (d_1(j)-d_2(j))_+$. Sort all the facilities in $\L_2$ non-increasing by its saving. Open the first $\lceil  \frac{bs_0}{2} |\L_2| \rceil$ facilities in this order.

\end{itemize}
Analysis: 
\begin{itemize}
	\item The two algorithms only open $k+2$ facilities. In the first algorithm, we claim that at most $k+2$ facilities will be opened. The first constraint of the LP guarantee that a fractional solution $x$ will open at most $k$ facilities. The two extra facilities come from the fact that we open $i^*$ and take the ceiling of $x_{i^*} |\S_{i^*}|$. In the second algorithm, we open at most
	\begin{align*}
		|\F_1| + \frac{bs_0}{2} |\L_2| + 1 &\leq  |\F_1| + \frac{b|\L_2|}{2} \times \frac{2\Delta_F}{|\L_2|}  + 1 \\
			&=  |\F_1| + b\Delta_F + 1 = k + 1,
	\end{align*}
where the first inequality is due to $s_0 \leq \frac{2\Delta_F}{|\L_2|}$, by Claim \ref{cl:r22}.

	\item Now, we bound the cost of the first algorithm. Let $q$ be the maximum value such that the solution $x_i = q$ for all $i \in C_2$ is feasible to the knapsack LP. We solve for $q$ by requiring 
	\begin{align*}
		&\sum_{i \in \C_2}q(|\S_i|-1) \leq k - |\L_1| - |\C_2| \\
		&q(|\L_2| - |\C_2|) \leq k - |\L_1| - |\C_2| \\
		&q \leq \frac{k - |\L_1| - |\C_2|}{|\L_2| - |\C_2|} = \frac{|\F_1| + b\Delta_F - |\L_1| - |\C_2|}{|\L_2| - |\C_2|} = \frac{b\Delta_F + |\C_0|}{|\L_2| - |\C_2|}.
	\end{align*}
	Thus, we can set $q := \frac{b\Delta_F + |\C_0|}{|\L_2| - |\C_2|}$. Note that $|\L_2| - |\C_2| = |\C_0| + \Delta_F$, we have
	$$ q = \frac{b\Delta_F + |\C_0|}{ |\C_0| + \Delta_F} = \frac{b + r_0}{1 + r_0} = 1 - as_0.$$
	Since $x = q$ is a feasible solution, the saved cost is at least 
$$ \sum_{i \in \C_2} q \left(\sum_{j \in \delta(\S_i)} (d_1(j) + d_2(j))  \right) = (1 - as_0)\sum_{j:i_2(j) \in \L_2}(d_1(j)+d_2(j)). $$	
Therefore, we can upper-bound the cost by
\begin{align*}
 & D_2 + \sum_{j:i_2(j) \in \L_2} (d_1(j) + d_2(j)) -   (1 - as_0)\sum_{j:i_2(j) \in \L_2}(d_1(j)+d_2(j)) \\
 & = D_2 + as_0\sum_{j:i_2(j) \in \L_2}(d_1(j)+d_2(j)).
\end{align*}
\item Recall that the sets $\delta(\S_i)$ are pairwise disjoint. By a simple average argument, the cost of the second algorithm is upper-bounded by
   	\begin{align*}
   		D_1 - \frac{bs_0}{2}\sum_{i \in \L_2}\sum_{j\in\delta(\S_i)} (d_1(j)-d_2(j))_+ \leq D_1 - \frac{bs_0}{2}\left(\sum_{j:i_2(j)\in\L_2}(d_1(j)-d_2(j)) \right)_+.
   	\end{align*}	
\end{itemize}
We run these two algorithms along with $\A' = \A(a,a,b,a,b,a,b)$, and use the best solution of the three. (Again, note that $\beta = \min\{a,b\} \geq 1/4$ and $a \geq 1/4$ in this case.) We can easily formulate an NLP to derive the approximation ratio as discussed in subsection \ref{sec:cost_analysis}. Using an interval search as before over the interval $0\le s_0 \leq 5/6,b\in[0.508,3/4], r_D\in[19/40,2/3]$, we get an upper-bound of $1.3371$ on the factor-revealing NLP. Note that the value of $g$ is irrelevant to any of these algorithms, so our intervals are over only $b$, $r_D$ and $s_0$. This interval search runs in seconds and examines about 6400 intervals.

\end{proof}

\begin{proof} (Lemma \ref{lem:r1})
We will run the set of 10 algorithms shown in Table \ref{tab:r1}. Obviously, when {\sc Round2Stars} is called (only algorithms $\A_1', A_2',\A_7',\A_8'$), we have $\beta = \min\{q_2,1-q_2\} \geq 5/24$ and $p_2 \geq 5/24$, which are achieved at $p_2 = as_0$, $a=1/4$, $s_0 = 5/6$. Thus, it is easy to check that $\A_1'$, $A_2'$, and $A_4',\ldots, \A_8'$ only open $k+O(\log(1/\eta))$ facilities. For $\A_9'$ and $\A_{10}'$, using the same argument as in Lemma \ref{lem:kfacilities} and Lemma \ref{lem:singleA}, we open at most
\begin{align*}
     |\F_1| + b|\C_1| + O(\log(1/\eta)) &\leq     |\F_1| + (b/r_1)|\C_1| 	+ O(\log(1/\eta))\\
	&= |\F_1| + b\Delta_F + O(\log(1/\eta)) = k + O(\log(1/\eta)),
\end{align*}
where the first inequality is due to the fact that $r_1 \leq 1$ in the interval of interest.

Recall that $s_0 = \frac{1}{1+r_0}=\frac{1}{1+|\C_0|/\Delta_F} = \frac{|\L_2|-|\C_2|-|\C_0|}{|\L_2|-|\C_2|} = 1-\frac{|\C_0|}{|\L_2|-|\C_2|}$. For $\A_3'$, we open at most
\begin{align*}
	|\C_1| + |\C_2| + \frac{1-as_0}{2}|\L_2| + O(\log(1/\eta)) &=  |\C_1| + |\C_2| + \frac{1-a + a\frac{|\C_0|}{|\L_2|-|\C_2|}}{2}|\L_2| + O(\log(1/\eta)) \\
		&= |\C_1| + |\C_2| + \frac{b}{2}|\L_2| + a|\C_0| \frac{|\L_2|}{2(|\L_2|-|\C_2|)} + O(\log(1/\eta)) \\
		&\leq |\C_1| + |\C_2| + \frac{b}{2}|\L_2| + a|\C_0| + O(\log(1/\eta)) \\
		&= |\C_1| + a|\C_2| + b|\C_2| + b|\L_2|-\frac{b}{2}|\L_2| + a|\C_0| + O(\log(1/\eta)) \\
		&\leq |\C_1| + a|\C_2|  + b|\L_2| + a|\C_0| + O(\log(1/\eta)) = k + O(\log(1/\eta)).
\end{align*} 
The approximation ratio will be bounded by an NLP as in our main case. It is simpler, as we need not consider the distinction between $\T_{1A}$ and $T_{1B}$, or the value of $g$. We do an interval search and get an upper-bound of  1.337 when $b \in [0.508,3/4], r_D \in [19/40,2/3]$, and $ s_0 \in [5/6,1]$.
 
\begin{table}[h]
\begin{center}
\begin{tabular}{ c | c | c | c | c | c | c | c }
Algorithms & $p_0$ & $p_{1A}$& $q_{1A}$& $p_{1B}$& $q_{1B}$& $p_2$& $q_2$ \\
  \hline                        
$\A_1'$ & $0$ & $0$& $1$& $0$& $1$& $as_0$& $1-as_0$ \\
$\A_2'$ & $0$ & $1$& $0$& $1$& $0$& $as_0$& $1-as_0$ \\
$\A_3'$ & $0$ & $0$& $1$& $0$& $1$& $1$& $\frac{1-as_0}{2}$ \\
$\A_4'$ & $0$ & $1$& $0$& $1$& $0$& $1$& $\frac{1-as_0}{2}$ \\

$\A_5'$ & $1$ & $0$& $1$& $0$& $1$& $1$& $\frac{bs_0}{2}$ \\
$\A_6'$ & $1$ & $1$& $0$& $1$& $0$& $1$& $\frac{bs_0}{2}$ \\
$\A_7'$ & $1$ & $0$& $1$& $0$& $1$& $1 - bs_0$& $bs_0$ \\
$\A_8'$ & $1$ & $1$& $0$& $1$& $0$& $1 - bs_0$& $bs_0$ \\

$\A_9'$ & $1$ & $1$& $b$& $1$& $b$& $1$& $0$ \\
$\A_{10}'$ & $1$ & $b$& $1$& $b$& $1$& $1$& $0$ \\

  \hline                        
\end{tabular}
\caption{Calls of $\A$ when $r_1 \leq 1$.}
\label{tab:r1}
\end{center}
\end{table}
				  
\end{proof}

\section{Details: Dichotomy result}
\label{sec:dichotomy}
\subsection{Case 1}
For each $i \in \C_2''$, let $X_i$ be an indicator of the event that we open $\S_i$ and close $i$. (If $X_i = 0$, we close $S_i$ and open $i$.) The idea is to set each $X_i = 1$ independently with probability $(1-\eta)q_2$.

\begin{lemma} With probability at least $1 - \exp\left( -\frac{\eta^3(1 - \eta)\beta f(1/\eta) }{ 3c_0 } \right) $, the algorithm opens at most $p_2|\C_2''| + q_2|\L_2''|$ facilities which are part of small stars. \label{lem:prob_violation_small_stars}
\end{lemma}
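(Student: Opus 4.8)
The plan is to open each small star $\S_i$ (closing center $i$) independently with probability $(1-\eta)q_2$, so that $\E[X_i] = (1-\eta)q_2$. The total number of facilities opened among small stars is then $\sum_{i \in \C_2''} \left( X_i |\S_i| + (1-X_i) \right)$. Taking expectations, this equals $\sum_{i \in \C_2''} \left( (1-\eta)q_2 (|\S_i| - 1) + 1 \right) = (1-\eta)q_2 |\L_2''| + (1 - (1-\eta)q_2)|\C_2''|$. First I would verify that this expected value is strictly below the target budget $p_2 |\C_2''| + q_2|\L_2''|$; indeed, since $p_2 + q_2 = 1$ (this is the regime where {\sc Round2Stars} would have been called, so $p_2 = 1 - q_2$), the target is $q_2|\L_2''| + p_2|\C_2''|$, and the gap between target and expectation is $\eta q_2 (|\L_2''| - |\C_2''|) = \eta q_2 \sum_{i \in \C_2''}(|\S_i| - 1) \geq \eta q_2 |\C_2''|$ (each small star has $|\S_i| \geq 2$), which is where the slack comes from.

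**The concentration argument.** The quantity $\sum_i \left(X_i|\S_i| + (1-X_i)\right) = |\C_2''| + \sum_i X_i(|\S_i| - 1)$ is a sum of independent (but non-identically-weighted) Bernoulli terms $X_i(|\S_i|-1)$, each bounded by $|\S_i| - 1 \leq c_0/\eta$ by the definition of a small star. The natural tool is a weighted/scaled Chernoff bound for bounded independent variables: writing $Y_i = \frac{\eta}{c_0} X_i(|\S_i|-1) \in [0,1]$, the $Y_i$ are independent and bounded in $[0,1]$, with $\mu := \E[\sum_i Y_i] = \frac{\eta}{c_0}(1-\eta)q_2(|\L_2''| - |\C_2''|)$. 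A multiplicative Chernoff bound gives $\Pr[\sum_i Y_i \geq (1+\delta)\mu] \leq \exp(-\delta^2 \mu / 3)$. I would choose $\delta$ so that $(1+\delta)\mu$ corresponds exactly to the budget slack, i.e. $(1+\delta)(1-\eta)q_2(|\L_2''|-|\C_2''|) = q_2(|\L_2''| - |\C_2''|)$, which forces $1 + \delta = 1/(1-\eta)$, so $\delta = \eta/(1-\eta) \geq \eta$. Then the failure probability is at most $\exp(-\eta^2 \mu / 3)$. To turn this into the stated bound, lower-bound $\mu$: using $\beta \leq q_2$ (in fact $\beta = \min\{q_2, 1-q_2\} \leq q_2$), $|\L_2''| - |\C_2''| = \sum_i(|\S_i|-1) \geq |\C_2''| > f(1/\eta)$ (by the Case 1 hypothesis), and $1 - \eta$ as is, we get $\mu \geq \frac{\eta}{c_0}(1-\eta)\beta f(1/\eta)$, so $\eta^2 \mu / 3 \geq \frac{\eta^3(1-\eta)\beta f(1/\eta)}{3c_0}$, matching the claim.

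**The main obstacle.** The routine part is the Chernoff computation; the part requiring care is matching the constants precisely, in particular being careful about which direction each inequality goes (we want an upper bound on opened facilities, so we need the upper tail), and correctly identifying that the ``budget slack'' is $\eta q_2(|\L_2''|-|\C_2''|)$ rather than something involving $p_2$. One subtlety is whether $|\L_2''| - |\C_2''|$ or $|\C_2''|$ is the right quantity to lower-bound $\mu$ by — I would use $|\L_2''| - |\C_2''| \geq |\C_2''| > f(1/\eta)$, which is valid since every small star contributes at least one extra leaf. I should also double-check that the weighted Chernoff bound I invoke is genuinely applicable to sums of independent $[0,1]$-valued (not just $\{0,1\}$-valued) random variables — the standard Hoeffding-type or the $\exp(-\delta^2\mu/3)$ multiplicative form both apply here, so this is not a real difficulty, just a citation/statement choice. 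The remaining steps — concluding that with the stated probability the solution opens at most $p_2|\C_2''| + q_2|\L_2''|$ facilities among small stars — then follow immediately by negating the tail event.
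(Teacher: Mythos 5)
Your proposal is correct and follows essentially the same route as the paper's proof: the same decomposition into $|\C_2''| + \sum_i X_i(|\S_i|-1)$, the same rescaling by $\eta/c_0$ to get $[0,1]$-valued summands, the same multiplicative Chernoff bound with deviation parameter effectively $\eta$ (the paper writes $\Pr[Y > \mu/(1-\eta)] \le \Pr[Y > (1+\eta)\mu]$, which is your $\delta = \eta/(1-\eta) \ge \eta$), and the same lower bound on $\mu$ via $|\L_2''|-|\C_2''| \ge |\C_2''| > f(1/\eta)$ and $q_2 \ge \beta$. No gaps.
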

\begin{proof}
Recall that small stars have at most $c_0 / \eta$ leaves. The number of opened facilities which are part of small stars is 
$$	X = \sum_{i \in \C_2''} (X_i|\S_i| + (1-X_i)) = |\C_2''| + \sum_{i \in \C_2''} X_i(|\S_i| - 1) = |\C_2''| + \frac{c_0}{\eta} \sum_{i \in \C_2''} Y_i,
$$
where $Y_i = \frac{X_i(|\S_i| - 1)}{c_0/\eta}$. Note that $Y_i$'s take random values in $[0,1]$. The expected value of $Y = \sum_{i \in \C_2''} Y_i$ is
\begin{align*}
	\mu &:= \E\left[\sum_{i \in \C_2''} Y_i \right]	 \\
		&= \sum_{i \in \C_2''} \frac{\E[X_i](|\S_i| - 1)}{c_0/\eta} \\
		&= \frac{1 - \eta}{c_0/\eta} \sum_{i \in \C_2''} q_2(|\S_i| - 1) \\
		&= \frac{\eta(1-\eta)}{c_0}q_2(|\L_2''| - |\C_2''|) \\
		&\geq \frac{\eta(1-\eta)}{c_0}q_2 |\C_2''| \geq \frac{\eta(1-\eta)}{c_0}\beta f(1/\eta), 
\end{align*}
since each small star has at least 2 leaves. Using Chernoff's bound, we have
\begin{align*}
	\Pr\left[X > p_2|\C_2''| + q_2|\L_2''| \right] &= \Pr\left[|\C_2''| + \frac{c_0}{\eta} Y > |\C_2''| + q_2(|\L_2''| - |\C_2''|)\right] \\
	&= \Pr\left[ Y > \frac{\eta}{c_0} q_2(|\L_2''| - |\C_2''|)\right] \\
	&= \Pr\left[ Y > \frac{\mu}{1 - \eta} \right] \\
	&\leq \Pr\left[ Y > (1+\eta)\mu \right] \\
	&\leq \exp\left( -\frac{\eta^2}{3}\mu \right)\\
	&\leq \exp\left( -\frac{\eta^3(1 - \eta)\beta f(1/\eta) }{ 3c_0 } \right).
\end{align*}
\end{proof}
To bound the expected connection cost, we need the following lemma.

\begin{lemma} 
Let $i_1$ and $i_2$ be any facilities in $\F_1$ and $\F_2$, respectively. Let $X,Y\in\{0,1_A,1_B,2\}$ be the classes such that $i_1\in\C_X$ and $i_2\in\L_Y$. Then the following are true:
\begin{align}
Pr[\bar i_1]&\le 1-p_X\label{bound:pi1},\\ 
Pr[\bar i_2]&\le \left(1 + \frac{1 - \beta}{\beta} \eta \right)(1-q_Y)\label{bound:pi2},\\
Pr[\bar i_1 \bar i_2]&\le \left(1+\frac{1 - \beta}{\beta} \eta \right)(1-p_X)(1-q_Y).\label{bound:pi1i2}
\end{align}
\label{lem:probbound2}
\end{lemma}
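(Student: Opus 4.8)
The plan is to follow the proof of Lemma~\ref{lemma:probbound} almost verbatim: the only change in Case~1 is the way small $2$-stars are rounded, namely, for each center $i\in\C_2''$ the indicator $X_i=1$ (open all of $\S_i$, close $i$) is now set independently with probability $(1-\eta)q_2$, rather than by a call to {\sc DepRound}. All arguments concerning $\C_0,\C_{1A},\C_{1B},\L_{1A},\L_{1B}$ and the large $2$-stars ($\C_2'$, $\L_2'$) carry over unchanged, so (\ref{bound:pi1}) is proved exactly as (\ref{bound1:pi1}): for $X\in\{0,1_A,1_B\}$ the bound $\Pr[\bar i_1]\le 1-p_X$ follows from lines~1--3 of Algorithm~\ref{algo:A}; for $X=2$ either $i_1$ is the center of a large star and is always opened, or $i_1\in\C_2''$ and $\Pr[\bar i_1]=\Pr[X_{i_1}=1]=(1-\eta)q_2\le q_2=1-p_2$, using that $p_2+q_2=1$ in the relevant parameter sets.

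For (\ref{bound:pi2}) the only genuinely new case is a leaf $i_2$ of a small $2$-star. If $Y\in\{1_A,1_B\}$ then $\Pr[i_2]\ge q_Y$, and $1-q_Y\le\bigl(1+\tfrac{1-\beta}{\beta}\eta\bigr)(1-q_Y)$ since $\beta\le 1/2$ gives $\tfrac{1-\beta}{\beta}\ge 1$. If $i_2\in\L_2'$, the mass-shifting argument from Lemma~\ref{lemma:probbound} still applies and yields $\Pr[i_2]\ge q_2-p_2\eta$, so $\Pr[\bar i_2]\le(1-q_2)+p_2\eta=(1+\eta)(1-q_2)$, again within the claimed bound. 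If $i_2\in\L_2''$ with center $i_3\in\C_2''$, then $i_2$ is open exactly when $X_{i_3}=1$, so $\Pr[\bar i_2]=1-(1-\eta)q_2=(1-q_2)+\eta q_2$; the claim reduces to $\beta q_2\le(1-\beta)(1-q_2)$, which I verify by the two cases $\beta=q_2\ (\le 1/2)$, where it reads $q_2^2\le(1-q_2)^2$, and $\beta=1-q_2\ (\le 1/2)$, where it reads $q_2(1-q_2)\le q_2^2$.

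For (\ref{bound:pi1i2}): if $i_1$ and $i_2$ belong to stars of different classes, or to distinct stars of the same class, they are rounded independently and $\Pr[\bar i_1\bar i_2]=\Pr[\bar i_1]\Pr[\bar i_2]\le(1-p_X)\bigl(1+\tfrac{1-\beta}{\beta}\eta\bigr)(1-q_Y)$ by (\ref{bound:pi1}) and (\ref{bound:pi2}). If $i_1$ and $i_2$ lie in the same star (so $X=Y\in\{1_A,1_B,2\}$), the algorithm preserves the center-or-all-leaves property: for a $1$-star this is because $p_X+q_X\ge 1$ in every parameter set used (the exception where all of $\T_{1A}$ is closed gives $\Pr[\bar i_1\bar i_2]=1=(1-p_{1A})(1-q_{1A})$ since $p_{1A}=q_{1A}=0$); for a large $2$-star the center is always open; for a small $2$-star, $X_{i_1}=1$ opens all of $\S_{i_1}\ni i_2$ and closes $i_1$, while $X_{i_1}=0$ opens $i_1$. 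In each subcase $\Pr[\bar i_1\bar i_2]=0$, so (\ref{bound:pi1i2}) holds.

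The one place where care is needed is the small-$2$-star computation in (\ref{bound:pi2}): one must check that scaling the leaf-opening probability down by a factor $(1-\eta)$ inflates $\Pr[\bar i_2]$ by at most a factor $\bigl(1+\tfrac{1-\beta}{\beta}\eta\bigr)$ over the ideal $(1-q_Y)$. This is exactly where the fact that $\beta$ is bounded away from $0$ in Case~1 is used, and why the leading constant is $\tfrac{1-\beta}{\beta}$ rather than a universal one; everything else is a direct transcription of the proof of Lemma~\ref{lemma:probbound}.
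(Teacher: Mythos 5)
Your proof is correct and follows the paper's essentially verbatim: both reduce to verifying the three bounds only in the one place where the rounding changed (small $2$-stars now rounded independently with scaled probability $(1-\eta)q_2$, giving $\Pr[\bar i_1]=(1-\eta)q_2$ and $\Pr[\bar i_2]=1-(1-\eta)q_2$), deferring the remaining cases to Lemma~\ref{lemma:probbound}. One minor slip: when $\beta=1-q_2$, the required inequality $\beta q_2\le(1-\beta)(1-q_2)$ is actually an equality, not the $q_2(1-q_2)\le q_2^2$ you wrote; the latter also happens to hold since $q_2\ge 1/2$ in that case, so the conclusion is unaffected.
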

\begin{proof}
The proof is quite similar to that of lemma \ref{lemma:probbound}. 
\begin{itemize}
	\item Proof of (\ref{bound:pi1}): We only need to check the case $i_1 \in \C_2''$. It is clear that 
	$$\Pr[\bar{i_1}] = (1-\eta)q_2 = 1 - p_2 - \eta q_2 \leq 1 - p_2.$$
	\item Proof of (\ref{bound:pi2}): We only need to check the case $i_2 \in \L_2''$. It is clear that 
	$$\Pr[\bar{i_2}] = 1 - (1-\eta)q_2 \leq \left(1 + \frac{1 - \beta}{\beta} \eta\right)(1 - q_2), $$
since $q_2 \leq 1 - \beta$.
	\item Proof of (\ref{bound:pi1i2}): $i_1$ and $i_2$ are always opened independently
	$$ \Pr[\bar{i_1}\bar{i_2}] = \Pr[\bar{i_1}]\Pr[\bar{i_2}] \leq \left(1+\frac{1 - \beta}{\beta} \eta \right)(1-p_X)(1-q_Y).$$

\end{itemize}
\end{proof}

\begin{corollary} The expected connection cost of the solution returned by the algorithm is at most $1.337\cdot\left(1 + \frac{1 - \beta}{\beta} \eta\right)$ times the cost of the bipoint solution.
\end{corollary}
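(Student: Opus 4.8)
The plan is to run the cost analysis of the main case (Section~\ref{sec:cost_analysis}) essentially verbatim, with two changes: {\sc Round2Stars} is replaced by the independent small-star rounding of Case~1, and the multiplicative loss $(1+\eta)$ that appeared in Lemmas~\ref{lemma:probbound} and~\ref{lemma:algcost} is replaced by $(1+\frac{1-\beta}{\beta}\eta)$. First I would check that the Case~1 algorithm still has the ``center-or-leaves'' property on every $2$-star: for each small star $\S_i$ we open either all of $\S_i$ (when $X_i=1$) or just the center $i$ (when $X_i=0$), and for each large star the center is opened unconditionally; $\T_0$ and $\T_1$ are untouched. Since the proofs of Lemmas~\ref{lemma:123}, \ref{lemma:145}, and~\ref{lemma:120} use only this structural property of stars together with the triangle inequality, those per-client upper bounds ($c_{213}, c_{123}, c_{210}, c_{120}, c_{145}$) remain valid word-for-word here.

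Next I would invoke Lemma~\ref{lem:probbound2} in place of Lemma~\ref{lemma:probbound}: for $i_1\in\C_X$, $i_2\in\L_Y$ we have $\Pr[\bar i_1]\le 1-p_X$, $\Pr[\bar i_2]\le(1+\frac{1-\beta}{\beta}\eta)(1-q_Y)$, and $\Pr[\bar i_1\bar i_2]\le(1+\frac{1-\beta}{\beta}\eta)(1-p_X)(1-q_Y)$ — the last holding because distinct small stars, and stars of distinct classes, are now rounded independently, so the only slack comes from scaling leaf-probabilities by $(1-\eta)$. Substituting these three bounds into the cost lemmas and summing over the client classes $\{\J^Z\}_{Z\in\Z}$, exactly as in the proof of Lemma~\ref{lemma:algcost} (and again using that the coefficients of $\Pr[\bar i_1]$, $\Pr[\bar i_1\bar i_2]$, etc., are nonnegative by the definition of each class $P(X,Y)$, $N(X,Y)$, \dots), gives that the total expected connection cost of $\A_i$ is at most $(1+\frac{1-\beta}{\beta}\eta)\,COST_1(\A_i)$ for $i\in\{1,\dots,7,9\}$ and at most $(1+\frac{1-\beta}{\beta}\eta)\,COST_2(\A_8)$.

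Finally, since the algorithm returns the minimum-cost solution among $\A_1,\dots,\A_9$, the very same factor-revealing nonlinear program and interval-arithmetic bound used in the main case applies unchanged, so after normalizing by the bi-point cost the ratio $\min_i COST(\A_i)/(aD_1+bD_2)$ is at most $1.337$. Combining the last two steps yields the stated bound $1.337\cdot(1+\frac{1-\beta}{\beta}\eta)\cdot(aD_1+bD_2)$ on the expected connection cost. The only point requiring care — and the main thing to verify — is that nothing in the NLP derivation silently relied on {\sc Round2Stars} or on {\sc DepRound}'s single leftover fractional variable: the former role is entirely subsumed by the probability bounds of Lemma~\ref{lem:probbound2}, and in Case~1 there is no fractional variable at all, since small stars are rounded integrally and independently while large stars are handled by the ``shift a little mass to the center'' trick; hence the analysis of Section~\ref{sec:cost_analysis} transfers without further modification. (Feasibility of the returned solution is a separate matter, already handled by Lemma~\ref{lem:prob_violation_small_stars}.)
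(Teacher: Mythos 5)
Your proposal is correct and matches the paper's (largely implicit) argument: the paper states this corollary without proof, intending exactly the substitution you describe — the per-client cost bounds of Lemmas~\ref{lemma:123}, \ref{lemma:145}, and \ref{lemma:120} still apply because the center-or-leaves structure is preserved, and Lemma~\ref{lem:probbound2} replaces Lemma~\ref{lemma:probbound} so that the factor $(1+\eta)$ in the NLP analysis becomes $(1+\frac{1-\beta}{\beta}\eta)$. Your explicit checks (independence across distinct small stars, absence of a leftover fractional variable, separation of feasibility into Lemma~\ref{lem:prob_violation_small_stars}) are exactly the right details to verify.
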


\begin{theorem}
There exists a choice of $f = O(\poly(1/\eta))$ so that, when $|\C_2''| > f(1/\eta)$, the algorithm returns a solution opening at most 4 additional facilities and having connection cost at most $1.3371\cdot\left(1 + \frac{1 - \beta}{\beta} \eta\right)(1+\eta)$ times the cost of the bipoint solution with constant positive probability which is a function of $\eta$. 
\end{theorem}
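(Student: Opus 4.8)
The plan is to view the (modified) algorithm as producing a random solution whose facility count is governed by Lemma~\ref{lem:prob_violation_small_stars} and whose expected connection cost is governed by the preceding corollary, and then to extract one run that is simultaneously feasible and cheap using a union bound together with Markov's inequality. Recall the scheme being analyzed: we run $\A$ with each parameter tuple of the relevant case, but with {\sc Round2Stars} replaced by the mass-shifting trick on large stars (open $\C_2'$ and a uniformly random $\lceil q_2(|\L_2'|-|\C_2'|)\rceil$-subset of $\L_2'$) and by the independent rule on small stars (for each $i\in\C_2''$, set $X_i=1$ with probability $(1-\eta)q_2$, independently), returning the cheapest solution produced.

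\noindent First I would settle feasibility. Let $E_1$ be the event that a given run opens more than $p_2|\C_2''|+q_2|\L_2''|$ facilities lying in small stars; Lemma~\ref{lem:prob_violation_small_stars} bounds $\Pr[E_1]$ by $\exp\!\big(-\eta^3(1-\eta)\beta f(1/\eta)/(3c_0)\big)$. On $\overline{E_1}$ the small stars stay within their budget, and $\T_0\cup\T_1\cup\C_2'\cup\L_2'$ exceed their budgets by at most $4$ facilities in total (the large-star mass-shift accounts for one of these), so --- since the group budgets sum to $k$ exactly as in the accounting behind Lemma~\ref{lem:kfacilities} --- the run opens at most $k+4$ facilities. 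Next I would bound the cost: the preceding corollary (which is the main-case analysis of Lemmas~\ref{lemma:123}, \ref{lemma:145}, \ref{lemma:120} run through the factor-revealing NLP, but with Lemma~\ref{lem:probbound2} replacing Lemma~\ref{lemma:probbound}) yields $\E[\mathrm{cost}]\le 1.3371\,(1+\tfrac{1-\beta}{\beta}\eta)\,D$ with $D=aD_1+bD_2$, so Markov's inequality makes $E_2=\{\mathrm{cost}>(1+\eta)\cdot 1.3371\,(1+\tfrac{1-\beta}{\beta}\eta)\,D\}$ have probability at most $1/(1+\eta)$. A union bound over $E_1$ and $E_2$ (and over the constantly many parameter tuples, which only multiplies the $\Pr[E_1]$-term by a constant) then gives $\Pr[\overline{E_1}\cap\overline{E_2}]\ge \tfrac{\eta}{1+\eta}-O\!\big(\exp(-\eta^3(1-\eta)\beta f(1/\eta)/(3c_0))\big)$. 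Finally, since $\beta$ is bounded below by an absolute constant whenever {\sc Round2Stars}-type rounding is invoked, taking $f(1/\eta)=\Theta\!\big(\eta^{-3}\ln(1/\eta)\big)=O(\poly(1/\eta))$ forces the exponential term down to at most $\tfrac{\eta}{2(1+\eta)}$, so the success probability is at least $\tfrac{\eta}{2(1+\eta)}>0$ --- a positive quantity depending only on $\eta$ --- and on this event the returned solution is feasible with at most $4$ extra facilities and has cost at most $1.3371\,(1+\tfrac{1-\beta}{\beta}\eta)(1+\eta)\,D$.

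\noindent The step I expect to be the (mild) crux is reconciling the two failure probabilities: the Chernoff tail from the independent rounding only becomes negligible once $f(1/\eta)$ is moderately large, while Markov leaves only a $\Theta(\eta)$ slack, so one must check that the $f$ forced by $\exp(-\Theta(\eta^3\beta\, f))\le\Theta(\eta)$ remains polynomial in $1/\eta$ --- which it does precisely because that exponent is linear in $f(1/\eta)$ whereas $\ln(1/\eta)$ is polynomial, and because $\beta=\Theta(1)$ here. The remaining work is bookkeeping: verifying the ``$+4$'' facility count for the modified scheme, and verifying that the main-case NLP bound transfers unchanged once every $(1+\eta)$ arising from Lemma~\ref{lemma:probbound} is replaced by the $(1+\tfrac{1-\beta}{\beta}\eta)$ of Lemma~\ref{lem:probbound2}. (If a success probability independent of $\eta$ is wanted, repeat the whole procedure $O(1/\eta)$ times and take the best feasible output.)
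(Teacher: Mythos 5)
Your proof follows essentially the same route as the paper: bound the bad facility-count event via Lemma~\ref{lem:prob_violation_small_stars}, bound the bad cost event via Markov applied to the corollary following Lemma~\ref{lem:probbound2}, and union-bound the two, choosing $f=\Theta(\eta^{-3}\log(1/\eta))$ so that the Chernoff tail is subdominant to the $\Theta(\eta)$ slack left by Markov. The paper's specific choice $f(1/\eta)=\frac{3c_0}{\eta^3(1-\eta)\beta}\ln\eta^{-2}$ drives the tail below $\eta^2$ and yields success probability $\ge 1-\eta^2-\frac{1}{1+\eta}=\frac{\eta(1-\eta-\eta^2)}{1+\eta}$, which is $\Theta(\eta)$ like yours; the only point you make that the paper leaves implicit is accounting for the constantly many parameter tuples in the tail estimate.
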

\begin{proof}
Let $f(1/\eta) = \frac{3c_0}{\eta^3(1-\eta)\beta}\ln \eta^{-2}$. Also let $\mathcal{E}_1$ be the event ``the connection cost is at most  $1.3371\cdot\left(1 + \frac{1 - \beta}{\beta} \eta\right)(1+\eta)$ times the cost of the bipoint solution'' and $\mathcal{E}_2$  be the event ``the algorithm opens at most 4 extra facilities''. By Markov bound,
$$ \Pr[\mathcal{E}_1] \geq 1 - \frac{1}{1+\eta}.$$
Note that at most 4 additional facilities in $\C_2'\cup \L_2'\cup \T_0 \cup \T_{1A} \cup \T_{1B}$ could be opened. By the choice of $f$ and lemma \ref{lem:prob_violation_small_stars},
$$ \Pr[\mathcal{E}_2] \geq 1 - \eta^2.$$
Thus,
\begin{align*}
	\Pr[\mathcal{E}_1 \wedge \mathcal{E}_2] &= \Pr[\mathcal{E}_2] - \Pr[\bar{\mathcal{E}}_1 \wedge \mathcal{E}_2] \\
		&\geq \Pr[\mathcal{E}_2] - \Pr[\bar{\mathcal{E}}_1 ] \\
		&\geq (1 - \eta^2) - \frac{1}{1+\eta} = 1 - \frac{\eta^3 + \eta^2 + 1}{\eta + 1},
\end{align*}
which is strictly greater than zero when $\eta$ is small enough.
\end{proof}

\subsection{Case 2}
Assume $|\L_2'| + |\L_2''| \leq g(1/\eta)$ for some $g = O(\poly(1/\eta))$ to be determined, we simply open all the facilities in $\F_2$. Indeed the number of extra opened facilities is $O(\poly(1/\eta))$; however, the solution has cost equal to $\D_2 < a\D_1 + b\D_2$.

\subsection{Case 3}
If neither Case 1 nor Case 2 holds, we have $|\C_2''| \leq f(1/\eta)$ and $|\L_2'| + |\L_2''| \geq g(1/\eta)$. Since $|\L_2''| \leq \frac{c_0}{\eta}|\C_2''|$, we can bound the number of leaves of large stars
$$ |\L_2'| \geq g(1/\eta) - |\L_2''| \geq g(1/\eta) - \frac{c_0 f(1/\eta)}{\eta}.$$
The ``budget'' to open facilities in the class of large stars is 
$$ p_2|\L_2'| + q_2|\C_2'| = |\C_2'| + q_2(|\L_2'| - |\C_2'|). $$ 
Suppose that we want to open each leaf in $\L_2'$ with probability $q_2(1 - c_1\eta)$ for some constant $c_1 \geq 1$ and open all the centers in $\C_2'$. Then the remaining budget which can be used to open other facilities in $\C_2''$ is 
\begin{align*}
	R &= |\C_2'| + q_2(|\L_2'| - |\C_2'|) - (|\C_2'| + q_2(1 - c_1\eta)|\L_2'| )  \\
	&= c_1\eta q_2 |\L_2'| - q_2|\C_2'|. 
\end{align*}
To open all centers in $\C_2''$, we need to open at most $q_2|\C_2''| \leq q_2 f(1/\eta)$ additional facilities in $\C_2''$, apart from the usual $p_2|\C_2''|$ ones. Thus, it suffices to require that $R \geq q_2 f(1/\eta)$.
\begin{itemize}
	\item If $|\C_2'| \geq \frac{f(1/\eta)}{-1 + c_0c_1}$, recall that $|\L_2'| \geq \frac{c_0}{\eta}|\C_2'|$, we have
	\begin{align*}
		R &\geq c_0 c_1 q_2|\C_2'| - q_2|\C_2'| \\
			&\geq (c_0c_1 - 1)q_2\frac{f(1/\eta)}{-1 + c_0c_1} \\
			&= q_2 f(1/\eta).
	\end{align*}
	\item Else $|\C_2'| < \frac{f(1/\eta)}{-1 + c_0c_1}$, recall that $|\L_2'| \geq g(1/\eta) - \frac{c_0 f(1/\eta)}{\eta}$, we can lower-bound $R$ as follows.
	\begin{align*}
		R &\geq c_1\eta q_2 |\L_2'| - q_2 \frac{f(1/\eta)}{-1 + c_0c_1} \\
		 &\geq c_1\eta q_2 \left( g(1/\eta) - \frac{c_0 f(1/\eta)}{\eta} \right) - q_2 \frac{f(1/\eta)}{-1 + c_0c_1}. 
	\end{align*}
A simple calculation shows that we can choose
	$$g(1/\eta) = \frac{c_0^2 c_1}{\eta(c_0 c_1 - 1)} f(1/\eta),$$
then
	\begin{align*}
		R &\geq c_1\eta q_2 \left(\frac{c_0^2 c_1}{\eta(c_0 c_1 - 1)} f(1/\eta) -  \frac{c_0 f(1/\eta)}{\eta}\right) - q_2 \frac{f(1/\eta)}{-1 + c_0c_1} \\
			&= \frac{c_0^2c_1^2}{c_0c_1-1} q_2f(1/\eta) - c_0c_1q_2f(1/\eta) - \frac{1}{c_0c_1-1}q_2f(1/\eta) \\
			&= \left(\frac{c_0^2c_1^2}{c_0c_1-1} - c_0c_1 - \frac{1}{c_0c_1-1} \right)q_2f(1/\eta) \\
			&= \left(\frac{c_0^2c_1^2 - c_0c_1(c_0c_1-1) - 1}{c_0c_1-1} \right)q_2f(1/\eta) \\
			&= q_2f(1/\eta).
	\end{align*}
\end{itemize}

\begin{theorem} For any polynomial function $f(1/\eta)$, let 
$$ g(1/\eta) = \frac{2}{\eta} f(1/\eta). $$
If $|\C_2''| \leq f(1/\eta)$ and $|\L_2'|+|\L_2''| \geq g(1/\eta)$, then there is an algorithm which returns a solution opening at most $4$ additional facilities and having expected connection cost at most $1.337\cdot (1+\frac{1-\beta}{\beta}\eta)(1+\eta)$ times the cost of the bipoint solution. 
\end{theorem}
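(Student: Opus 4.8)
The plan is to combine a \emph{budget-reclamation} argument (the genuinely new ingredient, essentially carried out already in the discussion preceding the statement) with a \emph{cost} argument that runs in parallel to Case~1. First I would pin down the algorithm and the constants: choose the small-star threshold constant and the leaf-scaling constant (the $c_0,c_1$ of the preamble) so that the prescription $g(1/\eta)=\frac{c_0^2c_1}{\eta(c_0c_1-1)}f(1/\eta)$ collapses to $g(1/\eta)=\frac2\eta f(1/\eta)$, e.g.\ $c_0=1,\ c_1=2$. The algorithm rounds $\T_0,\T_{1A},\T_{1B}$ exactly as in $\A$ (reusing the Case-1 family of parameter vectors and returning the cheapest of the runs); for large $2$-stars it opens all of $\C_2'$ and a random subset of $\L_2'$ in which each leaf is opened with probability $q_2(1-c_1\eta)$, thereby spending $R:=c_1\eta q_2|\L_2'|-q_2|\C_2'|$ \emph{less} than the ``budget'' $p_2|\C_2'|+q_2|\L_2'|$; and for small $2$-stars it opens \emph{all} of $\C_2''$ deterministically together with a uniformly random $\lceil q_2|\L_2''|\rceil$-subset of $\L_2''$.

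Next I would verify the facility count, which is where the real work lies. Opening $\C_2''$ outright rather than the budgeted $p_2|\C_2''|$ incurs an excess of exactly $q_2|\C_2''|\le q_2 f(1/\eta)$ by the Case-3 hypothesis $|\C_2''|\le f(1/\eta)$, so it suffices to show $R\ge q_2 f(1/\eta)$. \textbf{This is the main obstacle}, and I would dispatch it by the two-way split already indicated: if $|\C_2'|\ge f(1/\eta)/(c_0c_1-1)$, then since large stars have more than $c_0/\eta$ leaves we have $|\L_2'|\ge(c_0/\eta)|\C_2'|$, so $R\ge(c_0c_1-1)q_2|\C_2'|\ge q_2 f(1/\eta)$; otherwise $|\C_2'|$ is small, and I use $|\L_2''|\le(c_0/\eta)|\C_2''|\le c_0 f(1/\eta)/\eta$ together with the Case-3 hypothesis $|\L_2'|+|\L_2''|\ge g(1/\eta)$ to get $|\L_2'|\ge g(1/\eta)-c_0 f(1/\eta)/\eta$, whereupon the chosen value of $g$ makes the estimate for $R$ bottom out at exactly $q_2 f(1/\eta)$. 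Granting this, every group (the $\T_0,\T_{1A},\T_{1B}$ groups, the single large-$2$-star group, and the small-$2$-star block) overshoots its budget by only an additive constant, the total stays $k+O(1)$, and a bookkeeping of the ceilings keeps the excess at most $4$, exactly as in Case~1.

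Finally I would redo the cost analysis with the modified opening probabilities, re-establishing the three bounds of Lemma~\ref{lem:probbound2}: $\Pr[\bar i_1]\le 1-p_X$ holds because every center is opened with at least its parameter probability (indeed $2$-star centers are now always open); $\Pr[\bar i_2]\le(1+\tfrac{1-\beta}{\beta}\eta)(1-q_Y)$ holds for large-$2$-star leaves since $\Pr[i_2]\ge q_2(1-c_1\eta)$ and $q_2/(1-q_2)\le(1-\beta)/\beta$, and for small-$2$-star leaves since $\Pr[i_2]\ge q_2$ by the ceiling; and $\Pr[\bar i_1\bar i_2]\le(1+\tfrac{1-\beta}{\beta}\eta)(1-p_X)(1-q_Y)$ holds because within a $2$-star the center-or-leaves property forces $\Pr[\bar i_1\bar i_2]=0$ while across distinct star-classes the two events are independent (the $1$-star cases are unchanged from Lemma~\ref{lemma:probbound}). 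Feeding these into the per-client bounds of Lemmas~\ref{lemma:123}, \ref{lemma:145}, \ref{lemma:120} and then into the factor-revealing program of Section~\ref{sec:bipoint-main-case} bounds the expected connection cost by $1.337\cdot(1+\tfrac{1-\beta}{\beta}\eta)(1+\eta)$ times the bi-point cost, the $(1+\eta)$ and the leading constant coming from the program as in the main case and the $(1+\tfrac{1-\beta}{\beta}\eta)$ factor from the inflated leaf-miss probabilities; since $\beta$ is bounded away from $0$ here by Claim~\ref{claim:bound_beta}, this tends to $1.337$ as $\eta\to0$, which is the claimed bound.
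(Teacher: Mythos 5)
Your plan follows the paper's own structure exactly: instantiate the $c_0,c_1$ of the preamble's budget-reclamation argument, check that the reclaimed budget $R=c_1\eta q_2|\L_2'|-q_2|\C_2'|$ covers the $q_2|\C_2''|\le q_2 f(1/\eta)$ excess via the two-way split on $|\C_2'|$, and then re-run Lemma~\ref{lem:probbound2} with the scaled leaf-opening probability. That much matches the paper (whose entire proof is the single sentence ``set $c_0=2$, $c_1=1$'' plus a pointer to the preamble and Lemma~\ref{lem:probbound2}).

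However, your cost-bound step has a genuine gap. With your choice $c_1=2$, large-star leaves are opened with probability $q_2(1-2\eta)$, so
\[
\Pr[\bar{i}_2]\;=\;1-q_2(1-c_1\eta)\;=\;(1-q_2)\Bigl(1+\tfrac{c_1 q_2}{1-q_2}\,\eta\Bigr)\;\le\;\Bigl(1+\tfrac{c_1(1-\beta)}{\beta}\,\eta\Bigr)(1-q_2),
\]
and at $c_1=2$ this gives the factor $\bigl(1+\tfrac{2(1-\beta)}{\beta}\eta\bigr)$, not the $\bigl(1+\tfrac{1-\beta}{\beta}\eta\bigr)$ claimed by the theorem (and by your write-up). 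The inequality $q_2/(1-q_2)\le(1-\beta)/\beta$ alone does not absorb the extra $c_1$; the factor $c_1$ appears multiplicatively in front. So your proof as written does not establish the statement's cost constant. It is worth remarking that the paper's own proof has the dual defect: with its $c_0=2,\ c_1=1$ the preamble's sufficient condition reads $g\ge\frac{c_0^2c_1}{\eta(c_0c_1-1)}f=\frac{4}{\eta}f$, which does not reduce to the stated $\frac{2}{\eta}f$ (and since $\min_{c_0>1}\frac{c_0^2}{c_0-1}=4$ at $c_0=2$, no choice with $c_1=1$ can give $\frac{2}{\eta}f$). In short, the theorem's two displayed constants are mutually incompatible with the preamble derivation; you noticed the $g$-formula mismatch and repaired it at the cost of silently breaking the probability bound, whereas the paper keeps the probability bound and silently leaves $g$ wrong. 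Either way a factor-of-$2$ correction to one of the two constants is needed; a faithful fix is either to take $c_0=2,\ c_1=1$ and change $g$ to $\frac{4}{\eta}f$, or to keep your $c_0=1,\ c_1=2$ and replace the cost factor by $\bigl(1+\tfrac{2(1-\beta)}{\beta}\eta\bigr)$.
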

\begin{proof}
We simply set $c_0 = 2$ and $c_1 = 1$. As discussed above, there are no extra opened facilities in $\C_2'' \cup \L_2''$. Lemma \ref{lem:probbound2} also holds in this case and can be used to bound the expected connection cost.  
\end{proof}
This implies an approximation algorithm which runs in $O(n^{O(4/\eps)})$ time for $k$-median.

\section{A simple approach to the budgeted MAX-SAT problem}
\label{sec:budgeted-MAX-SAT}
We present the algorithm and give a brief outline of the proof of its approximation guarantee. 
Consider an arbitrary CNF-SAT formula $\phi$, weight function $w$, and budget constraint $\sum_j X_j \leq k$ as in the introduction. 
Given some $\epsilon > 0$ (where we assume w.l.o.g.\ that $\epsilon$ is small, say $\epsilon \leq 0.1$), we aim to approximate this maximization problem to within $(1 - 1/e - \epsilon)$. Motivated by \cite{DBLP:journals/siamdm/GoemansW94}, consider a LP relaxation with a variable $y_j$ to indicate whether $x_j$ is True, and a variable $z_i$ to indicate whether clause $i$ is satisfied. Let $P(i)$ be the set of variables that appear positively in clause $i$, and $N(i)$ be the set of variables that appear negated in clause $i$. The LP relaxation is: maximize $\sum_i w_i z_i$ subject to: 
(i) $\sum_j y_j \leq k$; (ii) $\forall i$, $(\sum_{j \in P(i)} y_j) + (\sum_{j \in N(i)} (1 - y_j)) \geq z_i$; and (iii) $0 \leq y_j, z_i \leq 1$.

Let $\{y^*, z^*\}$ denote an optimal solution to the LP relaxation. 
Note that if we did not have the budget constraints (i), then this MAX SAT-problem can be approximated to within $3/4$ \cite{DBLP:journals/siamdm/GoemansW94}. Also, as pointed out in \cite{DBLP:journals/siamdm/GoemansW94}, if we just do standard (independent) randomized rounding on the $y^*_j$ values, we will get an $(1 - 1/e)$--approximation (again, if we did not have the constraints (i)); this is achieved when the typical clause $i$ has a ``large" number $t$ of literals, each $j \in P(i)$ has $y^*_j = 1/t$, and each $j \in N(i)$ has $y^*_j = 1 - 1/t$. We get a much-better-than-$(1 - 1/e)$--approximation in cases that deviate significantly from this. However, randomized rounding will not preserve (i) with high probability. 
Our problem has a simple solution. If $k \leq 1/\epsilon^3$, say, then find an optimal solution by brute force in $O(n^{1/\epsilon^3})$ time. Else, scale all the $y^*_j$ by $(1 - \epsilon)$ to give us some margin in the budget, and then do independent rounding of the $y^*_j$. In other words, set each variable $x_j$ to True independently with probability $(1-\epsilon)y^*_j$.

\begin{claim} When $k \geq 1/\epsilon^3$, the algorithm produces a feasible solution to Budgeted MAX SAT problem (i.e., the number of variables that are set to True is at most $k$) with probability at least $1 - \exp\left(\frac{1-1/\epsilon}{3}\right)$.
\end{claim}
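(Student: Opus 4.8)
The plan is to apply a multiplicative Chernoff bound to the number of variables set to True. First I would let $X_j\in\{0,1\}$ be the indicator that the algorithm sets $x_j$ to True; by construction these are mutually independent Bernoulli variables with $\Pr[X_j=1]=(1-\epsilon)y^*_j$. Writing $X:=\sum_j X_j$ for the number of True variables, linearity of expectation together with LP constraint (i) gives $\E[X]=(1-\epsilon)\sum_j y^*_j\le(1-\epsilon)k$, so $\mu:=(1-\epsilon)k$ is a valid upper bound on $\E[X]$, and the failure event is exactly $\{X>k\}$.

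Next I would rewrite $k$ in the multiplicative form required by Chernoff: $k=\frac{1}{1-\epsilon}\mu=(1+\delta)\mu$ with $\delta:=\frac{\epsilon}{1-\epsilon}$, and note that the standing assumption $\epsilon\le 0.1$ (indeed any $\epsilon<1/2$) forces $\delta\in(0,1)$, so the bound $\Pr[X\ge(1+\delta)\mu]\le\exp(-\delta^2\mu/3)$ is legitimate (using that decreasing the mean below $\mu$ only decreases the upper tail). It then remains to estimate $\delta^2\mu=\frac{\epsilon^2}{(1-\epsilon)^2}\cdot(1-\epsilon)k=\frac{\epsilon^2 k}{1-\epsilon}$, and to invoke the hypothesis $k\ge 1/\epsilon^3$ to obtain $\delta^2\mu\ge\frac{1}{\epsilon(1-\epsilon)}\ge\frac{1-\epsilon}{\epsilon}=\frac1\epsilon-1$, where the last step is just $1\ge(1-\epsilon)^2$. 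Combining these, $\Pr[X>k]\le\exp\big(-\tfrac13(\tfrac1\epsilon-1)\big)=\exp\big(\tfrac{1-1/\epsilon}{3}\big)$, which is precisely the claimed failure probability, so feasibility holds with probability at least $1-\exp\big(\tfrac{1-1/\epsilon}{3}\big)$.

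There is no genuine obstacle here — it is a routine concentration argument — and the only two points that warrant care are: (a) checking $\delta<1$ so that the quadratic-exponent form of the Chernoff bound applies (this is the sole place the smallness assumption on $\epsilon$ enters); and (b) the short chain of elementary inequalities showing that $k\ge 1/\epsilon^3$ provides exactly enough slack to absorb the $\frac{1}{1-\epsilon}$ factors introduced by the $(1-\epsilon)$-scaling. If one wishes to dispense with the $\delta<1$ restriction entirely, one can instead start from the scale-free form $\Pr[X\ge(1+\delta)\mu]\le\big(e^\delta/(1+\delta)^{1+\delta}\big)^{\mu}$, valid for all $\delta>0$, and check that its exponent still dominates $\tfrac13(\tfrac1\epsilon-1)$; this costs a little more computation but no new ideas.
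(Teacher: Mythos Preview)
Your proposal is correct and follows essentially the same approach as the paper: bound $\E[X]\le(1-\epsilon)k$ and apply a multiplicative Chernoff bound, then use $k\ge 1/\epsilon^3$ to simplify the exponent. The only cosmetic difference is that the paper takes $\delta=\epsilon$ (so the threshold $(1+\epsilon)(1-\epsilon)k=(1-\epsilon^2)k\le k$) rather than your exact $\delta=\epsilon/(1-\epsilon)$; both choices lead to the same final bound after the elementary inequalities.
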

\begin{proof}
Let $X$ be the number of variables that are set True. Then,
$$ \E[X] = (1-\epsilon)\sum_j y_j^* \leq (1-\epsilon)k. $$
By Chernoff's bound,
\begin{align*}
	\Pr[X>k] &\leq \Pr[X > (1+\epsilon)(1-\epsilon)k] \\
		&\leq \exp\left(-\frac{\epsilon^2}{3}(1-\epsilon)k \right) \leq \exp\left(-\frac{\epsilon^2(1-\epsilon)}{3\epsilon^3} \right) = \exp\left(\frac{1 - 1/\epsilon }{3 } \right).
\end{align*}
\end{proof}

\begin{claim} The expected number of satisfied clauses is at least $(1-1/e -\epsilon)OPT$, where $OPT$ is the optimal number of satisfied clauses to the budgeted MAX SAT problem.
\end{claim}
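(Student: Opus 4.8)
The plan is to lower-bound the expected weight of satisfied clauses clause-by-clause, comparing the rounding's success probability on each clause to that clause's LP value $z_i^*$. First I would fix an arbitrary clause $i$ with $t_i := |P(i)| + |N(i)|$ literals, and compute the probability that clause $i$ is \emph{not} satisfied under the independent rounding ``set $x_j$ True with probability $(1-\epsilon)y_j^*$''. Since the rounding is independent across variables, this probability is exactly
\[
\prod_{j \in P(i)} \bigl(1 - (1-\epsilon)y_j^*\bigr) \prod_{j \in N(i)} (1-\epsilon)y_j^*,
\]
so the probability clause $i$ \emph{is} satisfied is $1$ minus this product. I would then invoke the AM--GM inequality exactly as in the classical Goemans--Williamson analysis: the product is at most $\bigl(1 - \frac{1}{t_i}\sum_{j \in P(i)}(1-\epsilon)y_j^* - \frac{1}{t_i}\sum_{j\in N(i)}(1 - (1-\epsilon)y_j^*)\bigr)^{t_i}$. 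Using LP constraint (ii), namely $\sum_{j \in P(i)} y_j^* + \sum_{j \in N(i)}(1 - y_j^*) \ge z_i^*$, the interior simplifies: the sum inside is at least $(1-\epsilon)z_i^*/t_i$ after separating out the $\epsilon$ correction (a small amount of care is needed because the $N(i)$ terms contribute $1 - (1-\epsilon)y_j^* = (1 - y_j^*) + \epsilon y_j^*$, which only helps), so the non-satisfaction probability is at most $\bigl(1 - (1-\epsilon)z_i^*/t_i\bigr)^{t_i}$.

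Next I would use the standard concavity fact that $g(z) := 1 - (1 - \beta z/t)^t$ is concave on $[0,1]$ (for fixed $t \ge 1$ and $\beta := 1-\epsilon \in [0,1]$), hence lies above the chord joining its endpoints $g(0)=0$ and $g(1) = 1 - (1-\beta/t)^t$; therefore the probability clause $i$ is satisfied is at least $z_i^* \cdot \bigl(1 - (1 - \beta/t_i)^{t_i}\bigr)$. Since $(1 - \beta/t)^t \le e^{-\beta} = e^{-(1-\epsilon)}$ for all $t \ge 1$, and $e^{-(1-\epsilon)} = e^{-1}e^{\epsilon} \le e^{-1}(1 + 2\epsilon)$ for small $\epsilon$, we get that clause $i$ is satisfied with probability at least $z_i^*\bigl(1 - e^{-1} - 2e^{-1}\epsilon\bigr) \ge z_i^*(1 - 1/e - \epsilon')$ for a suitably rescaled constant. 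Summing over clauses with weights $w_i$ and using linearity of expectation, the expected weight of satisfied clauses is at least $(1 - 1/e - \epsilon')\sum_i w_i z_i^* \ge (1-1/e-\epsilon')\,\mathrm{OPT}$, since $\{y^*,z^*\}$ is an optimal LP solution and $\mathrm{OPT}$ is bounded above by the LP optimum. Replacing $\epsilon$ by a constant multiple throughout (absorbing the factors $2e^{-1}$ etc.) gives exactly the claimed $(1 - 1/e - \epsilon)$ bound; one should state at the outset that it suffices to prove the bound for $\epsilon$ replaced by $c\epsilon$ for an absolute constant $c$.

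I expect the main obstacle to be bookkeeping the $\epsilon$-scaling cleanly: one must track how the $(1-\epsilon)$ factor propagates through the AM--GM step and the $e^{-\beta}$ versus $e^{-1}$ comparison, and verify that the negated literals in $N(i)$ genuinely only help (so that constraint (ii) can be applied after the scaling). A secondary point to handle is the degenerate cases --- a clause with $t_i = 0$ cannot occur, and the concavity/chord argument needs $z_i^* \le 1$, which is guaranteed by constraint (iii); also if a clause contains both $x_j$ and $\overline{x_j}$ it is trivially satisfied and can be dropped. None of these is deep, but they are where an otherwise-routine argument can go wrong, so the write-up should isolate the single clause-probability inequality $\Pr[\text{clause }i\text{ satisfied}] \ge (1 - 1/e - \epsilon)z_i^*$ as the key lemma and then finish by linearity of expectation and LP optimality.
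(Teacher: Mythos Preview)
Your proposal is correct and follows essentially the same approach as the paper: compute the per-clause non-satisfaction probability, apply AM--GM together with the LP constraint to bound it by $\bigl(1-(1-\epsilon)z_i^*/\ell_i\bigr)^{\ell_i}$, then use a concavity/chord argument and sum by linearity of expectation. The only difference is cosmetic in the last step: the paper applies the chord bound to $x \mapsto 1-(1-x/\ell)^\ell$ at $x=(1-\epsilon)z_i^*$, yielding $(1-1/e)(1-\epsilon)z_i^* \ge (1-1/e-\epsilon)z_i^*$ directly, whereas you keep $\beta=1-\epsilon$ inside and then Taylor-expand $e^{-(1-\epsilon)}$ (note, incidentally, that since $2/e<1$ your bound already gives $1-1/e-\epsilon$ without any rescaling of $\epsilon$).
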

\begin{proof} (Sketch)
The proof is almost identical to the one in chapter 5.4 of \cite{book:ws}, except that we replace $y^*$ by $(1-\epsilon)y^*$ and $z^*$ by $(1-\epsilon)z^*$. In particular, let $\ell_i$ be the size of clause $i$, then
\begin{align*}
	\Pr[\text{clause } C_i \text{ is not satisfied}] &= \prod_{j \in P(i)}(1-(1-\epsilon)y_j^*) \prod_{j \in N(i)}((1-\epsilon)y_j^*) \\
	&\leq \left[ \frac{1}{\ell_i}\left(\sum_{j \in P(i)}(1-(1-\epsilon)y_j^*) + \sum_{j\in N(i)}(1-\epsilon)y_j^* \right) \right]^{\ell_i} \\
	&= \left[1 - \frac{1}{\ell_i}\left(\sum_{j \in P(i)}(1-\epsilon)y_j^* + \sum_{j\in N(i)}(1-(1-\epsilon)y_j^*) \right) \right]^{\ell_i} \\
	&\leq \left(1 - \frac{(1-\epsilon)z_i^*}{\ell_i} \right)^{\ell_i},
\end{align*}
since, by LP constraint,
\begin{align*}
\sum_{j \in P(i)}(1-\epsilon)y_j^* + \sum_{j\in N(i)}(1-(1-\epsilon)y_j^*) &\geq  \sum_{j \in P(i)}(1-\epsilon)y_j^* + (1-\epsilon)\sum_{j\in N(i)}(1-y_j^*) \\
	&\geq (1-\epsilon)z_i^*.
\end{align*}
Then, by concavity, we have
\begin{align*}
	\Pr[\text{clause } C_i \text{ is satisfied}] &\geq 1 - \left(1 - \frac{(1-\epsilon)z_i^*}{\ell_i} \right)^{\ell_i} \\
		&\geq \left[1-\left(1-\frac{1}{\ell_i} \right)^{\ell_i} \right](1-\epsilon)z_i^*.
\end{align*}
The claim follows by computing the expected number of satisfied clauses and observing that the approximation ratio is at least
\begin{align*}
	\min_{k \geq 1}\left[ 1 - \left(1 - \frac{1}{k} \right)^k \right](1 - \epsilon) &\geq (1-1/e)(1-\epsilon) \\
		&\geq 1 - 1/e - \epsilon.
\end{align*}

\end{proof}

\section{JMS with scaling}
\label{sec:jms_scaling}
In the conference version \cite{ByrkaPRST15} of this paper we have claimed that a scaled version of the primal-dual JMS algorithm
is a (1,1.953)-approximation algorithm for UFL. We called this scaled version JMS' and derived a factor revealing LP to analyze it.
This LP was an adaptation of the factor revealing LP from~\cite{jain2003greedy}. Unfortunately we overlooked a subtlety in the 
interpretation of $r_{j,i}$ variables present in this LP. This would not have had any effect on the numbers resulting from a standard analysis of JMS,
but it turned out to be crucial for analyzing the scaled version JMS'. Below we formally describe the algorithm, its faulty analysis,
and an instance showing that JMS' is in fact not better than JMS and in particular it is not a (1,1.953)-approximation algorithm as we wrongly claimed in \cite{ByrkaPRST15}.

\subsection{The JMS algorithm}
\label{jms}
In this section, we review the JMS algorithm by \cite{jms}. The more detailed description of the algorithm and its analysis can be found in \cite{jms}. For completeness, we briefly describe the algorithm here. Each client has a budget, initially equal to zero. The final value of the budget represents the amount of money which client pays in the resulting solution. The budget of each client increases up to the moment of the first connection of the client. From this time the value of the budget will not be changed. The client can only be reconnected to a closer, newly open, facility and spend the difference in connection distance for the (part of the) opening cost of the newly open facility.
\begin{enumerate}
 \item We use a notion of time. The algorithm starts at time 0. At this time, each client is defined to be unconnected ($U := \J$), all facilities are unopened, and budget $\alpha_j$ is set to 0 for every $j$. At every moment, each client $j$ offers some money from its budget as a contribution to open an unopened facility $i$. The amount of this offer is computed as follows: If $j$ is unconnected, the offer is equal to 
$\max(\alpha_j - d_{ij} , 0)$ (i.e., if the budget of $j$ is more than the cost that it has to pay to get connected to $i$, it offers to pay
this extra amount to $i$); If $j$ is already connected to some other facility $i'$ , then its offer to facility $i$ is equal to $\max\{d_{i'j} - d_{ij}, 0\}$ (i.e., the amount that $j$ offers to pay to $i$ is equal to the amount $j$ would save by switching its facility from $i'$ to $i$).

\item While $U \neq \emptyset$, increase the time, and simultaneously, for every client $j \in U$, increase the parameter
$\alpha_j$ at the same rate, until one of the following events occurs (if two events occur at the same time,
we process them in an arbitrary order).
\begin{itemize}
 \item For some unopened facility $i$, the total offer that it receives from clients is equal to the cost of
opening $i$. In this case, we open facility $i$, and for every client $j$ (connected or unconnected)
which has a nonzero offer to $i$, we connect $j$ to $i$ and remove $j$ from $U$. The amount that $j$ had offered to $i$ is now
called the contribution of $j$ toward $i$, and $j$ is no longer allowed to decrease this contribution.
\item For some unconnected client $j$, and some open facility $i$, $\alpha_j = d_{ij}$ . In this case, connect client $j$
to facility $i$ and remove $j$ from $U$. 
\end{itemize}
\end{enumerate}

Suppose the UFL instance to be solved is being designed by an adversary.
Consider an optimal solution to the instance being solved.
The optimal solution opens a certain set of facilities and the clients get partitioned into groups served
by a single facility in the optimal solution. In the analysis of the JMS algorithm we model
the behavior of the algorithm on a single such group of clients served by a single facility in the optimal solution.
If we manage to bound the cost incurred by these clients in the computed solution in terms
of the cost these clients have in the optimal solution, and if we do so for every possible such group of clients,
we obtain an estimate on the approximation ratio of the JMS algorithm.

The approximation factor of the JMS algorithm is upper bounded by the supremum of set $\{b_k : k \in \mathbb{N} \setminus \{0\}\}$, where $b_k$ is defined as the objective function of the factor revealing LP (\ref{jms:max}) - (\ref{jms:non-negative}). Moreover, we will prove that, $b_k$ is bounded from above by 1.61, for each value of $k$.

Consider a group of $k$ clients that in the optimal solution use a single facility $f$ (with a slight misuse of the notation, we call this facility $f$ and we refer to its cost also by $f$). Variables from the LP may be interpreted as follows: variable $\alpha_j$ is the dual budget of client $j$ (at the end of the algorithm), which is the time when $j$ is connected for the first time. Variable $r_{j,i}$ is the connection cost of a client $j$ \textbf{just before} client $i$ is being connected for the first time (time $\alpha_i - \epsilon$). In the case when client $j$ is not connected in time $\alpha_i - \epsilon$ the value of $r_{j,i}$ is equal to $\alpha_j$, which in this case equals $\alpha_i$. Variable $d_j = d(j, f)$ is the connection cost of $j$ in the optimal solution and $f$ (the opening cost of) is the facility used in the optimal solution. The analysis compares the cost of an algorithm on the considered set of clients to the cost of the optimal solution.

\begin{eqnarray}
  \label{jms:max}
  b_k = \max~\frac{\sum_{i = 1}^{k} \alpha_i}{f + \sum_{i = 1}^{k} d_i} && \\
  \label{jms:budgets_order}
  \alpha_i \leq \alpha_{i+1} &&  ~~~~~~\forall_{i < k}\\
  \label{jms:decrease_connection}
  r_{j,i} \geq r_{j,i+1} &&  ~~~~~~\forall_{j \leq i < k}\\
  \label{jms:triangle_ineq}
  \alpha_i \leq r_{j, i} + d_i + d_j &&  ~~~~~~\forall_{j < i \leq k}\\
  \label{jms:budget_connection}
  r_{i,i} \leq \alpha_i &&  ~~~~~~\forall_{i \leq k}\\
  \label{jms:f_lowerbound}
  \sum_{j = 1}^{i-1} \max\{r_{j,i} - d_j, 0\} + \sum_{j = i}^{k} \max\{\alpha_i - d_j, 0\} \leq f&&  ~~~~~~\forall_{i \leq k}~~~~~\\
  \label{jms:non-negative}
  \alpha_i, r_{j,i}, d_i, f \geq 0 &&  ~~~~~~\forall_{j \leq i \leq k}
\end{eqnarray}

\subsection{The JMS' algorithm}
We present a new algorithm as a variant of the JMS algorithm for UFL, and hence place our emphasis on differences.

A reasonable modification to the JMS algorithm would be to apply scaling to instances before running the algorithm. One could, for instance, alter the JMS algorithm
by feeding it an instance with distances scaled up by a certain fixed factor. This seems to be a reasonable solution but we find it difficult to analyze in general,
since the standard factor-revealing LP method does not capture some directions of scaling (i.e., we cannot benefit from the connections of some clients remaining ``far" from open facilities, because they may get closer centers later by simply switching to facilities opened later). To overcome this problem, we give an algorithm
in which clients are less eager to contribute to facility opening when they switch to closer facilities. 

One may think that a client $j$ is supposed to pay a certain tax on the amount saved by switching to a closer facility, and only offers the rest as a contribution toward opening the new facility. This additional tax
in a combination with scaled contribution of all, not yet connected clients, results in an algorithm for which we are able to prove an improved bound on the approximation ratio.

Each yet unconnected client scale the real value of its contribution by a factor of $\gamma$. This algorithm, denoted JMS'($\gamma$), is formally described below.

Define $U$ to be the set of yet-unconnected clients; $U := \J$ initially. Let $\gamma \geq 1$ be parameters of the JMS' algorithm. In \textbf{bold} we mark the differences in the algorithm as compared to the standard JMS algorithm.

Algorithm:
\begin{enumerate}
 \item We use a notion of time. The algorithm starts at time 0. At this time, each client is defined to be unconnected ($U := \J$), all facilities are unopened, and budget $\alpha_j$ is set to 0 for every $j$. At every moment, each client $j$ offers some money from its budget as a contribution to open an unopened facility $i$. The amount of this offer is computed as follows: If $j$ is unconnected, the offer is equal to 
{\boldmath $\gamma \cdot$}$\max(\alpha_j - d_{ij} , 0)$ (i.e., if the budget of $j$ is more than the cost that it has to pay to get connected to $i$, it offers to pay {\boldmath $\gamma$ \bf times}
this extra amount to $i$); If $j$ is already connected to some other facility $i'$ , then its offer to facility $i$ is equal to $\max\{d_{i'j} - d_{ij}, 0\}$ (i.e., the amount that $j$ offers to pay to $i$ is equal to the amount $j$ would save by switching its facility from $i'$ to $i$.

\item While $U \neq \emptyset$, increase the time, and simultaneously, for every client $j \in U$, increase the parameter
$\alpha_j$ at the same rate, until one of the following events occurs (if two events occur at the same time,
we process them in an arbitrary order).
\begin{itemize}
 \item For some unopened facility $i$, the total offer that it receives from clients is equal to the cost of
opening $i$. In this case, we open facility $i$, and for every client $j$ (connected or unconnected)
which has a nonzero offer to $i$, we connect $j$ to $i$ and remove $j$ from $U$. The amount that $j$ had offered to $i$ is now
called the contribution of $j$ toward $i$, and $j$ is no longer allowed to decrease this contribution.
\item For some unconnected client $j$, and some open facility $i$, $\alpha_j = d_{ij}$ . In this case, connect client $j$
to facility $i$ and remove $j$ from $U$. 
\end{itemize}
\end{enumerate}

\subsection{Our incorrect analysis}
When analyzing JMS' in \cite{ByrkaPRST15}, we also used a factor revealing LP. In this LP, we interpret the variables $r_{j,i}$ slightly differently than in Section \ref{jms}. In \cite{ByrkaPRST15} $r_{j,i}$ was the connection cost of client $j$ \textbf{precisely} at time $\alpha_i$ (when client $i$ is being connected for the first time). We analyzed this algorithm by the following factor-revealing LP.

\begin{eqnarray}
  \label{lb:max}
  z_k = \max~\frac{\sum_{i = 1}^{k} \gamma \alpha_i + (1 - \gamma) r_{i,i} -f}{\sum_{i = 1}^{k} d_i} && \\
  \label{lb:budgets_order}
  \alpha_i \leq \alpha_{i+1} &&  ~~~~~~\forall_{i < k}\\
  \label{lb:decrease_connection}
  r_{j,i} \geq r_{j,i+1} &&  ~~~~~~\forall_{j \leq i < k}\\
  \label{lb:triangle_ineq}
  \alpha_i \leq r_{j, i} + d_i + d_j &&  ~~~~~~\forall_{j < i \leq k}\\
  \label{lb:budget_connection}
  r_{i,i} \leq \alpha_i &&  ~~~~~~\forall_{i \leq k}\\
  \label{lb:f_lowerbound}
  \sum_{j = 1}^{i-1} \max\{r_{j,i} - d_j, 0\} + \gamma \sum_{j = i}^{k} \max\{\alpha_i - d_j, 0\} \leq f&&  ~~~~~~\forall_{i \leq k}~~~~~\\
  \label{lb:non-negative}
  \alpha_i, r_{j,i}, d_i, f \geq 0 &&  ~~~~~~\forall_{j \leq i \leq k}
\end{eqnarray}

\paragraph{Issue with the analysis} It can happen that client $i$ contributes towards the opening of facility $f'$ to which $j$ has distance $r_{j,j} = r_{j,i}$. Then the inequality $\alpha_i \leq d_i + d_j + r_{j,i}$  need not hold. Note that this issue follows from our new interpretation of the variables $r_{j,i}$. If we are using the interpretation of Section \ref{jms} then the constraints are satisfied but the term $r_{i,i}$ in the objective function does not necessarily represent the connection cost at time $\alpha_i$ but is equal to $\alpha_i$, which might be strictly larger than this connection cost.

\paragraph{Counterexample for the analysis}
Consider the feasible solution to the factor-revealing LP for any $k$ shown in the following figure.
\begin{figure}
  \centering
  \includegraphics{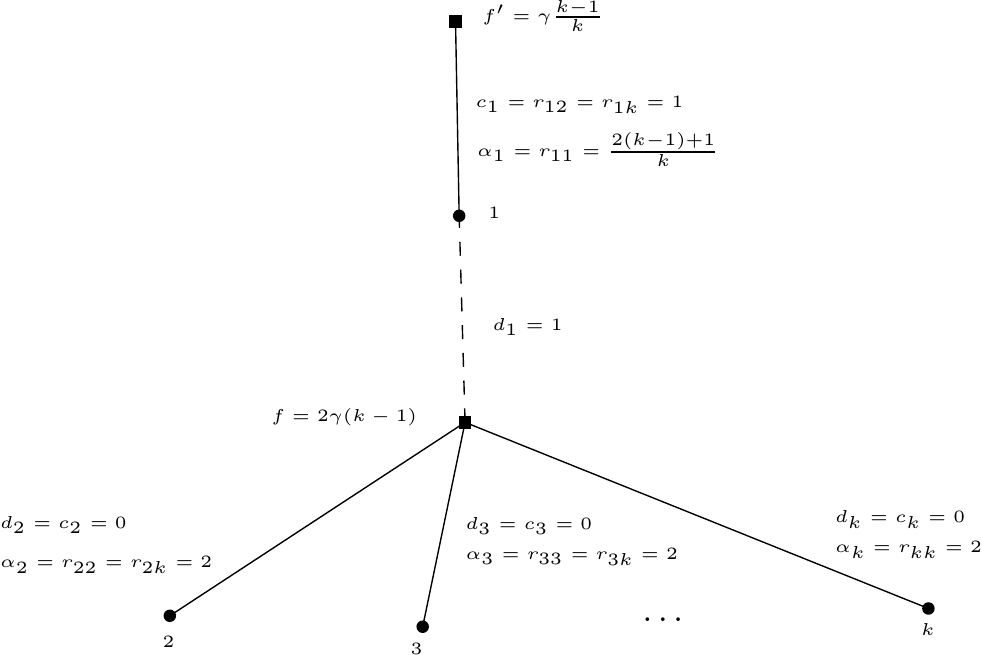}
  \caption{Single star counterexample}
\end{figure}

We interpret the variables as described in Section \ref{jms}. The distance in which client $i$ was connected first is denoted by $c_i$. One can observe that in the above example client $1$ was connected for the first time at distance $c_1 = 1$ with facility $f'$, but $\alpha_1 = r_{1,1} = \frac{2(k-1)+1}{k}$ as client $1$ contributes $\gamma(\frac{2(k-1)+1}{k} - 1)$ towards the opening of facility $f'$. All the other clients (from $2$ to $k$) were connected in distance $c_i = 0$ with facility $f$, but $\alpha_i = r_{i,i} = 2$ as each client $i$ contributes $2\gamma$ towards the opening of facility $f$.

One can observe that in the example $r_{i,i} > c_i$ which occurs with negative coefficient in the objective function, which implies that the cost of our algorithm in this example was underestimated. The real cost of our algorithm for this example is precisely expressed by 
$$\frac{\sum_{i = 1}^{k}( \gamma \alpha_i + (1 - \gamma) c_i) -f}{\sum_{i = 1}^{k} d_i} = 
\gamma\frac{2(k-1)-1}{k} + 1 - \gamma +2\gamma(k-1) - 2\gamma(k-1) \overset{k \rightarrow \infty}{\longrightarrow} 1 + \gamma.$$

\paragraph{Counterexample for the algorithm}
The instance from the previous paragraph can be extended by combining $2k$ copies of this instance that ''share`` the problematic facility $f'$. Thereby we achieve that all facilities have cost $2\gamma(k-1)$. In the solution returned by JMS'($\gamma$) clients $1^{(1)} \cdots 1^{(2k)}$ open together facility $f'$ and all others clients $2^{(l)}, \dots, k^{(l)}$ from each star open facility $f^{(l)}$, for each $l = 1 \dots 2k$. Algorithm JMS'($\gamma$) returns a solution with all facilities open and in the optimal solution all facilities except $f'$ are open. This proves that JMS'($\gamma$) cannot achieve bi-factor better than $(1, 1 + \gamma)$, if we insist on the facility factor to be one. Using the fact that $\gamma \geq 1$ we show that JMS'($\gamma$) does not perform better than JMS.

\begin{figure}
  \centering
  \includegraphics{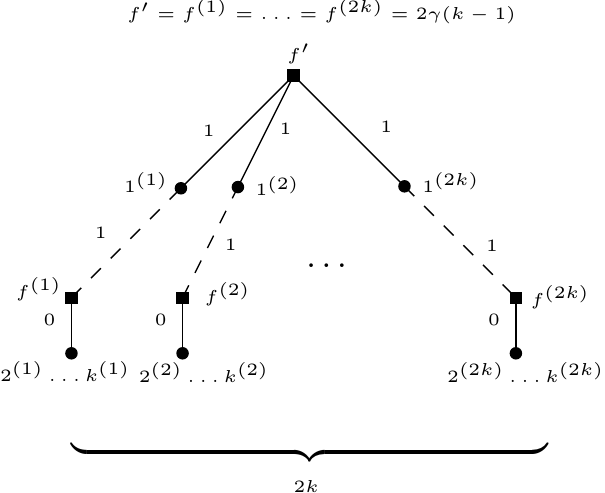}
  \caption{Complete counterexample instance with uniform facility opening cost}
\end{figure}

\end{document}